\def\submission{0}				% 1 = deactivates: pagebackref, colour of links, authnotes
\def\llncs{0}						  % Documentclass: 1 = llncs, 0 = article.
\def\ccs{0}
\def\llncsqedsymbol{1}		% Adds a QED symbol to all proof environments when llncs=1
\def\anonymous{0}				% 0 = non-anonymous, 1 = anonymous. No notes, authors, institutes´
\def\acknowledgments{0}		% 0 = off, 1 = on. Auto-0 for anonymous = 1.
\def\pagelimit{100}				% Set to n for warnings on pages n' > n. Empty -> no warning
\def\papertype{0}				% 0 = A4, 1 = letter. Auto-1 for llncs = 1.
\def\fullpage{1}				% 0 = off, 1 = on. Auto-0 for llncs = 1.
\def\LNCSpreview{0}				% Sets llncs=1, paper cropped to LNCS size.
\def\overflow{0}				% Show overfull lines
\def\showlabels{0}				% Show all defined labels and references.
\def\authnotes{1}				% 0 hides TODO, NEW, ALERT and authnotes.
\def\dieordollar{0}				% Modifies \getsr. Choose from {0,1,2}.
\def\choosebibstyle{alpha}		% Enter your bib style here. Wipe temp files after switch.
	\def\llncs{1}
\def\stuffedtitlepage{0}		% CHEATING: NOT LNCS anymore. Removes spaces from the title page
\def\abbrevref{0}				% Abbreviate enviroment names when referencing
\def\allowbreaks{0}				% Allow LaTeX to pagebreak in displaymode
\def\titletext{%				% Put your paper title here
	Robust Distributed Arrays: \\
	\large Provably Secure Networking for Data Availability Sampling
}
\def\runningtitle{%				% Only for llncs = 1. Used for LNCS runningtitle.
	Robust Distributed Arrays
}
\date{\today}					% Only for llncs = 0. Leave blank for no date.
\def\choosepubinfo{0}			% 0 : no publication info.
\def\pubinfoYEAR{}				% year of publication for \pubinfoselect = 2, 3, 4
\def\pubinfoSUBMISSIONDATE{}	% date of submission to IACR for \pubinfoselect=2
\def\pubinfoDOI{}				% DOI of publication for \pubinfoselect = 2, 3, 4
\def\pubinfoBIBDATA{}			% bibliographic data for \pubinfoselect = 3, 4
\def\pubinfoCONFERENCE{}		% conference for \pubinfoselect=1
\def\pubinfoindividual{%
}
\newcommand{\createauthor}[5]{%
	\@namedef{#1name}{#2}%
	\@namedef{#1running}{#3}%
	\@namedef{#1institute}{#4}%
	\@namedef{#1thanks}{#5}%
}
\newcommand{\createinstitute}[4]{%
	\@namedef{#1instname}{#2}%
	\@namedef{#1mail}{#3}%
	\@namedef{#1number}{#4}%
}
\newcounter{authorcount}
\newcommand{\newauthor}[4]{
	\stepcounter{authorcount}
	\createauthor{\theauthorcount}{#1}{#2}{#3}{#4}
}
\newcounter{institutecount}
\newcommand{\newinstitute}[3]{
	\stepcounter{institutecount}
	\createinstitute{\theinstitutecount}{#1}{#2}{#3}
}
\newif\ifcompactitemandenum
\def\contactmail{%

}
\def\acknowledgmenttext{
}
\def\titleofpdf{\titletext}		%automatically embeds the title into pdf info. Part 1
\def\authorsofpdf{%				%automatically embeds the author names your pdf info. Part 1
\ifcsname 5name\endcsname
	\csname 1name\endcsname~et~al.
\else
	\ifcsname 1name\endcsname
		\csname 1name\endcsname
		\ifcsname 3name\endcsname
			,\ \csname 2name\endcsname
			\ifcsname 4name\endcsname
				,\ \csname 3name\endcsname\ and \csname 4name\endcsname
			\else%
				\ and\ \csname 3name\endcsname
			\fi
		\else
			\ifcsname 2name\endcsname
				\csname 2name\endcsname
			\fi
		\fi
	\fi
\fi
}
	\def\choosepubinfo{0}
	\def\authnotes{0}
	\def\authnotes{0}
	\newcounter{pagewarning}
\def\notesindocument{0}
\newlength{\strutdepth}%
\newcommand{\notes}[3]{ %framework for margin notes of any kind
\ifnum\authnotes=1
	\def\notesindocument{1}%
	\noindent{%\bfseries
	\color{#1}{#3}\color{#1}}%
	\strut\vadjust{\kern-\strutdepth%
		\vtop to \strutdepth{%
			\baselineskip\strutdepth%
			\vss\llap{{\large\color{#1}\textbf{#2}\quad\color{black}}}\null%
		}%
	}%
\fi
}
	\newcommand{\authornote}[1]{\notes{black}{NOTE}{#1}}
	\newcommand{\authornote}[1]{\notes{black}{Note\hspace{-1.5ex}}{#1}}
\newcommand{\authnote}[2]{ %command for authnotes that appear in the text body
	\ifnum\authnotes=1
		\def\notesindocument{1}
		\begin{center}
			\fbox{%
				\begin{minipage}{.98\textwidth}
					\textbf{#1 says:} #2\authornote{}
				\end{minipage}%
			}
		\end{center}
	\fi
}
		\renewenvironment{proof}%
		{\begin{oldproof}}%
		{\qed\end{oldproof}}
\newtheoremstyle{mytheorem}
  {\topsep} % Space above
  {\topsep} % Space below
  {} % Body font
  {} % Indent amount
  {\bfseries} % Theorem head font
  {} % Punctuation after theorem head
  {.5em} % Space after theorem head
  {\thmname{#1}\thmnumber{ #2}\thmnote{ {\bfseries (#3)}}.} % Theorem head spec
\newtheoremstyle{myplain}
  {\topsep} % Space above
  {\topsep} % Space below
  {\itshape} % Body font
  {} % Indent amount
  {\bfseries} % Theorem head font
  {} % Punctuation after theorem head
  {.5em} % Space after theorem head
  {\thmname{#1}\thmnumber{ #2}\thmnote{ {\normalfont(#3)}}.} % Theorem head spec
\newtheoremstyle{mydefinition}
  {\topsep} % Space above
  {\topsep} % Space below
  {} % Body font
  {} % Indent amount
  {\bfseries} % Theorem head font
  {} % Punctuation after theorem head
  {.5em} % Space after theorem head
  {\thmname{#1}\thmnumber{ #2}\thmnote{ {\normalfont(#3)}}.} % Theorem head spec
\newtheoremstyle{myremark}
  {} % Space above
  {} % Space below
  {} % Body font
  {} % Indent amount
  {\itshape} % Theorem head font
  {} % Punctuation after theorem head
  {.5em} % Space after theorem head
  {\thmname{#1}\thmnumber{ #2}\thmnote{ {\normalfont(#3)}}.} % Theorem head spec
	\theoremstyle{mytheorem}				% new theorem-like environments when llncs=0
	\newtheorem{theorem}{Theorem}
	\theoremstyle{myplain}
	\newtheorem{lemma}{Lemma}
	\newtheorem{corollary}{Corollary}
	\newtheorem{proposition}{Proposition}
	\newtheorem{construction}{Construction}
	\newtheorem{conjecture}{Conjecture}
	\theoremstyle{mydefinition}
	\newtheorem{definition}{Definition}
	\newtheorem{claim}{Claim}
	\newtheorem{assumption}{Assumption}
	\newtheorem{fact}{Fact}
	\theoremstyle{myremark}
	\newtheorem{remark}{Remark}
	\newtheorem{observation}{Observation}
	\spnewtheorem{construction}[theorem]{Construction}{\bfseries}{}
	\spnewtheorem{assumption}[theorem]{}{\bfseries}{}
	\spnewtheorem{fact}[theorem]{Fact}{\bfseries}{}
	\spnewtheorem{observation}[theorem]{Observation}{\itshape}{}
	\crefname{assumption}{Assumption}{Assumptions}
	\crefname{construction}{Construction}{Constructions}
	\crefname{corollary}{Corollary}{Corollaries}
	\crefname{conjecture}{Conjecture}{Conjectures}
	\crefname{definition}{Definition}{Definitions}
	\crefname{exmaple}{Example}{Examples}
	\crefname{lemma}{Lemma}{Lemmata}
	\crefname{observation}{Observation}{Observations}
	\crefname{proposition}{Proposition}{Propositions}
	\crefname{remark}{Remark}{Remarks}
	\crefname{theorem}{Theorem}{Theorems}
	\crefname{assumption}{Ass.}{Ass.}
	\crefname{construction}{Constr.}{Constr.}
	\crefname{corollary}{Cor.}{Cor.}
	\crefname{conjecture}{Conj.}{Conj.}
	\crefname{definition}{Def.}{Def.}
	\crefname{exmaple}{Ex.}{Ex.}
	\crefname{lemma}{Lem.}{Lem.}
	\crefname{observation}{Obs.}{Obs.}
	\crefname{proposition}{Prop.}{Prop.}
	\crefname{remark}{Rem.}{Rem.}
	\crefname{theorem}{Thm.}{Thms.}
\crefname{claim}{Claim}{Claims}
\crefname{fact}{Fact}{Facts}
\crefname{note}{Note}{Notes}
\def\YYYSMcoin{\mbox{\begin{tikzpicture}[scale=0.0125]
\definecolor{coinbrown}{HTML}{D89E36}\definecolor{coindarkyellow}{HTML}{F8D81E}\definecolor{coinyellow}{HTML}{F8F800}\fill[coinyellow] (3,-1) rectangle (9,9);\fill(0,0) rectangle (1,8);\fill(1,8) rectangle (2,10);\fill(2,10) rectangle (4,11);\fill(4,11) rectangle (8,12);\fill(8,11) rectangle (10,10);\fill(10,10) rectangle (11,8);\fill(11,8) rectangle (12,0);\fill(10,-2) rectangle (11,0);\fill(8,-3) rectangle (10,-2);\fill(4,-4) rectangle (8,-3);\fill(2,-3) rectangle (4,-2);\fill(1,0) rectangle (2,-2);\fill (5,-1) rectangle (7,0);\fill (7,0) rectangle (8,8);\fill[coinbrown] (9,8) rectangle (10,10);\fill[coinbrown] (10,0) rectangle (11,8);\fill[coinbrown] (9,-2) rectangle (10,0);\fill[coinbrown] (8,-2) rectangle (9,-1);\fill[coinbrown] (4,-3) rectangle (8,-2);\fill[coindarkyellow] (2,-2) rectangle (3,8);\fill[coindarkyellow] (3,-2) rectangle (8,-1);\fill[coindarkyellow] (8,-1) rectangle (9,0);\fill[coindarkyellow] (9,0) rectangle (10,8);\fill[coindarkyellow] (8,8) rectangle (9,10);\fill[coindarkyellow] (4,9) rectangle (8,10);\fill[coindarkyellow] (3,8) rectangle (4,9);\fill[coindarkyellow] (5,0) rectangle (7,2);\fill[coindarkyellow] (6,2) rectangle (7,8);\fill[white] (4,0) rectangle (5,8);\fill[white] (5,8) rectangle (7,9);
\end{tikzpicture}}}
\def\YYYdie{\mbox{\begin{tikzpicture}[scale=0.85,x=1em,y=1em,radius=0.09]
\draw[rounded corners=1,line width=.25pt] (0,0) rectangle (1,1);\fill (0.275,0.275) circle;\fill (0.725,0.725) circle;\fill (0.5,0.5) circle;\fill (0.275,0.725) circle;\fill (0.725,0.275) circle;
\end{tikzpicture}}}
\newcommand{\getsr}{
	\ifnum\dieordollar=0
		\mathrel{\vbox{\offinterlineskip\ialign{
			\hfil##\hfil\cr
			\hspace{0.1em}$\scriptscriptstyle\$$\cr
			$\leftarrow$\cr
		}}}
	\fi
	\ifnum\dieordollar=1
		\mathrel{\vbox{\offinterlineskip\ialign{
			\hfil##\hfil\cr
			{\scalebox{0.5}{\hspace{0.4em}\YYYdie}}\cr
			\noalign{\kern0.05ex}
			$\leftarrow$\cr
		}}}
	\fi
	\ifnum\dieordollar=2
		\mathrel{\vbox{\offinterlineskip\ialign{
			\hfil##\hfil\cr
			\hspace{0.1em}$\YYYSMcoin$\cr
			$\leftarrow$\cr
		}}}
	\fi
}
\newcommand{\checkfornotes}{
	\ifnum\notesindocument=1
		\ifnum\LNCSpreview=1
			\begin{textblock}{1}[0.5,0](0,0.25)
		\else
			\begin{textblock}{1}[0.5,0](0,0.85)
		\fi
		\centering
		\textcolor{red}{\large \textbf{There are still unresolved comments in this document.}}
		\end{textblock}
	\fi
}
\begin{document}
% Just in case somebody should ever dare to touch those lines: Here's an explanation for provideauthors. The other commands provide... are constructed analogously.
%Please study the code block in main.tex that defines createauthor, createinstitute, first.
%Again, createinstitute is defined similarly to createauthor. On every call of createauthor{}{}{}{} a counter (starting at 0) is increased by 1 and the author's information is stored as a new command that looks like #name, #institute, ... for # in the naturals.
%Back to these commands: Provideauthors is defined recursively and adds 1name, 2name, 3name, ... to the authors as long as there are authors. All the \csname\endcsname commands are used to tell LaTeX to interpret it's content as a string and then consider it as a command. Otherwise, LaTeX won't understand commands like \1name, ....

\newcount\authorcounter
\newcommand{\provideauthors}{%
		\ifnum\authorcounter<\theauthorcount%if there is at least one more author to come
			\csname\the\authorcounter name\endcsname% output the current authorname 
			\expandafter\ifx\csname\the\authorcounter thanks\endcsname\empty%if thanks are not empty for that author, 
			\else
				\thanks{\csname\the\authorcounter thanks\endcsname}% outputs thanks
			\fi%
			\inst{\csname\the\authorcounter institute\endcsname} % add institute number to author.
			\and %places a , between authors and an ", and" between the last two (only for lncs)
			\global\advance\authorcounter by 1 % increase the counter that keeps track of how many authors have been placed
			\provideauthors%run provideauthors again
		\else% i.e. the current author is the last one. (Same code as above, but no \and after placing the author
			\csname\the\authorcounter name\endcsname % output the current authorname
			\expandafter\ifx\csname\the\authorcounter thanks\endcsname\empty %if thanks are not empty for that author, 
			\else
				\thanks{\csname\the\authorcounter thanks\endcsname} % outputs thanks
			\fi%
			\inst{\csname\the\authorcounter institute\endcsname} % add institute number to author.
		\fi
}

\def\atleastoneauthorplaced{0}
\newcommand{\providerunning}{%
	\ifnum\authorcounter<\theauthorcount%
		\expandafter\ifx\csname\the\authorcounter running\endcsname\empty
		\else
			\ifnum\authorcounter>1
				\ifnum\atleastoneauthorplaced=1
					\and%
				\fi
			\fi
			\csname\the\authorcounter running\endcsname
			\def\atleastoneauthorplaced{1}
		\fi
		\global\advance\authorcounter by 1
		\providerunning%
	\else%
		\expandafter\ifx\csname\the\authorcounter running\endcsname\empty
		\else
			\ifnum\authorcounter>1
				\ifnum\atleastoneauthorplaced=1
					\and%
				\fi
			\fi
			\csname\the\authorcounter running\endcsname
		\fi
	\fi
}

\newcount\institutecounter

\newcommand{\provideinstitutes}{%
	\ifnum\institutecounter<\theinstitutecount%
		\ifnum\llncs=0
			$^{\csname\the\institutecounter number\endcsname}$
		\fi
		\csname\the\institutecounter instname\endcsname
		
		\email{
			\ifx\contactmail\empty
				\csname\the\institutecounter mail\endcsname
			\else
				\href{mailto:\contactmail}{\csname\the\institutecounter mail\endcsname}
			\fi
		}
		
		\and%
			\global\advance\institutecounter by 1
		\provideinstitutes%
	\else%
		\ifnum\llncs=0
			\ifcsname 1name\endcsname
				$^{\csname\the\institutecounter number\endcsname}$
			\fi
		\fi
		\csname\the\institutecounter instname\endcsname
		
		\email{
			\ifx\contactmail\empty
				\csname\the\institutecounter mail\endcsname
			\else
				\href{mailto:\contactmail}{\csname\the\institutecounter mail\endcsname}
			\fi
		}
	\fi
}

\title{
	\ifnum\stuffedtitlepage=1
		\ifnum\llncs=1
			\vspace*{-7ex}
		\else
		\vspace*{-3ex}
		\fi
		\textbf{\titletext}
		\ifnum\llncs=1
			\vspace*{-2ex}
		\else
			\vspace*{-1ex}
		\fi
	\else
		\textbf{\titletext}
	\fi
}
\ifnum\anonymous=1
	\author{}
\else
	\ifnum\llncs=0
		\newcommand{\inst}[1]{{
			\ifcsname 1name\endcsname
				$^{#1}$
			\fi
			}}
	\fi
	\ifcsname 1name\endcsname
		\author{
			\global\authorcounter 1
			\provideauthors
		}
	\fi
\fi

\ifnum\llncs=1
	\titlerunning{\runningtitle}
	\ifnum\anonymous=1
		\institute{}
		\authorrunning{}
	\else
		\ifcsname 1instname\endcsname{
			\institute{
				\global\institutecounter 1
				\provideinstitutes
			}
		\fi
		\ifcsname 1name\endcsname{
			\authorrunning{
				\global \authorcounter 1
				\providerunning
			}
		\fi
	\fi
\fi
\maketitle
\ifnum\stuffedtitlepage=1
	\ifnum\llncs=0
		\vspace{-4ex}
	\fi
\fi

\ifnum\llncs=0
	\ifnum\anonymous=0
		\newcommand{\email}[1]{
			\texttt{
				\ifx\contactmail\empty
					#1
				\else
					\href{mailto:\contactmail}{#1}
				\fi
			}
		}
		\newcommand{\and}{}
		\ifnum\stuffedtitlepage=1
			\ifnum\llncs=0
				\vspace{-2ex}
			\fi
		\fi
		\begin{small}
			\begin{center}
				\global \institutecounter 1
				\provideinstitutes
			\end{center}
		\end{small}
	\fi
\fi

\ifnum\stuffedtitlepage=1
	\ifnum\llncs=1
		\vspace*{-4ex}
	\else
		\vspace*{-2ex}
	\fi
\fi

\begin{abstract}

\end{abstract}
\ifnum\stuffedtitlepage=1
	\ifnum\llncs=1
		\vspace*{-2ex}
	\fi
\fi

\ifnum\llncs=0
	\vspace{1ex}
\fi

\ifx\choosepubinfo\empty\else
	\ifnum\choosepubinfo=1
	\def\pubinfo{\small
			\noindent \copyright\ IACR 
			\ifx\pubinfoYEAR\empty \textcolor{red}{year missing}\else \pubinfoYEAR\fi.
			This article is a minor revision of the version published by Springer-Verlag in the proceedings of \ifx\pubinfoCONFERENCE\empty \textcolor{red}{conference missing}\else \pubinfoCONFERENCE\fi, eventually available at \url{http://link.springer.com}
	}
	\fi
	
	\ifnum\choosepubinfo=2
		\def\pubinfo{\small
			\noindent \copyright\ IACR 
			\ifx\pubinfoYEAR\empty \textcolor{red}{year missing}\else \pubinfoYEAR\fi.
			This article is the final version submitted by the author(s) to the IACR and to Springer-Verlag on
			\ifx\pubinfoSUBMISSIONDATE\empty \textcolor{red}{submission date missing}\else \pubinfoSUBMISSIONDATE\fi.
			The version published by Springer-Verlag is available at
			\ifx\pubinfoDOI\empty \textcolor{red}{DOI missing}\else \pubinfoDOI\fi.
		}
	\fi
	
	\ifnum\choosepubinfo=3
		\def\pubinfo{\small
			\noindent \copyright\ IACR
			\ifx\pubinfoYEAR\empty \textcolor{red}{year missing}\else \pubinfoYEAR\fi.
			This article is a minor revision of the version published by Springer-Verlag available at
			\ifx\pubinfoDOI\empty \textcolor{red}{DOI missing}\else \pubinfoDOI\fi.
		}
	\fi
	
	\ifnum\choosepubinfo=4
		\def\pubinfo{\small
			\noindent This article is based on an earlier article:
			\ifx\pubinfoBIBDATA\empty \textcolor{red}{bibliographic data missing}\else \pubinfoBIBDATA\fi,
			\copyright\ IACR
			\ifx\pubinfoYEAR\empty \textcolor{red}{year missing}\else \pubinfoYEAR\fi,
			\ifx\pubinfoDOI\empty \textcolor{red}{DOI missing}\else \pubinfoDOI\fi.
		}
	\fi
	
	\ifnum\choosepubinfo=5
			\def\pubinfo{\small
				\noindent \pubinfoindividual
			}
		\fi
\fi

\textblockorigin{0.5\paperwidth}{0.9\paperheight}
\setlength{\TPHorizModule}{\textwidth}

\newlength{\pubinfolength}
\ifx\choosepubinfo\empty\else
	\ifnum\choosepubinfo=0
	\else
		\settowidth{\pubinfolength}{\pubinfo}
		\begin{textblock}{1}[0.5,0](0,.25)
			 \ifnum\pubinfolength<\textwidth
				\centering
			\fi
			\pubinfo
		\end{textblock}
	\fi
\fi
\thispagestyle{empty}

\ifnum\llncs=0
	\clearpage
	\tableofcontents
	\clearpage
\fi

% --- -----------------------------------------------------------------
% --- Add your content here
% --- -----------------------------------------------------------------
% !TEX root = ../main.tex

\section{Introduction}

Blockchains are distributed systems that implement replicated state machines, whose states are regularly updated through distributed consensus protocols.
In a nutshell, these systems provide two main guarantees called \emph{liveness} and \emph{safety}.
The former ensures the continuous progression of the machine's state, while the latter guarantees that the machine never transitions into an invalid state.

\superparagraph{Strong Safety.}
Popular blockchains, such as Bitcoin and Ethereum, are designed to offer a strong form of safety.
They maintain the integrity of their state even when the majority of parties running the consensus protocol are acting maliciously.
To achieve such a strong form of safety, \emph{all} network participants need to individually verify every state transition proposed by the consensus protocol.
Invalid state transitions are rejected by the network participants, regardless of whether they have been approved by consensus.
While this approach provides strong safety guarantees, it also imposes heavy requirements on all network participants in terms of bandwidth, computing power, and storage, since every participant needs to download and verify every single state transition.

\superparagraph{Scaling Blockchains.}
As blockchains grow in popularity and adoption, the size of their state continues to increase and requiring full replication of the state machine among all network participants is becoming a performance bottleneck.
Towards building scalable blockchains, we would like to reduce the burden placed on the network participants, while maintaining the strong safety guarantees blockchains currently provide.
Ideally, one would like a solution which does not require network participants to download all state related data to verify the validity of each state transition.

At first glance, succinct non-interactive arguments of knowledge (SNARKs)~\cite{STOC:Kilian92,DBLP:journals/siamcomp/Micali00,EC:Groth16} appear to provide a solution to the above problem.
Whenever consensus proposes a new state transition, it can be accompanied by a succinct proof, attesting to the validity of the transition.
Network participants can verify the succinct proof, without needing to download all data associated with the state transition itself.
While this solution addresses the question of how to ensure the validity of state transitions, it does not address the question of how to ensure \emph{availability} of all state related data.
In addition to succinct proofs, the outlined approach requires a way for network participants to collectively check that all data is in principle available, preferably without requiring them to download the data in full.

\superparagraph{Data Availability Sampling.}
Towards addressing the above issue, Al-Bassam et al.~\cite{FC:ASBK21} have introduced the concept of \emph{data availability sampling} (DAS), which was subsequently formalized by Hall-Andersen, Simkin, and Wagner~\cite{cryptoeprint:2023/1079} and then studied by several other works~\cite{C:HalSimWag24,cryptoeprint:2024/1362,cryptoeprint:2025/034,evans2025accidental}.
In the DAS setting, a potentially malicious party encodes a data block and provides oracle access to the encoding.
A set of independent verifiers randomly probe the encoding through the provided oracle.
If the verifying parties independently accept the responses they receive from the adversarially controlled oracle, then the security properties of DAS ensure that the parties could pool their individual transcripts together to reconstruct some well-defined data.

The works on DAS mentioned above focus only on \emph{how to encode} the data, but do not study \emph{how to store it}.
In a real-world setting, the encoded data would not be stored inside a hypothetical oracle, but rather within a distributed system.
Network participants should be able to store new encodings or individual symbols that they have reconstructed, and query individual symbols of already existing encodings.
Understanding how to construct efficient and secure distributed storage networks is of crucial importance for making DAS a viable approach for scaling real-world blockchains.

\subsection{Our Contribution}

In this work, we formally define and construct secure distributed storage systems, which we call \emph{Robust Distributed Arrays (RDAs)}.
In combination with DAS encodings, RDAs enable a full-fledged approach for implementing DAS in the real world.

\superparagraph{Formal Model.}
Our formal model for RDAs precisely defines the security guarantees that should be provided in the presence of an adversary who controls some subset of parties in the network.
A particular focus is placed on avoiding idealizing or simplifying assumptions that are difficult to implement in reality.
Among other things, our formal model encapsulates both honest and malicious parties joining and leaving the network in an unpredictable, maliciously scheduled fashion and it does not assume that all honest parties can directly communicate with each other by default.

\superparagraph{Construction.}
In our model, we then construct concretely efficient RDAs and rigorously prove their security.
Our construction remains responsive and secure even if a majority of participants are malicious.
Concretely, we only assume that a sufficiently large \emph{absolute} number of honest parties is online in sufficiently large time intervals, irrespective of how many malicious parties there are in the network.
Our solution incurs minimal latency overheads for reading or writing the data stored within the system.

From an engineering perspective, our construction is conceptually simple, making it easier to implement securely.
It can be adapted to different trade-offs between the required data storage per party, the required bandwidth per party, and the required number of honest parties in the system.
We evaluate the practical performance of our system across different parameter settings and demonstrate that it achieves provable security, while maintaining high efficiency in many realistic scenarios.
As an example, we show that 5000 honest network participants, each storing 1\% of the data and being connected to 10\% of the other peers, allow for provably ensuring that 90\% of the data are available at all points in time.

\subsection{Related Work}
Several existing lines of research works are related to what we do here.
All of them fall short of providing a networking solution for DAS in one or multiple ways.
In the following, we will provide a short overview of the main related works and defer a more in-depth comparison to~\cref{sec:relwork}.

\superparagraph{Distributed Hash Tables.}
Conceptually, the systems we are aiming to build in this work are strongly reminiscent of distributed hash tables (DHTs).
These are data structures, maintained by a network of parties, that allow for storing key/value pairs and searching for values corresponding to given keys.
A large body of research works has studied how to construct asymptotically and concretely efficient DHTs, with the most notable constructions being~CAN~\cite{candht}, Pastry~\cite{Pastry}, Chord~\cite{Chord}, and Kademlia~\cite{Kademlia}.
Commonly, DHTs prioritize minimizing the number of peers that each network participant needs to be connected to, while still being able to store and retrieve data efficiently.
The above works show that, in a network with $n$ parties, each party only needs to maintain $\mathsf{polylog}(n)$ peer-to-peer connections to store and retrieve data with latency $\mathsf{polylog}(n)$.
All of the above works only consider benign settings without the presence of an adversarial power that may try to maliciously disturb the network.

\superparagraph{Robust Distributed Hash Tables.}
Motivated by the lack of security guarantees for DHTs in the presence of malicious behavior, several works have studied DHTs in the context of various adversarial influences~\cite{SODA:FiaSai02,DBLP:conf/iptps/NaorW03,DBLP:conf/icalp/AwerbuchS04,DBLP:conf/esa/FiatSY05,STOC:Scheideler05,DBLP:conf/spaa/AwerbuchS06,DBLP:conf/opodis/AwerbuchS06,DBLP:conf/iptps/AwerbuchS07,skademlia,DBLP:journals/sigops/SenF12,DBLP:journals/ton/YoungKGK13,DBLP:conf/podc/GuerraouiHK13}.
Many of these works rely on quorum-based approaches for providing security guarantees.
From a high-level perspective, the idea behind these approaches is to take a data structure that works in the benign setting and to then replace individual parties by sets of parties called quorums.
One then argues that the data structure works correctly in combination with the quorums, as long as each quorum contains an honest majority of parties.
As the benign DHTs, these approaches all incur polylogarithmic latencies per operation.

In contrast to the above works on DHTs, we consider a highly adversarial setting, where a large number of network participants may be corrupted.
Whereas the above approaches require (at least) an honest majority among all network participants to ensure honest majorities in all quorums, our RDAs remain secure and functional in the presence of a dishonest majority.
Furthermore, our construction only incurs small \emph{constant latencies} per operation, rather than polylogarithmic ones.
On the downside, our construction requires each network participant to maintain approximately $\sqrt{n}$ peer-to-peer connections.

\superparagraph{Networking for DAS.}
A few works have explored the networking in the specific context of DAS.
Kr{\'o}l et al.~\cite{DASp2p} informally identify functionality and efficiency requirements for networks underlying DAS and discuss why classical solutions, like DHTs, are not directly applicable.
Cortes-Goicoechea et al.~\cite{kademliaDASlimits} analyze the suitability of Kademlia~\cite{Kademlia} in the context of DAS in terms of efficiency, but not security.
Ascigil et al.~\cite{ethresearPANDASPractical} propose a networking solution for DAS, which has similarities with our approach.
Unfortunately their approach requires each network participant to have almost complete knowledge of all other parties.
In addition, their work lacks both formal security definitions and proofs.

In our work, we provide security notions for the networking layer and prove that our solution provides the stated security guarantees.
Our approach does not require network participants to have complete or close to complete knowledge of the network.

%\subsection{Technical Overview}
\subsection{Organization of this Paper}
In \cref{sec:prelims}, we introduce some notation and our model, including network and adversarial model.
In \cref{sec:rda}, we give our definition of robust distributed arrays and explain how it relates to data availability sampling.
We present and analyze our construction in \cref{sec:construction}.
This section also contains an informal overview of our construction and the definition of an abstract subprotocol, namely, (robust) subnet discovery protocols.
We give an example implementation of such a protocol in \cref{sec:appendix:simplesubnetprotocol}. We discuss several extensions of our protocol that are useful in practice in \cref{sec:extensions}, give benchmarks in \cref{sec:benches}, and discuss related work more extensively in \cref{sec:relwork}.

%\clearpage
\section{Notation and Model}
\label{sec:prelims}
\label{sec:prelims:model}
Here, we specify notation and our model. We recall relevant mathematical tools in \cref{sec:appendix:mathtools}.

\ifnum\ccs=0
\superparagraph{Notation.}
    We denote the set of the first $N$ positive integers by $[N] = \{1,\dots,N\}$.
    From time to time, we use standard asymptotic notation like $\Omega,\BigO,\Theta,\omega$, where the variable over which it is defined shall be clear from the context.
    We usually use typewriter font (e.g., $\texttt{msg}$ instead of $\text{msg}$) to distinguish explicit strings from variables. For example, we may say that a party sets $x \coloneqq 5$ and sends $(\msgStore, x)$ to another party, which means that $\msgStore$ is a string and the string $(\msgStore,5)$ is actually sent.

\superparagraph{Pseudocode.}
When writing our protocols, we give both a verbal description as well as pseudocode. The pseudocode specifies both instructions that parties execute when interfaces of the protocol are called by applications, as well as message handlers that are executed when receiving certain messages.
We use a standard imperative syntax for our pseudocode, including constructs like $\pcfor$, $\pcif$, $\pcparse$. We also assume instructions to send messages ($\textbf{send} \dots \textbf{to} \dots$), to wait for an amount of time ($\textbf{wait for} \dots$), and to receive messages within an amount of time ($\textbf{on receiving} \dots \textbf{within } \dots$). Unless specified otherwise, all maps and sets are implicitly initialized as being empty and all numerical variables are initialized as zero.
\fi
% \ghnote{Note subtleties regarding send/receive: We may want to allow messages to be spuriously faster than what the synchronous model guarantees; this strengthens our construction by saying that it does not break if messages arrive too fast. We need to discuss the time delay for sending messages to oneself anyway.}
% \ghnote{Regarding send/receive/wait: Note the footnote on spanning multiple rounds below.}
%\bwnote{ -- GH: I don't think we ever refer to uninitialized variables in our sseudocode sections.}

\superparagraph{Random Oracle Model.}
We will analyze our construction in the random oracle model~\cite{CCS:BelRog93}. To recall, all parties in this model have oracle access to a function $\Hash\colon\XXX \rightarrow \YYY$, which is uniformly random.
\ifnum\ccs=0
That is, we assume that initially, for all $x \in \XXX$, we have $\Hash(x) = \bot$.
This holds until the first party queries $\Hash(x)$. At this point, a random $y \in \YYY$ is sampled uniformly and we define $\Hash(x) \coloneqq y$, before returning $\Hash(x)$ to the party.
Any subsequent call $\Hash(x)$ then returns $y$. In particular, all outputs are independently sampled from $\YYY$.
In other words, the random oracle is a uniform random function from $\XXX$ to $\YYY$. Any randomness in our experiments is also taken over the random coins used to determine the values of $\Hash$, i.e., over the random oracle.
\fi

\superparagraph{Network Model.}
Throughout, we use the terms \emph{node} and \emph{party} interchangeably.
We consider a synchronous network, i.e., any message is delivered within a finite and known delay, and all parties have synchronized clocks.
This allows protocols and our security experiment to proceed in rounds, which we also refer to as points in time. Precisely, if a party sends a message in round $\tau$, then it is received at some point in round $\tau+1$.
Strictly speaking, the message could already arrive in round $\tau$, but we assume for convenience that it arrives exactly in round $\tau+1$.
This is without loss of generality, as the receiving party can simply wait until $\tau+1$ before processing the message.
The network we consider is open (aka permissionless), i.e., parties can join and leave over time.
Parties cannot automatically communicate with each other.
Instead, a party $\party$ can only send a message to $\party'$ if it knows their address, which it could have learned from some third party.
As soon as $\party$ knows the address of $\party'$, we assume that $\party$ can initiate communication and the two parties can communicate via an (not necessarily private) authenticated point-to-point channel. A party always knows its own address.
Of course, if a party does not already know any other party when joining, it could never reach any other party.
Therefore, we will assume that when a party joins, it gets as input a list of \emph{bootstrap nodes}.
We discuss below how this list is specified.
In our concrete protocol, only a subset of the parties plays the role of bootstrap nodes and nodes are aware of their own role.
% Nodes are aware of whether they are bootstrap nodes, and they are assumed to be either normal parties or bootstrap nodes.
% If we want to make clear that a party is such a bootstrap node, we usually use the notation $\hat{\party}$.

% \ghnote{From BW: Do we use this $\hat{\party}$ notation. GH: I think we can ditch it.}

%TODO: \hat{\party} causes typographically ugly line spacing.

\superparagraph{Protocols.}
A protocol $\Pi$ is specified by an ensemble of interfaces and corresponding instructions.
Applications running on a node can call such an interface with some inputs, which results in the node executing the instructions specified by the protocol, e.g., sending messages to other nodes.
A protocol must also specify instructions for joining, and instructions to react to incoming messages.
We call the parties that are currently online and take part in a protocol \emph{active} for $\Pi$.
We assume that $\Pi$ also specifies initialization instructions $\Pi.\iInit$ to be executed in a coordinated setup\footnote{%
For this work, we only need protocols where the coordinated setup completes within a single round at time $\tau = 0$. We use this to simplify the definitions.}
by a set $\mathbf{P}$ of initially active honest parties that know each other.
% In addition, the protocol has to specify a (potentially empty) set of initialization instructions $\InitializationInstructions$ for each of its initially active honest parties, which is executed during a coordinated setup that finishes at some time $\StartTime$.
This allows protocols to build some initial topology among these parties.
%Note that each of these initial parties may leave at some later time.

\superparagraph{Adversarial Model.}
We consider an adversary that is in full control of a subset of all parties, and we call these parties \emph{malicious}, \emph{adversarial}, or \emph{byzantine}.
The set of malicious parties is assumed to be fixed and disjoint from the honest parties, but not necessarily bounded.
That is, we consider some form of \emph{static} corruptions in which honest parties remain honest.
% \ghnote{Somewhat unclear whether the adversary can adaptively create parties with adversarially chosen adress (that differ from all existing honest ones).}
% However, the adversary can make malicious parties join and leave \emph{adaptively}.
For security, we will tolerate arbitrarily many malicious parties, i.e., we do not assume any protection against Sybil attacks.
On the other hand, efficiency will depend on the total number of parties, so that adding Sybil protection on top is a good idea.
Malicious parties can deviate arbitrarily from the protocol and collude at all times, and our adversary is rushing.
This means that it sees all messages of honest parties in a certain round before deciding on its own actions.
We will assume that the adversary is \emph{computationally bounded}, and treat cryptographic primitives as idealized objects, as common in distributed systems literature.
When it comes to honest parties, we let the adversary adaptively schedule their interface calls and inputs.

\superparagraph{Join and Leave Schedules.}
While joining and leaving of malicious parties is under full control of the adversary, we assume that honest parties join and leave according to an arbitrary schedule, which is not controlled by the adversary.
More precisely, a \emph{join-leave schedule} for a protocol $\Pi$ is given by a function\footnote{We do not allow the join-leave schedule to depend on the random oracle, i.e., the random oracle is instantiated after the join-leave schedule (as a function) is fixed.} $\JoinLeaveSchedule$ that outputs the following data $\JoinLeaveSchedule(\tau)$ for each point in time $\tau \geq 0$:
\begin{myitemize}
 \item A list of honest parties $\mathbf{P}$ that are supposed to join the protocol in round $\tau$, with special considerations at time $\tau = 0$, see below.
 \item For each such party $\party \in \mathbf{P}$, a set $\{\party_1,\ldots,\party_t\}$ of \emph{bootstrap} nodes (depending on $\party$) via which $\party$ should join.
 \item For each such party $\party \in \mathbf{P}$, some auxiliary data $\AuxRole$ (depending on $\party$).
 \item A list of honest parties $\widetilde{\mathbf{P}}$ that are supposed to leave in round $\tau$.
\end{myitemize}
Without loss of generality, we only consider schedules in which only parties are instructed to leave that are currently active\footnote{%
Due to how our experiment works, ''active`` at time $\tau$ means that the join-leave schedule specified it as a party that should join at some time $\tau' \leq \tau$, but did not specify it as a party that should leave at any time $\tau''\leq \tau$.
Note that we consider parties active in the time slots where they join and leave.
}. Otherwise, if an honest party did not yet join (or already left again) but should leave according to the schedule, this can simply be ignored.
The list $\mathbf{P}$ of honest parties output at time $\tau = 0$ are the parties that run the coordinated setup. Each of those parties runs $\Pi.\iInit(\mathbf{P},\AuxRole_1,\ldots,\AuxRole_{\ell})$, where the $\AuxRole_i$ are their respective auxiliary data as above. See \cref{def:experiment} for a more precise description.
For time $\tau > 0$, if the join-leave schedule specifies that $\party$ should join via $\{\party_1,\ldots,\party_t\}$ with auxiliary data $\AuxRole$, the party $\party$ instead calls
$\Pi.\iJoin(\{\party_1,\ldots,\party_t, \party'_1,\ldots, \party'_{t'}\},\AuxRole)$, where $\party'_1,\ldots,\party'_{t'}$ are additional (potentially malicious) bootstrap nodes adaptively chosen by the adversary.
Honest parties never join a second time after they have left.
Equivalently, we assume that honest nodes get a fresh identifier and forget all previous state if they were to re-join.
Also, we require that any honest node $\party_i$ that is designated as a bootstrap node for the joining of $\party$ by $\JoinLeaveSchedule(\tau)$ at time $\tau$ is actually active at time $\tau - 1$, i.e.\ $\party_i$ joined itself \emph{before} $\party$ and did not yet leave.
Note that $\iJoin$ might take multiple rounds to complete\ifnum\ccs=0\footnote{%
There are multiple ways to formally model algorithms spanning multiple rounds (e.g.\ via a ``sleep functionality'' or sending messages to yourself and stop, then resume upon receiving that message).
For this work, we assume that each interface call and processing each received message each starts a new thread on the affected party. We then have a ``sleep and wake up the next round (but in the same thread)'' operation available.
This is solely to make it easier to precisely talk about terminating calls or returning values from calls: interface calls terminate (possibly with a returned value) iff the thread that was started by that call finishes.
}\fi~and other interface calls are not meaningful until then; we model this by having our security definitions provide no guarantees for interfaces that are called before $\iJoin$ has completed.

Our security definitions will universally quantify over join-leave schedules $\JoinLeaveSchedule$.
Importantly, we remark that this means joining and leaving of honest parties is \emph{independent} of the protocol execution and random coins sampled during the protocol.
In particular, parties that decide to join or leave based on the status of the protocol but otherwise behave honestly are treated as malicious.
Further, we allow protocols to make assumptions on the schedule: our security notions require us to define a class $\AdmissibleSchedules$ of \emph{admissible} schedules and we then only quantify over all admissible schedules $\JoinLeaveSchedule \in \AdmissibleSchedules$.
One example may be the class of all schedules in which honest parties always stay active for at least time $\Delta$, or the class of all schedules such that there is always at least one active honest node.
The admissibility condition might also put requirements on the number of parties involved in the coordinated setup.
%Also, note that if a bootstrap node $\party_i$ used by $\party$ to join the protocol needs to respond to any messages from $\party$, we need that $\party_i$ will stay active until at least $\tau+1$ (or longer, depending on the protocol). This will be handled by an admissibility condition.

% \ghnote{Added the fact that parties join by calling $\Pi.\iJoin$. Note that there is actually no need to distinguish bootstrap nodes from normal nodes as part of the model.
% Essentially, a bootstrap node is non-formally \emph{defined} here as a node by which other parties join.
% The fact that bootstrap nodes are special (i.e. know that they are bootstrap nodes and have slighly differing Join intructions) is a particular feature of our concrete protocol;
% we only admit schedules for our protocol where the join-leave schedule uses such special bootstrap nodes to bootstrap. This seems cleaner that defining two classes of nodes as part of what consitutes a protocol.
% It might be worthwhile to mention this feature of our protocol at this point in the paper and discuss what bootstrap model in real-life.}

% Q: Allow adversary to add malicious bootstrap nodes to each Join?}

% \ghnote{Explicitly say how subprotocols work in this model? }

% \bwnote{TODO: We need to say that nodes cannot call interfaces while their join call is not terminated.}

% \ghnote{Make this a (sub?)-section for better referencing. Some existing references point to sec:prelims:model, which can be made more precise then.}
% \bwnote{Or a definition?}

% \superparagraph{Experiment.}\label{sec:experiment}

With these assumptions, we have now implicitly specified a security experiment, and our security definitions for protocols will make use of this experiment.
\ifnum\ccs=0
We make the experiment explicit in the following definition.

\begin{definition}[Security Experiment]\label{def:experiment}
    The experiment proceeds in rounds, starting at $\tau = 0$.
    For a given adversary $\adversary$ and a join-leave schedule $\JoinLeaveSchedule$, it works as follows:
    \begin{myenumerate}
        \item \emph{Initialization.}
            At $\tau = 0$, we consider the list of honest parties $\mathbf{P} = (\party_1,\ldots,\party_{\ell})$ determined by $\JoinLeaveSchedule(0)$.
            For each such party $\party_i\in\mathbf{P}$, we have some auxiliary data $\AuxRole_i$ that depends on $i$.
            Note that the set of bootstrap nodes is necessarily empty (because those would need to have joined strictly before time $0$).
            Each of those initially active parties $\party_i$ runs $\Pi.\iInit(\mathbf{P},\AuxRole_1,\ldots,\AuxRole_{\ell})$.
            We require this to not span multiple rounds.
    %     The initially active parties execute the initialization instructions specified by $\InitializationInstructions$.
            Next, the adversary $\adversary$ is started. It has oracle access to $\JoinLeaveSchedule$. We skip Steps~\ref{step:joins}--\ref{step:events} for $\tau = 0$.
        \item\label{step:joins} \emph{Honest Parties Join.}
            In each round, honest parties join according to $\JoinLeaveSchedule(\tau)$. For each joining honest party $\party$, call $\Pi.\iJoin( \{ \party_1,\ldots,\party_{t},\allowbreak\party'_1,\ldots,\party'_{t'}\},\AuxRole)$,
            where $\party_1,\ldots,\party_t$ and $\AuxRole$ are specified by $\JoinLeaveSchedule(\tau)$ for $\party$ and $\party'_1,\ldots,\party'_{t'}$ are specified by the adversary as in Step~\ref{step:determine_extra_bootstrap_nodes}.
            Also, continue any $\iJoin$ - calls that were started in past rounds, but did not finish yet.
        \item\label{step:events} \emph{Honest Party Actions.}
            In each round, a list of events (specified in Step~\ref{step:interfacecalls}) is executed as follows
            \begin{enumerate}
                \item The event specifies an honest party $\party$, an interface of the protocol, and a set of inputs.
                \item The event is triggered by calling the interface with the respective inputs on party $\party$, provided $\party$ is active.
            \end{enumerate}
            Furthermore, active honest parties process incoming messages and perform actions based on them.
            Also, active honest parties continue processing actions from previous rounds if the protocol specifies that those actions span multiple rounds;

    %       Note that the actions of an honest $\party$ can span over multiple rounds, as specified by the protocol;
            typically, the latter happens due to waiting for some other processing to complete or waiting for incoming messages.
            We stress that, unless specified otherwise, this means that waiting does \emph{not} block parties from processing new event or handling other messages.
    %         While we assume that interfaces cannot be called on $\party$ until its $\iJoin$ call has completed, $\party$ will still process incoming messages.
        \item \label{step:leaving}\emph{Leaving.}
            At the end of each round $\tau$, honest parties leave according to $\JoinLeaveSchedule(\tau)$.
        \item \emph{Adversarial Actions.}
            In each round, $\adversary$ can take the following actions, after observing the messages that honest parties send:
            \begin{enumerate}
                \item Create malicious parties.
                \item Determine the behavior (e.g., messages) of all malicious parties in this round;
                \item\label{step:interfacecalls} Determine the events for the next round (cf.\ Step~\ref{step:events}); this means
                \begin{enumerate}
                    \item Determine which interface calls (excluding $\iInit$ and $\iJoin$) are executed on honest parties and their inputs.
                    \item Send messages originating from any malicious party to any honest party (thereby adding message processing events).
                    \item Determine the order in which interface calls and messages are processed by active honest parties (including continuations from previous rounds, if processing spans multiple rounds).
                    \item\label{step:determine_extra_bootstrap_nodes}Determine the additional bootstrap nodes used by the parties that join in the next round. The adversary can query $\JoinLeaveSchedule$, so it knows which honest parties will join in the future.
                \end{enumerate}
            \end{enumerate}
        Recall that we assume authenticated channels and we assume a fixed (but unbounded) set of malicious parties, so creating a malicious party really means creating communication channels to honest parties.
        Also, since the adversary is byzantine, there is no need for its parties to join the protocol via $\Pi.\iJoin$; it just sends appropriate messages.
    \end{myenumerate}
    We assume that the experiment does not terminate, but we only require that our security properties hold for a certain lifetime $\ProtocolLifetime \in \NN$.
    \end{definition}
\else
We make the experiment explicit in \cref{sec:appendix:ccs:experiment}, \cref{def:experiment}.
\fi
\begin{remark}[Setup]\label{rmk:weaker_setup}
In our model, we assume that the initial parties are all honest. This is mostly for simplicity because we want to model that our protocol starts with a defined state.
The assumption can be weakened to only require that the set $\HonestInitialParties$ of initial honest parties know each other.
In this case, every honest initial party $\party\in\HonestInitialParties$ is initialized with $\prot.\iInit(\HonestInitialParties\union\AdversarialInitialParties{\party},\AuxRole_1,\ldots)$ where $\AdversarialInitialParties{\party}$ are adversarially chosen parties.
We remark that the protocol we define in \cref{sec:construction} is also secure in this more general model, provided we instantiate it with an appropriate subnet protocol, such as the concrete subnet protocol in \cref{sec:appendix:simplesubnetprotocol}.
\end{remark}
% \ghnote{Add a remark here about the strength of the assumption that initial parties are all honest.
% We might add an extension saying that we actually only need that honest parties \emph{agree} on the set of honest parties. I.e.\ we could allow additional malicious parties, provided all honest parties get (at least) all honest parties as input to $\iInit$.
% This at least works if we instantiate our protocol with our concrete subnet protocol.}

Our robustness notions defined later (e.g., \cref{def:rda:robustness}) and our security proofs will heavily make use of certain events that may happen during a run of this experiment.
For ease of reference, we provide a table on Page \pageref{table:eventsoverview} of all such events defined throughout the document.
\ifnum\ccs=0
Let us start by defining some events related to activity of parties that we will need very often, and state a trivial proposition that follows from the definition and our assumptions that parties never re-join.
\else
Let us start by defining some events related to activity of parties that we will need very often.
\fi
\begin{definition}[Activity Events]
    \label{def:EventActive}
Let $0\leq \tau \leq \tau'$ be time slots and let $\party$ be an honest party.
Consider a run of our experiment from \cref{def:experiment} for some protocol and some join-leave schedule $\JoinLeaveSchedule$.
Then, we define the following events.
\begin{myitemize}
    \item Event $\EventActive{\party}{\tau}$: This event occurs if at time $\tau$, honest party $\party$ is active.%
            \ifnum\ccs=0\footnote{%
            In our experiment that we introduced in \cref{def:experiment}, an honest party $\party$ always joins at the \emph{beginning} of a round $\tau_1$ (or is one of the initial parties, in which case $\tau_1 = 0$) and leaves at the \emph{end} of a round $\tau_2$ (or never leaves, in which case $\tau_2 = \infty$).
            This means that $\party$ is active precisely from time $\tau_1$ until $\tau_2$.
            Since $\party$ will not join/leave in the middle of a round, we are guaranteed that interfaces can be called and all messages are processed by $\party$ in any round during which $\party$ is active.
            }\fi
    \item Event $\EventActiveDuration{\party}{\tau}{\tau'}$: This event occurs if honest party $\party$ is active (at least) from time $\tau$ to $\tau'$ (both inclusive).

    \item Event $\EventFullyJoinedDuration{\party}{\tau}{\tau'}$: This event occurs if the honest party $\party$ is active (at least) from $\tau$ to $\tau'$ and either is one of the initial parties or its $\iJoin$-call has finished by time slot $\tau$.%
    \ifnum\ccs=0\footnote{%
            Recall that in our experiment \cref{def:experiment}, the $\iJoin$-call will finish at the \emph{beginning} of some round $\tau_0$, even if that $\iJoin$-call takes multiple rounds. For that round $\tau_0$ itself, we consider $\party$ already fully joined, so $\EventFullyJoined{\party}{\tau_0}$ holds.
            }\fi
    \item Event $\EventFullyJoined{\party}{\tau}$: This is short for $\EventFullyJoinedDuration{\party}{\tau}{\tau}$.
%     \item Event $\EventFullyJoined{\party}{\tau}$: This event holds if at time $\tau$, honest party $\party$ is active and is either one of the initial parties or its $\iJoin$-call has finished by time slot $\tau$.%
\end{myitemize}
\end{definition}
\ifnum\ccs=0
\begin{proposition}[Activity is an Interval]
    Let $\party$ be an honest party and $0\leq \tau_0 \leq \tau_1$ be points in time.
    Then, for any point in time $\tau$ with $\tau_0 \leq \tau \leq \tau_1$, we have
    \begin{align*}
        \EventActive{\party}{\tau_0} &\land \EventActive{\party}{\tau_1} \implies \EventActive{\party}{\tau}, \text{ and}\\
        \EventFullyJoined{\party}{\tau_0}& \land \EventActive{\party}{\tau_1} \implies \EventFullyJoined{\party}{\tau}\enspace.
    \end{align*}
    Consequently, we have
    \begin{align*}
        \EventActive{\party}{\tau_0} &\land \EventActive{\party}{\tau_1} \iff \EventActiveDuration{\party}{\tau_0}{\tau_1}, \text{ and}\\
        \EventFullyJoined{\party}{\tau_0} &\land \EventActive{\party}{\tau_1} \iff \EventFullyJoinedDuration{\party}{\tau_0}{\tau_1}\enspace.
    \end{align*}
\end{proposition}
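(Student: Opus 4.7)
The plan is to argue directly from the experiment description in \cref{def:experiment} together with the explicit assumption that honest parties never re-join. I would first observe that these conditions imply that every honest party $\party$ has a uniquely determined join time $\tau_j(\party)\in\NN$ (equal to $0$ for initial parties, otherwise the round in which the $\iJoin$-call was initiated) and a uniquely determined leave time $\tau_\ell(\party)\in\NN\cup\{\infty\}$ (using $\infty$ if $\party$ never leaves). Because the experiment places joins at the \emph{beginning} and leaves at the \emph{end} of a round, $\EventActive{\party}{\tau}$ holds precisely when $\tau_j(\party)\leq \tau\leq \tau_\ell(\party)$.

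With this characterization, the first implication is immediate: $\EventActive{\party}{\tau_0}$ gives $\tau_j(\party)\leq \tau_0$ and $\EventActive{\party}{\tau_1}$ gives $\tau_1\leq \tau_\ell(\party)$, so for any $\tau$ with $\tau_0\leq\tau\leq\tau_1$ we get $\tau_j(\party)\leq\tau\leq\tau_\ell(\party)$, i.e.\ $\EventActive{\party}{\tau}$. For the second implication I would similarly introduce a ``fully joined'' time $\tau_f(\party)$: either $\tau_f(\party) = 0$ if $\party$ is initial, or the round in which the $\iJoin$-call completes. Then $\EventFullyJoined{\party}{\tau}$ is equivalent to $\tau_f(\party)\leq \tau\leq \tau_\ell(\party)$. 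From $\EventFullyJoined{\party}{\tau_0}$ we obtain $\tau_f(\party)\leq \tau_0\leq \tau$, and applying the first implication with the activity hypothesis at $\tau_1$ yields $\tau\leq \tau_\ell(\party)$; combined, this gives $\EventFullyJoined{\party}{\tau}$.

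For the ``consequently'' equivalences, the forward direction of each is trivial since $\EventActiveDuration{\party}{\tau_0}{\tau_1}$ (resp.\ $\EventFullyJoinedDuration{\party}{\tau_0}{\tau_1}$) specializes at the endpoints to $\EventActive{\party}{\tau_0}\land\EventActive{\party}{\tau_1}$ (resp.\ $\EventFullyJoined{\party}{\tau_0}\land\EventActive{\party}{\tau_1}$). The backward direction follows by applying the two implications at every $\tau\in[\tau_0,\tau_1]$ and reading off the definitions of $\EventActiveDuration{\cdot}{\cdot}{\cdot}$ and $\EventFullyJoinedDuration{\cdot}{\cdot}{\cdot}$.

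There is no genuine obstacle here; this proposition is essentially a bookkeeping fact about the experiment. The only subtlety worth flagging in the write-up is justifying that $\tau_j,\tau_\ell,\tau_f$ are well-defined single values rather than sets of rounds, which is exactly why the experiment is phrased so that joins and leaves occur at round boundaries and honest parties get fresh identifiers on re-join.
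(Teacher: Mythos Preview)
Your proposal is correct and takes essentially the same approach as the paper: the paper's proof is the single sentence ``This follows from the definitions and the fact that parties never re-join,'' and you have simply made explicit the underlying characterization via well-defined join, leave, and fully-joined times that this sentence relies on.
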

\noindent
We will use these properties without explicitly referring to this proposition.
\begin{proof}
This follows from the definitions and the fact that parties never re-join.
\end{proof}
\fi

%\clearpage
\section{Robust Distributed Arrays}
\label{sec:rda}
Now, we present the definition of robust distributed arrays motivated by data availability sampling.
\subsection{Background: Data Availability Sampling}
To motivate our definition of robust distributed arrays, we recall the basic setting of data availability sampling (DAS).
Robust distributed arrays are meant to implement the networking layer of DAS.

\superparagraph{Data Availability Sampling.}
\label{sec:background:das}
DAS~\cite{FC:ASBK21,cryptoeprint:2023/1079} allows clients to confirm whether data (such as a transaction list) is available on a peer-to-peer network.
We can think of this process as follows: \begin{myenumerate}
    \item A proposer holding the data redundantly encodes the data and computes a succinct commitment of this encoding.
    \item The proposer distributes this encoding on a peer-to-peer network, such that it is stored across various nodes.
    \item Clients download the concise commitment.
    \item To verify the availability of the associated data on the network, clients randomly request specific portions of the encoded data from the network and check consistency with the commitment. If enough of these random checks are successful, the clients can infer that the data is indeed available.
\end{myenumerate}
Hall-Andersen, Simkin, and Wagner~\cite{cryptoeprint:2023/1079} have defined a formal cryptographic framework for DAS and have analyzed several constructions.
Importantly, their model and analysis concerns the encoding and verification part of DAS, but \emph{not how the network is implemented}.
That is, in their model they treat the network as a black-box by providing an oracle to the clients:
\begin{myitemize}
    \item \emph{Correctness Case.} When modeling the requirement that all clients accept if the encoding is indeed available and provided by an honest proposer, clients are assumed to have oracle access to the encoding. That is, a client can input $i$ to its oracle and obtains the $i$th symbol of the encoding.
    \item \emph{Adversarial Case.} When defining security properties, the oracle is instead under full control of the adversary. In particular, the adversary can decide for each query what to return, and it can make its decisions adaptive, even taking into account which client makes the query.
\end{myitemize}
The natural next question after the work of Hall-Andersen, Simkin, and Wagner is how the network would be implemented in reality, and what kind of properties it should achieve.

\superparagraph{Basic Network Interface.}
From the description above it is clear that any implementation of the network should provide two services: \begin{myitemize}
    \item \emph{Storing The Encoding.} The proposer has to have a way to distribute the encoding somehow in the network.
    \item \emph{Making Queries.} The clients need to be able to specify indices $i$ and get back the $i$th symbol of the encoding from the network.
\end{myitemize}
We will make this interface more precise when giving our formal definition, but one may already notice that the (non-distributed) data-structure that provides this kind of interface is simply an \emph{array}.

\superparagraph{What Can't Go Wrong.}
As Hall-Andersen, Simkin, and Wagner show, constructions of DAS remain \emph{secure} even in case the adversary is in full control of the oracle, i.e., in full control of the network. Therefore, \emph{no matter how} the network is implemented, it is guaranteed that (enough) clients only accept if data is indeed available.
One may be tempted to think that this means we are done, and that we could just use any network implementation, but as we see next, this thought is misleading.

\superparagraph{What Can Go Wrong.}
It turns out that the model of Hall-Andersen, Simkin, and Wagner alone does not give us \emph{liveness}.
More specifically, consider the case where we have an honest proposer that encodes the data and makes it available on the network. But consider the case that a certain portion of peers on the network acts maliciously. For example, these peers may delete the symbols of the encoding that they receive, forward incorrect symbols, or try to prevent honest peers from receiving correct data.
Then, a bad implementation of the network would make clients think that data is not available (e.g., because they do not get responses for their queries), even though the proposer made it available initially.
This is what our robustness definition should rule out.

Another relevant scenario that arises is that some encoded symbols are initially unavailable but later recovered by honest parties through the properties of the underlying erasure code.
In this case, these honest parties should be able to reintroduce the recovered symbols to the network, ensuring that other honest peers can access them. This mechanism is crucial for eventually reaching agreement on data availability.

\subsection{Defining Robust Distributed Arrays}
We now turn to a formal definition of robust distributed arrays, using the network and adversarial model that we have introduced in \cref{sec:prelims:model}.
First, we define the syntax of such protocols.
As already outlined above, we require two interfaces.
Namely, to allow parties to distribute encodings, we require an interface $\iStore$, which takes an index $i$ and a symbol $x$, and intuitively should set the $i$th position of the array to $x$.
Parties can read from the array by specifying the position $i$ in a call to an interface $\iGet$, which intuitively should return the $i$th symbol of the array.
To allow for a collection of arrays, we also add a \emph{handle} $h$ as input to the interfaces, with the intuition that different $h$'s access different arrays.
In the setting of DAS, we think of $h$ as being the concise commitment that all clients download.
As distributed arrays could have other applications, we refer to the encoding as a \emph{file}.
\begin{definition}[Distributed Array]
    A distributed array protocol with file space $\SymbolSpace^\FileLength$ and handle space $\HandleSpace$ is a protocol $\prot$ offering the interfaces $\iStore$, $\iGet$, with the following syntax: \begin{myitemize}
        \item Interface $\iStore(h,i,x)$: This interface can be called by any party $\party$. It takes as input a handle $h \in \HandleSpace$, an index $i \in [\FileLength]$, and a symbol $x \in \SymbolSpace$.
        It does not return anything to $\party$.
        \item Interface $\iGet(h,i)$: This interface can be called by any party $\party$. It takes as input a handle $h \in \HandleSpace$ and an index $i \in [\FileLength]$. It returns a symbol $x \in \SymbolSpace$ to $\party$.
        \item Interfaces $\iJoin$ and $\iInit$ as required by \cref{sec:prelims:model}.
    \end{myitemize}
\end{definition}
We have yet to provide a \emph{semantic} definition of how $\iStore$ and $\iGet$ should behave.
To establish a meaningful definition, first consider the simplified scenario in which all parties behave honestly and what would constitute the correctness property of a (non-distributed) array: if a symbol $x$ is stored at position $i$ and this position is later accessed, the value $x$ should be returned, assuming that $x$ has not been overwritten in the interim.
To simplify our lives, we will disregard this overwriting assumption and instead focus on position-symbol pairs satisfying a predicate $\rdspred(h, i, x)$, where $\rdspred$ models the \emph{verification} of responses in DAS. Crucially, we assume that $\rdspred$ is position-binding (see \cref{remark:positionbinding}).

Our goal is to define a robustness property that mirrors the correctness property of traditional, non-distributed arrays, but within a distributed and adversarial context. Specifically, we want to ensure that if an honest party calls $\iStore(h, i, x)$ and another honest party subsequently calls $\iGet(h, i)$, the latter retrieves $x$. Naturally, this requirement must be adapted to accommodate the distributed and adversarial nature of our setting.

First, the fact that we are in a distributed setting necessitates several changes.
For instance, a symbol stored via $\iStore$ might not be immediately accessible through $\iGet$, as time is required for the symbol to propagate through the system. To address this, we introduce two delay parameters: \begin{myitemize}
    \item $\GetDelay$: the delay between invoking $\iGet$ and receiving a response.
    \item $\StoreGetDelay$: the delay required after an $\iStore$ call before the stored symbol can be reliably accessed.
\end{myitemize}
% \ghnote{Defition of $\GetDelay$ is actually slightly wrong: The delay is only for getting a valid response. We make no restriction on how long it takes to output $\bot$.
% In fact, in our construction, it takes 3 rounds to output $\bot$ if the symbol is not present, whereas it only takes 2 if the symbol is present.
% This is an artifact of the synchronous model.}
Also, we may need to make some assumptions on how (many) parties join and when, so we parameterize the definition by a set of admissible schedules $\AdmissibleSchedules$.

Second, in the presence of adversaries, we cannot expect correctness to hold universally.
Instead, we allow for some tolerance of adversarial influence:
\begin{myitemize}
    \item For a small $\RelPosCorrThresh$-fraction of indices $\Corrupted_{\party}^\tau \subseteq [\FileLength]$, correctness may fail. These corrupted indices may depend on the specific parties $\party$ calling the interface and the current time $\tau$.
    \item All of this is then required to hold with high probability $1-\delta$ over the protocol's lifetime, $\ProtocolLifetime$.
\end{myitemize}
We now present the formal robustness definition.
To summarize, it states that for most indices $i$, if a party $\PartyStoring$ stored the symbol $x$ at position $i$ at time $\tauStore$, and after some minimum delay $\StoreGetDelay$ a party $\PartyGetting$ calls $\iGet(h,i)$, then this call returns $x$ after delay $\GetDelay$.
\begin{definition}[Robustness]
    Let $\prot$ be a distributed array protocol with file space $\SymbolSpace^\FileLength$, handle space $\HandleSpace$, and predicate $\rdspred\colon \HandleSpace \times [\FileLength] \times \SymbolSpace \to \bool$.
    Consider the experiment specified by our network and adversarial model for an adversary $\adversary$ and a join-leave schedule $\JoinLeaveSchedule$.
    In this experiment, consider the following events:
    \begin{myitemize}
        \item Event $\EventStored{\PartyStoring}{\tauStore}{h}{i}{x}$:
            This event occurs if at time $\tauStore$, honest party $\PartyStoring$ calls $\iStore(h,i,x)$.
        \item Event $\EventCalledGet{\PartyGetting}{\tauGet}{h}{i}$:
            This event occurs if at time $\tauGet$, honest party $\PartyGetting$ calls $\iGet(h,i)$.
        \item Event $\EventGotResult{\PartyGetting}{\tauGet}{h}{i}{\Delta}{x}$:
            \ifnum\ccs=0
            This event occurs if $\EventCalledGet{\PartyGetting}{\tauGet}{h}{i}$ occurs and at some point $\tau \in [\tauGet,\tauGet+\Delta]$, the call to $\iGet$ returns $x$.
            \else
            This event occurs if we have both $\EventCalledGet{\PartyGetting}{\tauGet}{h}{i}$ and at some point $\tau \in [\tauGet,\tauGet+\Delta]$, the call to $\iGet$ returns $x$.
            \fi
    \end{myitemize}
    Then, we say that $\prot$ is $(\rdspred,\RelPosCorrThresh,\RobustnessError,\ProtocolLifetime,\StoreGetDelay,\GetDelay,\AdmissibleSchedules)$-robust, if for every adversary $\adversary$ and $\JoinLeaveSchedule \in \AdmissibleSchedules$, the following holds except with probability at most $\RobustnessError$:

    \smallskip\noindent There is a family of subsets $\Corrupted_{\party}^\tau \subseteq [\FileLength]$, indexed by honest parties and time, such that: \begin{itemize}
        \item \emph{Bounded Size.}
            For every honest party $\party$ and any point in time $0 \leq \tau \leq \ProtocolLifetime$, we have $|\Corrupted_\party^\tau| \leq \RelPosCorrThresh \FileLength$.
        \item \emph{Correctness if not Corrupted.}
            For all honest parties $\PartyStoring,\PartyGetting$,
            all points in time $0 \leq \tauStore,\tauGet \leq \ProtocolLifetime$ with $\tauGet \geq \tauStore + \StoreGetDelay$,
            all handles $h \in \HandleSpace$, symbols $x \in \SymbolSpace$, and indices $i \in [\FileLength] \setminus (\Corrupted_{\PartyStoring}^{\tauStore} \union \Corrupted_{\PartyGetting}^{\tauGet})$, we have
        \ifnum\ccs=0
        \begin{align*}
            \left(\begin{array}{rl}
                        &\EventFullyJoined{\PartyStoring}{\tauStore}\\
                \land   &\EventStored{\PartyStoring}{\tauStore}{h}{i}{x} \\
                \land   &\rdspred(h,i,x) \\
                \land   &\EventFullyJoinedDuration{\PartyGetting}{\tauGet}{(\tauGet+\GetDelay)}\\
                \land   &\EventCalledGet{\PartyGetting}{\tauGet}{h}{i}
            \end{array} \right)
             \implies \EventGotResult{\PartyGetting}{\tauGet}{h}{i}{\GetDelay}{x}.
        \end{align*}
        \else
        % squeeze the above for CCS layout.
        % maybe don't use \land at all and make a single big \}  (similar to cases)?
        \begin{align*}
            &\left(\begin{array}{rl}
                        &\EventFullyJoined{\PartyStoring}{\tauStore}\\
                \land   &\EventStored{\PartyStoring}{\tauStore}{h}{i}{x} \\
                \land   &\rdspred(h,i,x) \\
                \land   &\EventFullyJoinedDuration{\PartyGetting}{\tauGet}{(\tauGet+\GetDelay)}\\
                \land   &\EventCalledGet{\PartyGetting}{\tauGet}{h}{i}
            \end{array} \right)
             \\
             \implies &\EventGotResult{\PartyGetting}{\tauGet}{h}{i}{\GetDelay}{x}.
        \end{align*}
        \fi
    \end{itemize}
    \label{def:rda:robustness}
\end{definition}
% \ghnote{Change Active to .. notation in def. for better clarity; depends on remark above.}
\begin{remark}[Position-Binding]
    In the context of DAS, the predicate will always be \emph{position-binding} (see, e.g., \cite{cryptoeprint:2023/1079}). That is, no efficient adversary can output a handle $h \in \HandleSpace$, a position $i \in [\FileLength]$, and two symbols $x,x' \in \SymbolSpace$ with $x \neq x'$ and $\rdspred(\handle,i,x) = \rdspred(\handle,i,x') = 1$.
    \label{remark:positionbinding}
\end{remark}
\begin{remark}[No Secrecy]
    We do not require any secrecy from the distributed array protocol. The adversary is allowed to immediately learn any symbol that is stored, and any query that is made.
\end{remark}
\begin{remark}[Agreement on the Handle]
    The robust distributed array protocol does not specify how parties agree on the handle $h$ (the commitment in DAS). One can assume that another protocol (e.g., a broadcast channel, realized via the blockchain) takes care of that.
\end{remark}

\superparagraph{Efficiency Metrics.}
We aim to minimize the following metrics:
\begin{myitemize}
    \item \emph{Latency and Round Complexity.} What is the latency of the $\iStore$ and $\iGet$ interfaces? Specifically, what are the delays $\StoreGetDelay$ and $\GetDelay$, and how many communication rounds are required when these interfaces are invoked?
    \item \emph{Communication Complexity.} How much data is exchanged among honest parties during protocol execution?
    \item \emph{Bandwidth per Node.} What is the data transmission and reception rate for an honest party over a fixed time period?
    \item \emph{Number of Connections.} How many peer connections must a party maintain? While memory constraints are less of a concern, maintaining a high number of connections increases bandwidth demands.
    \item \emph{Data per Node.} What portion of the data must each node store?
\end{myitemize}

%\clearpage
\section{Our Construction}
\label{sec:construction}
In this section, we present our construction and analyze it.
\ifnum\ccs=0
We start with an informal overview.
Then, we formally define a necessary building block. Finally, we specify our protocol and analyze it.
\fi

\subsection{Overview}
In this section, we explain the intuition behind our protocol, assuming a fixed handle $h \in \HandleSpace$ for simplicity.
We start with a simple protocol and then introduce our protocol as a generalization that improves on it in terms of bandwidth requirements. This first solution is (almost) identical to the PANDAS proposal~\cite{ethresearPANDASPractical} (see \cref{sec:relwork:das}).

\superparagraph{Starting Point: A Clique Network.}
Robust distributed hash tables typically rely on honest-majority assumptions, and each query generates a large polylogarithmic number of messages and incurring polylogarithmic latency.
Our first goal is to overcome these limitations.
To achieve this, we split the file from $\SymbolSpace^\FileLength$ to be stored into $k_2$ chunks\footnote{It will be clear later why we denote the number of chunks by $k_2$.} of equal size, i.e., each chunk containing $\FileLength / k_2$ symbols.
When a party joins, it is assigned responsibility to store a random chunk.
We make the following assumptions: (1) when an honest party $\party$ joins, all other honest parties learn about $\party$ within a bounded time; (2) one can publicly determine for which chunk a given party $\party$ is responsible, e.g., by assigning the chunks via a random oracle.
With these assumptions, a simple protocol implementing distributed arrays with constant latency exists:
to store or get a symbol at index $i$, one first identifies the chunk $j \in [k_2]$ in which this symbol resides.
Then, one identifies all parties responsible for chunk $j$ and contacts them (in parallel).
In the case of get, one takes the response satisfying the predicate\footnote{Recall that we assume that the predicate is position-binding, cf.\ \cref{remark:positionbinding}.}.
By our two assumptions, it must be that all honest parties that are responsible for chunk $j$ are contacted.
Therefore, as long as \emph{at least one} honest party is responsible for the chunk, the query succeeds.
For any fixed chunk, the probability that no honest party is responsible is at most $(1-1/k_2)^{N}$, where $N$ denotes (a lower bound on) the number of honest parties. If we assume $N \geq \Omega(\secpar k_2)$, we get a correct response with overwhelming probability in $\secpar$. Notably, the number of malicious parties does not matter at all, and so we have eliminated the need for an honest-majority assumption. Robustness is determined solely by the absolute number of honest parties, while the presence of adversarial parties only affects efficiency.

The downside of this protocol is that every party needs to know the identities of \emph{all} other parties. Maintaining this global party list requires significant bandwidth. Consequently, our next goal is to remove this requirement while preserving the protocol's efficiency and robustness.

\superparagraph{Splitting Parties into Rows.}
Our first idea towards a solution is to have $k_1$ \emph{rows}, each row copying the approach from before.
For now, assume these rows are fully independent networks and that each party would be part of one row.
In particular, we can think of each party as being in one cell of a $k_1\times k_2$ matrix, where each column is responsible for storing one chunk, and nodes within rows form a clique.
If a party in row $i$ wants to access a symbol in a chunk $j \in [k_2]$, it would now ask all parties in cell $(i,j)$ for that symbol.
If we can assign parties to rows in a balanced way, then each of these cliques will contain only an $1/k_1$ fraction of all parties, and so we reduced the bandwidth requirements.
Assuming for now that this is exact, then fixing a row and a chunk as before, the probability that no honest party is responsible for that chunk in that row is at most $(1-1/k_2)^{N/k_1}$. This means that our requirement on $N$ slightly increases to $N \geq \Omega(\secpar k_1 k_2)$.
Of course, in an open distributed protocol, there is no omniscient coordinator that could ensure that every row contains exactly $N/k_1$ honest parties.
To address this, we adopt a randomized approach: parties are assigned to \emph{random} rows. Within their rows, they are then assigned to random cells as before.
Intuitively, with a sufficiently large number of parties, each row is likely to contain a comparable number of honest parties.

\superparagraph{How to Store Data.}
So far, it is not clear how parties in different rows actually get the data in the first place.
Specifically, assume a party $\party$ in row $i$ wants to store a symbol that resides in chunk $j \in [k_2]$.
Then, it can of course send it to every party in cell $(i,j)$, but because rows are completely independent, there is no way for that data to end up in cell $(i',j)$ for some $i' \neq i$.
This means parties in other rows cannot access this data later.
To solve this issue, we introduce \emph{column subnets}, i.e., we make it so that columns also form cliques.
Then, every party in cell $(i,j)$ that receives the data from $\party$ would use this column subnet to further distribute the data to every other node in the column, thereby propagating it into other rows.
We visualize the process of storing and retrieving data in \cref{fig:tikz_grid_store_get}.
To define the full protocol, we need to specify more details, e.g., how to sync data during joining, and what kind of assumptions we make on these column and row subnets.

    \begin{figure*}
        \centering
        \begin{tikzpicture}
            % Parameters
            \def\kOne{4}       % Number of rows
            \def\kTwo{5}       % Number of columns
            \def\rowHeight{0.3} % Height of each row rectangle
            \def\colWidth{1.1} % Width of each column rectangle
            \def\spacing{0.2}  % Spacing between rows and columns
            \def\highlightCol{4} % Column to highlight

            \pgfmathsetmacro{\colHeight}{\kOne * (\rowHeight + \spacing)+\spacing}
            \pgfmathsetmacro{\rowWidth}{\kTwo * (\colWidth + \spacing)+\spacing}

            % First subfigure (Left)
            \begin{scope}
                % Draw rows (horizontal rectangles)
                \foreach \i in {1,...,\kOne} {
                    \pgfmathsetmacro{\yCoord}{-\i * (\rowHeight + \spacing)}
                    \draw[fill=gray!10]
                        (0, \yCoord)
                        rectangle (\rowWidth, \yCoord + \rowHeight);
                }

                % Draw columns (vertical rectangles)
                \foreach \j in {1,...,\kTwo} {
                    \pgfmathsetmacro{\xCoord}{\j * (\colWidth + \spacing)}
                    \ifthenelse{\j=\highlightCol}{
                        \draw[draw=blue, thick] (\xCoord - \colWidth, 0) rectangle (\xCoord, -\colHeight);
                    }{
                        \draw[] (\xCoord - \colWidth, 0) rectangle (\xCoord, -\colHeight);
                    }
                    \ifthenelse{\j=4}{
                        \node[anchor=north, font=\bfseries] at (\xCoord - 0.5*\colWidth, -\colHeight - 0.1) {\parbox{1.5cm}{\centering\scriptsize store \\ symbol \j}};
                    }{
                        \node[anchor=north] at (\xCoord - 0.5*\colWidth, -\colHeight - 0.1) {\parbox{1.5cm}{\centering\scriptsize store \\ symbol \j}};
                    }
                }

                % Nodes at intersections
                \pgfmathsetmacro{\nodeY}{-3 * (\rowHeight + \spacing) + 0.25}
                \pgfmathsetmacro{\nodeXOne}{2 * (\colWidth + \spacing) - 0.5 * \colWidth}
                \pgfmathsetmacro{\nodeXTwo}{4 * (\colWidth + \spacing) - 0.5 * \colWidth}

                \node[circle, fill=black, inner sep=2pt] (A) at (\nodeXOne, \nodeY) {};
                \node[circle, fill=black, inner sep=2pt] (B) at (\nodeXTwo, \nodeY) {};

                % Arrows
                \draw[->, thick, bend left=20] (A) to (B);

                % More visible self arrow
                \draw[->, thick] (B) ++(0.15,0) to ++(0.2,0.1);

                % Arrows from second node to full column
                \foreach \i in {1,...,\kOne} {
                    \ifthenelse{\i=3}{}{
                        \pgfmathsetmacro{\yCoord}{-\i * (\rowHeight + \spacing) + 0.25}
                        \draw[->, thick, bend right=20] (B) to (\nodeXTwo - 0.15, \yCoord);
                    }
                }
            \end{scope}

            % Second subfigure (Right)
            \begin{scope}[xshift=8.5cm] % Shift to the right
                % Draw rows (horizontal rectangles)
                \foreach \i in {1,...,\kOne} {
                    \pgfmathsetmacro{\yCoord}{-\i * (\rowHeight + \spacing)}
                    \draw[fill=gray!10]
                        (0, \yCoord)
                        rectangle (\rowWidth, \yCoord + \rowHeight);
                }

                % Draw columns (vertical rectangles)
                \foreach \j in {1,...,\kTwo} {
                    \pgfmathsetmacro{\xCoord}{\j * (\colWidth + \spacing)}
                    \ifthenelse{\j=\highlightCol}{
                        \draw[draw=blue, thick] (\xCoord - \colWidth, 0) rectangle (\xCoord, -\colHeight);
                    }{
                        \draw[] (\xCoord - \colWidth, 0) rectangle (\xCoord, -\colHeight);
                    }
                    \ifthenelse{\j=4}{
                        \node[anchor=north, font=\bfseries] at (\xCoord - 0.5*\colWidth, -\colHeight - 0.1) {\parbox{1.5cm}{\centering\scriptsize store \\ symbol \j}};
                    }{
                        \node[anchor=north] at (\xCoord - 0.5*\colWidth, -\colHeight - 0.1) {\parbox{1.5cm}{\centering\scriptsize store \\ symbol \j}};
                    }
                }

                % Nodes at intersections
                \pgfmathsetmacro{\nodeYTop}{-1 * (\rowHeight + \spacing) + 0.25}
                \pgfmathsetmacro{\nodeXFive}{5 * (\colWidth + \spacing) - 0.5 * \colWidth}
                \pgfmathsetmacro{\nodeXFour}{4 * (\colWidth + \spacing) - 0.5 * \colWidth}

                \node[circle, fill=black, inner sep=2pt] (C) at (\nodeXFive, \nodeYTop) {};
                \node[circle, fill=black, inner sep=2pt] (D) at (\nodeXFour, \nodeYTop) {};

                % Bidirectional arrows
                \draw[->, thick, bend right=20] (C) to (D);
                \draw[->, thick, bend right=20] (D) to (C);
            \end{scope}
        \end{tikzpicture}

        \caption{Visualization of our protocol $\ourprot$ with \(k_1 = 4\) rows and \(k_2 = 5\) columns. Each row and column represents a subnet (i.e., clique) for the subnet discovery protocol $\subnetprot$ as per \cref{def:subnetdiscovery}. Parties in column $j \in [k_2]$ store the $j$th chunk of the data. For simplicity, we assume that the number of symbols is $\FileLength = k_2$ in this visualization.
        \emph{Left:} node in column subnet $2$ and row subnet $3$ storing a symbol that must be stored in column subnet $4$. First, the node sends it to \emph{all} nodes in cell $(3,4)$, then every such node forwards it to all nodes in that column. \emph{Right:} node in column subnet $5$ and row subnet $1$ querying a symbol that must be stored in column subnet $4$. First, the node sends a query to \emph{all} nodes in cell $(1,4)$. Then, all nodes respond with the symbol.}
        \label{fig:tikz_grid_store_get}
        \label{fig:tikz_grid}
    \end{figure*}

\subsection{Subnet Discovery}
Before presenting the precise description of our protocol, we introduce an abstract subprotocol that we will use for implementing the row and column subnets.
As outlined in our overview, we aim to maintain the intuitive invariant that in each row and column all nodes are fully connected.
To model this, we introduce the definition of a \emph{subnet discovery protocol}.
This protocol provides two key functionalities: (a) a node can join a subnet through interaction with an existing member, and (b) a node can look up the list of all members of the subnet.
\begin{definition}[Subnet Discovery Protocol]
    A subnet discovery protocol (for at most $\numberofsubnets$ subnets) is a protocol $\prot$ offering the interfaces $\iCreateSubnet$, $\iJoinSubnet$, and $\iGetPeers$ with the following syntax: \begin{itemize}
        \item Interface $\iCreateSubnet(\subnetid, \mathbf{P})$: This interface can be called by any party $\party$.
        It takes a subnet identifier $\subnetid \in [\numberofsubnets]$ and a set $\mathbf{P}$ of parties.
        It does not return any output.
        \item Interface $\iJoinSubnet(\subnetid, \party')$:
        This interface can be called by any party $\party$. It takes as input a subnet identifier $\subnetid \in [\numberofsubnets]$ and a party $\party'$. It does not return anything to $\party$.
        \item Interface $\iGetPeers(\subnetid)$:
        This interface can be called by any party $\party$. It takes as input a subnet identifier $\subnetid \in [\numberofsubnets]$. It returns a set $\mathbf{P}$ of parties to $\party$.
        \item Interfaces $\iInit(\mathbf{P},\perp,\ldots,\perp)$ and $\iJoin(\emptyset,\perp)$ as required by \cref{sec:prelims:model}.
            We make the simplifying assumption that $\iJoin$ uses no bootstrap nodes (hence the $\emptyset$) and we use no auxiliary data $\AuxRole$ (hence the $\perp$'s).\ifnum\ccs=0 See \cref{rmk:ConcurrentSecurity}.\fi
    \end{itemize}

%     For simplicity, we will only consider Subnet Discovery Protocols for which both $\prot.\iInit$ and $\prot.\iJoin$ complete within a single round. Also, we assume $\prot.\iJoin$ does not use any bootstrap nodes.
%     \ghnote{Discuss whether we could just ditch $\numberofsubnets$, $\iJoinSubnet$ and $\iCreateSubnet$ in favor of $\iInit$ and $\iJoin$ and set $\numberofsubnets = 1$. Just run $\numberofsubnets$ instances of that simplified version.}
    \label{def:subnetdiscovery}
\end{definition}
% \bwnote{Add init and join interfaces explicitly}
\ifnum\ccs=0
\begin{remark}[Multiple subnets]
\label{rmk:ConcurrentSecurity}
A natural way to realize a subnet discovery protocol for $\numberofsubnets$ subnets is to run $\numberofsubnets$ many concurrent instances (parameterized by $\subnetid$) of a simplified subnet protocol for just one subnet each.
However, our definition given here is more general and allows us to state the required robustness for a \emph{concurrent} execution.
Still, it is helpful to think of protocols constructed that way and we will for simplicity only consider subnet discovery protocols $\prot$ for which both
$\prot.\iInit$ and $\prot.\iJoin$ complete within a single round, $\prot.\iJoin$ does not use bootstrap nodes and we do not have extra input $\AuxRole$ for $\prot.\iInit$ and $\prot.\iJoin$.
\end{remark}
\else
\vspace{-3ex plus 3ex} % add flexible negative space here.
\fi%
We now turn to defining robustness for subnet protocols.
Essentially, our definition says that if two parties both join a subnet, then they learn about each other eventually.
As for the robustness definition of robust distributed arrays, we allow a failure probability $\RobustnessError$ and consider a protocol lifetime $\ProtocolLifetime$ and a set of admissible join-leave schedules $\AdmissibleSchedules$.
The definition also uses a delay parameter $\SubnetDelay$ that models the time needed to join the subnets.
\begin{definition}[Robustness]
    \label{def:subnetdiscovery:robustness}
    Let $\prot$ be a subnet discovery protocol.
    Consider the experiment specified by our network and adversarial model for an adversary $\adversary$ and a join-leave schedule $\JoinLeaveSchedule$. In this experiment, consider the following events:
    \begin{itemize}
        \item Event $\EventCreatedSubnet{\subnetid}{\mathbf{P}}$: This event occurs, if at time $\tau = 0$, all parties $\party \in \mathbf{P}$ are honest and call the interface $\iCreateSubnet(\subnetid, \mathbf{P})$. Furthermore, no other calls to $\iCreateSubnet$ with this $\subnetid$ are made by honest parties at any point in time, and no calls to $\iJoinSubnet$ with this $\subnetid$ are made by honest parties at time $\tau = 0$.
        \item Event $\EventStaysInSubnetDelay{\subnetid}{\party}{\tau_0}{\tau_1}$: This event occurs for a honest party $\party$ and time slots $0\leq \tau_0 \leq \tau_1$ if $\party$ is active (at least) from time slots $\tau_0$ to $\tau_1$ (both inclusive) and either one of the following holds:\begin{itemize}
            \item $\EventCreatedSubnet{\subnetid}{\mathbf{P}}$ such that $\party \in \mathbf{P}$. \quad -- or --
            \item At some time $\tau' \leq \tau_0 - \SubnetDelay$, $\party$ calls $\iJoinSubnet(\subnetid, \party')$ for an honest party $\party'$ with $\EventStaysInSubnetDelay{\subnetid}{\party'}{\tau'}{(\tau'+\SubnetDelay)}$.
            \end{itemize}
        \item Event $\EventIsInSubnetDelay{\subnetid}{\party}{\tau}$: This event is a shorthand for $\EventStaysInSubnetDelay{\subnetid}{\party}{\tau}{\tau}$.
        \item Event $\EventCalledGetPeers{\subnetid}{\party}{\tau}$: This event occurs, if at time $\tau$, honest party $\party$ calls $\iGetPeers(\subnetid)$.
        \item Event $\EventGotPeer{\subnetid}{\party}{\party'}{\tau}$: This event occurs, if the event $\EventCalledGetPeers{\subnetid}{\party}{\tau}$ occurs and the call $\iGetPeers(\subnetid)$ returns $\mathbf{P}$ at time $\tau$ such that $\party' \in \mathbf{P}$.
    \end{itemize}

    With these events at hand, we say that $\prot$ is $(\RobustnessError,\ProtocolLifetime,\SubnetDelay,\AdmissibleSchedules)$-robust, if for every adversary $\adversary$ and $\JoinLeaveSchedule \in \AdmissibleSchedules$, the following holds except with probability at most $\RobustnessError$:
    For every point in time $\tau \in  [\ProtocolLifetime]$, every subnet identifier $\subnetid \in [\numberofsubnets]$, and every pair of honest parties $\party,\party'$, we have:
    \begin{align*}
        \left(
            \begin{array}{rl}
                        &\EventIsInSubnetDelay{\subnetid}{\party}{\tau}\\
                \land   &\EventIsInSubnetDelay{\subnetid}{\party'}{\tau}\\
                \land   &\EventCalledGetPeers{\subnetid}{\party}{\tau}
            \end{array}
        \right) \implies \EventGotPeer{\subnetid}{\party}{\party'}{\tau}\enspace.
    \end{align*}
\end{definition}
To summarize, the definition states that if a subnet has been created properly at the onset of the protocol ($\tau = 0$), and an honest party $\party$ is in that subnet, then $\iGetPeers$ returns all other honest parties in that subnet.
We show how to implement a basic subnet discovery protocol that satisfies this notion in \cref{sec:appendix:simplesubnetprotocol}.
\ifnum\ccs=0
\begin{proposition}[One Stays in Subnets]
    Let $\prot$ be a subnet discovery protocol for $\numberofsubnets$ subnets.
    Let $\party$ be an honest party, let $\tau \geq 0$ be a point in time, and let $\subnetid \in [\numberofsubnets]$. Let $\SubnetDelay \geq 0$ be any parameter.
    Then, we have: % with probability $1$:
    \[
        \left(\EventIsInSubnetDelay{\subnetid}{\party}{\tau} \land \EventActive{\party}{\tau+1}\right) \implies \EventIsInSubnetDelay{\subnetid}{\party}{\tau+1}\enspace.
    \]
\end{proposition}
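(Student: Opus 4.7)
The plan is to prove this proposition by a straightforward unfolding of the recursive definition of $\EventStaysInSubnetDelay{\subnetid}{\party}{\tau_0}{\tau_1}$, observing that the non-activity condition is monotone in $\tau_1$. Recall that $\EventIsInSubnetDelay{\subnetid}{\party}{\tau}$ is shorthand for $\EventStaysInSubnetDelay{\subnetid}{\party}{\tau}{\tau}$, so the goal reduces to showing $\EventStaysInSubnetDelay{\subnetid}{\party}{\tau+1}{\tau+1}$.

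First, I would handle the activity requirement. Unfolding $\EventIsInSubnetDelay{\subnetid}{\party}{\tau}$ tells us that $\party$ is active at time $\tau$, and the hypothesis gives us that $\party$ is active at time $\tau+1$. Thus $\party$ is active at $\tau+1$, which is exactly the activity clause for $\EventStaysInSubnetDelay{\subnetid}{\party}{\tau+1}{\tau+1}$.

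Second, I would handle the alternation between the two branches of the definition. By the hypothesis $\EventIsInSubnetDelay{\subnetid}{\party}{\tau}$, one of the following holds. In the first case, we have $\EventCreatedSubnet{\subnetid}{\mathbf{P}}$ for some set $\mathbf{P}$ with $\party\in\mathbf{P}$; this condition is independent of the time slot $\tau_0$ and therefore remains valid as a witness for $\EventStaysInSubnetDelay{\subnetid}{\party}{\tau+1}{\tau+1}$. In the second case, there exists $\tau'\leq \tau - \SubnetDelay$ and an honest party $\party'$ such that $\party$ called $\iJoinSubnet(\subnetid,\party')$ at time $\tau'$ and $\EventStaysInSubnetDelay{\subnetid}{\party'}{\tau'}{(\tau'+\SubnetDelay)}$ holds. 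Since $\tau'\leq \tau-\SubnetDelay \leq (\tau+1)-\SubnetDelay$, the very same $\tau'$ and $\party'$ serve as a witness for the corresponding clause with $\tau_0 = \tau+1$.

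Combining both ingredients, $\EventStaysInSubnetDelay{\subnetid}{\party}{\tau+1}{\tau+1}$ holds, which by definition is $\EventIsInSubnetDelay{\subnetid}{\party}{\tau+1}$, as required. There is no substantive obstacle here; the only subtlety to be careful about is that the recursive definition of $\EventStaysInSubnetDelay{\subnetid}{\party'}{\tau'}{(\tau'+\SubnetDelay)}$ for the witness $\party'$ does not need to be re-established because it refers to a fixed earlier time interval $[\tau',\tau'+\SubnetDelay]$ and is unaffected by whether we are examining $\party$'s membership at $\tau$ or $\tau+1$.
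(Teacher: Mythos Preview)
Your proof is correct. The paper states this proposition without proof, treating it as immediate from the recursive definition of $\EventStaysInSubnetDelay{\subnetid}{\party}{\tau_0}{\tau_1}$; your unfolding of the two branches (creation versus joining via a witness $\party'$ at an earlier time $\tau'\leq \tau-\SubnetDelay$) and the observation that both the creation clause and the witness clause are independent of, or monotone in, the time index $\tau_0$ is exactly the intended justification.
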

\fi

\subsection{Construction}
Let $k_1,k_2 \in \NN$ and assume that we have a subnet discovery protocol $\subnetprot$ for $\numberofsubnets = k_1 + k_2$ subnets.
Let $\SubnetDelay\geq 2$ be a delay parameter for the subnet discovery protocol and let $\SyncDelay \geq 2$ be some synchronization delay parameter modeling the time needed to synchronize the stored data (see \cref{rmk:SyncDelay}).
Given that, we now specify a distributed array protocol $\ourprot$ with arbitrary file space $\SymbolSpace^\FileLength$ and arbitrary handle space $\HandleSpace$.
For simplicity, we will assume that $k_2$ divides $m$.
We will make use of a hash function $\Hash\colon\boolstar\rightarrow\bool$ to assign parties to cells, which is modeled as a random oracle.
In the following, we describe our protocol verbally.
We give the full specification of $\ourprot$ as pseudocode in \cref{fig:ourprotocol:initialization,fig:ourprotocol:joining,fig:ourprotocol:storeget}.

\superparagraph{Overall Structure.}
The protocol will be presented by talking about $k_1$ rows and $k_2$ columns.
%^
For a file $f \in \SymbolSpace^\FileLength$, the first $\FileLength/k_2$ symbols $f_1,\dots,f_{\FileLength/k_2}$ reside in column $c = 1$, the second $\FileLength/k_2$ symbols $f_{\FileLength/k_2+1},\dots,f_{2\FileLength/k_2}$ reside in column $c = 2$, and so on.
That is, we say that the $i$th symbol $f_i$ (for $i \in [\FileLength]$) \emph{resides in column} $c$, where $(c-1)\FileLength/k_2 < i \leq c\FileLength/k_2$.
Each party $\party$ will \emph{reside in a cell} $(r,c) = \AlgCell(\party) \in [k_1] \times [k_2]$, i.e., in row $r$ and column $c$.
This cell is determined via the random oracle, i.e., $\AlgCell(\party) = \Hash(\party)$.
With this, each other party can determine whether $\party$ resides in cell $(r,c)$. % We provide a utility $\AlgIsInCell$ to
Our protocol utilizes the subnet discovery protocol $\subnetprot$ to maintain a subnet for each row and for each column. We use functions $\AlgGetRowSid$, $\AlgGetColSid$ to obtain a unique identifier for each subnet from the row resp.\ column index.
Whenever $r$ resp. $c$ denotes the index of a row resp.\ column, we will use the shorthand notation $\subnetid_r\coloneqq \AlgGetRowSid(r)$ resp.\ $\subnetid_c\coloneqq \AlgGetColSid(c)$ for the identifier.
It will always be clear from context whether such an index refers to a row or to a column.
Only a subset of nodes is utilized as bootstrap nodes. Nodes know whether they are bootstrap nodes via their initial input $\AuxRole\in\{0,1\}$, with $\AuxRole = 1$ indicating a bootstrap node.
These nodes need to perform more work and will be in every row subnet.
Other than that, they will act exactly as normal nodes, e.g., they have their single dedicated cell assigned via $\AlgCell$.

% \bwnote{Define $\subnetid_r$ and $\subnetid_c$ globally here.}

\superparagraph{Setup and Initial Topology.}
% During initialization of the protocol $\ourprot$, the protocol $\subnetprot$ is initialized first with a set of initial parties $V_\mathsf{SD}$ and initialization instructions for each $\party \in V_\mathsf{SD}$.
% The parties are initialized according to these instructions.
% Then, the initial topology of $\ourprot$ is as follows:
Let $\mathbf{P}=[\party_1,\ldots,\party_{\ell}]$ be the list of initial honest nodes specified by the join-leave-schedule $\JoinLeaveSchedule$.
For each such honest node $\party_i\in\mathbf{P}$, we have a corresponding bit $\AuxRole_i\in\{0,1\}$.
Our experiment will call $\ourprot.\iInit(\mathbf{P},\AuxRole_1,\ldots\AuxRole_{\ell})$ on each $\party\in\mathbf{P}$, which performs the following actions:
% During initialization of the protocol $\ourprot$, the protocol $\subnetprot$ is initialized first with a set of initial parties $V_\mathsf{SD}$ and initialization instructions for each $\party \in V_\mathsf{SD}$.
\begin{myitemize}
    \item Call $\subnetprot.\iInit(\mathbf{P},\perp,\ldots,\perp)$ to run the coordinated setup of the subnet discovery protocol.
    Also, inherit any message handlers of $\subnetprot$.
    By this, we mean that $\ourprot$ should henceforth process all messages intended for $\subnetprot$ by forwarding them to $\subnetprot$.
    \item Let $\RowParties_r \subseteq \mathbf{P}$ be the subset of all parties in row $r$, i.e., the set of all $\party_i$'s with $\AlgRow(\party_i) = r$ or $\AuxRole_i = 1$.
    \item Let $\ColParties_c \subseteq \mathbf{P}$ be the subset of all parties in column $c$, i.e., the set of all $\party_i$'s with $\AlgCol(\party_i) = c$.
    \item \ifnum\ccs=0
            Create each row subnet $r \in [k_1]$ by calling $\subnetprot.\iCreateSubnet(\subnetid_r,\allowbreak \RowParties_r)$, provided $\self \in \RowParties_r$.
          \else
            Create each row subnet for $r \in [k_1]$ by calling the interface $\subnetprot.\iCreateSubnet(\subnetid_r,\allowbreak \RowParties_r)$, provided $\self \in \RowParties_r$.
          \fi
          The latter is true either for only $r=\AlgRow(\self)$ or for all $r\in [k_1]$, depending on $\AuxRole$.
    \item Party $\self$ creates its corresponding column subnet by calling $\subnetprot.\iCreateSubnet(\subnetid_c, \ColParties_c)$ for the column $c=\AlgCol(\self)$.
    \item Party $\self$ initializes an empty map $\SymbolsMap[\cdot,\cdot]$.
    \item Party $\self$ stores $\IsBootstrapNode \coloneqq \AuxRole_i$, where $\self = \party_i$.
\end{myitemize}
Note that in our robustness proof, we will need to ensure that the $\RowParties_r$ and $\ColParties_c$ are non-empty and we require at least one initial bootstrap node if we ever want another node to join.

% \ghnote{\ghrem{As written (in the boxed specification of initialization), the initial parties $V_\mathsf{SD}$ are disjoint from the $\party_{r,c}$'s and do not join their subnets.}}
% To summarize, after initialization, there is a party in each cell and each row and column subnet, and the initial bootstrap nodes $\hat{P}$ is in each row subnet.
% \bwnote{Somehow we need to make sure that $\AlgCell(\party_{r,c}) = (r,c)$. But if we do resampling initially, then these initial parties are not independent anymore. For analysis it is probably easier to just sample enough initial parties to make sure we have one in each cell (coupon collector).}
% \ghnote{By having $\JoinLeaveSchedule$ specify the list of initial parties, this can now be done. We probably still require enough parties to cover all rows/columns whp, because we want to assume those have random identifiers and we don't want to make a case distinction for $\tau = 0$.}
\iffalse
\dfnote{%
So to me this setup phase currently feels very artificial.
No protocol will actually work like this. Maybe we can clarify that this is more an aid to describe this protocol in isolation, and in practice nodes can use other heuristics to avoid the very strict initial conditions;
for example, we can make heavier use of the bootstrap nodes in this phase; or have a network that has a full connectivity graph until there are a sufficient number of nodes to split up into subnets
}
\bwnote{Discuss in extensions/discussion section}
\fi

\superparagraph{Store.}
Suppose interface $\iStore(h,i,x)$ is invoked on party $\self$.
Then, the following steps are taken, but only if $\rdspred(h,i,x) = 1$: \begin{myenumerate}
    \item Let $c \coloneq \AlgGetColForSymbol(i) \in [k_2]$ denote the column in which symbol $i$ resides, and let $r\in[k_1]$ be the row in which $\self$ resides.
    \item Party $\self$ identifies the set $\mathbf{P}$ of all parties in cell $(r,c)$ it is aware of:
        \begin{myenumerate}
            \item Get all parties $\mathbf{P}'$ in row $r$ via $\subnetprot.\iGetPeers(\subnetid_r)$.
            \item Consider only the parties $\mathbf{P} \subseteq \mathbf{P}'$ in column $c$, which can be done via $\Hash$.
        \end{myenumerate}
    \item To each party $\party \in \mathbf{P}$, party $\self$ sends a message $(\msgStore,h,i,x)$.
    \item Each party $\party'$ that receives $(\msgStore,h,i,x)$ stores the symbol as $\SymbolsMap[h,i] \coloneqq x$, provided $\rdspred(h,i,x) = 1$. % and $\AlgCol(\party') = \AlgGetColForSymbol(i)$.  -- not needed
    \item Each such party $\party'$ then identifies the set of all parties in column $c$ and sends $(\msgStoreFwd,h,i,x)$ to them.
    \item \ifnum\ccs=0
            Each party that receives $(\msgStoreFwd,h,i,x)$ stores $\SymbolsMap[h,i] \coloneqq x$, provided $\rdspred(h,i,x) = 1$.
          \else
            Each party that receives $(\msgStoreFwd,h,i,x)$ stores\\ $\SymbolsMap[h,i] \coloneqq x$, provided $\rdspred(h,i,x) = 1$.
          \fi
%     We stress that parties currently running $\ourprot.\iJoin$ may also receive and process $\msgStoreFwd$ messages.
\end{myenumerate}

%\ghnote{Should parties $\party'$ resp. $\party''$ receiving $\msgStore,h,i,x$ resp. $\msgStoreFwd,h,i,x$ check $\AlgCol(\party') = \AlgGetColForSymbol(i)$ resp. $\AlgCol(\party'') = \AlgGetColForSymbol(i)$?}

\superparagraph{Get.}
Suppose interface $\iGet(h,i)$ is invoked on party $\self$ at time $\tau$.
The following steps are taken:
\begin{myenumerate}
    \item Let $c \in [k_2]$ denote the column in which symbol $i$ resides, and let $r\in[k_1]$ be the row in which $\self$ resides.
    \item Party $\self$ identifies the set $\mathbf{P}$ of all parties in cell $(r,c)$ it is aware of, as in $\iStore$.
    \item To each party $\party \in \mathbf{P}$, party $\self$ sends a message $(\msgGet,h,i)$.
    \item \ifnum\ccs=0
            Each party that receives $(\msgGet,h,i)$ from $\self$ checks if $\SymbolsMap[h,i] = \bot$.
            If not, it responds with a message $(\msgGetRsp,h,i,x)$ for $x \coloneqq \SymbolsMap[h,i]$.
          \else
            Each party that receives $(\msgGet,h,i)$ from $\self$ checks whether $\SymbolsMap[h,i] = \bot$ holds.
            If not, it responds with a message $(\msgGetRsp,h,i,x)$ for $x \coloneqq \SymbolsMap[h,i]$.
          \fi
    \item When $\self$ receives the first response $(\msgGetRsp,h,i,x)$ such that $\rdspred(h,i,x) = 1$, it outputs $x$.
    \item If $\self$ does not receive such a response by time slot $\tau+2$, it outputs $\bot$ in time slot $\tau+3$.
\end{myenumerate}

\superparagraph{Join.}
Suppose a party $\self$ joins the protocol at time $\tau$ via parties $\{\party_1,\dots,\party_t\}$ acting as bootstrap nodes. Let $\AuxRole\in\{0,1\}$ be the provided bit designating whether $\self$ may be a bootstrap node itself for parties that join in the future.
The following steps are taken, where $(r,c) \coloneqq \AlgCell(\self)$:
\begin{myenumerate}
    \item Call $\subnetprot.\iJoin(\emptyset,\perp)$ and inherit $\subnetprot$'s message handlers.
    \item Party $\self$ initializes an empty map $\SymbolsMap[\cdot,\cdot]$ and memorizes $\IsBootstrapNode \coloneqq \AuxRole$.
    \item Party $\self$ starts to join (in parallel) the appropriate row subnets via the bootstrap nodes. In detail:
    \begin{myenumerate}
     \item If $\AuxRole = 1$, call $\subnetprot.\iJoinSubnet(\subnetid_{r'},\party_i)$ for each row $r'\in[k_1]$ and each bootstrap node $\party_i$.
     \item Otherwise, call $\subnetprot.\iJoinSubnet(\subnetid_r,\party_i)$ for each bootstrap node $\party_i$.
    \end{myenumerate}
    \item Set $\ColParties_c\coloneqq\emptyset$. To join its column subnet, party $\self$ first sends $\msgJoin$ to each of its bootstrap nodes $\party_i$ in parallel. Each bootstrap node reacts as follows:
    \begin{myenumerate}
        \item \ifnum\ccs=0
                Lookup all nodes via the row subnets, i.e., call $\subnetprot.\iGetPeers(\subnetid_r)$ for each row $r \in [k_1]$ and let $\mathbf{P}$ be the union of the results.
              \else
                Lookup all nodes via row subnets: call $\subnetprot.\iGetPeers(\subnetid_r)$ for each row $r \in [k_1]$ and let $\mathbf{P}$ be the union of the results.
              \fi
        \item Determine $c \coloneqq \AlgCol(\party)$, where $\party$ is the sender of the received $\msgJoin$ message.
        \item Let $\ColParties_c\subseteq \mathbf{P}$ be the set of parties $\party'\in\mathbf{P}$ in column $c$.
        \item Send $(\msgJoinRsp,\ColParties_c)$ to $\party$.
    \end{myenumerate}
    \item \ifnum\ccs=0
            Party $\self$ waits, expecting back the message $(\msgJoinRsp,\mathbf{P}^{(i)}_c)$ from each $\party_i$ within two time slots.
          \else
            Party $\self$ waits, expecting the message $(\msgJoinRsp,\smash{\mathbf{P}^{(i)}_c})$ from each $\party_i$ within two time slots.
          \fi
    \item Upon receiving each such message within two time slots:
    \begin{myenumerate}
        \item Call $\subnetprot.\iJoinSubnet(\subnetid_c, \party')$ for every $\party'\in\mathbf{P}^{(i)}_c\setminus\ColParties_c$ at time $\tau + 2$.
        \item $\ColParties_c \coloneqq \ColParties_c \union \mathbf{P}^{(i)}_c$.
    \end{myenumerate}
    %\ghnote{Consider using different variable names for the bootstrap node to distinguish their respective local variables. Also, note that $\self$ will join the column subnets in round $\tau+2$, even if for some reason the $\msgJoinRsp$-answer arrives earlier. This is currently used by the proof.
    %We probably need to discuss semantics of message passing and receive in the prelims. See note there.
    %}

%     If a party $\party_i$ does not respond in time, set $\mathbf{P}^{(i)}_c \coloneqq \emptyset$.
%     \item Party $\self$ aggregates the responses as $\ColParties_c \coloneqq \bigcup_{i\in[t]} \mathbf{P}^{(i)}_c$.
%     \item Party $\self$ starts to join its column subnet via each party in $\ColParties_c$. In detail: \begin{enumerate}
%         \item Let $\subnetid_c$ be the subnet identifier for column $c$.
%         \item Call $\subnetprot.\iJoinSubnet(\subnetid_c, \party')$ for every $\party' \in \ColParties_c$.
%     \end{enumerate}
    \item Wait for $\SubnetDelay$ additional time until $\tau + 2 + \SubnetDelay$.
    \item At time slot $\tau + 2 + \SubnetDelay$,  $\self$ starts to synchronize data as follows:
%     \ghnote{Need to formally introduce $\SubnetDelay$ as a param of our scheme.}
    \begin{myenumerate}
        \item $\party$ updates $\ColParties_c =\subnetprot.\iGetPeers(\subnetid_c)$.
        \item $\self$ sends $\msgSync$ to every $\party' \in \ColParties_c$. %, expecting a $\msgSyncRsp$-answer within $\SyncDelay$ time slots.
        \item $\self$ terminates\footnote{%
        The significance of formally terminating $\iJoin$ here before we actually synched the data is that $\EventFullyJoined{\self}{\tau + 2 +\SubnetDelay}$ holds.
        This means that starting from round $\tau+2+\SubnetDelay$, we may already call $\iStore$ and $\iGet$ and enjoy the guarantees of the robustness definition, even though synchronization of the data has not yet finished.}
        $\iJoin$ now in time slot $\tau + 2 + \SubnetDelay$. Note that $\self$ will still handle $\msgSyncRsp$-answers as described below, even though $\iJoin$ has terminated.
    \end{myenumerate}
    \item When a party $\party'$ receives $\msgSync$, it does the following
    \begin{myenumerate}
    \item $\party'$ sets
        \begin{equation*}
            \mathbf{S} \coloneqq \{(h,i,\smash{\SymbolsMap^{\party'}}[h,i]) \mid \smash{\SymbolsMap^{\party'}}[h,i] \neq \bot\}\enspace,
        \end{equation*}
        where $\smash{\SymbolsMap^{\party'}}$ refers to the $\SymbolsMap$ map stored by $\smash{\party'}$.
    \item $\party'$ responds with $(\msgSyncRsp,\mathbf{S})$ \emph{after some arbitrary delay} $T \leq \SyncDelay-2$. See \cref{rmk:SyncDelay} for an explanation of the delay. Note that $\mathbf{S}$ is determined at the time of receiving $\msgSync$.
%         and responds with $(\msgSyncRsp,\mathbf{S})$ after some time delay\footnote{%
%         \label{footnote:explain_delay}
%         In practise, the size of the message that $\party'$ needs to send here to $\self$ might be very large; to accommodate for that in our synchronous model, we artificially introduce a delay here (rather than chunking the message).
%         We use $\SyncDelay-2$ rather than $\SyncDelay$, so that $\SyncDelay$ is the time limit between $\self$ sending $\msgSync$ and $\self$ receiving the $(\msgSyncRsp,\mathbf{S})$ response. Note that $\mathbf{S}$ is determined at the time of receiving $\msgSync$.
%         We stress that in our model, this waiting does not block $\party'$, so $\party'$ may process other messages during that delay.
%         }
%         bounded by $\SyncDelay - 2$. Here,
%         \item For each message $(\msgSyncRsp,\mathbf{S})$ that $\self$ obtains within $\SyncDelay$ time slots after sending $\msgSync$, it stores $\SymbolsMap[h,i] = x$ for each $(h,i,x) \in \mathbf{S}$ with $\rdspred(h,i,x) = 1$.
    \end{myenumerate}
    \item For each message $(\msgSyncRsp,\mathbf{S})$ that $\self$ obtains, it stores $\SymbolsMap[h,i] = x$ for each $(h,i,x) \in \mathbf{S}$ with $\rdspred(h,i,x) = 1$.
%     \item After $\SyncDelay+1$ time slots after sending $\msgSync$, $\iJoin$ terminates.
\end{myenumerate}

To summarize, each normal node only joins its row and column subnets, and each bootstrap node ($\AuxRole_i = 1$) additionally joins \emph{all} row subnets.
After time $\SubnetDelay$, the new node will receive relevant $\msgStore$ and $\msgStoreFwd$ messages and learn about new data; we can call $\iStore$ and $\iGet$ from this point on.
To learn data that was stored prior to this point in our column, we send a $\msgSync$ message to other nodes in our column to synchronize that data.

\begin{remark}[Synchronization Delay]\label{rmk:SyncDelay}
As part of $\iJoin$, a newly joining node $\self$ asks other nodes $\party'$ in the same column for any data that was stored in the past by sending a $\msgSync$-query.
In practice, the amount of data that $\party'$ needs to send back to $\self$ here might be very large.
In our network model, we have assumed that \emph{any} message arrives by the end of the next round, which is unrealistic in the case of large data.
To accommodate for that fact in a realistic way, we have artificially introduced a delay $\SyncDelay$ here.
We will provide more discussion on $\SyncDelay$ in \cref{sec:extensions}.
\iffalse
There is a tradeoff between $\SyncDelay$ and the admissibility condition on the join-leave schedule we will define in \cref{TODO}:
our admissibility condition will, roughly speaking, require that we have sufficiently many honest parties whose time frame of activity overlaps long enough.
Since we use that overlap to synchronize data, the maximum $\SyncDelay$ we can afford will be bounded by that overlap.

We stress that $\SyncDelay$ does not affect the actual algorithms of our construction $\ourprot$, but only the conditions for robustness. In particular, an implementation does not need to a priori know how long data synchronization will take.
\fi
\end{remark}

    \begin{figure*}[!htb]
    \centering
    \nicoresetlinenr
    \noindent\fbox{\parbox{0.95\textwidth}{
        \begin{minipage}[t]{0.52\textwidth}%
%             \underline{\textbf{Protocol Initialization}}
            \underline{\textbf{Interface} $\iInit(\mathbf{P}, \AuxRole_1,\ldots,\AuxRole_{\ell})$}
            \begin{nicodemus}
                \smallskip
%                 \item[]\textit{// initial topology and instructions of $\subnetprot$}
                \item[]\textit{// Initialize Subnet protocol}
                \item $\subnetprot.\iInit(\mathbf{P},\AuxRole_1,\ldots,\AuxRole_{\ell})$
                \item \textbf{inherit message handlers from} $\subnetprot$
                \item \textbf{parse} $\mathbf{P}=[\party_1,\ldots,\party_{\ell} ]$
                \smallskip
                \item[]\textit{// create row subnets}
                \item $\pcfor r \in [k_1]\colon$
                \item $\pcind \RowParties_r \coloneqq \{\party_i \mid \AlgRow(\party_i) = r \lor \AuxRole_i = 1\}$
                \item $\pcind \pcif \self \in \RowParties_r\colon$
                \item $\pcind \pcind\subnetprot.\iCreateSubnet(\subnetid_r, \RowParties_r)$
                \smallskip
                \item[]\textit{// create column subnets}
                \item $c\coloneqq \AlgCol(\self)$
                \item $\pcind \ColParties_c \coloneqq \{\party_i \mid \AlgCol(\party_i)  = c\}$
                \item[]\textit{// note: $\self \in \ColParties_c$}
                \item $\subnetprot.\iCreateSubnet(\subnetid_c, \ColParties_c)$
                \smallskip
                \item[]\textit{// Initialize Symbols map}
                \item $\SymbolsMap[\cdot,\cdot] \coloneqq \emptyset$
                \item[]\textit{// Memorize whether we can act as bootstrap node}
                \item $\IsBootstrapNode\coloneqq \AuxRole_i$ \textbf{where} $\self = \party_i$.
            \end{nicodemus}
            \medskip\noindent
            \underline{\calgo $\AlgGetPeersInCell(r,c)$}
            \begin{nicodemus}
                \smallskip
                \item[]\textit{// get peers in row $r$}
                \item $\mathbf{P}' \coloneqq \subnetprot.\iGetPeers(\subnetid_r)$
                \smallskip
                \item[]\textit{// only return the ones in column $c$}
                %\item $\mathbf{P} \coloneqq \{\party \in \mathbf{P}' \mid \AlgCell(\party) = (r,c)\}$
                \item $\pcreturn \mathbf{P} \coloneqq \{\party \in \mathbf{P}' \mid \AlgCell(\party) = (r,c)\}$
            \end{nicodemus}
        \end{minipage}\quad
        \begin{minipage}[t]{0.45\textwidth}%
            \underline{\calgo $\AlgCell(\party)$}
            \begin{nicodemus}
                \smallskip
                \item[]\textit{// cell in which party resides}
                \item $\pcreturn \Hash(\party)$
            \end{nicodemus}
            \medskip\noindent
            \underline{\calgo $\AlgRow(\party)$}
            \begin{nicodemus}
                \smallskip
                \item[]\textit{// row in which party resides}
                \item $(r,c) \coloneqq \AlgCell(\party)$
                \item $\pcreturn r$
            \end{nicodemus}
            \medskip\noindent
            \underline{\calgo $\AlgCol(\party)$}
            \begin{nicodemus}
                \smallskip
                \item[]\textit{// column in which party resides}
                \item $(r,c) \coloneqq \AlgCell(\party)$
                \item $\pcreturn c$
            \end{nicodemus}
            % \medskip\noindent
            % \underline{\calgo $\AlgIsInCell(\party,r,c)$}
            % \begin{nicodemus}
            %     \smallskip
            %     \item[]\textit{// check if $\party$ is in cell $(r,c)$}
            %     \item $\pcif \AlgCell(\party) = (r,c)\colon \pcreturn 1$
            %     \item $\pcreturn 0$
            % \end{nicodemus}
            \medskip\noindent
            \underline{\calgo $\AlgGetRowSid(r)$}
            \begin{nicodemus}
                \smallskip
                \item[]\textit{// label rows from $1$ to $k_1$}
                \item $\pcreturn r$
            \end{nicodemus}
            \medskip\noindent
            \underline{\calgo $\AlgGetColSid(c)$}
            \begin{nicodemus}
                \smallskip
                \item[]\textit{// label columns from $k_1+1$ to $k_1+k_2$}
                \item $\pcreturn k_1 + c$
            \end{nicodemus}
            \medskip\noindent
            \underline{\calgo $\AlgGetColForSymbol(i)$}
            \begin{nicodemus}
                \item $\pcreturn c~\textbf{s.t.}~(c-1)\FileLength/k_2 < i \leq c\FileLength/k_2$
            \end{nicodemus}
        \end{minipage}
    }}
    \caption{Initialization and helper algorithms for our distributed array protocol $\ourprot$.
    It makes use of a subnet discovery protocol $\subnetprot$ for $\numberofsubnets = k_1 + k_2$ and a random oracle $\Hash\colon\boolstar\rightarrow[k_1]\times[k_2]$.
%     The number $d \in \NN$ is a fixed repetition parameter.
    We present code for joining in \cref{fig:ourprotocol:joining} and interfaces $\iStore$ and $\iGet$ in \cref{fig:ourprotocol:storeget}.}
    \label{fig:ourprotocol:initialization}
\end{figure*}

\begin{figure*}[!htb]
    \centering
    \nicoresetlinenr
    \noindent\fbox{\parbox{0.95\textwidth}{
        \begin{minipage}[t]{0.47\textwidth}%
%             \underline{\textbf{Party} $\self$ \textbf{joins via bootstrap nodes} $\party_1,\dots,\party_t$}
            \underline{\textbf{Interface} $\iJoin(\{\party_1,\ldots,\party_t\},\AuxRole)$}
            \begin{nicodemus}
                \smallskip
%                 \item[]\textit{// every node joins as a normal node first}
%                 \item $\mathsf{NormalJoin}(\party_1,\dots,\party_t)$
%                 \smallskip
%                 \item[]\textit{// bootstrap nodes do more}
%                 \item $\pcif \self~\textbf{is a bootstrap node}\colon$
%                 \item $\pcind\mathsf{BootstrapJoin}(\party_1,\dots,\party_t)$
%             \end{nicodemus}
%             \medskip\noindent
%             \underline{\calgo $\mathsf{NormalJoin}(\party_1,\dots,\party_t)$}
%             \begin{nicodemus}
                \item \textbf{let $\tau$ be the time when $\iJoin$ is called}
                \item $\subnetprot.\iJoin(\emptyset,\perp)$
                \item \textbf{inherit message handlers from} $\subnetprot$
                \item $(r,c) \coloneqq \AlgCell(\self)$
                \item $\SymbolsMap[\cdot,\cdot] \coloneqq \emptyset,~\IsBootstrapNode \coloneqq \AuxRole$
                \smallskip
                \item $\pcif $\IsBootstrapNode$ = 1\colon$
                \item[]$\pcind$ \textit{// join all row subnet via all bootstrap nodes}
                \item $\pcind\pcfor i \in [t], r' \in [k_1]\colon$
                \item $\pcind\pcind \subnetprot.\iJoinSubnet(\subnetid_{r'},\party_i)$
                \item $\pcelse\colon$
                \item[]$\pcind$ \textit{// join own row subnet via all bootstrap nodes}
                \item $\pcind\pcfor i \in [t]\colon~\subnetprot.\iJoinSubnet(\subnetid_r, \party_i)$.
                \smallskip
                \item[]\textit{// ask bootstrap nodes for column peers}
                \item $\ColParties_c\coloneqq \emptyset$.
                \item $\pcfor i \in [t]\colon~\textbf{send}~\msgJoin~\textbf{to}~\party_i$
                \item $\textbf{on receiving}~(\msgJoinRsp,\mathbf{P}^{(i)}_c)~\textbf{from each}~\party_i$
                \item[]\textbf{within 2 rounds, do in round $\tau+2$}$\colon$
                \item $\pcind\pcfor \party'\in\mathbf{P}^{(i)}_c\setminus\ColParties_c \colon$
                \item $\pcind\pcind\subnetprot.\iJoinSubnet(\subnetid_c,\party')$
                \item $\pcind\ColParties_c \coloneqq \ColParties_c \union \mathbf{P}^{(i)}_c$.
%                 \item $\textbf{if no response from}~\party_i~\textbf{within two time slots}\colon$
%                 \item $\pcind\mathbf{P}^{(i)}_c \coloneqq \emptyset$
%                 \item $\ColParties_c \coloneqq \bigcup_{i\in[t]} \mathbf{P}^{(i)}_c$
%                 \item
%                 \smallskip
%                 \item[]\textit{// join column subnet}
%                 \item $\subnetid_c \coloneqq \AlgGetColSid(c)$
%                 \item $\pcfor \party' \in \ColParties_c\colon~\subnetprot.\iJoinSubnet(\subnetid_c, \party')$
                \smallskip
                \item[]\textit{// Wait until we can be sure other nodes find $\self$,}
                \item[]\textit{// so relevant new data will be learnt via $\iStore$}
                \item \textbf{wait for} $\SubnetDelay$ \textbf{until} $\tau + 2 +\SubnetDelay$
                \item[]\textit{// sync old data via column subnet}
                \item[]\textit{// ask everyone for their data}
                \item $\ColParties_c \coloneqq \subnetprot.\iGetPeers(\subnetid_c)$
                \item $\pcfor \party' \in \ColParties_c\colon~\textbf{send}~\msgSync~\textbf{to}~\party'$
                \smallskip
                \item \textbf{terminate $\iJoin$}
%                 \item[]\textit{// record all (valid) responses}
%                 \item $\textbf{receive}~(\msgSyncRsp,\mathbf{S})~\textbf{from each $\party'\in\ColParties_c$}\colon$
%                 \item $\textbf{(ignore responses arriving after $\SyncDelay$ rounds)}$
%                 \item $\textbf{on receiving}~(\msgSyncRsp,\mathbf{S}) \textbf{within $\SyncDelay$ rounds}\colon$
%                 \item $\pcind \pcfor (h,i,x) \in \mathbf{S}\colon$
%                 \item $\pcind \pcind \pcif \rdspred(h,i,x) = 1\colon~\SymbolsMap[h,i]\coloneqq x$
%                 \smallskip
%                 \item $\pcif$ in time slot $\SyncDelay+1$ past sending $\msgSync$ $\colon$
%                 \item $\pcind$ \textbf{Terminate} $\iJoin$
            \end{nicodemus}
        \end{minipage}\quad
        \begin{minipage}[t]{0.52\textwidth}%
            \underline{\textbf{On Message} $\msgSync$ \textbf{from} $\party$}
            \begin{nicodemus}
                \item $\mathbf{S} \coloneqq \{(h,i,\SymbolsMap[h,i]) \mid \SymbolsMap[h,i] \neq \bot\}$
                \item[] \textit{// model that this may be slow.}
                \item[] \textit{// see \cref{rmk:SyncDelay}}
                \item \textbf{wait for arbitrary time} $T \leq \SyncDelay-2$
                \item $\textbf{send}~(\msgSyncRsp,\mathbf{S})~\textbf{to}~\party$
            \end{nicodemus}
            \medskip\noindent
            \underline{\textbf{On Message} $(\msgSyncRsp,\mathbf{S})$ from $\party$}
            \begin{nicodemus}
                \item $\pcfor (h,i,x) \in \mathbf{S}\colon$
                \item $\pcind \pcif \rdspred(h,i,x) = 1\colon~\SymbolsMap[h,i]\coloneqq x$

            \end{nicodemus}
            \medskip\noindent
            \underline{\textbf{On Message} $\msgJoin$ \textbf{from} $\party$}
            \begin{nicodemus}
                \item $\pcif \IsBootstrapNode = 0\colon$
                \item $\pcind\pcreturn$
                \item $c \coloneqq \AlgCol(\party)$
                \smallskip
                \item[]\textit{// collect all nodes we know}
                \item $\mathbf{P} \coloneqq \emptyset$
                \item $\pcfor r \in [k_1]\colon$
                \item $\pcind \mathbf{P} \coloneqq \mathbf{P} \cup \subnetprot.\iGetPeers(\subnetid_r)$
                \smallskip
                \item[]\textit{// respond with nodes in appropriate column}
%                 \item $\mathbf{P}_r \coloneqq \{\party' \in \mathbf{P} \mid \exists c': \AlgIsInCell(\party',r,c')\}$
                \item $\ColParties_c \coloneqq \{\party' \in \mathbf{P} \mid \AlgCol(\party') = c\}$
                \item $\textbf{send}~(\msgJoinRsp,\mathbf{P}_c)~\textbf{to}~\party$
            \end{nicodemus}
        \end{minipage}
    }}
    \caption{Joining instructions for our distributed array protocol $\ourprot$ with file space $\SymbolSpace^\FileLength$ and handle space $\HandleSpace$, for a predicate $\rdspred\colon \HandleSpace \times [\FileLength] \times \SymbolSpace \to\bool$. It makes use of a subnet discovery protocol $\subnetprot$ for $\numberofsubnets = k_1 + k_2$ and a random oracle $\Hash\colon\boolstar\to[k_1]\times[k_2]$. The party that is executing the instructions is referred to as $\self$. We present the initialization code in \cref{fig:ourprotocol:initialization}, interfaces $\iStore$ and $\iGet$ in \cref{fig:ourprotocol:storeget}.}
    \label{fig:ourprotocol:joining}
\end{figure*}

\begin{figure*}[!htb]
    \centering
    \nicoresetlinenr
    \noindent\fbox{\parbox{0.95\textwidth}{
        \begin{minipage}[t]{0.43\textwidth}%
            \underline{\textbf{Interface} $\iStore(h,i,x)$}
            \begin{nicodemus}
                \item $\pcif \rdspred(h,i,x) = 0\colon~\pcreturn$
                \smallskip
                \item[]\textit{// column in which the symbol should be stored}
                \item $c \coloneqq \AlgGetColForSymbol(i)$
                \smallskip
                \item[]\textit{// my own row }
                \item $r \coloneqq \AlgRow(\self)$
                \smallskip
                \item[]\textit{// get all peers in cell}
                \item $\mathbf{P} \coloneqq \AlgGetPeersInCell(r,c)$
                \smallskip
                \item[]\textit{// send them the symbol}
                \item $\pcfor \party \in \mathbf{P}\colon~\textbf{send}~(\msgStore,h,i,x)~\textbf{to}~\party$
            \end{nicodemus}
            \medskip\noindent
            \underline{\textbf{On Message} $(\msgStore,h,i,x)$}
            \begin{nicodemus}
                \item $\pcif \rdspred(h,i,x) = 0\colon~\pcreturn$
                \smallskip
                \item[]\textit{// store this symbol}
                \item $\SymbolsMap[h,i] \coloneqq x$
                \smallskip
                \item[]\textit{// forward to everyone in column}
                \item $c \coloneqq \AlgGetColForSymbol(i)$
                \item $\mathbf{P} \coloneqq \subnetprot.\iGetPeers(\subnetid_c)$
                \item $\pcfor \party \in \mathbf{P}\colon~\textbf{send}~(\msgStoreFwd,h,i,x)~\textbf{to}~\party$
            \end{nicodemus}
            \medskip\noindent
            \underline{\textbf{On Message} $(\msgStoreFwd,h,i,x)$}
            \begin{nicodemus}
                \item $\pcif \rdspred(h,i,x) = 0\colon~\pcreturn$
                \smallskip
                \item[]\textit{// only store this symbol}
                \item $\SymbolsMap[h,i] \coloneqq x$
            \end{nicodemus}
        \end{minipage}\qquad
        \begin{minipage}[t]{0.52\textwidth}%
            \underline{\textbf{Interface} $\iGet(h,i)$}
            \begin{nicodemus}
                \item \textbf{let $\tau$ be the current time slot}
                \smallskip
                \item[]\textit{// column in which symbol $i$ is stored}
                \item $c \coloneqq \AlgGetColForSymbol(i)$
                \smallskip
                \item[]\textit{// my own row }
                \item $r \coloneqq \AlgRow(\self)$
                \smallskip
                \item[]\textit{// get all peers in cell}
                \item $\mathbf{P} \coloneqq \AlgGetPeersInCell(r,c)$
                \smallskip
                \item[]\textit{// ask them for symbol}
                \item $\pcfor \party \in \mathbf{P}\colon~\textbf{send}~(\msgGet,h,i)~\textbf{to}~\party$
                \smallskip
                \item[]\textit{// return the first valid response}
                \item $\textbf{on receiving}~(\msgGetRsp,h,i,x)~\textbf{at $\tau+2$}\\
                        \pcind\pcind\pcind\textbf{s.t.}~\rdspred(h,i,x) = 1\colon$
                \item \pcind $\pcreturn x$
                \item $\textbf{if no such response within two rounds:}$
                \item \pcind $\pcreturn \bot$~\textbf{at time $\tau+3$}
%                 \item[]\textit{// Note that returning $\bot$ takes one round}
%                 \item[]\textit{// longer than returning a valid $x$}
            \end{nicodemus}
            \medskip\noindent
            \underline{\textbf{On Message} $(\msgGet,h,i)$ \textbf{from} $\party$}
            \begin{nicodemus}
                \smallskip
                \item[]\textit{// only respond if I store this symbol}
                \item $\pcif \SymbolsMap[h,i] = \bot\colon~\pcreturn$
                \item $x \coloneqq \SymbolsMap[h,i]$
                \item \textbf{send} $(\msgGetRsp,h,i,x)$ \textbf{to} $\party$
            \end{nicodemus}
        \end{minipage}
    }}
    \caption{Interfaces $\iStore$ and $\iGet$ of $\ourprot$ with file space $\SymbolSpace^\FileLength$ and handle space $\HandleSpace$, for a predicate $\rdspred\colon \HandleSpace \times [\FileLength] \times \SymbolSpace \rightarrow \bool$. It makes use of a subnet discovery protocol $\subnetprot$ for $\numberofsubnets = k_1 + k_2$ and a random oracle $\Hash\colon\boolstar\rightarrow[k_1]\times[k_2]$. The party that is executing the instructions is referred to as $\self$. Initialization and helper algorithms are defined in \cref{fig:ourprotocol:initialization}, and code for joining in \cref{fig:ourprotocol:joining}.}
    \label{fig:ourprotocol:storeget}
\end{figure*}

%%% NOTE: Differences between CCS version and non-CCS version:
%
% - proof is in appendix, with layout changes.
% - definition of good events excised into appendix
% - definition of corruption set removed from theorem for ccs and made a separate definition in the appendix.
% - Organization rewritten
% - initial remark about activity of parties moved into appendix.
% - some sections are shortened a bit.

%---------------------------------------------------------------%
\subsection{Robustness}

We now show that our construction $\ourprot$ satisfies the robustness notion defined in \cref{def:rda:robustness}.
%
% in ccs version, this is a remark in the appendix.
\def \REMARKABOUTACTIVITY{%
First, we observe that $\ourprot$ uses the subnet protocol $\subnetprot$ is a way that precisely simulates a run of the experiment specified in \cref{def:experiment}.
Furthermore, in this simulated experiment, honest parties are active in $\subnetprot$ iff they are active in $\ourprot$. Parties are initial parties in $\subnetprot$ if and only if they are initial parties in $\ourprot$. %The same holds for which parties are initial parties.
The event $\EventFullyJoinedNoArgs$, which refers to $\iJoin$ having terminated, always means that $\ourprot.\iJoin$ has terminated.
This allows us to meaningfully use the events from \cref{def:EventActive} and \cref{def:subnetdiscovery:robustness} and we will talk about active parties without the need to distinguish between $\ourprot$ and $\subnetprot$.
}
\ifnum\ccs=0
\REMARKABOUTACTIVITY
\fi
Before stating the precise theorem, we give an intuition for the conditions we require for our theorem to hold:
Roughly speaking, we wish that the underlying subnet protocol is robust
% the time to synchronize the local $\SymbolsMap$ map is bounded by some $\SyncDelay$ (cf.~\cref{rmk:SyncDelay}),
and there are honest parties in each row, column and cell.

\superparagraph{Overlap times.}
A closer look shows that just requiring there to be at least one active honest party assigned to each row, column and cell is not quite enough:
we require there to be sufficient overlap time $\OverlapTime$ between potentially leaving old honest parties and newly arriving parties.
We model this overlap by asking that active honest parties $\party$ exist that remain active for at least $\OverlapTime$ more slots.
This way, if a new party joins, its lifetime overlaps at least $\OverlapTime$ time slots with $\party$.
The overlap time is needed both for joining the subnets and for actually handing over the data.
\ifnum\ccs=0
The old party needs to have been around for a sufficient time in order to have properly joined the subnet itself and finished retrieving data from its peers.
Then the old party needs to remain online for roughly the same amount of time to help new parties join their subnets and retrieve data from the old party.
For that reason, the required overlap is roughly twice that of $\SubnetDelay$ resp.\ $\SyncDelay$.
\fi

\superparagraph{Failure recovery.}
Our conditions for rows, columns and cells differ: assume that in one of our \emph{subnets}, there is no honest party with sufficient overlap: any future party asked to join that subnet will be ``tainted'' by joining solely via adversarial or honest ``tainted'' parties.
Consequently, this subnet will not provide any guarantees anymore and our protocol cannot recover from this. By contrast, having no honest party in a given bad \emph{cell} is a \emph{transient} failure case that our protocol can recover from.
Such failures will give rise to sets $\Corrupted_{\party}^\tau$ as afforded by \cref{def:rda:robustness} and we merely need to rule out that too many cells are simultaneously bad at any given time.
\ifnum\ccs=1
See \cref{def:our_corruption_sets} for our concrete definition of $\Corrupted_{\party}^\tau$.
\fi

\ifnum\ccs=0\medskip\fi
\superparagraph{Organization.}
\ifnum\ccs=0
Our robustness analysis is organized as follows: we first define several good events that capture the conditions under which we will show robustness.
Then we will define our class of admissible join-leave schedules.
With that definition at hand, we are able to formally state \cref{theorem:ourprot:analysis:maintheorem}, which asserts that our construction satisfies our robustness notion.
The proof of that theorem has two main technical parts:
in the first part, we show that construction satisfies what we want from the robustness notion (i.e., that we can retrieve data that was previously stored), provided certain good events occur. This first part does not use probabilities.
Then, in the second part, we show that the good events from the first part actually occur with sufficiently high probability, provided that our join-leave schedule is admissible.
Combined, these parts then prove \cref{theorem:ourprot:analysis:maintheorem}.
\else
Due to space limitations, we will only provide a definition of our set of admissible schedules and the main theorem here. The actual proof is then deferred to \cref{section:appendix_proof}.
\fi

% \ghnote{consider adding section references here to the individual parts. In particular, what is in appendix and what is in main part.}

To simplify notation, we consider $\SubnetDelay\geq 2$ and $\SyncDelay\geq 2$ fixed constants in the analysis and often do not write the dependency of events on these constants explicitly. For instance,
we write $\EventIsInSubnet{\subnetid}{\party}{\tau}$ and $\EventStaysInSubnet{\subnetid}{\party}{\tau_0}{\tau_1}$ instead of $\EventIsInSubnetDelay{\subnetid}{\party}{\tau}$ and $\EventStaysInSubnetDelay{\subnetid}{\party}{\tau_0}{\tau_1}$.
% , since we do not need these events for different values of $\SubnetDelay$.

\ifnum\ccs=0
\subsubsection{Good Events and Admissible Schedules}
\else
%\subsubsection{Admissible Schedules}
\fi

\def \DEFINEGOODEVENTS{%
We start by defining some good events that capture the conditions under which we will show that our protocol does not fail.

\begin{definition}[Good Events]\label{def:GoodEvents}
We define the following good events:
\begin{myitemize}
    \item Event $\EventSubnetprotGood{T}$: This event occurs if the property we want from the robustness definition of $\subnetprot$ holds until time $T$, i.e.,
        for every time $0\leq \tau \leq T$, pairs of honest parties $\party,\party'$, subnet identifier $\subnetid$, the following holds:
        if $\EventIsInSubnet{\subnetid}{\party}{\tau}$, $\EventIsInSubnet{\subnetid}{\party'}{\tau}$, $\EventCalledGetPeers{\subnetid}{\party}{\tau}$, then
        $\EventGotPeer{\subnetid}{\party}{\party'}{\tau}$.
    \item Event $\EventColumnGoodUntil{c}{T}{\Delta}$:
        This event occurs for column $c\in [k_2]$ if for every $0\leq \tau \leq T$, there exists an honest party $\party$ %with $\EventStaysInSubnet{\subnetid_c}{\party}{\tau}{(\tau + \Delta)}$.
        with $\AlgCol(\party) = c$ and $\EventActiveDuration{\party}{\tau}{\tau+\Delta}$.
    \item Event $\EventGoodCell{r}{c}{\tau_1}{\tau_2}$:
        This event occurs for row $r \in [k_1]$, column $c \in [k_2]$ and times $\tau_1\leq \tau_2$ if there exists an honest party $\party$ with $\AlgCell(\party)=(r,c)$ and %$\EventStaysInSubnet{\subnetid_r}{\party}{\tau_1}{\tau_2}$ and $\EventStaysInSubnet{\subnetid_c}{\party}{\tau_1}{\tau_2}$.
        $\EventActiveDuration{\party}{\max\{0,\tau_1-\SubnetDelay-2\}}{\tau_2}$.
%     \item Event $\EventAlwaysFastSync{\SyncDelay}$:
%         This event occurs for $\SyncDelay \geq 2$ if whenever an honest party receives a $\msgSync$-message at some time $\tau$, it sends its response f $\tau+\SyncDelay - 2$. Note that this means that the time between sending $\msgSync$ and receiving an answer is bounded by $\SyncDelay$.
\end{myitemize}
We remark that $\EventColumnGoodUntil{c}{T}{\Delta}$ captures the condition about column subnets we explained above, with overlap time $\Delta$.
We use the event $\EventGoodCell{r}{c}{\tau_1}{\tau_2}$ to express the conditions for individual cells.
The shift of $\tau_1$ by $\SubnetDelay+2$ in the latter definition is done in order to imply that $\AlgGetPeersInCell(r,c)$ will find an honest node in the subnets if called at a time between $\tau_1$ and $\tau_2$. See \cref{cor:GoodCellImpliesFindInCell} below.
Observe that we defined no event for the row subnets: for the rows, we require that there exists honest nodes with sufficient overlap \emph{among the bootstrap nodes used to join} rather than among all nodes.
This is more conveniently expressed as a condition on the join-leave schedule $\JoinLeaveSchedule$, defined below as \emph{$\JoinLeaveSchedule$ uses good bootstrap nodes} in \cref{def:ourprot:analysis:admissibleschedules}.
% We refrain from formally defining a corresponding event, as we may just directly talk about that condition.
\end{definition}
}
\ifnum\ccs=0
\DEFINEGOODEVENTS
\fi

\begin{definition}[Admissible Schedules]
\label{def:ourprot:analysis:admissibleschedules}
Let $\JoinLeaveSchedule$ be a join-leave schedule.
% By definition, $\JoinLeaveSchedule$ determines which honest parties are active at any given point in time.
We define the following terminology:
 \begin{myitemize}
    \item We say that $\JoinLeaveSchedule$ \emph{guarantees $N$ honest parties with overlap $\OverlapTime$}, if for all time slots $\tau$, there are at least $N$ honest parties $\party$ which are active (at least) from $\tau$ to $\tau + \OverlapTime$ (both inclusive).
    \item We say that $\JoinLeaveSchedule$ \emph{respects bootstrap nodes} if the following holds:
    \begin{myitemize}
        \item The auxiliary data $\AuxRole$ provided to parties in $\iJoin$ or $\iInit$ is in $\{0,1\}$.
        We will denote parties for which $\AuxRole = 1$ as \emph{prospective bootstrap nodes}.
        \item Whenever $\JoinLeaveSchedule$ schedules a party $\party$ to join via honest bootstrap nodes $\{\party_1,\ldots,\party_t\}$, each $\party_i$ is a prospective bootstrap node.
    \end{myitemize}
    \item We say that $\JoinLeaveSchedule$ \emph{uses good bootstrap nodes} if for every honest party joining at time $\tau$ via bootstrap nodes $\{\party_1,\ldots,\party_t\}$, there exists an honest bootstrap node $\party_i$ for which the following both hold:
    \begin{myitemize}
     \item Either $\tau \geq \SubnetDelay$ and $\party_i$ is active at time $\tau-\SubnetDelay$ or $\tau < \SubnetDelay$ and $\party_i$ is one of the initial parties.
     \item $\party_i$ is active at time $\tau + \SubnetDelay$.
    \end{myitemize}
 \end{myitemize}
Finally, we define the admissible join-leave schedules for $\ourprot$:

Let $N$ be a positive integer, $\AdmissibleSchedulesSubnet$ be a set of admissible schedules for the underlying subnet protocol and $\OverlapTime \geq \max\{2\SubnetDelay + 2, 2\SyncDelay + \SubnetDelay + 2\}$.
We define the set of $(N,\OverlapTime,\AdmissibleSchedulesSubnet)$-admissible schedules for $\ourprot$ as the set of join-leave schedules which respect bootstrap nodes, use good bootstrap nodes, guarantee $N$ honest parties with overlap $\OverlapTime$ and which are in $\AdmissibleSchedulesSubnet$.
\end{definition}

\medskip\noindent
With this definition at hand, we now formally state the robustness property that we will prove.
\ifnum\ccs=1
Due to space limitations, the proof is found in \cref{section:appendix_proof}.
\fi
Recall that we assume that $k_2$ divides $m$.
\begin{theorem}[Robustness of $\ourprot$]
    Let $\rdspred\colon \HandleSpace \times [\FileLength] \times \SymbolSpace \rightarrow \bool$ be position-binding, cf.~\cref{remark:positionbinding}.
    Assume that $\subnetprot$ is $(\RobustnessErrorSubnet,\ProtocolLifetimeSubnet,\SubnetDelay,\AdmissibleSchedulesSubnet)$-robust according to \cref{def:subnetdiscovery:robustness}.
    Let $N$ be a positive integer and $\OverlapTime \geq \OverlapTimeMin$, where $\OverlapTimeMin = \max\{2\SubnetDelay + 2, 2\SyncDelay + \SubnetDelay + 2\}$.
    Then, $\ourprot$ is $(\rdspred,\RelPosCorrThresh,\RobustnessError,\ProtocolLifetime,\StoreGetDelay,\allowbreak\GetDelay,\AdmissibleSchedules)$-robust according to \cref{def:rda:robustness}, where \begin{myitemize}
        \item \emph{Admissible Schedules.} The set $\AdmissibleSchedules$ is defined as the set of $(N,\allowbreak\OverlapTime,\allowbreak\AdmissibleSchedulesSubnet)$-admissible schedules in \cref{def:ourprot:analysis:admissibleschedules}.
        \item \emph{Lifetime.} We have $\ProtocolLifetime = \ProtocolLifetimeSubnet - 1$.
        \item \emph{Store-Get Delay.} We have $\StoreGetDelay = 2$.
        \item \emph{Get Delay.} We have $\GetDelay = 2$.
        \item \emph{Corrupted Symbol Fraction.} We may choose any $0 < \RelPosCorrThresh < 1$ (but $\RobustnessError$ depends on it).
        \ifnum\ccs=0
        \item \emph{Corruption Sets.} We have corruption sets $\Corrupted_\party^\tau$ as follows: For an honest party $\party$ and time slot $\tau$, set
        \[
         \Corrupted_\party^\tau = \{i\in [\FileLength] \mid \neg \EventGoodCell{r}{c}{\max\{0,\tau-\SyncDelay\}}{(\tau+1)} \text{ for } c=\AlgGetColForSymbol(i)\}
        \]
        where $r = \AlgRow(\party)$.
        \fi
        \item \emph{Error.} We have an error probability $\RobustnessError$ bounded by
        \[
            \RobustnessError \leq \RobustnessErrorSubnet + \Bigl\lceil \frac{\ProtocolLifetime+2}{\OverlapTime-\OverlapTimeMin+1}\Bigr\rceil
                \cdot
            \Bigl(k_1 2^{\binaryEntropy(\varepsilon)k_2}e^{-\varepsilon\frac{N}{k_1}} + k_2 e^{-\frac{N}{k_2}}\Bigr)\enspace,
        \]
        where $\binaryEntropy(\varepsilon) = -\varepsilon \log_2(\varepsilon) - (1-\varepsilon)\log_2(1-\varepsilon)$ is the binary entropy function.
    \end{myitemize}
    \label{theorem:ourprot:analysis:maintheorem}
\end{theorem}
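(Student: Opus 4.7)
My plan is to follow the two-stage strategy anticipated by the text: first I will establish, under a set of deterministic good events, that the protocol meets the robustness condition with the stated corruption sets, and then I will bound the probability that these good events fail. The deterministic hypotheses I will work with are (i) the subnet-robustness event $\EventSubnetprotGood{\ProtocolLifetimeSubnet}$; (ii) $\EventColumnGoodUntil{c}{\ProtocolLifetime+1}{\OverlapTime}$ for every column $c \in [k_2]$; and (iii) the cell-good event $\EventGoodCell{r}{c}{\max\{0,\tau - \SyncDelay\}}{\tau+1}$ for each honest party's row $r$ and each column $c$ except for those placed into $\Corrupted_\party^\tau$. The size bound $|\Corrupted_\party^\tau| \leq \RelPosCorrThresh \FileLength$ then follows from the combinatorial fact that each row has at most $\RelPosCorrThresh k_2$ bad cells (and $\FileLength/k_2$ indices per cell).

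\textbf{Deterministic core.} The key lemma I would prove is that under (i) and (ii), the honest parties currently in any column subnet $c$ collectively hold every symbol that any honest party has successfully $\iStore$'d at a column-$c$ position at least two rounds prior. I will prove this by induction on time. The base case is the forwarding step of $\iStore$: when a fully joined $\PartyStoring$ in row $r_S$ calls $\iStore(h,i,x)$ with $\rdspred(h,i,x)=1$, the cell-good hypothesis for $(r_S,c)$ provides an honest $\party_1^*$ in that cell who, by subnet robustness, is discovered by $\PartyStoring$'s row lookup; $\party_1^*$ receives $\msgStore$ at $\tauStore+1$ and forwards $\msgStoreFwd$ to every honest peer currently in column $c$'s subnet. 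The inductive step handles joiners: when an honest $\party^*$ completes joining column $c$ and issues $\msgSync$ at $\tau_{J^*}+2+\SubnetDelay$, event (ii) furnishes an overlapping older honest column-$c$ party who is still in the subnet (so $\party^*$'s lookup finds it), already holds the symbol by induction, and remains active long enough to answer $\msgSyncRsp$ within $\SyncDelay$ rounds---the parameter choice $\OverlapTime \geq 2\SyncDelay+\SubnetDelay+2$ is calibrated precisely so that all three requirements can be met simultaneously. Applying the invariant to the $\iGet$ side: $\EventGoodCell{\AlgRow(\PartyGetting)}{c}{\tauGet-\SyncDelay}{\tauGet+1}$ yields an honest $\party^*$ in $\PartyGetting$'s cell that has been fully joined since $\tauGet-\SyncDelay$ and therefore holds $x$ by $\tauGet$; $\PartyGetting$ discovers $\party^*$ through $\AlgGetPeersInCell$ and receives $(\msgGetRsp,h,i,x)$ by round $\tauGet+2$, with position-binding of $\rdspred$ ruling out any adversarial spoof.

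\textbf{Probabilistic bound.} To control the failure probability I will cover $[0,\ProtocolLifetime+1]$ by overlapping windows of length $\OverlapTimeMin$ shifted by $\OverlapTime-\OverlapTimeMin+1$, giving exactly the $\lceil (\ProtocolLifetime+2)/(\OverlapTime-\OverlapTimeMin+1)\rceil$ factor. Within each window admissibility fixes a set $\mathbf{P}_{\tau_0}$ of at least $N$ honest parties active throughout $[\tau_0,\tau_0+\OverlapTime]$, and the random oracle $\Hash$ assigns each of them an independent uniform cell in $[k_1]\times[k_2]$. Standard balls-into-bins estimates then give $\Pr[\exists c:\text{no member of }\mathbf{P}_{\tau_0}\text{ lies in column }c] \leq k_2 e^{-N/k_2}$ and $\Pr[\exists r:\text{more than }\RelPosCorrThresh k_2\text{ cells in row }r\text{ contain no member of }\mathbf{P}_{\tau_0}] \leq k_1 \binom{k_2}{\RelPosCorrThresh k_2}(1-\RelPosCorrThresh/k_1)^N \leq k_1 2^{\binaryEntropy(\RelPosCorrThresh)k_2}e^{-\RelPosCorrThresh N/k_1}$. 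Union-bounding over windows and adding $\RobustnessErrorSubnet$ yields the claimed bound on $\RobustnessError$.

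\textbf{Main obstacle.} The most delicate part will be the deterministic chain-of-syncs argument: verifying that $\OverlapTime \geq 2\SyncDelay+\SubnetDelay+2$ together with the $-\SubnetDelay-2$ shift inside $\EventGoodCell$ really do force a continuous chain of overlapping honest parties in each column, through which the data survives every handover window even at the boundary when old parties leave while new ones join, and that no off-by-one slip between joining, entering the column subnet, issuing $\msgSync$, and receiving the response breaks the chain. Cleanly discharging this bookkeeping is exactly what the choice of $\OverlapTimeMin$ in the admissibility definition is engineered to absorb.
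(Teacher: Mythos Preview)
Your proposal is correct and follows essentially the same two-stage architecture as the paper: a deterministic part showing that storing, retaining (via a chain-of-syncs induction along each column), and getting all succeed given the good events $\EventSubnetprotGood{\cdot}$, $\EventColumnGoodUntilNoArgs$, and $\EventGoodCellNoArgs$, followed by the balls-into-bins probability bound with the window-covering trick that produces the $\lceil(\ProtocolLifetime+2)/(\OverlapTime-\OverlapTimeMin+1)\rceil$ factor. The only ingredient you leave implicit that the paper spells out as a separate inductive lemma is that honest parties actually succeed in \emph{joining} their row and column subnets (so that $\EventFullyJoined{\party}{\tau}$ really does imply $\EventIsInSubnet{\subnetid_r}{\party}{\tau}$ and $\EventIsInSubnet{\subnetid_c}{\party}{\tau}$); this is what makes your statements ``is discovered by $\PartyStoring$'s row lookup'' and ``$\party^*$'s lookup finds it'' valid, and it itself requires an induction on join time using the good-bootstrap-nodes assumption and $\EventColumnGoodUntilNoArgs$---be sure not to skip it when you flesh out the details.
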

Let us give some explanation for the individual constituents of the error bound in the above~\cref{theorem:ourprot:analysis:maintheorem}. We refer to the proof of the theorem for more details.
Clearly, the $\RobustnessErrorSubnet$-term comes from the possibility of the subnet protocol failing.
The $\lceil(\ProtocolLifetime+2)/(\OverlapTime-\OverlapTimeMin+1)\rceil$-term comes from a union bound over the lifetime $\ProtocolLifetime$;
the denominator $\OverlapTime-\OverlapTimeMin+1$ in that expression is an optimization that uses the fact that the failure events for different times are not independent.
The expression $k_1 2^{\binaryEntropy(\varepsilon)k_2}e^{-\varepsilon N / k_1}$ accounts for the bad event that there are too many failing (i.e.\ without an active honest party with sufficient overlap) cells in any row.
The expression $k_2 e^{-{N}/{k_2}}$ accounts for the bad event that in some column, there is no honest party with sufficient overlap.

% In non-CCS version, we just directly include construction-proof-details.tex following this file.

% \ghnote{TODO: Remark about lifetime -> total number of parties}

\ifnum\ccs=1
\section{Proof of~\cref{theorem:ourprot:analysis:maintheorem}}\label{section:appendix_proof}

% The following section already exists in modified form in construction-proof.tex for \ccs=0:
In this section, we will prove \cref{theorem:ourprot:analysis:maintheorem}.
Our proof is organized as follows: we first define several good events that capture the conditions under which we will show robustness.
% Then we will define our class of admissible join-leave schedules.
% With that definition at hand, we are able to formally state \cref{theorem:ourprot:analysis:maintheorem}, which asserts that our construction satisfies our robustness notion.
Then, the actual proof of the main theorem has two main technical parts:
in the first part, we show that construction satisfies what we want from the robustness notion (i.e., that we can retrieve data that was previously stored), provided certain good events occur.
This first part does not use probabilities.
Then, in the second part, we show that the good events from the first part actually occur with sufficiently high probability, provided that our join-leave schedule is admissible.
Combined, these parts then prove \cref{theorem:ourprot:analysis:maintheorem}.
\else
% follows directly after the theorem

Below, we prove \cref{theorem:ourprot:analysis:maintheorem} by a series of lemmas, using the strategy outlined at the beginning of the section.
\fi

\ifnum\ccs=1
\subsection{Good Events}

\REMARKABOUTACTIVITY

\DEFINEGOODEVENTS

\begin{definition}[Corruption Sets]\label{def:our_corruption_sets}
Our robustness definition allows for corruption sets $\Corrupted_\party^\tau\subseteq [m]$, where the property we want from robustness is allowed to fail, provided those sets are small enough.
cf.~\cref{def:subnetdiscovery:robustness}. For our concrete \cref{theorem:ourprot:analysis:maintheorem}, we set our corruption sets as follows:

For an honest party $\party$ and time slot $\tau$, set
        \begin{align*}
         \Corrupted_\party^\tau = \bigl\{&i\in [\FileLength] \mid \neg \EventGoodCell{r}{c}{\max\{0,\tau-\SyncDelay\}}{(\tau+1)}\\
         &\text{ for } c=\AlgGetColForSymbol(i)\bigr\}
        \end{align*}
where $r = \AlgRow(\party)$.
\end{definition}

\fi
%
% We follow the strategy outline at the beginning of the section,
%
\ifnum\ccs=0
\subsubsection{Our Protocol Works, Provided Good Events Occur}
\else
\subsection{Our Protocol Works, Provided Good Events Occur}
\fi

In the first part of our proof, we will show that our protocol $\ourprot$ works (i.e.\ we can successfully query data that was previously stored), provided certain good event occur.
For this, we will show a series of technical lemmas, roughly showing (in order) that under appropriate conditions
\begin{myenumerate}
\item Joining the subnets works.
\item After calling $\ourprot.\iStore$, the data is stored by nodes in the appropriate column subnet.
\item Data stored in a column subnet is retained, even if parties leave and join.
\item Calling $\ourprot.\iGet$ will successfully retrieve data stored in the column.
\item We can get data that we stored (i.e.\ the property we want) by combining the above.
\end{myenumerate}
% \bwnote{Any reason to use compactenum here? I would only use it in submission version to save space and toggle with a submission flag.}

\superparagraph{Joining the subnets works.}
Let us start by analyzing the joining of row and column subnets.

\begin{lemma}[Joining the rows and columns]\label{lem:join_rows_works}
Let $\JoinLeaveSchedule$ be a join-leave schedule that respects bootstrap nodes and uses good bootstrap nodes as in \cref{def:ourprot:analysis:admissibleschedules}.
Consider a run of our protocol $\ourprot$. Assume $\EventColumnGoodUntil{c}{T}{2\SubnetDelay+2}$ and $\EventSubnetprotGood{T+1}$ hold for some lifetime $T$.
Let $\party$ be an honest party joining (via $\iJoin$ or $\iInit$) our protocol $\ourprot$ at time $\tauJoin \geq 0$ with $\AuxRole\in\{0,1\}$.
Let $(r,c) = \AlgCell(\party)$.
Let $\tauActive$ be some time where $\party$ is active. Then the following hold if $\tauJoin > 0$ (i.e. $\party$ is not one of the initial parties):
\begin{myitemize}
\item \ifnum\ccs=0 The event\fi $\EventIsInSubnet{\subnetid_r}{\party}{\tauActive}$ occurs if and only if $\tauActive \geq \tauJoin + \SubnetDelay$.
\item If $\AuxRole=1$, then for all row subnets $\subnetid_{r'}$, we have\ifnum\ccs=1:\\\fi $\EventIsInSubnet{\subnetid_{r'}}{\party}{\tauActive}$ if and only if $\tauActive\geq \tauJoin + \SubnetDelay$.
\item \ifnum\ccs=0 If $\tauJoin \leq T$, then $\EventIsInSubnet{\subnetid_c}{\party}{\tauActive}$ occurs if and only if $\tauActive\geq \tauJoin+\SubnetDelay + 2$.%
      \else If $\tauJoin \leq T$, then:\\ $\EventIsInSubnet{\subnetid_c}{\party}{\tauActive}$ if and only if $\tauActive\geq \tauJoin+\SubnetDelay + 2$.%
      \fi
\end{myitemize}
In case $\tauJoin = 0$ (i.e.\ $\party$ is an initial party), we can strengthen this to
\begin{myitemize}
\item $\EventIsInSubnet{\subnetid_c}{\party}{\tauActive}$ and $\EventIsInSubnet{\subnetid_r}{\party}{\tauActive}$ hold \ifnum\ccs=0 unconditionally\fi.
\item If $\AuxRole = 1$, then\ifnum\ccs=1:\\\fi $\EventIsInSubnet{\subnetid_{r'}}{\party}{\tauActive}$ holds for all row subnets $\subnetid_{r'}$.
\end{myitemize}
\end{lemma}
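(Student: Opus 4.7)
The plan is to prove the lemma by strong induction on $\tauJoin$, since membership of $\party$ in a subnet at any given time is defined by the recursive event $\EventStaysInSubnet$ that eventually refers back to bootstrap nodes which joined strictly earlier. For the base case $\tauJoin = 0$, inspecting $\iInit$ shows that $\party$ directly invokes $\subnetprot.\iCreateSubnet$ on both $\subnetid_c$ and $\subnetid_r$ (and on every $\subnetid_{r'}$ for $r' \in [k_1]$ if $\AuxRole = 1$). The events $\EventCreatedSubnet{\cdot}{\cdot}$ therefore hold with $\party$ in the corresponding list, and the first clause of the definition of $\EventStaysInSubnet$ yields $\EventIsInSubnet{\cdot}{\party}{\tauActive}$ unconditionally whenever $\party$ is active.

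For the inductive step $\tauJoin > 0$ I first treat the row subnet(s). Because $\JoinLeaveSchedule$ respects and uses good bootstrap nodes, among the honest bootstrap nodes $\{\party_1, \ldots, \party_t\}$ passed to $\subnetprot.\iJoin$ there is a prospective bootstrap node $\party_i$ (so its own role bit equals $1$) which is either initial or active at $\tauJoin - \SubnetDelay$, and still active at $\tauJoin + \SubnetDelay$. Applying the induction hypothesis to $\party_i$ yields $\EventIsInSubnet{\subnetid_{r'}}{\party_i}{\tauJoin}$ for every $r' \in [k_1]$, and the ``One Stays in Subnets'' proposition extends this to $\EventStaysInSubnet{\subnetid_{r'}}{\party_i}{\tauJoin}{\tauJoin + \SubnetDelay}$. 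Because $\party$ calls $\subnetprot.\iJoinSubnet(\subnetid_r, \party_i)$ at time $\tauJoin$ (and $\iJoinSubnet(\subnetid_{r'}, \party_i)$ for every $r'$ if $\AuxRole = 1$), the second clause of the $\EventStaysInSubnet$ definition delivers $\EventIsInSubnet{\subnetid_r}{\party}{\tauJoin + \SubnetDelay}$ (respectively for every row), which ``One Stays in Subnets'' then propagates to any later $\tauActive$ while $\party$ is active.

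For the column subnet I apply $\EventColumnGoodUntil{c}{T}{2\SubnetDelay + 2}$ at the time $\tau_0 := \max\{0, \tauJoin - \SubnetDelay\}$ to obtain an honest witness $\party_c$ with $\AlgCol(\party_c) = c$ that is active throughout $[\tau_0, \tau_0 + 2\SubnetDelay + 2]$; this guarantees that $\party_c$ either has joined by time $\tauJoin - \SubnetDelay$ or is initial. Applying the induction hypothesis to $\party_c$ then shows $\party_c$ is in its row subnet $\subnetid_{r_c}$ at time $\tauJoin + 1$ and in $\subnetid_c$ at time $\tauJoin + 2$, with both memberships persisting through $\tauJoin + 2 + \SubnetDelay$. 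Combined with the fact (established above) that $\party_i$ is in every row subnet at $\tauJoin + 1$, the hypothesis $\EventSubnetprotGood{T + 1}$ forces $\party_i$'s $\subnetprot.\iGetPeers(\subnetid_{r_c})$ call, triggered by the $\msgJoin$ message that $\party$ sent, to return a set containing $\party_c$. Hence $\party_c$ appears in the $\msgJoinRsp$ reaching $\party$ at $\tauJoin + 2$; $\party$ invokes $\subnetprot.\iJoinSubnet(\subnetid_c, \party_c)$ at that round, and the definition of $\EventStaysInSubnet$ together with $\EventStaysInSubnet{\subnetid_c}{\party_c}{\tauJoin + 2}{\tauJoin + 2 + \SubnetDelay}$ yields $\EventIsInSubnet{\subnetid_c}{\party}{\tauJoin + 2 + \SubnetDelay}$.

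The ``only if'' directions are immediate from inspecting the protocol: $\party$ invokes $\iJoinSubnet$ on $\subnetid_r$ only at $\tauJoin$ and on $\subnetid_c$ only at $\tauJoin + 2$, so the second clause of $\EventStaysInSubnet$ rules out earlier membership. The main obstacle I anticipate is the temporal bookkeeping in the column argument: $\party_c$ must have joined early enough to have completed joining $\subnetid_c$ by time $\tauJoin + 2$ (requiring its join time to be at most $\tauJoin - \SubnetDelay$), yet also remain active through $\tauJoin + 2 + \SubnetDelay$ so that $\party$'s own $\iJoinSubnet(\subnetid_c, \party_c)$ succeeds; these two constraints are precisely what forces the overlap parameter in $\EventColumnGoodUntil$ to be $2\SubnetDelay + 2$ and necessitate a separate handling of the case $\tauJoin < \SubnetDelay$ in which the witness $\party_c$ must be an initial party.
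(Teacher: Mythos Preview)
Your proposal is correct and follows essentially the same approach as the paper: strong induction on $\tauJoin$, with the base case handled by $\iCreateSubnet$ in $\iInit$, the row subnets handled via the good prospective bootstrap node $\party_i$, and the column subnet handled via a witness $\party_c$ obtained from $\EventColumnGoodUntil{c}{T}{2\SubnetDelay+2}$ at time $\max\{0,\tauJoin-\SubnetDelay\}$, located through $\party_i$'s $\iGetPeers$ call using $\EventSubnetprotGood{T+1}$. The temporal bookkeeping you flag---that $\party_c$ must have joined early enough to already be in $\subnetid_c$ by $\tauJoin+2$ yet stay active through $\tauJoin+2+\SubnetDelay$---is exactly what the paper's proof tracks, and your case split on whether $\tauJoin < \SubnetDelay$ (forcing $\party_c$ to be initial) mirrors the paper's handling of $\tauShifted = 0$ versus $\tauShifted > 0$.
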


\medskip\noindent
Before we prove \cref{lem:join_rows_works}, we state three corollaries:
\ifnum\ccs=0
fully joined nodes are in their appropriate subnets,
$\EventColumnGoodUntilNoArgs$ guarantees honest parties with $\EventIsInSubnetNoArgs$ (rather than just active),
and $\EventGoodCellNoArgs$ means we can find honest parties in their subnets (rather than just active) via $\AlgGetPeersInCell$.
\else
\begin{myitemize}
\item Fully joined nodes are in their appropriate subnets.
\item $\EventColumnGoodUntilNoArgs$ guarantees honest parties with $\EventIsInSubnetNoArgs$ (rather than just active),
\item $\EventGoodCellNoArgs$ means we can find honest parties in their subnets (rather than just active) via $\AlgGetPeersInCell$.
\end{myitemize}
\fi
% So we get the following corollary
%
%
\begin{corollary}[\EventFullyJoinedNoArgs\ and \EventIsInSubnetNoArgs]\label{cor:FullyJoinedImpliesSubnets}
Let $\JoinLeaveSchedule$ be a join-leave schedule that respects bootstrap nodes and uses good bootstrap nodes. Consider a run of our protocol $\ourprot$.
\ifnum\ccs=0% For layout
Assume $\EventSubnetprotGood{T+1}$ and $\EventColumnGoodUntil{c}{T}{2\SubnetDelay+2}$ hold for some lifetime $T$.
\else
Let us assume $\EventSubnetprotGood{T+1}$ and $\EventColumnGoodUntil{c}{T}{2\SubnetDelay+2}$ hold for some lifetime $T$.
\fi
Let $\party$ be an honest party with $\AlgCell(\party) = (r,c)$. Let $0\leq \tau \leq T$ be any time slot. % for which $\EventSubnetprotGood{\tau'+1}$ and $\EventColumnGoodUntil{c}{\tau'}{2\SubnetDelay+2}$ hold.
Then we have the following:
\begin{itemize}
 \item $\EventFullyJoined{\party}{\tau} \implies \EventIsInSubnet{\subnetid_r}{\party}{\tau}$.
 \item $\EventFullyJoined{\party}{\tau} \iff \EventIsInSubnet{\subnetid_c}{\party}{\tau}$.
\end{itemize}
\end{corollary}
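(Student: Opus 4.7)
The plan is to derive this corollary as a direct consequence of \cref{lem:join_rows_works}, handling the two bullets separately and splitting each into the cases $\tauJoin = 0$ (where $\party$ is an initial party) and $\tauJoin > 0$. The key auxiliary observation, read off from the pseudocode of $\ourprot.\iJoin$ in \cref{fig:ourprotocol:joining}, is that a non-initial honest party's $\iJoin$-call terminates exactly in round $\tauJoin + 2 + \SubnetDelay$. Thus, for non-initial $\party$, the event $\EventFullyJoined{\party}{\tau}$ is equivalent to $\party$ being active at $\tau$ together with $\tau \geq \tauJoin + 2 + \SubnetDelay$. I would state this equivalence at the beginning of the proof and then reuse it in both bullets.

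For the first bullet, I would assume $\EventFullyJoined{\party}{\tau}$. If $\party$ is initial, the last part of \cref{lem:join_rows_works} gives $\EventIsInSubnet{\subnetid_r}{\party}{\tau}$ immediately from activity at $\tau$. Otherwise, the equivalence above yields $\tau \geq \tauJoin + 2 + \SubnetDelay \geq \tauJoin + \SubnetDelay$, so the first bullet of \cref{lem:join_rows_works} (applied with $\tauActive = \tau$) gives the desired conclusion.

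For the second bullet, I would argue each direction using the third bullet of \cref{lem:join_rows_works}, which is a biconditional: for non-initial $\party$ with $\tauJoin \leq T$, we have $\EventIsInSubnet{\subnetid_c}{\party}{\tauActive}$ iff $\tauActive \geq \tauJoin + \SubnetDelay + 2$. Combining this with the characterization of $\EventFullyJoined{\party}{\tau}$ from the opening paragraph yields both directions of the iff at time $\tau$. For the reverse direction, note that $\EventIsInSubnet{\subnetid_c}{\party}{\tau}$ already entails that $\party$ is active at $\tau$ (by \cref{def:subnetdiscovery:robustness}), which is the remaining ingredient of $\EventFullyJoined{\party}{\tau}$. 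The initial case ($\tauJoin = 0$) is again handled by the last bullet of \cref{lem:join_rows_works}, together with the fact that initial parties are automatically fully joined whenever they are active.

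The only real obstacle is bookkeeping: I must verify that the hypotheses of \cref{lem:join_rows_works} hold in every subcase. Specifically, when invoking the third bullet, I need $\tauJoin \leq T$, which follows from $\tauJoin < \tau \leq T$ in the non-initial case; the hypotheses $\EventSubnetprotGood{T+1}$ and $\EventColumnGoodUntil{c}{T}{2\SubnetDelay+2}$ are inherited directly from the corollary's assumptions, and $\JoinLeaveSchedule$ respects and uses good bootstrap nodes by assumption. No probabilistic argument is needed at this stage.
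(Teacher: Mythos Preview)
Your proposal is correct and follows essentially the same approach as the paper: both derive the corollary directly from \cref{lem:join_rows_works} together with the observation that $\ourprot.\iJoin$ terminates exactly at round $\tauJoin + 2 + \SubnetDelay$ (and that initial parties are fully joined immediately), and both verify $\tauJoin \leq T$ from $\tauJoin \leq \tau \leq T$. The paper's proof is terser---it dispatches the vacuous case $\tau < \tauJoin$ in one line and then appeals to the lemma in one sentence---but the content is the same as your more explicit case split.
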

\begin{proof}[Proof of \cref{cor:FullyJoinedImpliesSubnets}]
Let $\tauJoin$ be the time when $\party$ joins. If $\tau < \tauJoin$, the statements hold true, since none of the events can occur. So let $\tau \geq \tauJoin$.
In this case, $\tau \leq T$ implies $\tauJoin \leq T$. % and $\EventColumnGoodUntil{c}{\tau'}{2\SubnetDelay+2}$ implies $\EventColumnGoodUntil{c}{\tau}{2\SubnetDelay+2}$.
Now, the corollary follows from \cref{lem:join_rows_works} and the fact that $\ourprot.\iJoin$ takes exactly $2+\SubnetDelay$ rounds and $\ourprot.\iInit$ finishes the same round it starts.
\end{proof}
\begin{corollary}[Strengthened $\EventColumnGoodUntilNoArgs$]\label{cor:strengthend_subnets}
Let $\JoinLeaveSchedule$ be a join-leave schedule that respects bootstrap nodes and uses good bootstrap nodes.
Consider a run of our protocol $\ourprot$.
Let $\Delta \geq 2\SubnetDelay + 2$ and consider some lifetime $T\geq 0$.
Assume that $\EventSubnetprotGood{T}$ and $\EventColumnGoodUntil{c}{T}{\Delta}$ hold for some column $c\in [k_2]$.

Then for any $0\leq \tau \leq T$, there exists an honest party $\party$ with $\EventStaysInSubnet{\subnetid_c}{\party}{\tau}{(\tau+\Delta')}$, where $\Delta' \coloneqq \Delta - \SubnetDelay - 2$.
\end{corollary}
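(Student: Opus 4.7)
The plan is to exhibit the required honest party by applying $\EventColumnGoodUntil{c}{T}{\Delta}$ at a slightly earlier time and then to upgrade \emph{active} to \emph{in the column subnet $\subnetid_c$} via \cref{lem:join_rows_works}. The backward shift to use is $\SubnetDelay + 2$ rounds, matching the time $\ourprot.\iJoin$ takes before declaring itself finished; this is exactly what the slack $\Delta - \Delta' = \SubnetDelay + 2$ in the statement is designed to absorb.

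Fix $\tau$ with $0 \le \tau \le T$ and set $\tau' \coloneqq \max(0, \tau - \SubnetDelay - 2)$. Since $\tau' \le \tau \le T$, the hypothesis $\EventColumnGoodUntil{c}{T}{\Delta}$ yields an honest party $\party$ with $\AlgCol(\party) = c$ satisfying $\EventActiveDuration{\party}{\tau'}{(\tau'+\Delta)}$. First I would check that this active interval contains $[\tau, \tau+\Delta']$: when $\tau \ge \SubnetDelay + 2$ we have $\tau' + \Delta = \tau + \Delta'$ exactly, and when $\tau < \SubnetDelay + 2$ we have $\tau' = 0$ together with $\tau + \Delta' < \Delta$, so containment still holds.

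It remains to argue that $\party$ is in subnet $\subnetid_c$ throughout $[\tau, \tau + \Delta']$, which I would derive from \cref{lem:join_rows_works} (instantiated with lifetime $T-1$ in the role of the lemma's $T$, so that $\EventSubnetprotGood{T}$ suffices; the prerequisite $\EventColumnGoodUntil{c}{T-1}{2\SubnetDelay+2}$ is implied by our stronger hypothesis since $\Delta \ge 2\SubnetDelay + 2$). If $\tau' = 0$, then $\party$ is active at time $0$, hence one of the initial parties, and the initial-party clause of \cref{lem:join_rows_works} yields $\EventIsInSubnet{\subnetid_c}{\party}{\tauActive}$ at every active time. Otherwise $\tau' > 0$ and $\party$ joined at some $\tauJoin \le \tau' = \tau - \SubnetDelay - 2 \le T - 1$; the non-initial clause then gives $\EventIsInSubnet{\subnetid_c}{\party}{\tauActive}$ for all $\tauActive \ge \tauJoin + \SubnetDelay + 2$, and this range contains all of $[\tau, \tau + \Delta']$.

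Combining the activity and subnet-membership claims yields $\EventStaysInSubnet{\subnetid_c}{\party}{\tau}{(\tau+\Delta')}$, as required. I expect the main obstacle to be nothing more than careful bookkeeping of the boundary case $\tau < \SubnetDelay + 2$ (where we must route through the initial-party clause rather than the join-timing clause) and the off-by-one reindexing needed to match \cref{lem:join_rows_works}'s $\EventSubnetprotGood{T+1}$ assumption against our $\EventSubnetprotGood{T}$; both are conceptually straightforward but need to be explicit in a fully formal write-up.
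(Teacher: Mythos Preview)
Your proposal is correct and follows essentially the same approach as the paper: shift back by $\SubnetDelay+2$, invoke $\EventColumnGoodUntil{c}{T}{\Delta}$ at the shifted time, then upgrade from activity to subnet membership via \cref{lem:join_rows_works}, treating the near-zero boundary via the initial-party clause. The only cosmetic differences are that the paper splits explicitly into the cases $\tau \le \SubnetDelay+2$ and $\tau > \SubnetDelay+2$ (where you use $\tau'=\max(0,\tau-\SubnetDelay-2)$) and instantiates the lemma with lifetime $T-\SubnetDelay$ rather than your $T-1$; both choices work.
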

\begin{proof}[Proof of \cref{cor:strengthend_subnets}]
Since $\EventColumnGoodUntil{c}{T}{\Delta}$ holds, there exists an honest initial party $\party_0$ with $\AlgCol(\party_0) = c$ and $\EventActiveDuration{\party_0}{0}{\Delta}$.
By \cref{lem:join_rows_works}, we get $\EventStaysInSubnet{\subnetid_c}{\party_0}{0}{\Delta}$, which shows the claim for $0 \leq \tau \leq \SubnetDelay + 2$.

So assume $\SubnetDelay + 2 < \tau \leq T$.
Again, since $\EventColumnGoodUntil{c}{T}{\Delta}$ holds,
\ifnum\ccs=0
there exists some party $\party$ with $\AlgCol(\party) = c$ and
\else
there must exist some party $\party$ with $\AlgCol(\party) = c$ and
\fi
$\EventActiveDuration{\party}{(\tau - \SubnetDelay-2)}{(\tau - \SubnetDelay - 2 + \Delta)}$, i.e.\ $\EventActiveDuration{\party}{(\tau-\SubnetDelay - 2)}{(\tau + \Delta')}$.
Applying \cref{lem:join_rows_works} (with $T$ set to $T-\SubnetDelay$) gives $\EventStaysInSubnet{\subnetid_c}{\party}{\tau}{(\tau+\Delta')}$.
\end{proof}
\begin{corollary}[Strengthened $\EventGoodCellNoArgs$]\label{cor:GoodCellImpliesFindInCell}
Let $\JoinLeaveSchedule$ be a join-leave schedule that respects bootstrap nodes and uses good bootstrap nodes and consider a run our protocol $\ourprot$.
Let $T$ be some lifetime bound, $r\in[k_1]$ a row, $c\in[k_2]$ a column and $\Delta \geq 2\SubnetDelay +2$.
Assume $\EventSubnetprotGood{T}$ and $\EventColumnGoodUntil{c}{T}{\Delta}$ hold.
Then the following hold for $\tau_1\leq \tau_2$ with $\tau_1 \leq T$.
\begin{myenumerate}%[label=(\alph*)]
\item If $\EventGoodCell{r}{c}{\tau_1}{\tau_2}$, then there exists an honest $\party$ with
        $\AlgCell(\party) = (r,c)$\ifnum\ccs=0 and \else, \fi
        $\EventStaysInSubnet{\subnetid_r}{\party}{\tau_1}{\tau_2}$ and
        $\EventStaysInSubnet{\subnetid_c}{\party}{\tau_1}{\tau_2}$.\label{item:goodcellitem}
\item If $\EventGoodCell{r}{c}{\tau_1}{\tau_2}$ \ifnum\ccs=1 holds \fi and some honest party $\party'$ with $\EventIsInSubnet{\subnetid_r}{\party'}{\tau}$ calls $\mathbf{P}\coloneqq \AlgGetPeersInCell(r,c)$ at some time $\tau_1\leq \tau \leq \tau_2$ with $\tau \leq T$,
then there exists a $\party\in\mathbf{P}$ with $\party$ as in \cref{item:goodcellitem}.\label{item:findable_by_GetPeersInCell}
\end{myenumerate}
\end{corollary}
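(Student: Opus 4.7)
The plan is to extract the witness party from the definition of $\EventGoodCellNoArgs$ and then apply \cref{lem:join_rows_works} together with the ``one stays in subnets'' proposition to upgrade the activity interval into a subnet-membership interval. For Part~\ref{item:findable_by_GetPeersInCell}, the additional work is to argue that the call to $\AlgGetPeersInCell$ actually sees this party, which is exactly what $\EventSubnetprotGoodNoArgs$ guarantees.

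First I would unfold $\EventGoodCell{r}{c}{\tau_1}{\tau_2}$ to obtain an honest party $\party$ with $\AlgCell(\party) = (r,c)$ that is active throughout $[\max\{0,\tau_1-\SubnetDelay-2\},\tau_2]$. Let $\tauJoin \leq \max\{0,\tau_1-\SubnetDelay-2\}$ denote the round in which $\party$ joined (or $\tauJoin = 0$ if $\party$ is an initial party). Either way, we have $\tauJoin+\SubnetDelay+2 \leq \tau_1$, and $\tauJoin \leq \tau_1 \leq T$. Since the schedule respects and uses good bootstrap nodes and both $\EventSubnetprotGood{T}$ and $\EventColumnGoodUntil{c}{T}{\Delta}$ hold (with $\Delta \geq 2\SubnetDelay+2$, so in particular $\Delta \geq \SubnetDelay+2$), \cref{lem:join_rows_works} yields $\EventIsInSubnet{\subnetid_r}{\party}{\tau_1}$ and $\EventIsInSubnet{\subnetid_c}{\party}{\tau_1}$; the initial-party case is handled directly by the strengthened bullets of that lemma.

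Next, since $\party$ is active throughout $[\tau_1,\tau_2]$, the ``one stays in subnets'' proposition immediately after \cref{def:subnetdiscovery:robustness} lets us iterate membership forward to obtain $\EventStaysInSubnet{\subnetid_r}{\party}{\tau_1}{\tau_2}$ and $\EventStaysInSubnet{\subnetid_c}{\party}{\tau_1}{\tau_2}$, which finishes Part~\ref{item:goodcellitem}.

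For Part~\ref{item:findable_by_GetPeersInCell}, fix $\tau \in [\tau_1,\tau_2]$ with $\tau \leq T$ and the calling party $\party'$ with $\EventIsInSubnet{\subnetid_r}{\party'}{\tau}$. The party $\party$ from Part~\ref{item:goodcellitem} satisfies $\EventIsInSubnet{\subnetid_r}{\party}{\tau}$, so $\EventSubnetprotGood{T}$ applied to the $\subnetprot.\iGetPeers(\subnetid_r)$ call inside $\AlgGetPeersInCell$ ensures the returned peer set contains $\party$; since $\AlgCell(\party) = (r,c)$, $\party$ survives the column filter and lies in the output $\mathbf{P}$. The only subtlety in the whole argument is tracking the boundary cases $\tau_1 < \SubnetDelay+2$ and $\party$ being an initial party, both of which are absorbed by the $\max\{0,\cdot\}$ in the definition of $\EventGoodCellNoArgs$ and by the unconditional bullets of \cref{lem:join_rows_works} for initial parties.
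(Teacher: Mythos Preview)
Your proposal is correct and follows essentially the same route as the paper: extract the witness party from the definition of $\EventGoodCell{r}{c}{\tau_1}{\tau_2}$, invoke \cref{lem:join_rows_works} (splitting on whether the party is initial) to obtain subnet membership at $\tau_1$, extend to $[\tau_1,\tau_2]$ by activity, and then use $\EventSubnetprotGood{T}$ on the internal $\iGetPeers$ call for Part~\ref{item:findable_by_GetPeersInCell}. The one cosmetic slip is your ``either way, $\tauJoin+\SubnetDelay+2 \leq \tau_1$'' claim, which fails when $\tau_1 < \SubnetDelay+2$ forces $\tauJoin = 0$; but since you explicitly defer that case to the unconditional initial-party bullets of \cref{lem:join_rows_works}, the argument is unaffected.
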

\begin{proof}[Proof of \cref{cor:GoodCellImpliesFindInCell}]
By definition, $\EventGoodCell{r}{c}{\tau_1}{\tau_2}$ means that there exists an honest party $\party$ with $\AlgCell(\party)=(r,c)$ and $\EventActiveDuration{\party}{\max\{0,\tau_1-\SubnetDelay-2\}}{\tau_2}$.
This means that $\party$ must have joined at time $\tauJoin \leq \max\{0,\tau_1-\SubnetDelay-2\}$.
If $\tau_1 < \SubnetDelay + 2$, then $\party$ is an initial party and we get $\EventStaysInSubnet{\subnetid_r}{\party}{0}{\tau_2}$ and $\EventStaysInSubnet{\subnetid_c}{\party}{0}{\tau_2}$ by \cref{lem:join_rows_works}.
Hence, $\EventStaysInSubnet{\subnetid_r}{\party}{\tau_1}{\tau_2}$ and $\EventStaysInSubnet{\subnetid_c}{\party}{\tau_1}{\tau_2}$.
Otherwise, we must have $\tau_1 > \SubnetDelay + 2$, so $\tauJoin \leq \tau_1 - \SubnetDelay - 2$. This implies $\tauJoin + 1 \leq T$ and we can again use \cref{lem:join_rows_works} (with $T$ set to $T-1$) to deduce that $\EventStaysInSubnet{\subnetid_r}{\party}{\tau_1}{\tau_2}$ and $\EventStaysInSubnet{\subnetid_c}{\party}{\tau_1}{\tau_2}$.
% Then in both the case $\tau_1 \geq \SubnetDelay$ and in the case $\tau_1 < \SubnetDelay$, \cref{lem:join_rows_works} implies $\EventStaysInSubnet{\subnetid_r}{\party}{\tau_1}{\tau_2}$, showing \cref{item:goodcellitem}.

\cref{item:findable_by_GetPeersInCell} is a direct consequence of the definition of $\AlgGetPeersInCell$ and $\EventSubnetprotGoodNoArgs$.
\end{proof}
Let us now prove \cref{lem:join_rows_works}:
\begin{proof}
For the ``only if'' part of the statements, we start by observing that the only places where we call $\subnetprot.\iJoinSubnet$ resp.\ $\subnetprot.\iCreateSubnet$ are in $\ourprot.\iJoin$ resp.\ $\ourprot.\iInit$.
For $\tauJoin = 0$, the ``only if''-statement is trivial. For $\tauJoin > 0$, observe that $\party$ calls $\subnetprot.\iJoinSubnet$ for row subnets at time $\tauJoin$ and for column subnets at time $\tauJoin+2$.
The ``only if''-claim then follows from the definition of $\EventIsInSubnetNoArgs$.

\smallskip
The proof for the ``if'' part works by induction over $\tauJoin \geq 0$.

\medskip
Let us start with the $\tauJoin = 0$ case, so $\party$ is among the initial parties $\{\party_1,\ldots,\party_\ell\}$ with roles $\AuxRole_1,\ldots,\AuxRole_\ell$.
Let $\RowParties_{r'} \coloneqq \{\party_i \mid \AlgRow(\party_i) = r' \lor \AuxRole_i = 1\}$ be as in $\iInit$ for any $r'\in[k_1]$. Also, let $\ColParties_{c'} \coloneqq \{\party_i \mid \AlgCol(\party_i) = c'\}$ for any $c'\in [k_2]$.
In our protocol, the only calls to $\subnetprot.\iCreateSubnet$ made by honest parties are at time $\tauJoin = 0$ and made by exactly the parties from $\RowParties_{r'}$ resp.\ $\ColParties_{c'}$ with argument $\RowParties_{r'}$ resp.\ $\ColParties_{c'}$.
Since $\party\in\RowParties_{r}$, this implies $\EventCreatedSubnet{\subnetid_r}{\RowParties_r}$ and $\EventIsInSubnet{\subnetid_r}{\party}{0}$.
\ifnum\ccs=0
Similarly, $\party\in\ColParties_{c}$, so we get $\EventCreatedSubnet{\subnetid_{c}}{\ColParties_c}$ and $\EventIsInSubnet{\subnetid_c}{\party}{0}$.
\else
Similarly, $\party\in\ColParties_{c}$, so we know that $\EventCreatedSubnet{\subnetid_{c}}{\ColParties_c}$ and $\EventIsInSubnet{\subnetid_c}{\party}{0}$ holds.
\fi
This implies $\EventIsInSubnet{\subnetid_r}{\party}{\tauActive}$ resp.\ $\EventIsInSubnet{\subnetid_c}{\party}{\tauActive}$ for any $\tauActive\geq 0$ for which party $\party$ is still active.
% TODO? Actually refer to Prop about Activity here?
If furthermore $\AuxRole = 1$, then $\party\in\RowParties_{r'}$ for all $r'$, so we get $\EventIsInSubnet{\subnetid_{r'}}{\party}{\tauActive}$ for all $r'\in[k_1]$ and all $\tauActive\geq 0$ for which party $\party$ is still active.

\medskip
Let us proceed with the induction step, where we assume that $\tauJoin > 0$ and all statements of the lemma are proven for all $\tau_0 < \tauJoin$.
Since $\tauJoin > 0$, the party $\party$ joins by $\iJoin$ via some bootstrap nodes $\party_1,\ldots,\party_t$.
Since $\JoinLeaveSchedule$ respects bootstrap nodes and uses good bootstrap nodes, there exists a prospective bootstrap node $\party'$ among them that is active at least from time $\max\{\tauJoin-\SubnetDelay,0\}$ to $\tauJoin + \SubnetDelay$.
We first claim that $\EventStaysInSubnet{\subnetid_{r'}}{\party'}{\tauJoin}{(\tauJoin+\SubnetDelay)}$ holds for all row indices $r\in [k_1]$:
If $\tauJoin \geq \SubnetDelay$, then $\party'$ must have joined at $\tauJoin-\SubnetDelay$ or earlier. Our first claim is thus implied by the induction hypothesis.
Otherwise, $\tauJoin < \SubnetDelay$ and $\party'$ is active from time 0, so it must be one of the initial parties. Using the induction hypothesis on $\party'$ with $\tauJoin = 0$ then also shows our first claim about $\party'$.

\medskip
We now continue with the induction step for the row subnets.
If $\AuxRole = 0$, then party $\party$ will call $\subnetprot.\iJoinSubnet(\subnetid_r,\party')$ at time $\tauJoin$ as part of its $\iJoin$ call.
By definition of $\EventIsInSubnetNoArgs$, this implies $\EventIsInSubnet{\subnetid_r}{\party}{\tauActive}$ for any $\tauActive \geq \tauJoin + \SubnetDelay$ for which $\party$ is still active.
\ifnum\ccs=0
Similarly, if $\AuxRole = 1$, then party $\party$ calls $\subnetprot.\iJoinSubnet(\subnetid_{r'},\party_i)$ at time $\tauJoin$ for all rows $r'\in[k_1]$ as part of its $\iJoin$ call.
\else
Similarly, if $\AuxRole = 1$, then during $\iJoin$, party $\party$ calls $\subnetprot.\iJoinSubnet(\subnetid_{r'},\party_i)$ at time $\tauJoin$ for all rows $r'\in[k_1]$ as part of its $\iJoin$ call.
\fi
This implies $\EventIsInSubnet{\subnetid_{r'}}{\party}{\tauActive}$ for any $\tauActive \geq \tauJoin + \SubnetDelay$ for which $\party$ is still active, finishing the part of the lemma with the row subnets.

\medskip
Finally, we prove the induction step for the column subnets.
Set $\tauShifted\coloneqq \max\{0,\tauJoin - \SubnetDelay\}$.
Note that since $\tauJoin > 0$, we have $\tauShifted < \tauJoin$.
Because $\EventColumnGoodUntil{c}{T}{2\SubnetDelay+2}$ holds with $\tauShifted \leq T$, there exists some honest party $\widetilde{\party}$ with
$\EventActiveDuration{\widetilde{\party}}{\tauShifted}{(\tauShifted + 2\SubnetDelay+2)}$ and $\AlgCol(\widetilde{\party}) = c$.
By definition of $\tauShifted$, we have
\[
    \tauShifted + 2\SubnetDelay + 2\geq \tauJoin- \SubnetDelay + 2\SubnetDelay + 2 = \tauJoin + \SubnetDelay + 2.
\]
So $\widetilde{\party}$ joined at $\tauShifted$ or earlier and stays active until at least
$\tauJoin+\SubnetDelay+2$.

Let $\widetilde{r}\coloneqq \AlgRow(\widetilde{\party})$ be the row assigned to $\widetilde{\party}$.
In case $\tauShifted>0$, using the induction hypothesis for the rows, applied to $\tauShifted < \tauActive$, gives us
$\EventStaysInSubnet{\subnetid_{\tilde{r}}}{\widetilde{\party}}{(\tauShifted+\SubnetDelay)}{(\tauJoin + \SubnetDelay+2)}$.
Using the induction hypothesis for the columns gives $\EventStaysInSubnet{\subnetid_{c}}{\widetilde{\party}}{(\tauShifted+\SubnetDelay+2)}{(\tauJoin+\SubnetDelay+2)}$.
We can rewrite those using the definition of $\tauShifted$, for $\tauShifted>0$ as
\ifnum\ccs=0
\[
\EventStaysInSubnet{\subnetid_{\tilde{r}}}{\widetilde{\party}}{\tauJoin}{(\tauJoin + \SubnetDelay+2)}\text{ and } \EventStaysInSubnet{\subnetid_c}{\widetilde{\party}}{(\tauJoin+2)}{(\tauJoin+\SubnetDelay+2)}  \enspace.
\]
\else
\begin{align*}
&\EventStaysInSubnet{\subnetid_{\tilde{r}}}{\widetilde{\party}}{\tauJoin}{(\tauJoin + \SubnetDelay+2)}\\
\text{and }&\EventStaysInSubnet{\subnetid_c}{\widetilde{\party}}{(\tauJoin+2)}{(\tauJoin+\SubnetDelay+2)}  \enspace.
\end{align*}
\fi
If $\tauShifted = 0$, we know that $\widetilde{\party}$ is one of the initial parties and the exact same statements
$\EventStaysInSubnet{\subnetid_{\tilde{r}}}{\widetilde{\party}}{\tauJoin}{(\tauJoin + \SubnetDelay+2)}$
and
$\EventStaysInSubnet{\subnetid_{c}}{\widetilde{\party}}{(\tauJoin+2)}{(\tauJoin+\SubnetDelay+2)}$
hold by induction hypothesis applied to $\widetilde{\party}$ and $\tauShifted=0$.

Now, during the $\iJoin$-call of $\party$, the party $\party$ sends a $\msgJoin$-message to $\party'$ at time $\tauJoin$.
We already established $\EventStaysInSubnet{\subnetid_{\tilde{r}}}{\party'}{\tauJoin}{(\tauJoin+\SubnetDelay)}$, so $\party'$ is still active at time $\tauJoin+1$ and will receive that message.
Then $\party'$ will call (among others) $\subnetprot.\iGetPeers(\subnetid_{\tilde r})$ at time $\tauJoin+1$.
Since $\EventSubnetprotGood{T+1}$, $\tauJoin+1\leq T+1$, and $\EventIsInSubnet{\subnetid_{\tilde{r}}}{\widetilde{\party}}{\tauJoin+1}$ hold,
this call will return $\widetilde{\party}$ (among others)
and be included in the local computation of $\ColParties_c$ by $\party'$.
Then $\party'$ will send back a $(\msgJoinRsp,\ColParties_c)$-message to $\party$ with $\widetilde{\party}\in\ColParties_c$.
This message is received by $\party$ at time $\tauJoin+2$, so $\party$ will call $\subnetprot.\iJoinSubnet(\subnetid_c,\widetilde{\party})$ at time $\tauJoin+2$.
Since $\EventStaysInSubnet{\subnetid_c}{\widetilde{\party}}{(\tauJoin+2)}{(\tauJoin+\SubnetDelay+2)}$ holds, we obtain $\EventIsInSubnet{\subnetid_c}{\party}{\tauActive}$ for any $\tauActive \geq \tauJoin + 2 + \SubnetDelay$ by definition of $\EventIsInSubnetNoArgs$, finishing the proof.
\end{proof}
\superparagraph{Storing data works.} We now proceed to show that after calling $\ourprot.\iStore$, data is stored by the column subnet.
In fact, data will be stored by \emph{all} honest parties in a given column subnet that have been active for a sufficiently long time.
We introduce the following events:
\begin{definition}[Storage Events]\label{def:relevant_events}
For $h\in\HandleSpace$, $i\in[\FileLength]$, $x\in\SymbolSpace$, we define the following events:
\begin{itemize}
    \item Event $\EventInSymbolStorage{\party}{\tau}{h}{i}{x}$:\ifnum\ccs=1\\\fi
            This event occurs if party $\party$ is honest, active at time $\tau$, and at the beginning of time slot $\tau$,
            we have $\SymbolsMap[h,i] = x$ for $\party$'s local map $\SymbolsMap$.
    \item Event $\EventStoredInColumnNew{c}{\tau_0}{\tau_1}{h}{i}{x}$ for times $\tau_0\leq \tau_1$ and column $c$:\ifnum\ccs=1\\\fi
            \ifnum\ccs=0
            This event occurs if for every honest party $\party$ with $\EventStaysInSubnet{\subnetid_c}{\party}{\tau_0}{\tau_1}$, we have $\EventInSymbolStorage{\party}{\tau_1}{h}{i}{x}$.
            \else
            This occurs if for every honest party $\party$ with $\EventStaysInSubnet{\subnetid_c}{\party}{\tau_0}{\tau_1}$, we have $\EventInSymbolStorage{\party}{\tau_1}{h}{i}{x}$.
            \fi
    \item Event $\EventStoredAndRetain{c}{\tau_0}{\tau_1}{h}{i}{x}$:\ifnum\ccs=1\\\fi
        This event occurs for a column $c$ and times $\tau_0 \leq \tau_1$ if there exists an honest party $\party$ with $\EventInSymbolStorage{\party}{\tau_0}{h}{i}{x}$ and $\EventStaysInSubnet{\subnetid_c}{\party}{(\tau_0-1)}{\tau_1}$.
 \end{itemize}
We remark that if we write a symbol during time slot $\tau$, the corresponding event $\EventInSymbolStorageNoArgs$ only holds starting from time $\tau+1$.
The meaning of $\EventInSymbolStorageNoArgs$ is that we can \emph{reliably query} the symbol, which is not guaranteed in the same round that we write the symbol, absent intra-slot ordering constraints on event processing.
This also means that in the definition of $\EventStoredAndRetain{c}{\tau_0}{\tau_1}{h}{i}{x}$, the symbol was stored by $\party$ in round $\tau_0-1$ or earlier, which guarantees that $\tau_0 - 1\geq 0$.
\end{definition}
% The definition of $\EventStoredAndRetain{c}{\tau}{\Delta}{h}{i}{x}$ says that some $\party$ in column $c$ stores the symbol at time $\tau$ and remain active for at least $\Delta$ more time slots.
% Our goal will be use this with $\Delta\geq \SyncDelay$,
% Note that, by definition, $\EventInSymbolStorage{\party}{\tau}{h}{i}{x}$ means that the symbol was written at time $\tau-1$ or earlier, so $\tau > 0$

\noindent We start by observing that honest parties will not overwrite data if $\rdspred$ is position-binding:
\begin{lemma}\label{lem:StaysInSymbolStorage}
Let $\party$ be an honest party, $\tau_1\leq \tau_2$ some time slots, $h\in\HandleSpace$, $x\in\SymbolSpace$ and $i\in[m]$. Assume that $\rdspred$ is position-binding and consider a run of $\ourprot$. Then
\ifnum\ccs=0
\[
 \EventInSymbolStorage{\party}{\tau_1}{h}{i}{x} \land \EventActive{\party}{\tau_2} \implies \EventInSymbolStorage{\party}{\tau_2}{h}{i}{x}\enspace.
\]
\else
\begin{align*}
 & \EventInSymbolStorage{\party}{\tau_1}{h}{i}{x} \land \EventActive{\party}{\tau_2}\\ \implies & \EventInSymbolStorage{\party}{\tau_2}{h}{i}{x}\enspace.
\end{align*}
\fi
Furthermore, if $\party$ stores $\SymbolsMap[h,i]\coloneqq x$ at time $\tau$ and is active at time $\tau+1$, then $\EventInSymbolStorage{\party}{\tau+1}{h}{i}{x}$ holds.
\end{lemma}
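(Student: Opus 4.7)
The plan is a straightforward code inspection combined with the position-binding property of $\rdspred$.

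First, I would identify every line in the pseudocode of $\ourprot$ (\cref{fig:ourprotocol:initialization,fig:ourprotocol:joining,fig:ourprotocol:storeget}) where $\SymbolsMap[h,i]$ is modified. These are exactly the initializations $\SymbolsMap[\cdot,\cdot]\coloneqq\emptyset$ in $\iInit$ and $\iJoin$, and the assignments $\SymbolsMap[h,i]\coloneqq x'$ appearing in the handlers for $\msgStore$, $\msgStoreFwd$, and $\msgSyncRsp$. In each of the latter three cases, the assignment is guarded by a check $\rdspred(h,i,x') = 1$, so every non-$\bot$ value ever written satisfies the predicate. The initializations happen only in $\iInit$ at $\tau=0$ or at the start of an $\iJoin$-call, and since honest parties never re-join (cf.\ the model in \cref{sec:prelims:model}), no such re-initialization occurs while $\party$ remains continuously active.

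Now take any time slot $\tau$ with $\tau_1 < \tau \leq \tau_2$. Since $\party$ is active at both $\tau_1$ and $\tau_2$, it is active throughout $[\tau_1,\tau_2]$ by the activity-is-an-interval proposition, and in particular no re-initialization of $\SymbolsMap$ occurs in this window. Assume for contradiction that $\SymbolsMap[h,i]$ changes from $x$ (at the beginning of $\tau_1$) to some value $x'\neq x$ at some time in $[\tau_1,\tau_2]$. Then the first such change must occur via one of the three message handlers above, whose guard forces $\rdspred(h,i,x') = 1$. But we also have $\rdspred(h,i,x) = 1$ (because $x$ was present earlier, and an inductive walk back to the first time that value was written shows that each such value must also have satisfied $\rdspred$), contradicting position-binding of $\rdspred$ (\cref{remark:positionbinding}). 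Hence $\SymbolsMap[h,i]$ never changes, and $\EventInSymbolStorage{\party}{\tau_2}{h}{i}{x}$ holds.

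For the ``Furthermore'' part, if $\party$ executes the assignment $\SymbolsMap[h,i]\coloneqq x$ during time slot $\tau$, then by definition of $\EventInSymbolStorageNoArgs$ (which reads $\SymbolsMap$ at the \emph{beginning} of a time slot) the event $\EventInSymbolStorage{\party}{\tau+1}{h}{i}{x}$ holds as soon as $\party$ is active at $\tau+1$. The only obstacle worth noting is the usual subtlety that the event is defined at the beginning of a round, so a write during round $\tau$ is only visible from round $\tau+1$ on; this matches exactly what the second part of the lemma states.
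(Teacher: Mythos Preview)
Your proof is correct and follows essentially the same approach as the paper: code inspection shows that $\SymbolsMap$ is initialized exactly once (in $\iInit$ or $\iJoin$) and every subsequent write is guarded by $\rdspred(h,i,\cdot)=1$, so position-binding rules out any overwrite by a different value. Your version is somewhat more explicit (e.g., invoking the activity-is-an-interval proposition and tracing $x$ back to a guarded write), but the underlying argument is identical to the paper's.
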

\begin{proof}
In our protocol, each honest party initializes $\SymbolsMap$ only once as part of $\iInit$ resp.\ $\iJoin$ before any writes are made to it. After that, every write $\SymbolsMap[h,i]\coloneqq x$ is guarded by a check that $\rdspred(h,i,x) = 1$.
This implies that no value stored to $\SymbolsMap$ is ever overwritten by a different value (otherwise, this would directly violate position-binding of $\rdspred$). The claims follow.
\end{proof}

With this, we can state some easy properties of the events defined in \cref{def:relevant_events}
\begin{proposition}\label{prop:relevant_events}
Let $\JoinLeaveSchedule$ be a join-leave schedule that respects bootstrap nodes and uses good bootstrap nodes.
Let $h\in\HandleSpace$, $i\in [ \FileLength ]$, $x\in\SymbolSpace$. Assume that $\rdspred$ is position-binding and consider a run of $\ourprot$. Then we have
\begin{enumerate}[label=(\alph*)]
\item \ifnum\ccs=0
        $\EventStoredInColumnNew{c}{\tau_0}{\tau_1}{h}{i}{x}\implies \EventStoredInColumnNew{c}{\tau_0}{\tau_2}{h}{i}{x}$ for all $\tau_0\leq \tau_1\leq \tau_2$.
      \else
        \begin{align*}
        &\EventStoredInColumnNew{c}{\tau_0}{\tau_1}{h}{i}{x}\\\implies &\EventStoredInColumnNew{c}{\tau_0}{\tau_2}{h}{i}{x}
        \end{align*}
        for all $\tau_0\leq \tau_1\leq \tau_2$.
      \fi
\label{propitem:EventStoreInColumnExtends}
\item Assume $\EventColumnGoodUntil{c}{T}{\Delta}$ for some lifetime $T$ and $\Delta\geq 2\SubnetDelay+2$. Set $\Delta'\coloneqq \Delta-\SubnetDelay - 2$. For $0<\tau_0 \leq \tau_1$, we have
\ifnum\ccs=0
\begin{align*}
     \left(\begin{array}{rl}
                & \tau_0 \leq T+1\\
        \land   & \tau_0 + \Delta' -1 \geq \tau_1\\
        \land   & \EventStoredInColumnNew{c}{\tau_0}{\tau_1}{h}{i}{x}
     \end{array}\right)
     \implies   \EventStoredAndRetain{c}{\tau_1}{(\tau_0 + \Delta'-1)}{h}{i}{x}\enspace.
\end{align*}
\else
\begin{align*}
                &   \tau_0 \leq T+1 \\
     {}\land{}     &     \tau_0 + \Delta' -1 \geq \tau_1\\
     {}\land{}     &     \EventStoredInColumnNew{c}{\tau_0}{\tau_1}{h}{i}{x}\\
     {}\implies{}  &    \EventStoredAndRetain{c}{\tau_1}{(\tau_0 + \Delta'-1)}{h}{i}{x}\enspace.
\end{align*}
\fi
% provided $\tau_0 \leq T+1$ and $\tau_0 + \Delta' -1 \geq \tau_1$.
\label{propitem:StoredAndRetain}
\end{enumerate}
\end{proposition}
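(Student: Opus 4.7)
For part~\ref{propitem:EventStoreInColumnExtends}, the plan is to chase definitions and invoke \cref{lem:StaysInSymbolStorage}. Given any honest party $\party$ with $\EventStaysInSubnet{\subnetid_c}{\party}{\tau_0}{\tau_2}$, one can restrict to $\EventStaysInSubnet{\subnetid_c}{\party}{\tau_0}{\tau_1}$ (since $\tau_1 \leq \tau_2$), and apply the hypothesis $\EventStoredInColumnNew{c}{\tau_0}{\tau_1}{h}{i}{x}$ to conclude $\EventInSymbolStorage{\party}{\tau_1}{h}{i}{x}$. Since $\EventStaysInSubnetNoArgs$ through $\tau_2$ implies $\party$ is active at $\tau_2$, and $\rdspred$ is position-binding, \cref{lem:StaysInSymbolStorage} extends this to $\EventInSymbolStorage{\party}{\tau_2}{h}{i}{x}$, proving~\ref{propitem:EventStoreInColumnExtends}.

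For part~\ref{propitem:StoredAndRetain}, the plan is to use \cref{cor:strengthend_subnets} to produce a witness party $\party$ with a sufficiently long subnet stay, and then feed it into the hypothesis $\EventStoredInColumnNewNoArgs$. Specifically, since $\tau_0 > 0$ and $\tau_0 \leq T+1$, we have $0 \leq \tau_0 - 1 \leq T$, so \cref{cor:strengthend_subnets} (applied at time $\tau_0 - 1$, using $\EventColumnGoodUntil{c}{T}{\Delta}$ together with the implicit $\EventSubnetprotGoodNoArgs$ assumption that underlies the surrounding good-events framework) yields an honest party $\party$ with $\EventStaysInSubnet{\subnetid_c}{\party}{(\tau_0-1)}{(\tau_0 - 1 + \Delta')}$. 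Since $\tau_0 \leq \tau_1 \leq \tau_0 + \Delta' - 1$, this stay in particular restricts to $\EventStaysInSubnet{\subnetid_c}{\party}{\tau_0}{\tau_1}$, so the hypothesis $\EventStoredInColumnNew{c}{\tau_0}{\tau_1}{h}{i}{x}$ gives $\EventInSymbolStorage{\party}{\tau_1}{h}{i}{x}$. On the other hand, since $\tau_0 - 1 \leq \tau_1 - 1$ and $\tau_0 + \Delta' - 1 \leq \tau_0 + \Delta' - 1$, the same subnet stay also restricts to $\EventStaysInSubnet{\subnetid_c}{\party}{(\tau_1-1)}{(\tau_0+\Delta'-1)}$. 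By definition of $\EventStoredAndRetainNoArgs$, these two facts together establish $\EventStoredAndRetain{c}{\tau_1}{(\tau_0+\Delta'-1)}{h}{i}{x}$.

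The only non-routine aspect is bookkeeping: one must carefully verify that the interval $[\tau_0-1,\tau_0+\Delta'-1]$ produced by \cref{cor:strengthend_subnets} simultaneously covers both the witness interval $[\tau_0,\tau_1]$ needed to invoke the hypothesis and the target interval $[\tau_1-1, \tau_0+\Delta'-1]$ appearing in $\EventStoredAndRetainNoArgs$. This is where the assumption $\tau_0 + \Delta' - 1 \geq \tau_1$ is used (it guarantees that $\tau_1 - 1 \leq \tau_0 + \Delta' - 1$ and $\tau_1 \leq \tau_0 + \Delta' - 1$), while $\tau_0 > 0$ ensures $\tau_0 - 1 \geq 0$ so that \cref{cor:strengthend_subnets} is applicable. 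No probability argument is needed: both parts are purely structural manipulations of events, with \cref{lem:StaysInSymbolStorage} providing monotonicity of $\EventInSymbolStorageNoArgs$ in time and \cref{cor:strengthend_subnets} providing the needed witness party for the existential claim in~\ref{propitem:StoredAndRetain}.
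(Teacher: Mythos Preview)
Your proposal is correct and follows essentially the same route as the paper: part~(a) via \cref{lem:StaysInSymbolStorage} and the definition of $\EventStoredInColumnNewNoArgs$, and part~(b) by applying \cref{cor:strengthend_subnets} at $\tau_0-1$ to produce the witness party and then reading off both required facts from the resulting subnet-stay interval. You are also right to flag that \cref{cor:strengthend_subnets} formally requires $\EventSubnetprotGoodNoArgs$; the paper's proof invokes the corollary without restating this hypothesis, so your explicit acknowledgment is, if anything, slightly more careful.
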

\begin{proof}
\cref{propitem:EventStoreInColumnExtends} is a direct consequence of \cref{lem:StaysInSymbolStorage} and the definition of $\EventStoredInColumnNoArgs$.

For \cref{propitem:StoredAndRetain}, $\EventColumnGoodUntil{c}{T}{\Delta}$ implies by \cref{cor:strengthend_subnets} that there exists some honest $\party$ with
$\EventStaysInSubnet{\subnetid_c}{\party}{(\tau_0-1)}{(\tau_0-1+\Delta')}$.
% where $\Delta' = \Delta - 2 - \SubnetDelay$.

Since $\tau_0 - 1 < \tau_1 \leq \tau_0-1+\Delta'$, we have $\EventIsInSubnet{\subnetid_c}{\party}{\tau_1}$ and $\EventStaysInSubnet{\subnetid_c}{\party}{(\tau_1-1)}{(\tau_0+\Delta'-1)}$.
\ifnum\ccs=0
By definition, $\EventStoredInColumnNew{c}{\tau_0}{\tau_1}{h}{i}{x}$ then implies $\EventInSymbolStorage{\party}{\tau_1}{h}{i}{x}$, which shows $\EventStoredAndRetain{c}{\tau_1}{(\tau_0+\Delta'-1)}{h}{i}{x}$.
\else
By definition of $\EventStoredInColumnNoArgs$, $\EventStoredInColumnNew{c}{\tau_0}{\tau_1}{h}{i}{x}$ implies $\EventInSymbolStorage{\party}{\tau_1}{h}{i}{x}$, which shows $\EventStoredAndRetain{c}{\tau_1}{(\tau_0+\Delta'-1)}{h}{i}{x}$ holds.
\fi

% Since $\EventStaysInSubnet{\subnetid_c}{\party}{\tau_1}{\
\end{proof}

\begin{lemma}[Store Works]\label{lemma:store_works}
    Let $\JoinLeaveSchedule$ be an join-leave schedule that respects bootstrap nodes and uses good bootstrap nodes.
    Assume $\rdspred$ is position-binding and consider a run of our protocol $\ourprot$.
    Let $\PartyStoring$ be an honest party with $\AlgRow(\PartyStoring) = r$. %with row subnet $\subnetid_r$ and let $\tau \geq 0$.
    Let $x\in\SymbolSpace, i\in [\FileLength], h\in\HandleSpace$ for which $\rdspred(h,i,x)$ holds and let $c \coloneqq\AlgGetColForSymbol(i)$. % with column subnet $\subnetid_c$.
    Assume $\EventColumnGoodUntil{c}{T}{2\SubnetDelay+2}$ and $\EventSubnetprotGood{T+1}$ hold for some lifetime $T$.
    Then we have for all $0 \leq \tau \leq T$
%     Assume $\party$ is fully joined and stores some symbol $x\in\SymbolSpace$ with input handle $h\in\HandleSpace$ in slot $i\in[m]$, i.e.\
%     $\EventFullyJoined{\party}{\tau}$ and $\EventStored{\party}{\tau}{h}{i}{x}$ hold.
%     Further, assume $\rdspred(h,i,x)$ and let $c\coloneqq \AlgGetColForSymbol(i)$ with column subnet $\subnetid_c$. Then
    \ifnum\ccs=0
    \begin{align*}
     \left(\begin{array}{rl}
                    & \EventGoodCell{r}{c}{\tau}{(\tau+1)}\\
            \land   & \EventFullyJoined{\PartyStoring}{\tau}\\
            \land   & \EventStored{\PartyStoring}{\tau}{h}{i}{x}
           \end{array}\right)
    \implies \EventStoredInColumnNew{c}{(\tau+1)}{(\tau+3)}{h}{i}{x}\enspace.
    \end{align*}
    \else
    \begin{align*}
                    & \EventGoodCell{r}{c}{\tau}{(\tau+1)}\\
            {}\land{}   & \EventFullyJoined{\PartyStoring}{\tau}\\
            {}\land{}   & \EventStored{\PartyStoring}{\tau}{h}{i}{x}\\
    {}\implies{} &\EventStoredInColumnNew{c}{(\tau+1)}{(\tau+3)}{h}{i}{x}\enspace.
    \end{align*}
    \fi
    \begin{proof}
    Let $\party$ be an honest party with $\EventStaysInSubnet{\subnetid_c}{\party}{(\tau+1)}{(\tau+3)}$.
    We need to show that $\party$ stores the data $(h,i,x)$ at the beginning of time slot $\tau+3$.

    Since $\EventFullyJoined{\PartyStoring}{\tau}$ holds, we have $\EventIsInSubnet{\subnetid_r}{\PartyStoring}{\tau}$ by \cref{cor:FullyJoinedImpliesSubnets}.
    Since $\EventGoodCell{r}{c}{\tau}{(\tau+1)}$ holds, there exist an honest party $\party'$ with $\AlgCell(\party') = (r,c)$ with $\EventStaysInSubnet{\subnetid_r}{\party'}{\tau}{(\tau+1)}$ and $\EventStaysInSubnet{\subnetid_c}{\party'}{\tau}{(\tau+1)}$ by \cref{cor:GoodCellImpliesFindInCell}.

    By assumption, $\PartyStoring$ calls $\ourprot.\iStore(h,i,x)$ at time $\tau$.
    During this call, $\PartyStoring$ internally calls $\mathbf{P} \coloneqq \AlgGetPeersInCell(r,c)$ at time $\tau$.
    Since $\EventSubnetprotGood{T}$ holds, this finds $\party'\in\mathbf{P}$ by \cref{cor:GoodCellImpliesFindInCell} and so $\PartyStoring$ sends $(\msgStore,h,i,x)$ to $\party'$ at time $\tau$.

    \smallskip\noindent
    This peer $\party'$ receives $(\msgStore, h,i,x)$ at time $\tau+1$ and is still active.
    Party $\party'$ calls $\mathbf{P}' \coloneqq \subnetprot.\iGetPeers(\subnetid_c)$ at time $\tau+1$. %, where we use the superscript to distinguish the local variable of $\party'$.
    Again, $\EventSubnetprotGood{T+1}$ holds, which implies that $P \in \mathbf{P}'$.
    So party $\party'$ sends $(\msgStoreFwd, h,i,x)$ to $\party$.
    \smallskip\noindent
    The party $\party$ is still active at time $\tau+2$ by definition, so it will receive $(\msgStoreFwd,h,i,x)$ at time $\tau+2$.
    It reacts by storing the symbol at time $\tau+2$, so we get $\EventInSymbolStorage{\party}{\tau+3}{h}{i}{x}$ by \cref{lem:StaysInSymbolStorage}, which finishes the proof.
\end{proof}
\end{lemma}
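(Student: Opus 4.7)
The plan is to fix an arbitrary honest party $\party$ satisfying $\EventStaysInSubnet{\subnetid_c}{\party}{(\tau+1)}{(\tau+3)}$ and to trace how the symbol $(h,i,x)$ propagates from $\PartyStoring$ to $\party$ by the beginning of round $\tau+3$; doing so for an arbitrary such $\party$ is exactly what the definition of $\EventStoredInColumnNew{c}{(\tau+1)}{(\tau+3)}{h}{i}{x}$ requires. The $\iStore$ routine makes two hops: first from $\PartyStoring$ to some recipient in cell $(r,c)$, and then from that recipient, via $\msgStoreFwd$, to the whole column subnet. I would therefore exhibit an honest intermediate party $\party'$ living in cell $(r,c)$ that acts as the bridge.

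First, I would extract $\party'$ from the hypothesis $\EventGoodCell{r}{c}{\tau}{(\tau+1)}$ by invoking \cref{cor:GoodCellImpliesFindInCell}: this yields an honest $\party'$ with $\AlgCell(\party') = (r,c)$ satisfying both $\EventStaysInSubnet{\subnetid_r}{\party'}{\tau}{(\tau+1)}$ and $\EventStaysInSubnet{\subnetid_c}{\party'}{\tau}{(\tau+1)}$. Since $\EventFullyJoined{\PartyStoring}{\tau}$ holds, \cref{cor:FullyJoinedImpliesSubnets} gives $\EventIsInSubnet{\subnetid_r}{\PartyStoring}{\tau}$, so by $\EventSubnetprotGood{T+1}$ the call $\AlgGetPeersInCell(r,c)$ inside $\PartyStoring.\iStore(h,i,x)$ returns a set containing $\party'$. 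Thus $\PartyStoring$ sends $(\msgStore, h,i,x)$ to $\party'$ at time $\tau$, which $\party'$ receives at time $\tau+1$ while still active in both subnets.

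Next, $\party'$ handles $\msgStore$ at time $\tau+1$ by writing $\SymbolsMap[h,i]\coloneqq x$ (legal because $\rdspred(h,i,x) = 1$) and calling $\subnetprot.\iGetPeers(\subnetid_c)$. Using $\EventIsInSubnet{\subnetid_c}{\party'}{\tau+1}$, $\EventIsInSubnet{\subnetid_c}{\party}{\tau+1}$, and $\EventSubnetprotGood{T+1}$, the returned set contains $\party$, so $\party'$ sends $(\msgStoreFwd,h,i,x)$ to $\party$ in round $\tau+1$. Party $\party$ remains active through $\tau+3$ by choice, so it receives the message in round $\tau+2$, stores the symbol, and by \cref{lem:StaysInSymbolStorage} we conclude $\EventInSymbolStorage{\party}{(\tau+3)}{h}{i}{x}$, which is precisely what was required.

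The main obstacle is the bookkeeping of timing and simultaneous membership: I must verify that every invocation of subnet robustness occurs at a slot $\leq T+1$ (namely, $\AlgGetPeersInCell$ at time $\tau$ and $\subnetprot.\iGetPeers$ at time $\tau+1$), and that the intermediate $\party'$ is simultaneously in the row subnet when $\PartyStoring$ queries it and still in the column subnet one round later when it forwards. The extra $\SubnetDelay+2$ shift baked into the definition of $\EventGoodCellNoArgs$ is exactly what guarantees $\party'$ has already completed joining both its row and column subnets by round $\tau$, so no separate case analysis distinguishing initial parties from later joiners is needed beyond what \cref{cor:GoodCellImpliesFindInCell} already encapsulates.
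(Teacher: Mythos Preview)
Your proposal is correct and follows essentially the same approach as the paper's proof: fix an arbitrary $\party$ in the column subnet, use \cref{cor:GoodCellImpliesFindInCell} to obtain the bridge party $\party'$ in cell $(r,c)$, use \cref{cor:FullyJoinedImpliesSubnets} to put $\PartyStoring$ in its row subnet, and then trace the two-hop $\msgStore$/$\msgStoreFwd$ path using $\EventSubnetprotGood{T+1}$ at times $\tau$ and $\tau+1$, concluding via \cref{lem:StaysInSymbolStorage}. Your explicit remarks on the timing bookkeeping and on why the shift in $\EventGoodCellNoArgs$ removes the need for a case split are accurate and match the paper's reasoning.
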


\superparagraph{Retaining Data.} We next show that data that was stored in a column is retained.

\begin{lemma}[Retaining Data]\label{lemma:retaining_data}
Let $\JoinLeaveSchedule$ be a join-leave schedule that respects bootstrap nodes and uses good bootstrap nodes. Consider a run of our protocol $\ourprot$.
Let $x\in\SymbolSpace, i\in [\FileLength], h\in\HandleSpace$ for which $\rdspred(h,i,x)$ holds and let $c \coloneqq\AlgGetColForSymbol(i)$.
Assume that $\EventColumnGoodUntil{c}{T+1}{2\SubnetDelay+2}$ and $\EventSubnetprotGood{T+2}$ hold for some lifetime $T$.
Then we have for $0 \leq \tau \leq T$
\ifnum\ccs=0
\begin{align*}
    \left(\begin{array}{rl}
                    & \EventStoredInColumnNew{c}{\tau}{\tau + \SyncDelay+1}{h}{i}{x}\\
            \land   & \EventStoredAndRetain{c}{(\tau+2)}{(\tau+\SyncDelay)}{h}{i}{x}
          \end{array}\right)
     \implies \EventStoredInColumnNew{c}{\tau+1}{\tau+\SyncDelay+2}{h}{i}{x}
\end{align*}
\else
\begin{align*}
                    & \EventStoredInColumnNew{c}{\tau}{\tau + \SyncDelay+1}{h}{i}{x}\\
            {}\land{}   & \EventStoredAndRetain{c}{(\tau+2)}{(\tau+\SyncDelay)}{h}{i}{x}\\
     {}\implies{} &\EventStoredInColumnNew{c}{\tau+1}{\tau+\SyncDelay+2}{h}{i}{x}
\end{align*}
\fi
\end{lemma}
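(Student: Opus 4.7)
The plan is to take an arbitrary honest party $\party$ witnessing $\EventStaysInSubnet{\subnetid_c}{\party}{\tau+1}{\tau+\SyncDelay+2}$ and split the argument on whether $\party$ was already in the column subnet at time $\tau$. In the ``in-subnet'' case, the activity-interval property extends the stay window backwards to $\EventStaysInSubnet{\subnetid_c}{\party}{\tau}{\tau+\SyncDelay+1}$, and the hypothesis $\EventStoredInColumnNew{c}{\tau}{\tau+\SyncDelay+1}{h}{i}{x}$ immediately yields $\EventInSymbolStorage{\party}{\tau+\SyncDelay+1}{h}{i}{x}$. Since $\party$ is still active at $\tau+\SyncDelay+2$ by the subnet-stay assumption, \cref{lem:StaysInSymbolStorage} (whose content crucially relies on position-binding of $\rdspred$) lifts the event one further slot to $\tau+\SyncDelay+2$.

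The remaining case is $\neg\EventIsInSubnet{\subnetid_c}{\party}{\tau}$. Because \cref{lem:join_rows_works} shows that initial parties sit in their column subnet at every moment of their activity, such a $\party$ must be non-initial; the ``only if'' direction of the same lemma then forces $\tauJoin + 2 + \SubnetDelay = \tau + 1$, i.e.\ $\tauJoin = \tau - 1 - \SubnetDelay \geq 0$. Reading off the pseudocode of \cref{fig:ourprotocol:joining}, $\party$ therefore computes $\ColParties_c \coloneqq \subnetprot.\iGetPeers(\subnetid_c)$ and dispatches $\msgSync$ to every returned peer at exactly time $\tau+1$. The retention hypothesis $\EventStoredAndRetain{c}{\tau+2}{\tau+\SyncDelay}{h}{i}{x}$ supplies an honest retainer $\party^\star$ with $\EventInSymbolStorage{\party^\star}{\tau+2}{h}{i}{x}$ and $\EventStaysInSubnet{\subnetid_c}{\party^\star}{\tau+1}{\tau+\SyncDelay}$, and $\EventSubnetprotGood{T+2}$ applied at time $\tau+1 \leq T+2$ forces $\party^\star \in \ColParties_c$, so $\party^\star$ receives $\msgSync$ at $\tau+2$.

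It remains to trace the $\msgSync$ round-trip through the $\SyncDelay$ window. At $\tau+2$ the retainer still stores $(h,i,x)$, so the snapshot $\mathbf{S}$ it computes contains this triple; it then waits an arbitrary $T' \leq \SyncDelay - 2$ slots and dispatches $(\msgSyncRsp,\mathbf{S})$ at time $\tau+2+T' \leq \tau+\SyncDelay$, which still lies inside its subnet-stay window and is therefore actually sent. The response arrives at $\party$ by time $\tau+\SyncDelay+1$, at which point $\party$ is active by the stay-window assumption and writes $\SymbolsMap[h,i] \coloneqq x$ (the $\rdspred$-guard holds because $\party^\star$ only ever stored predicate-valid symbols). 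A final application of \cref{lem:StaysInSymbolStorage} yields $\EventInSymbolStorage{\party}{\tau+\SyncDelay+2}{h}{i}{x}$, closing the case. The main obstacle in the proof is precisely this timing bookkeeping: one must align the three independent delays (the join delay $\SubnetDelay$, the adversarially chosen receiver wait $T' \leq \SyncDelay - 2$, and the single-round in-flight delay) and verify at every send/receive pair that both endpoints are simultaneously active and in the same subnet.
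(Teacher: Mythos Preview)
Your proof is correct and follows essentially the same approach as the paper's: the same case split on $\EventIsInSubnet{\subnetid_c}{\party}{\tau}$, the same derivation of the join time in the non-initial case, and the same $\msgSync$/$\msgSyncRsp$ timing trace through the retainer $\party^\star$. The only cosmetic difference is that the paper routes the ``not in subnet $\Rightarrow$ not fully joined $\Rightarrow$ joined exactly at $\tau-\SubnetDelay-1$'' step through \cref{cor:FullyJoinedImpliesSubnets}, whereas you invoke \cref{lem:join_rows_works} directly; these are equivalent.
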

\begin{proof}
Let $\party$ be some honest party with $\EventStaysInSubnet{\subnetid_c}{\party}{(\tau+1)}{(\tau + 2 + \SyncDelay)}$.
Since honest parties only join their own column, this implies $\AlgCol(\party) = c$.
We need to show $\EventInSymbolStorage{\party}{\tau+2+\SyncDelay}{h}{i}{x}$.
In the case that we have $\EventIsInSubnet{\subnetid_c}{\party}{\tau}$, we get $\EventInSymbolStorage{\party}{\tau+\SyncDelay+1}{h}{i}{x}$ by assumption and the claim follows from \cref{lem:StaysInSymbolStorage}.

So we may assume that $\EventIsInSubnet{\subnetid_c}{\party}{\tau}$ does \emph{not} hold.
By \cref{cor:FullyJoinedImpliesSubnets}, this means that $\EventFullyJoined{\party}{\tau}$ does not hold, whereas $\EventFullyJoined{\party}{\tau+1}$ does hold.
In particular, $\party$ cannot be one of the initial parties and since $\ourprot.\iJoin$ takes exactly $2+\SubnetDelay$ time slots, $\party$ must have joined via $\ourprot.\iJoin$ exactly at time $\tau-\SubnetDelay-1 > 0$.
% From $\EventColumnGoodUntil{c}{\tau+1}{\}$, we get that there exists some honest party $\party'$ with $\EventStaysInSubnet{\subnetid_c}{\party'}{(\tau+1)}{(\tau+1+TODO)}$.
Looking at the definition of $\iJoin$, we know that in time slot $\tau+1$, the party $\party$ calls $\ColParties_c \coloneqq \subnetprot.\iGetPeers(\subnetid_c)$.
% Because $\EventSubnetprotGood{\tau+1}$ holds, we have $\party' \in \ColParties_c$.

Because $\EventStoredAndRetain{c}{(\tau+2)}{(\tau+\SyncDelay)}{h}{i}{x}$ holds, there exists some honest party $\party'$ with $\EventStaysInSubnet{\subnetid_c}{\party'}{(\tau+1)}{(\tau+\SyncDelay)}$ and $\EventInSymbolStorage{\party'}{\tau+2}{h}{i}{x}$.
\ifnum\ccs=0
Since $\EventSubnetprotGood{T+2}$, we deduce that $\party'\in\ColParties_c$.
\else
Since we know that $\EventSubnetprotGood{T+2}$ holds, we deduce that $\party'\in\ColParties_c$.
\fi
Looking further at the definition of $\iJoin$, $\party$ will send $\msgSync$ to $\party'$ in round $\tau+1$, to be received in round $\tau+2$.
We know that $\party'$ is still active at time $\tau+2$. % and the local $\SymbolsMap$ of $\party'$ contains the data we are interested in at the time it receives $\msgSync$.
Our protocol specifies that $\party'$ should then send back some $(\msgSyncRsp,\mathbf{S})$ response to $\party$ at time $\tau+2+(\SyncDelay-2)$ or earlier.
Since $\party'$ is still active at time $\tau+\SyncDelay$, it will actually send this message.
Since $\EventInSymbolStorage{\party'}{\tau+2}{h}{i}{x}$, we have $(h,i,x)\in\mathbf{S}$.
The $(\msgSyncRsp,\mathbf{S})$ response is received by $\party$ at time $\tau+\SyncDelay + 1$ or earlier, upon which $\party$ stores the symbol at time $\tau+\SyncDelay +1$ or earlier.
Consequently, $\EventInSymbolStorage{\party}{\tau+\SyncDelay+2}{h}{i}{x}$ holds by \cref{lem:StaysInSymbolStorage}, finishing the proof.
\end{proof}

\noindent
We now combine \cref{lemma:store_works}, \cref{lemma:retaining_data} and \cref{prop:relevant_events}. %\ref{propitem:StoredAndRetain}.

\begin{lemma}[Store works long-term]\label{lemma:long_term_storage}
Let $\JoinLeaveSchedule$ be a join-leave schedule that respects bootstrap nodes and uses good bootstrap nodes and assume $\rdspred$ is position-binding. Let $T$ be some lifetime bound and consider a run of our protocol $\ourprot$.
Let $x\in\SymbolSpace, i\in [\FileLength], h\in\HandleSpace$ for which $\rdspred(h,i,x)$ holds and let $c \coloneqq\AlgGetColForSymbol(i)$.
Assume some honest party $\PartyStoring$ with $\AlgRow(\PartyStoring)=r$ stored the symbol at some time $\tauStore \leq T-2$, i.e.\ assume $\EventFullyJoined{\PartyStoring}{\tauStore}$ and $\EventStored{\PartyStoring}{\tauStore}{h}{i}{x}$ hold.
Further, assume that the following good events hold:
\begin{myitemize}
\item $\EventGoodCell{r}{c}{\tauStore}{(\tauStore+1)}$
\item $\EventSubnetprotGood{T+1}$
\item $\EventColumnGoodUntil{c}{T+1}{\Delta}$ with $\Delta \geq \max\{2\SubnetDelay+2,2\SyncDelay+\SubnetDelay +2\}$  % NOTE: Not tight
\end{myitemize}
Then we have $\EventStoredInColumnNew{c}{\tauStore+1}{\tauStore+3}{h}{i}{x}$ and we have
\[
 \EventStoredInColumnNew{c}{\tau}{(\tau+\SyncDelay+1)}{h}{i}{x}
\]
for all $\tauStore+1 \leq \tau \leq T$.
\end{lemma}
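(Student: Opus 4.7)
The first assertion $\EventStoredInColumnNew{c}{\tauStore+1}{\tauStore+3}{h}{i}{x}$ is an immediate application of \cref{lemma:store_works}: the three good events $\EventGoodCell{r}{c}{\tauStore}{(\tauStore+1)}$, $\EventFullyJoined{\PartyStoring}{\tauStore}$, and $\EventStored{\PartyStoring}{\tauStore}{h}{i}{x}$ are all hypotheses, $\rdspred$ is position-binding, and that lemma's ambient requirements $\EventSubnetprotGood{T+1}$ and $\EventColumnGoodUntil{c}{T}{2\SubnetDelay+2}$ follow immediately from our (stronger) hypotheses $\EventSubnetprotGood{T+1}$ and $\EventColumnGoodUntil{c}{T+1}{\Delta}$, using $\Delta \geq 2\SubnetDelay + 2$.

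For the long-term claim I plan to induct on $\tau \in \{\tauStore + 1, \ldots, T\}$ with inductive statement $\EventStoredInColumnNew{c}{\tau}{(\tau + \SyncDelay + 1)}{h}{i}{x}$. The base case $\tau = \tauStore + 1$ follows from the first assertion combined with part (a) of \cref{prop:relevant_events}, which extends the end time from $\tauStore + 3$ to $\tauStore + \SyncDelay + 2$ (valid since $\SyncDelay \geq 2$). For the inductive step at $\tau \in [\tauStore + 1, T - 1]$, I invoke \cref{lemma:retaining_data} with its lifetime parameter set to $T - 1$: the conditions $\EventColumnGoodUntil{c}{T}{2\SubnetDelay+2}$ and $\EventSubnetprotGood{T+1}$ that it needs follow from our hypotheses as above, and its first premise $\EventStoredInColumnNew{c}{\tau}{(\tau + \SyncDelay + 1)}{h}{i}{x}$ is precisely the induction hypothesis.

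The only nontrivial work is the second premise of \cref{lemma:retaining_data}, namely $\EventStoredAndRetain{c}{(\tau+2)}{(\tau+\SyncDelay)}{h}{i}{x}$, which I obtain through part (b) of \cref{prop:relevant_events} applied with $\tau_1 \coloneqq \tau + 2$ and $\tau_0 \coloneqq \max(\tauStore + 1,\, \tau - \SyncDelay + 1)$. If $\tau \geq \tauStore + \SyncDelay$, then $\tau_0 = \tau - \SyncDelay + 1$ lies in $[\tauStore + 1, \tau]$, and the induction hypothesis at time $\tau_0$ itself directly supplies $\EventStoredInColumnNew{c}{\tau_0}{(\tau + 2)}{h}{i}{x}$; if instead $\tau < \tauStore + \SyncDelay$, then $\tau_0 = \tauStore + 1$ and part (a) of \cref{prop:relevant_events} extends the base case $\EventStoredInColumnNew{c}{\tauStore+1}{\tauStore+3}{h}{i}{x}$ to end time $\tau + 2$. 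In both cases the arithmetic condition $\tau_0 + \Delta' - 1 \geq \tau + \SyncDelay$ of part (b), with $\Delta' = \Delta - \SubnetDelay - 2$, reduces to $\Delta' \geq 2\SyncDelay$, which is exactly the content of the overlap hypothesis $\Delta \geq 2\SyncDelay + \SubnetDelay + 2$. Part (b) then yields $\EventStoredAndRetain{c}{(\tau+2)}{(\tau_0 + \Delta' - 1)}{h}{i}{x}$; since by the above arithmetic $\tau_0 + \Delta' - 1 \geq \tau + \SyncDelay$, shrinking the second argument (which the definition of $\EventStoredAndRetainNoArgs$ allows trivially) gives the required event and closes the induction.

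The main obstacle is the arithmetic in the inductive step: one must choose $\tau_0$ simultaneously inside the reach of the induction hypothesis (so that $\EventStoredInColumnNew{c}{\tau_0}{(\tau + 2)}{h}{i}{x}$ is already available) and far enough in the past that part (b) of \cref{prop:relevant_events} still reaches out to $\tau + \SyncDelay$. The ``max'' choice above handles both regimes, and the overlap bound $\Delta \geq 2\SyncDelay + \SubnetDelay + 2$ is calibrated precisely so that the slack $\Delta' \geq 2\SyncDelay$ suffices in each case.
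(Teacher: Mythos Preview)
Your proof is correct and follows essentially the same approach as the paper: the same invocation of \cref{lemma:store_works} for the base assertion, the same (strong) induction using \cref{lemma:retaining_data} for the step, and the same case split on whether $\tau$ is close to $\tauStore$ or not when applying \cref{prop:relevant_events}\ref{propitem:StoredAndRetain}. The only cosmetic difference is that you index the step as $\tau \to \tau+1$ and apply \cref{lemma:retaining_data} at parameter $\tau$, whereas the paper indexes it as $\tau' < \tau \to \tau$ and applies \cref{lemma:retaining_data} at $\tau-1$; after this shift your choices $\tau_1 = \tau+2$, $\tau_0 \in \{\tauStore+1,\,\tau-\SyncDelay+1\}$ coincide with the paper's $\tau_1 = \tau+1$, $\tau_0 \in \{\tauStore+1,\,\tau-\SyncDelay\}$.
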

\begin{proof}
First, we get $\EventStoredInColumnNew{c}{\tauStore+1}{\tauStore+3}{h}{i}{x}$ by \cref{lemma:store_works}, noting that the conditions of this lemma are satisfied.
We now show the main claim by induction over $\tau$:

\smallskip\noindent
We start the induction at $\tau = \tauStore+1$. Here, the claim follows from $\EventStoredInColumnNew{c}{\tauStore+1}{\tauStore+3}{h}{i}{x}$ and \cref{prop:relevant_events}\ref{propitem:EventStoreInColumnExtends}.

\smallskip\noindent
\ifnum\ccs=0
For the induction step with $\tau\geq \tauStore+2$, we assume that $\EventStoredInColumnNew{c}{\tau'}{(\tau'+\SyncDelay+1)}{h}{i}{x}$ holds for all $\tauStore +1 \leq \tau' < \tau$.
\else
We now procceed with induction step, where we may assume that $\tau\geq \tauStore+2$ and that $\EventStoredInColumnNew{c}{\tau'}{(\tau'+\SyncDelay+1)}{h}{i}{x}$ holds for all $\tauStore +1 \leq \tau' < \tau$.
\fi
Our first claim is that $\EventStoredAndRetain{c}{(\tau+1)}{(\tau+\SyncDelay-1)}{h}{i}{x}$ holds, for which we need to make a case distinction, depending on whether $\tau \leq \tauStore+\SyncDelay$ or not.

\smallskip\noindent
In the case $\tau \leq \tauStore+\SyncDelay$, we use $\EventStoredInColumnNew{c}{\tauStore+1}{\tauStore+3}{h}{i}{x}$.
\ifnum\ccs=0
By \cref{prop:relevant_events}\ref{propitem:EventStoreInColumnExtends}, this implies
$\EventStoredInColumnNew{c}{\tauStore+1}{\tau+1}{h}{i}{x}$, because $\tau \geq \tauStore + 2$.
\else
This implies
$\EventStoredInColumnNew{c}{\tauStore+1}{\tau+1}{h}{i}{x}$ using \cref{prop:relevant_events}\ref{propitem:EventStoreInColumnExtends}, because $\tau \geq \tauStore + 2$.
\fi
We want to use \cref{prop:relevant_events}\ref{propitem:StoredAndRetain} (with $\tau_0 = \tauStore+1,\tau_1 = \tau+1$).
We know that $\EventColumnGoodUntil{c}{T}{\Delta}$ by assumption with $\Delta \geq 2\SubnetDelay +2$.
Set $\Delta' \coloneqq \Delta - \SubnetDelay - 2$, so $\Delta' \geq 2\SyncDelay$.
Observe that $\tauStore+1\leq T+1$ and
\[
\tauStore+1+\Delta'-1 = \tauStore+ \Delta' \geq \tauStore + 2\SyncDelay \geq \tau+\SyncDelay \geq \tau + 1\enspace.
\]
This means the conditions of \cref{prop:relevant_events}\ref{propitem:StoredAndRetain} are satisfied and we obtain
$\EventStoredAndRetain{c}{(\tau+1)}{(\tauStore+\Delta')}{h}{i}{x}$.
We showed above that $\tauStore+\Delta' \geq \tau+\SyncDelay$, so we can weaken this to $\EventStoredAndRetain{c}{(\tau+1)}{(\tau+\SyncDelay-1)}{h}{i}{x}$, as required.

\smallskip\noindent
So consider the other case $\tau \geq \tauStore+\SyncDelay+1$. So $\tau-\SyncDelay \geq \tauStore+1$. By induction hypothesis (with $\tau'=\tau-\SyncDelay$), we know that
$\EventStoredInColumnNew{c}{\tau-\SyncDelay}{\tau+1}{h}{i}{x}$ holds.
Again, we want to use \cref{prop:relevant_events}\ref{propitem:StoredAndRetain} (with $\tau_0 = \tau-\SyncDelay$, $\tau_1 = \tau+1$).
We have $\tau -\SyncDelay \leq T+1$ and $\tau-\SyncDelay + \Delta' - 1 \geq \tau + \SyncDelay - 1 \geq \tau +1$, so the conditions are satisfied and we obtain
% So again, we may apply \cref{prop:relevant_events}\ref{propitem:StoredAndRetain} to obtain
$\EventStoredAndRetain{c}{(\tau+1)}{(\tau-\SyncDelay + \Delta' - 1)}{h}{i}{x}$. Since $\Delta' \geq 2\SyncDelay$, we can weaken this to
$\EventStoredAndRetain{c}{(\tau+1)}{(\tau+\SyncDelay - 1)}{h}{i}{x}$. So we showed this first claim in both cases.

\smallskip\noindent
\ifnum\ccs=0
By induction hypothesis (with $\tau' = \tau-1$), we know that $\EventStoredInColumnNew{c}{\tau-1}{\tau+\SyncDelay}{h}{i}{x}$ holds.
Also, we know $\EventColumnGoodUntil{c}{T+1}{2\SubnetDelay+2}$, $\EventSubnetprotGood{T+1}$ by assumption.
\else
We know $\EventStoredInColumnNew{c}{\tau-1}{\tau+\SyncDelay}{h}{i}{x}$ by induction hypothesis (with $\tau' = \tau-1$).
Also, $\EventColumnGoodUntil{c}{T+1}{2\SubnetDelay+2}$ and $\EventSubnetprotGood{T+1}$ hold by assumption.
\fi

Combining this with $\EventStoredAndRetain{c}{(\tau+1)}{(\tau+\SyncDelay-1)}{h}{i}{x}$ allows us to use \cref{lemma:retaining_data} (with $\tau = \tau-1, T = T-1$) to finally obtain
$\EventStoredInColumnNew{c}{\tau}{\tau+\SyncDelay +1}{h}{i}{x}$, finishing the proof.
\end{proof}

\superparagraph{Retrieving data works.} We finally show that $\ourprot.\iGet$ allows us to retrieve stored data.

\begin{lemma}[Get works]\label{lemma:get_works}
Let $\JoinLeaveSchedule$ be a join-leave schedule that respects bootstrap nodes and uses good bootstrap nodes. Assume $\rdspred$ is position-binding and consider a run of $\ourprot$.
Let $x\in\SymbolSpace, i\in [\FileLength], h\in\HandleSpace$ for which $\rdspred(h,i,x)$ holds and let $c \coloneqq\AlgGetColForSymbol(i)$.
Assume that $\EventSubnetprotGood{T}$ and $\EventColumnGoodUntil{c}{T}{2\SubnetDelay+2}$ hold for some lifetime $T$.
Let $\PartyGetting$ be some fully joined honest party in row $r=\AlgRow(\PartyGetting)$ that tries to retrieve the symbol at time $0 \leq \tauGet \leq T$, i.e.,
$\EventFullyJoinedDuration{\PartyGetting}{\tauGet}{(\tauGet+2)}$ and $\EventCalledGet{\PartyGetting}{\tauGet}{h}{i}$ hold.
Then if any one of the three options
\ifnum\ccs=0
\begin{myitemize}
\item $\tauGet \geq \SyncDelay$, $\EventGoodCell{r}{c}{(\tauGet-\SyncDelay)}{(\tauGet+1)}$ and $\EventStoredInColumnNew{c}{\tauGet-\SyncDelay}{\tauGet+1}{h}{i}{x}$, or
\item $\tauGet \geq \SyncDelay$, $\EventGoodCell{r}{c}{(\tauGet-\SyncDelay)}{(\tauGet+1)}$ and $\EventStoredInColumnNew{c}{\tau'}{\tau'+2}{h}{i}{x}$ for some $\tauGet-\SyncDelay \leq \tau' \leq \tauGet-1$, or
\item $\tauGet < \SyncDelay$, $\EventGoodCell{r}{c}{0}{(\tauGet+1)}$ and $\EventStoredInColumnNew{c}{\tau'}{\tau'+2}{h}{i}{x}$ for some $0\leq \tau' \leq \tauGet-1$
\end{myitemize}
\else
\begin{myitemize}
\item $\tauGet \geq \SyncDelay$, $\EventGoodCell{r}{c}{(\tauGet-\SyncDelay)}{(\tauGet+1)}$ and\\
    $\EventStoredInColumnNew{c}{\tauGet-\SyncDelay}{\tauGet+1}{h}{i}{x}$
\item $\tauGet \geq \SyncDelay$, $\EventGoodCell{r}{c}{(\tauGet-\SyncDelay)}{(\tauGet+1)}$ and\\
    $\EventStoredInColumnNew{c}{\tau'}{\tau'+2}{h}{i}{x}$ for any $\tauGet-\SyncDelay \leq \tau' \leq \tauGet-1$
\item $\tauGet < \SyncDelay$, $\EventGoodCell{r}{c}{0}{(\tauGet+1)}$ and\\
    $\EventStoredInColumnNew{c}{\tau'}{\tau'+2}{h}{i}{x}$ for any$0\leq \tau' \leq \tauGet-1$
\end{myitemize}
\fi
holds, we get
\[
 \EventGotResult{\PartyGetting}{\tauGet}{h}{i}{\GetDelay}{x}
\]
with $\GetDelay = 2$.
\end{lemma}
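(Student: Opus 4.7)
The plan is to unfold a single round-trip: $\PartyGetting$ finds an honest peer $\party$ in cell $(r,c)$, sends it $(\msgGet,h,i)$, and receives $(\msgGetRsp,h,i,x)$ by time $\tauGet+2$, whereupon the predicate check succeeds and $x$ is returned. The main task is to exhibit such a $\party$ that (i) is discovered by $\AlgGetPeersInCell(r,c)$ when $\PartyGetting$ calls it, (ii) is still active at round $\tauGet+1$ to process the query, and (iii) already stores $x$ for $(h,i)$ at the beginning of round $\tauGet+1$.

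For discovery, I would invoke \cref{cor:GoodCellImpliesFindInCell}. In all three cases of the hypothesis, the good-cell event $\EventGoodCell{r}{c}{\tau_1}{(\tauGet+1)}$ holds for some $\tau_1\in\{\tauGet-\SyncDelay,\,0\}$; combined with $\EventSubnetprotGood{T}$ this yields an honest $\party$ with $\AlgCell(\party)=(r,c)$, $\EventStaysInSubnet{\subnetid_r}{\party}{\tau_1}{(\tauGet+1)}$, and $\EventStaysInSubnet{\subnetid_c}{\party}{\tau_1}{(\tauGet+1)}$. Since $\EventFullyJoined{\PartyGetting}{\tauGet}$ holds, \cref{cor:FullyJoinedImpliesSubnets} gives $\EventIsInSubnet{\subnetid_r}{\PartyGetting}{\tauGet}$, and then part~\ref{item:findable_by_GetPeersInCell} of \cref{cor:GoodCellImpliesFindInCell} ensures that the call $\mathbf{P}\coloneqq\AlgGetPeersInCell(r,c)$ executed inside $\iGet$ returns a set containing $\party$. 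Hence $\PartyGetting$ sends $(\msgGet,h,i)$ to $\party$ at time $\tauGet$, and $\party$ receives and processes it at $\tauGet+1$ since it is still active then.

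The crux is showing $\EventInSymbolStorage{\party}{\tauGet+1}{h}{i}{x}$. In the first case the hypothesis is precisely $\EventStoredInColumnNew{c}{\tauGet-\SyncDelay}{\tauGet+1}{h}{i}{x}$, and since $\EventStaysInSubnet{\subnetid_c}{\party}{(\tauGet-\SyncDelay)}{(\tauGet+1)}$ holds, the definition of $\EventStoredInColumnNoArgs$ directly yields the desired in-storage event. In the second and third cases we start from $\EventStoredInColumnNew{c}{\tau'}{\tau'+2}{h}{i}{x}$ for some $\tau'\le\tauGet-1$ and use \cref{prop:relevant_events}\ref{propitem:EventStoreInColumnExtends} to extend the upper endpoint to obtain $\EventStoredInColumnNew{c}{\tau'}{\tauGet+1}{h}{i}{x}$ (valid because $\tau'+2\le\tauGet+1$). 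Since either $\tau'\ge\tauGet-\SyncDelay$ (case two) or $\tau'\ge 0$ (case three), in each subcase we have $\tau'\ge\tau_1$, so $\EventStaysInSubnet{\subnetid_c}{\party}{\tau_1}{(\tauGet+1)}$ implies $\EventStaysInSubnet{\subnetid_c}{\party}{\tau'}{(\tauGet+1)}$, and again the definition of $\EventStoredInColumnNoArgs$ gives $\EventInSymbolStorage{\party}{\tauGet+1}{h}{i}{x}$.

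With $\party$ in hand, the finish is mechanical. At round $\tauGet+1$ party $\party$ sees $\SymbolsMap[h,i]=x\neq\bot$ and sends $(\msgGetRsp,h,i,x)$ back to $\PartyGetting$, which arrives at round $\tauGet+2$. By the assumption $\EventFullyJoinedDuration{\PartyGetting}{\tauGet}{(\tauGet+2)}$, the querying party is still active to process this response; since $\rdspred(h,i,x)=1$ by hypothesis, $\iGet$ returns $x$ in round $\tauGet+2$, so $\EventGotResult{\PartyGetting}{\tauGet}{h}{i}{2}{x}$ holds as required. The only step that requires real thought is assembling the correct subnet-membership interval for $\party$ from the appropriate good-cell event in each of the three cases; once this uniform $\party$ is extracted, everything else is a direct application of the message-handler definitions in \cref{fig:ourprotocol:storeget}.
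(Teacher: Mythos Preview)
Your proposal is correct and follows essentially the same approach as the paper's proof: extract an honest $\party$ in cell $(r,c)$ via \cref{cor:GoodCellImpliesFindInCell}, unify the three hypotheses into a single $\EventStoredInColumnNoArgs$ whose lower endpoint is at least $\tau_1=\max\{0,\tauGet-\SyncDelay\}$ and whose upper endpoint can be extended to $\tauGet+1$ via \cref{prop:relevant_events}\ref{propitem:EventStoreInColumnExtends}, deduce $\EventInSymbolStorage{\party}{\tauGet+1}{h}{i}{x}$, and then trace the $\msgGet$/$\msgGetRsp$ round-trip. Your explicit invocation of \cref{cor:FullyJoinedImpliesSubnets} to obtain $\EventIsInSubnet{\subnetid_r}{\PartyGetting}{\tauGet}$ before applying part~\ref{item:findable_by_GetPeersInCell} of \cref{cor:GoodCellImpliesFindInCell} is a detail the paper leaves implicit.
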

\begin{proof}
By \cref{cor:GoodCellImpliesFindInCell}, there exists some honest party $\party$ with $\AlgCell(\party) = (r,c)$, $\EventIsInSubnet{\subnetid_r}{\party}{\tauGet}$ and $\EventIsInSubnet{\subnetid_c}{\party}{\tauGet}$
that stays active until (at least) $\tauGet+1$.
Further, $\EventIsInSubnet{\subnetid_c}{\party}{\max\{0,\tauGet-\SyncDelay\}}$.
\ifnum\ccs=1
Observe that in either of the three cases, we have $\EventStoredInColumnNew{c}{\tau_0}{\tau_1}{h}{i}{x}$ with
\else
In either of the three cases, we have $\EventStoredInColumnNew{c}{\tau_0}{\tau_1}{h}{i}{x}$ with
\fi
$\max\{0,\tauGet-\SyncDelay\} \leq \tau_0 \leq \tau_1 \leq \tauGet+1$ for some $\tau_0,\tau_1$.
\ifnum\ccs=0
This implies $\EventInSymbolStorage{\party}{\tauGet+1}{h}{i}{x}$ by \cref{prop:relevant_events}\ref{propitem:EventStoreInColumnExtends} and the definition of $\EventStoredInColumnNoArgs$.
\else
By \cref{prop:relevant_events}\ref{propitem:EventStoreInColumnExtends} and the definition of $\EventStoredInColumnNoArgs$,
we get $\EventInSymbolStorage{\party}{\tauGet+1}{h}{i}{x}$.
\fi

Now, $\PartyGetting$ calls $\ourprot.\iGet(h,i)$ at time $\tauGet$. As part of this call, it retrieves $\mathbf{P} \coloneqq \AlgGetPeersInCell(r,c)$.
Since $\EventSubnetprotGood{T}$ holds with $\tauGet \leq T$, we have $\party\in\mathbf{P}$, again by \cref{cor:GoodCellImpliesFindInCell}.

Then, $\PartyGetting$ sends $(\msgGet,h,i)$ to $\party$, which is retrieved by $\party$ at time $\tauGet+1$, where we are guaranteed that $\party$ is still active.
\ifnum\ccs=0
Since $\EventInSymbolStorage{\party}{\tauGet+1}{h}{i}{x}$ holds, $\party$ will respond with $(\msgGetRsp,h,i,x)$.
\else
Since $\EventInSymbolStorage{\party}{\tauGet+1}{h}{i}{x}$ holds, $\party$ actually stores the data and will respond with $(\msgGetRsp,h,i,x)$.
\fi
This response is then retrieved by $\PartyGetting$ at time $\tauGet+2$ (where it is still active by assumption), and $\PartyGetting$ outputs $x$, finishing the proof.
\end{proof}

\superparagraph{Retrieving Stored Data.} We now combine the above lemmas to show that we can retrieve data that we stored, i.e.\ we prove that the property we want for robustness holds, provided good events hold.
\begin{lemma}\label{lemma:ConditionalRobustness}
Let $\JoinLeaveSchedule$ be a join-leave schedule that respects bootstrap nodes and uses good bootstrap nodes. Assume $\rdspred$ is position-binding and consider a run of $\ourprot$.
Let $x\in\SymbolSpace, i\in [\FileLength], h\in\HandleSpace$ for which $\rdspred(h,i,x)$ holds, let $T$ be some lifetime and let $c \coloneqq\AlgGetColForSymbol(i)$.
% Assume that $\EventSubnetprotGood{T}$ and $\EventColumnGoodUntil{c}{T}{\Delta}$ hold for some lifetime $T$ and TODO.
\ifnum\ccs=0
Let $\PartyGetting$ and $\PartyStoring$ be honest parties with $\EventStored{\PartyStoring}{\tauStore}{h}{i}{x}$, $\EventFullyJoined{\PartyStoring}{\tauStore}$, $\EventCalledGet{\PartyGetting}{\tauGet}{h}{i}$ and $\EventFullyJoinedDuration{\PartyGetting}{\tauGet}{(\tauGet+2)}$,
where $0\leq \tauStore$ and $\tauStore+2\leq \tauGet \leq T$.
\else
Consider times $0\leq \tauStore$ and $\tauStore+2\leq \tauGet \leq T$.
Let $\PartyGetting$ be an honest party for which $\EventStored{\PartyStoring}{\tauStore}{h}{i}{x}$ and $\EventFullyJoined{\PartyStoring}{\tauStore}$ hold.
Similarly, let $\PartyStoring$ be an honest party for which $\EventCalledGet{\PartyGetting}{\tauGet}{h}{i}$ and $\EventFullyJoinedDuration{\PartyGetting}{\tauGet}{(\tauGet+2)}$ hold.
\fi
Set $\rStore\coloneqq \AlgRow(\PartyStoring)$ and $\rGet \coloneqq \AlgRow(\PartyGetting)$
Then, if all of
\begin{myitemize}
    \item $\EventSubnetprotGood{T+1}$ and
    \item $\EventColumnGoodUntil{c}{T+1}{\Delta}$, where $\Delta \geq \max\{2\SubnetDelay + 2, 2\SyncDelay + \SubnetDelay + 2\}$ and
    \item $\EventGoodCell{\rStore}{c}{\tauStore}{(\tauStore+1)}$ and
    \item $\EventGoodCell{\rGet}{c}{\max\{0,\tauGet - \SyncDelay\}}{(\tauGet + 1)}$
\end{myitemize}
hold, we have $\EventGotResult{\PartyGetting}{\tauGet}{h}{i}{\GetDelay}{x}$ with $\GetDelay = 2$.
\end{lemma}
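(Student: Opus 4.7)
The plan is to derive \cref{lemma:ConditionalRobustness} as a direct combination of \cref{lemma:long_term_storage} and \cref{lemma:get_works}: the first lemma propagates the symbol into the column subnet $c$ and keeps it there, while the second lemma turns that column-wide storage into a successful response to $\PartyGetting$'s query.

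First, I would verify that the hypotheses of \cref{lemma:long_term_storage} are met: $\rdspred(h,i,x)$ holds and $\PartyStoring$ is fully joined with $\EventStored{\PartyStoring}{\tauStore}{h}{i}{x}$ by assumption; $\EventGoodCell{\rStore}{c}{\tauStore}{(\tauStore+1)}$, $\EventSubnetprotGood{T+1}$, and $\EventColumnGoodUntil{c}{T+1}{\Delta}$ with $\Delta \geq \max\{2\SubnetDelay+2,\, 2\SyncDelay+\SubnetDelay+2\}$ are the corresponding assumptions of the present lemma; and $\tauStore \leq \tauGet - 2 \leq T-2$ follows from $\tauGet \leq T$. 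Applying the lemma therefore yields both $\EventStoredInColumnNew{c}{(\tauStore+1)}{(\tauStore+3)}{h}{i}{x}$ and $\EventStoredInColumnNew{c}{\tau}{(\tau+\SyncDelay+1)}{h}{i}{x}$ for every $\tauStore+1 \leq \tau \leq T$.

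Next, I would feed one of these conclusions into \cref{lemma:get_works}, using the assumed $\EventGoodCell{\rGet}{c}{\max\{0,\tauGet-\SyncDelay\}}{(\tauGet+1)}$ and the weakened forms $\EventSubnetprotGood{T}$, $\EventColumnGoodUntil{c}{T}{2\SubnetDelay+2}$ of the other good events. The lemma offers three alternative premises, and the right one depends on the size of $\tauGet$ relative to $\tauStore$ and $\SyncDelay$. I would split into cases: if $\tauGet \geq \tauStore + \SyncDelay + 1$ (so in particular $\tauGet \geq \SyncDelay$), set $\tau \coloneqq \tauGet - \SyncDelay \geq \tauStore+1$ and use the second conclusion of \cref{lemma:long_term_storage} to satisfy the first bullet of \cref{lemma:get_works}; if instead $\tauGet \leq \tauStore + \SyncDelay$, take $\tau' \coloneqq \tauStore+1$ and use the first conclusion of \cref{lemma:long_term_storage}, which matches the second bullet when $\tauGet \geq \SyncDelay$ (since $\tauStore+1 \geq \tauGet - \SyncDelay$ and $\tauStore+1 \leq \tauGet-1$ by $\tauStore \leq \tauGet-2$) and the third bullet when $\tauGet < \SyncDelay$.

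I do not expect any serious obstacle here: the two invoked lemmas already carry the protocol-level arguments, and the remaining work is bookkeeping. The only subtle point is the case split in the second step, which exists precisely to interpolate between the short-term store-to-column guarantee (usable when $\tauGet$ is close to $\tauStore$) and the long-term sync-based guarantee (needed when $\tauGet$ is much later than $\tauStore$); choosing $\tau'=\tauStore+1$ versus $\tau = \tauGet - \SyncDelay$ at the right threshold is what makes the chain go through.
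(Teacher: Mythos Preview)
Your proposal is correct and matches the paper's proof essentially line for line: both apply \cref{lemma:long_term_storage} to obtain the two $\EventStoredInColumnNoArgs$ conclusions, then feed them into \cref{lemma:get_works} via the same three-way case split (organized slightly differently, but covering exactly the cases $\tauGet < \SyncDelay$, $\tauGet \geq \SyncDelay \wedge \tauGet \geq \tauStore+\SyncDelay+1$, and $\tauGet \geq \SyncDelay \wedge \tauGet \leq \tauStore+\SyncDelay$) with the same choices $\tau' = \tauStore+1$ and $\tau = \tauGet-\SyncDelay$.
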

\begin{proof}
Observe that the conditions of \cref{lemma:long_term_storage} are satisfied, so we have
\ifnum\ccs=0
\begin{align*}
  & \EventStoredInColumnNew{c}{\tauStore+1}{\tauStore+3}{h}{i}{x} \qquad \text {and}\\
  & \EventStoredInColumnNew{c}{\tau}{(\tau+\SyncDelay+1)}{h}{i}{x} \ \text{ for all }\tauStore+1 \leq \tau \leq T\enspace.
\end{align*}
\else
\begin{align*}
  & \EventStoredInColumnNew{c}{\tauStore+1}{\tauStore+3}{h}{i}{x} \qquad \text {and}\\
  & \EventStoredInColumnNew{c}{\tau}{(\tau+\SyncDelay+1)}{h}{i}{x} \forall \tauStore+1 \leq \tau \leq T\enspace.
\end{align*}
\fi
We now show that the conditions of \cref{lemma:get_works} are satisfied, for which we distinguish 3 cases that match the conditions of \cref{lemma:get_works}.

\smallskip\noindent
Assume first that $\tauGet < \SyncDelay$. Then $\EventStoredInColumnNew{c}{\tau'}{\tau'+2}{h}{i}{x}$ holds for some $0\leq \tau' \leq \tauGet - 1$ by setting $\tau' = \tauStore + 1$. So \cref{lemma:get_works} is applicable in that case.

\smallskip\noindent
Otherwise, $\tauGet \geq \SyncDelay$. Assume first that $\tauGet \geq \tauStore + \SyncDelay + 1$, so $\tauStore + 1 \leq \tauGet - \SyncDelay \leq T$.
\ifnum\ccs=0
Then we know that $\EventStoredInColumnNew{c}{(\tauGet - \SyncDelay)}{(\tauGet+1)}{h}{i}{x}$ holds, which shows that
\cref{lemma:get_works} is applicable.
\else
Then $\EventStoredInColumnNew{c}{(\tauGet - \SyncDelay)}{(\tauGet+1)}{h}{i}{x}$ holds, which shows that
\cref{lemma:get_works} is applicable.
\fi

\smallskip\noindent
The remaining case is $\tauGet \geq \SyncDelay$ and $\tauGet \leq \tauStore + \SyncDelay$. Here, we have $\EventStoredInColumnNew{c}{\tau'}{\tau'+2}{h}{i}{x}$ for some $\tauGet - \SyncDelay \leq \tau' \leq \tauGet - 1$ by setting $\tau' = \tauStore + 1$. So again, \cref{lemma:get_works} is applicable.

\smallskip\noindent
By applying \cref{lemma:get_works}, we thereby obtain $\EventGotResult{\PartyGetting}{\tauGet}{h}{i}{2}{x}$ in every case.
\end{proof}

\ifnum\ccs=0
\subsubsection{Good Events Occur for Admissible Schedules}
\else
\subsection{Good Events Occur for Admissible Schedules}
\fi

In the second part of our proof, we show that the good events that appeared as conditions in \cref{lemma:ConditionalRobustness} hold true with sufficiently high probability, provided we have an admissible schedule.

% \ghnote{Is $(1-\frac{1}{k})^N < e^{-N/k}$ well-known enough not to need an argument? It follows from taking $\ln$ on both sides and $\ln(1-x) < -x$ (which follows from the Taylor expansion of $\ln(1-x)$ or concavity of $\ln(1-x)$).}

\begin{lemma}[Probability of Bad Columns]\label{lemma:prob_bad_columns}
Let $\JoinLeaveSchedule$ be a join-leave schedule that guarantees $N$ honest parties with overlap $\OverlapTime$. Then, for any column $c\in[k_2]$, lifetime bound $T$ and $\Delta \leq \OverlapTime$, we have
\[
 \Prob{ \EventColumnGoodUntil{c}{T}{\Delta} } \geq 1 - \Bigl\lceil\frac{T+1}{\OverlapTime-\Delta-1}\Bigr\rceil \cdot e^{-\frac{N}{k_2}}\enspace,
\]
where the randomness is over the choice of the random oracle.
Further,
\ifnum\ccs=0
\[
 \Prob{ \EventColumnGoodUntil{c}{T}{\Delta} \text{ for all $c\in [k_2]$}} \geq 1 - \Bigl\lceil\frac{T+1}{\OverlapTime-\Delta-1}\Bigr\rceil \cdot k_2\cdot e^{-\frac{N}{k_2}}\enspace.
\]
\else
\begin{align*}
 &\Prob{ \EventColumnGoodUntil{c}{T}{\Delta} \text{ for all $c\in [k_2]$}}\\ {}\geq{} &1 - \Bigl\lceil\frac{T+1}{\OverlapTime-\Delta-1}\Bigr\rceil \cdot k_2\cdot e^{-\frac{N}{k_2}}\enspace.
\end{align*}
\fi

\end{lemma}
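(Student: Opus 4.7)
The plan is a union bound over a small covering set of ``checkpoint'' times, combined with the independence of column assignments given by the random oracle.

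First, I would fix any single time slot $\tau'$. By assumption, $\JoinLeaveSchedule$ guarantees a set $\mathbf{P}_{\tau'}$ of at least $N$ honest parties that are active from $\tau'$ to $\tau' + \OverlapTime$. Since the join-leave schedule is fixed independently of the random oracle (cf.\ the footnote after \cref{def:experiment}), and since $\AlgCol(\party) = \Hash(\party)_2$ is uniform and independent across distinct parties, each $\party \in \mathbf{P}_{\tau'}$ satisfies $\AlgCol(\party) = c$ independently with probability $1/k_2$. Therefore, the probability that no party in $\mathbf{P}_{\tau'}$ lies in column $c$ is at most $(1 - 1/k_2)^N \leq e^{-N/k_2}$.

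Next, I would observe that whenever there exists some $\party \in \mathbf{P}_{\tau'}$ with $\AlgCol(\party) = c$, that single party witnesses $\EventActiveDuration{\party}{\tau}{\tau+\Delta}$ for \emph{every} $\tau$ in the range $\{\tau', \tau'+1, \ldots, \tau' + \OverlapTime - \Delta\}$. So one checkpoint at $\tau'$ takes care of a whole block of $\tau$'s of size roughly $\OverlapTime - \Delta$. It then suffices to choose a finite sequence of checkpoints $\tau'_0 < \tau'_1 < \cdots < \tau'_{M-1}$ such that the union of these blocks covers $\{0, 1, \ldots, T\}$. Picking an arithmetic spacing of (at most) $\OverlapTime - \Delta - 1$ starting at $\tau'_0 = 0$ suffices, yielding $M \leq \lceil (T+1)/(\OverlapTime - \Delta - 1)\rceil$ checkpoints (the slightly conservative spacing matches the denominator in the bound and ensures $\tau'_{M-1} + (\OverlapTime - \Delta) \geq T$).

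Finally, a union bound over these $M$ checkpoints gives
\[
    \Prob{\neg \EventColumnGoodUntil{c}{T}{\Delta}} \leq M \cdot e^{-N/k_2} \leq \Bigl\lceil \frac{T+1}{\OverlapTime - \Delta - 1}\Bigr\rceil \cdot e^{-N/k_2},
\]
proving the first bound. The second bound follows by a further union bound over the $k_2$ columns. The argument contains no real obstacle; the only subtlety is the careful use of independence between column assignments (from the random oracle) and the choice of parties guaranteed by the schedule (which is oracle-independent).
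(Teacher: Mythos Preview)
Your proposal is correct and follows essentially the same approach as the paper: bound the probability that none of the $N$ guaranteed long-lived honest parties land in column $c$ by $e^{-N/k_2}$, observe that one such party covers a block of $\OverlapTime-\Delta+1$ consecutive time slots, and take a union bound over a covering set of checkpoints and then over columns. Your emphasis on the independence between the schedule-determined party set and the random-oracle column assignment is exactly the point the paper relies on implicitly.
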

\begin{proof}
First, we note that the join-leave schedule and the random oracle fully determine whether $\EventColumnGoodUntil{c}{T}{\Delta}$ holds or not.
For any time slots $\tau$, let $A_\tau$ denote the event that there exist no honest party $\party$ with $\EventActiveDuration{\party}{\tau}{\tau+\Delta}$ and $\AlgCol(\party) = c$.
By definition, $\EventColumnGoodUntil{c}{T}{\Delta}$ holds iff none of the $T+1$ many events $A_0,\ldots, A_T$ holds.

Consider any given $\tau$. By assumption on the join-leave schedule, we have $N$ honest parties $\party$ with $\EventActiveDuration{\party}{\tau}{(\tau+\OverlapTime)}$.
The probability (over the choice of random oracle) that none of these $N$ parties is assigned to column $c\in[k_2]$ is given by
\[
 \Prob{\text{None of the $N$ parties in column $c$}} = \Bigl(1-\frac{1}{k_2}\Bigr)^{N} < e^{-\frac{N}{k_2}}\enspace.
\]
If any one of those $N$ parties is assigned to column $c$, then none of the $\OverlapTime-\Delta+1$ many events $A_\tau,\ldots, A_{\tau + \OverlapTime - \Delta}$ holds, so we have
\[
 \Prob{A_\tau \lor \ldots \lor A_{\tau+\OverlapTime - \Delta}}  < e^{-\frac{N}{k_2}}\enspace.
\]
We get the first result by dividing $\{0,\ldots,T\}$ into $({T+1})/({\OverlapTime-\Delta+1})$ intervals of length $\OverlapTime - \Delta + 1$ and taking a union bound over them. The claim is obtained by taking a union bound over the columns.
\end{proof}

\begin{lemma}[Probability of Bad Cells]\label{lemma:prob_bad_cells}
Let $\JoinLeaveSchedule$ be a join-leave schedule that guarantees $N$ honest parties with overlap $\OverlapTime$.
Let $T$ be a lifetime bound, $0 \leq \varepsilon \leq 1$ be some fraction and consider some $\Delta$ with $\Delta + \SubnetDelay \leq \OverlapTime$.
Let $\EventSmallCorruption{\varepsilon}{T}{\Delta}$ be the event that for every row $r$ and time $0\leq \tau \leq T$, $\EventGoodCell{r}{c}{\tau}{(\tau+\Delta)}$ holds for all but $\varepsilon\cdot k_2$ many columns $c$.
% Note that the set of columns for which $\EventGoodCell{r}{c}{\tau}{\tau+\Delta}$ fails may depend on $r$ and $\tau$.

Then we have
\ifnum\ccs=0
\begin{gather*}
    \Prob{\EventSmallCorruption{\varepsilon}{T}{\Delta}} \geq 1 - \Bigl\lceil\frac{T+1}{\OverlapTime - \Delta-\SubnetDelay +1}\Bigr\rceil\cdot 2^{\binaryEntropy(\varepsilon)k_2} \cdot k_1 \cdot e^{-\frac{\varepsilon N}{k_1}}\enspace,
\end{gather*}
\else
\begin{align*}
    &\Prob{\EventSmallCorruption{\varepsilon}{T}{\Delta}}\\ {}\geq{}& 1 - \Bigl\lceil\frac{T+1}{\OverlapTime - \Delta-\SubnetDelay +1}\Bigr\rceil\cdot 2^{\binaryEntropy(\varepsilon)k_2} \cdot k_1 \cdot e^{-\frac{\varepsilon N}{k_1}}\enspace,
\end{align*}
\fi
where $\binaryEntropy(\varepsilon) = -\varepsilon\log_2(\varepsilon) - (1-\varepsilon)\log_2(1-\varepsilon)$ is the binary entropy function and the randomness is over the choice of the random oracle.

% Then for any lifetime bound $T$ and $\Delta + \SubnetDelay \leq \OverlapTime$, we have
%Then for any row $r\in[k_1]$, column $c\in [k_2]$, lifetime bound $T$, time $\tau$ and $\Delta + \SubnetDelay \leq \OverlapTime$, we have
% \[
%     \Prob{ \text{For all $r\in[k_1]$ and $0\leq \tau \leq T$ } \EventGoodCell{r}{c}{\tau}{\tau+\Delta}  < e^{-\frac{N}{k_1k_2}}
% \]
% for any time slot $\tau$, row $r$ and column $c$.
\end{lemma}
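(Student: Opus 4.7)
The structure mirrors that of \cref{lemma:prob_bad_columns}, combining a cell-level probabilistic bound with a union bound over rows and a bucketed union bound over the time interval $[0,T]$. The key observation is that a single group of $N$ honest parties whose joint activity interval covers a window of length $\OverlapTime$ can be used to witness $\EventGoodCellNoArgs$ simultaneously for many time slots $\tau$ within that window, so we only need to pay a union bound over $\Theta(T/(\OverlapTime-\Delta-\SubnetDelay))$ anchor times rather than over all $T+1$ values of $\tau$.

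Concretely, I would proceed as follows. Fix an anchor time $\tau_0 \geq 0$ and, using the admissibility of $\JoinLeaveSchedule$, fix $N$ honest parties $\party_1,\ldots,\party_N$ that are active throughout $[\tau_0,\tau_0+\OverlapTime]$. The random oracle assigns each $\AlgCell(\party_i)$ independently and uniformly in $[k_1]\times[k_2]$. For each row $r\in[k_1]$, let $S_r^{(\tau_0)}\subseteq[k_2]$ be the (random) set of columns $c$ such that some $\party_i$ satisfies $\AlgCell(\party_i)=(r,c)$. By definition of $\EventGoodCellNoArgs$, for every $\tau$ with $\tau_0 \leq \max\{0,\tau-\SubnetDelay-2\}$ and $\tau+\Delta \leq \tau_0+\OverlapTime$, the presence of any such $\party_i$ witnesses $\EventGoodCell{r}{c}{\tau}{(\tau+\Delta)}$. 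Thus $\tau_0$ handles every $\tau$ in an interval of length at least $\OverlapTime-\Delta-\SubnetDelay-1$, and choosing $\lceil (T+1)/(\OverlapTime-\Delta-\SubnetDelay+1)\rceil$ suitably spaced anchors covers all of $[0,T]$ (the boundary case $\tau<\SubnetDelay+2$ is handled for free by $\tau_0=0$).

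The probabilistic core is a standard combinatorial bound. For a fixed row $r$, the event $|[k_2]\setminus S_r^{(\tau_0)}|>\varepsilon k_2$ implies the existence of some $B\subseteq[k_2]$ with $|B|=\lceil\varepsilon k_2\rceil$ such that no $\party_i$ falls in $\{r\}\times B$. Each party independently avoids $\{r\}\times B$ with probability $1-|B|/(k_1 k_2)\leq 1-\varepsilon/k_1$, so this happens with probability at most $(1-\varepsilon/k_1)^N\leq e^{-\varepsilon N/k_1}$. Taking a union bound over $B$ using $\binom{k_2}{\lceil\varepsilon k_2\rceil}\leq 2^{\binaryEntropy(\varepsilon)k_2}$ yields
\[
\Pr\bigl[|[k_2]\setminus S_r^{(\tau_0)}|>\varepsilon k_2\bigr]\;\leq\; 2^{\binaryEntropy(\varepsilon)k_2}\cdot e^{-\varepsilon N/k_1}.
\]
A further union bound over the $k_1$ rows and the $\lceil(T+1)/(\OverlapTime-\Delta-\SubnetDelay+1)\rceil$ anchor times produces the claimed bound.

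The only real subtlety — and the main place to be careful — is the bookkeeping that shows a single anchor $\tau_0$ genuinely certifies $\EventGoodCell{r}{c}{\tau}{(\tau+\Delta)}$ over the entire window, including the $\SubnetDelay+2$ backward shift built into $\EventGoodCellNoArgs$. Once this backward shift is absorbed by the $\SubnetDelay$ term in the denominator, the argument reduces to the same occupancy-style estimate used in \cref{lemma:prob_bad_columns}, just refined from ``at least one party per column'' to ``at most $\varepsilon k_2$ empty columns per row'' via the binary-entropy bound.
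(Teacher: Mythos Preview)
Your proposal is correct and follows essentially the same route as the paper's proof: both anchor at a time slot, use the schedule guarantee to obtain $N$ honest parties with a long enough activity window, bound the probability that more than $\varepsilon k_2$ columns in a fixed row are missed via a union bound over $\binom{k_2}{\varepsilon k_2}\leq 2^{\binaryEntropy(\varepsilon)k_2}$ bad sets together with $(1-\varepsilon/k_1)^N\leq e^{-\varepsilon N/k_1}$, and then union-bound over the $k_1$ rows and over $\lceil (T+1)/(\OverlapTime-\Delta-\SubnetDelay+1)\rceil$ time buckets. The only cosmetic difference is that the paper anchors directly at $\max\{0,\tau-2-\SubnetDelay\}$ (absorbing the backward shift into the choice of anchor), whereas you anchor at $\tau_0$ and then impose $\tau_0\leq \max\{0,\tau-\SubnetDelay-2\}$; your stated window length $\OverlapTime-\Delta-\SubnetDelay-1$ is off by $2$ from the denominator you then use, so tighten that bookkeeping when writing it out.
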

\begin{proof}
Again, we note that the join-leave schedule and the random oracle fully determine whether any $\EventGoodCellNoArgs$ events hold or not.
Consider some row $r\in [k_1]$ and consider some time slot $\tau$. By assumption, we are guaranteed that there exists a set $\mathbf{P}$ of least $N$ honest parties $\party$ with $\EventActiveDuration{\party}{\max\{0,\tau-2-\SubnetDelay\}}{(\max\{0,\tau - 2 - \SubnetDelay\} + \OverlapTime)}$.

For any given subset $C\subset [k_2]$ of the columns with size $\abs{C} = \varepsilon k_2$, the probability that none of $\mathbf{P}$ is assigned to $(r,c)$ with $c\in C$ is
\[
 \Prob{ \text{no }\party\in\mathbf{P}\text{ has }\AlgCell(\party)\in\{r\}\times C } = \Bigl( 1 - \frac{\varepsilon k_2}{k_1 k_2}\Bigr)^N\enspace.
\]
Taking a union bound over all $\binom{k_2}{\varepsilon k_2}$ possible choices of $C$ gives
\ifnum\ccs=0
\[
 \Prob{ \Bigl| \{c \text{ with } \AlgCell(\party)=(r,c) \text { for some }\party\in\mathbf{P} \}\Bigr| < (1-\varepsilon) k_2} \leq \binom{k_2}{\varepsilon k_2}\Bigl(1-\frac{\varepsilon}{k_1}\Bigr)^N\enspace.
\]
\else
\begin{align*}
 & \Prob{ \Bigl| \{c \text{ with } \AlgCell(\party)=(r,c) \text { for some }\party\in\mathbf{P} \}\Bigr| < (1-\varepsilon) k_2} \\ {}\leq{}& \binom{k_2}{\varepsilon k_2}\Bigl(1-\frac{\varepsilon}{k_1}\Bigr)^N\enspace.
\end{align*}
\fi
Since $\binom{k_2}{\varepsilon k_2} \leq 2 ^{h(\varepsilon)k_2}$ and $(1-\frac{\varepsilon}{k_1})^N < e^{-\frac{\varepsilon N}{k_1}}$ by \cref{lemma:exponential_decay} and \cref{lemma:binomial_entropy}, we get
\ifnum\ccs=0
\[
 \Prob{ \Bigl| \{c \text{ with } \AlgCell(\party)=(r,c) \text { for some }\party\in\mathbf{P} \}\Bigr| < (1-\varepsilon) k_2} \leq 2^{h(\varepsilon)k_2} e^{-\frac{\varepsilon N}{k_1}}\enspace.
\]
\else
\begin{align*}
 & \Prob{ \Bigl| \{c \text{ with } \AlgCell(\party)=(r,c) \text { for some }\party\in\mathbf{P} \}\Bigr| < (1-\varepsilon) k_2}\\ {}\leq{}& 2^{h(\varepsilon)k_2} e^{-\frac{\varepsilon N}{k_1}}\enspace.
\end{align*}
\fi
By definition, if $\bigl|\{c \text{ with } \AlgCell(\party)=(r,c) \text { for some }\party\in\mathbf{P} \}\bigr| \geq (1-\varepsilon) k_2$, then $\EventGoodCell{r}{c}{\tau}{\tau+\Delta}$ must hold for all except at most $\varepsilon k_2$ many $c\in [k_2]$.
In fact, analogous to the argument in \cref{lemma:prob_bad_cells}, $\EventGoodCell{r}{c}{\tau'}{\tau'+\Delta}$ holds for the $\OverlapTime - \SubnetDelay - \Delta + 1$ many consecutive times $\tau'\in\{\tau,\ldots,\tau + \OverlapTime - \SubnetDelay - \Delta\}$.
Splitting the $T+1$ time slots from $0$ to $T$ into such interval of length $\OverlapTime - \SubnetDelay - \Delta +1$ and taking a union bound and also taking a union bound over the possible choices of $r$ then gives the claim.
% \[
%  \Prob{
% \]
\end{proof}

\ifnum\ccs=0
\subsubsection{Proof of \cref{theorem:ourprot:analysis:maintheorem}}\label{subsection:proof_of_main_theorem}
\else
\subsection{Proof of \cref{theorem:ourprot:analysis:maintheorem}}\label{subsection:proof_of_main_theorem}
\fi

We now finally piece together the lemmas from above to prove \cref{theorem:ourprot:analysis:maintheorem}.
\begin{proof}[Proof of~\cref{theorem:ourprot:analysis:maintheorem}]
First, we need to bound the sizes of the sets $\Corrupted_{\party}^{\tau}$.
Since $\AlgGetColForSymbol$ distributes the symbols evenly among the columns and $k_2$ divides $m$, we have
\[
 \frac{\abs{\Corrupted_{\party}^{\tau}}}{m}
 =
 \frac{\abs{\{ c \mid \neg \EventGoodCell{r}{c}{\max\{0,\tau-\SyncDelay\}}{(\tau+1)}\}}}{k_2}\enspace,
\]
where $r=\AlgRow(\party)$.
So by \cref{lemma:prob_bad_cells}, we have $\abs{\Corrupted_{\party}^{\tau}} \leq \RelPosCorrThresh m$ for all honest parties $\party$ and all $0\leq \tau \leq T$, except with probability at most
\ifnum\ccs=0 % Why did I use align* and & here?
\begin{align*}
 \Prob{\exists0\leq \tau \leq T, \text{ honest }\party \text{ with }\abs{\Corrupted_{\party}^{\tau}} > \RelPosCorrThresh m}
  & <
     \Bigl\lceil\frac{T+1}{\OverlapTime - (\SyncDelay + 1) - \SubnetDelay + 1}\Bigr\rceil 2^{h(\varepsilon)k_2}k_1e^{-\frac{\varepsilon N}{k_1}}
\end{align*}
\else
\begin{align*}
 & \Prob{\exists0\leq \tau \leq T, \text{ honest }\party \text{ with }\abs{\Corrupted_{\party}^{\tau}} > \RelPosCorrThresh m}\\
  {}<{} &
     \Bigl\lceil\frac{T+1}{\OverlapTime - (\SyncDelay + 1) - \SubnetDelay + 1}\Bigr\rceil 2^{h(\varepsilon)k_2}k_1e^{-\frac{\varepsilon N}{k_1}}
\end{align*}
\fi

Second, we need to show the correctness if not corrupted property, for which we use \cref{lemma:ConditionalRobustness}.
Let $\rStore = \AlgRow(\PartyStoring), \rGet = \AlgRow(\PartyGetting)$ and $c = \AlgGetColForSymbol(i)$. Let us argue that the conditions of this lemma are satisfied except with small probability:
\begin{myitemize}
\item \ifnum\ccs=0
        By definition, we have $\Prob{\EventSubnetprotGood{T+1}} = \Prob{\EventSubnetprotGood{\ProtocolLifetimeSubnet}} \geq 1 - \RobustnessErrorSubnet$.
      \else
        By definition of $T$, we have $\Prob{\EventSubnetprotGood{T+1}} =\allowbreak \Prob{\EventSubnetprotGood{\ProtocolLifetimeSubnet}} \geq 1 - \RobustnessErrorSubnet$.
      \fi
\item By \cref{lemma:prob_bad_columns}, we have
    \ifnum\ccs=0
    \begin{align*}
        \Prob{ \EventColumnGoodUntil{c}{T+1}{\OverlapTimeMin} \text { for all } c}
                \geq
                1-\Bigl\lceil \frac{T+2}{1+\OverlapTime - \OverlapTimeMin} \Bigr\rceil \cdot k_2e^{-\frac{N}{k_2}}
    \end{align*}
    \else
    \begin{align*}
        &\Prob{ \EventColumnGoodUntil{c}{T+1}{\OverlapTimeMin} \text { for all } c}\\
                {}\geq{}&
                1-\Bigl\lceil \frac{T+2}{1+\OverlapTime - \OverlapTimeMin} \Bigr\rceil \cdot k_2e^{-\frac{N}{k_2}}
    \end{align*}
    \fi
\item   \ifnum\ccs=0
        By definition, since $i\notin \Corrupted_{\PartyStoring}^{\tauStore}$, we have $\EventGoodCell{\rStore}{c}{\max\{0, \tauStore-\SyncDelay\}}{(\tauStore + 1)}$, which implies $\EventGoodCell{\rStore}{c}{\tauStore}{(\tauStore+1)}$.
        \else
        By definition of $\Corrupted_{\PartyStoring}^{\tauStore}$ and since $i\notin \Corrupted_{\PartyStoring}^{\tauStore}$, we have $\EventGoodCell{\rStore}{c}{\max\{0, \tauStore-\SyncDelay\}}{(\tauStore + 1)}$, which implies $\EventGoodCell{\rStore}{c}{\tauStore}{(\tauStore+1)}$.
        \fi
\item   \ifnum\ccs=0
        By definition, since $i\notin \Corrupted_{\PartyGetting}^{\tauGet}$, we have $\EventGoodCell{\rGet}{c}{\max\{0, \tauGet-\SyncDelay\}}{(\tauGet + 1)}$
        \else
        By definition of $\Corrupted_{\PartyGetting}^{\tauGet}$ and since we have $i\notin \Corrupted_{\PartyGetting}^{\tauGet}$, we obtain $\EventGoodCell{\rGet}{c}{\max\{0, \tauGet-\SyncDelay\}}{(\tauGet + 1)}$
        \fi
\end{myitemize}
This means we can apply \cref{lemma:ConditionalRobustness}, which directly gives us the correctness if not corrupted property with error $\RobustnessError$ bounded by
\ifnum\ccs=0
\begin{align*}
 \RobustnessError &< \RobustnessErrorSubnet
 +
 \Bigl\lceil\frac{T+1}{\OverlapTime - \SyncDelay - \SubnetDelay}\Bigr\rceil 2^{h(\varepsilon)k_2}k_1e^{-\frac{\varepsilon N}{k_1}}
 +
 \Bigl\lceil \frac{T+2}{1+\OverlapTime - \OverlapTimeMin} \Bigr\rceil \cdot k_2e^{-\frac{N}{k_2}}\\
 &<
 \RobustnessErrorSubnet + \Bigl\lceil \frac{T+2}{1+\OverlapTime - \OverlapTimeMin} \Bigr\rceil \cdot
 \Bigl( 2^{h(\varepsilon)k_2}k_1e^{-\frac{\varepsilon N}{k_1}} + k_2e^{-\frac{N}{k_2}} \Bigr)\enspace,
\end{align*}
\else % ugly manual spacing
\begin{align*}
& \RobustnessError < \RobustnessErrorSubnet
 +
 \Bigl\lceil\frac{T+1}{\OverlapTime - \SyncDelay - \SubnetDelay}\Bigr\rceil 2^{h(\varepsilon)k_2}k_1e^{-\frac{\varepsilon N}{k_1}}\\
&\qquad\quad\ +
 \Bigl\lceil \frac{T+2}{1+\OverlapTime - \OverlapTimeMin} \Bigr\rceil \cdot k_2e^{-\frac{N}{k_2}}\\
& \!\! <
 \RobustnessErrorSubnet + \Bigl\lceil \frac{T+2}{1+\OverlapTime - \OverlapTimeMin} \Bigr\rceil \cdot
 \Bigl( 2^{h(\varepsilon)k_2}k_1e^{-\frac{\varepsilon N}{k_1}} + k_2e^{-\frac{N}{k_2}} \Bigr),
\end{align*}
\fi
finishing the proof.
\end{proof}

\subsection{Efficiency}
\label{sec:efficiency}
% \bwnote{Gotti wants to shorten this.}
We now discuss the efficiency of our protocol.
Note that while the number of adversarial parties has no impact on the robustness of our protocol, efficiency depends on the total number of parties, and therefore on the number of adversarial parties.
Throughout this section, we denote the (upper bound on) the number of simultaneously active honest parties by $\NumberOfHonestPartiesMax$ and the (upper bound on) the number of total simultaneously active\footnote{%
Recall that we defined \emph{active} for honest parties by them having joined. Since malicious parties do not call $\iJoin$ or $\iInit$, \emph{active} for them shall mean that honest parties may keep a connection to it.
}
parties by $\NumberOfPartiesMax \geq \NumberOfHonestPartiesMax$.
We assume that the subnet discovery protocol is instantiated as in \cref{sec:appendix:simplesubnetprotocol}, with the optimizations described there.

% \ghnote{Treatment of bootstrap nodes and usage of our subnet protocol needs discussion.}

\superparagraph{Latency and Round Complexity.}
When an honest party calls $\iGet$, the latency is at most $3$. Precisely, a call to $\iGet$ made at time $\tau$ returns either a valid $x$ at time $\tau+2$ or $\bot$ at time $\tau+3$.
For interface $\iStore$, the user calling $\iStore$ at $\tau$ does not expect a response and the call returns immediately.
By the end of round $\tau+2$, all messages caused by this call have been received.
For $\iJoin$ called at time $\tau$, the call terminates at $\tau+2+\SubnetDelay$.
Synchronization of data may take longer, but we emphasize that as soon as the call is terminated, the party can already participate in the protocol via $\iGet$ and $\iStore$.

\superparagraph{Data Per Node.}
We first examine how much data a node needs to store.
For simplicity, assume that exactly one file $(x_1,\dots,x_\FileLength) \in \SymbolSpace^\FileLength$ is stored.
Recall that each party resides in exactly one column and that nodes in a column are responsible for storing $\FileLength/k_2$ symbols.
Therefore, the data storage requirement for each node is $\FileLength/k_2$.

\superparagraph{Connections Per Node.}
% \ghnote{Use $\AlgRow(\party) = r$ instead of $\mathsf{InRow}^{\party}_r$. This avoids defining yet another event and is more clear: ``in row'' is ambiguous? It could mean ``in the row subnet'' or $\AlgRow = r$, which is not the same for bootstrap nodes.
% Also, this computation can probably be shortened: Any given $\party$ needs to connect to a different party $\party'$ iff it shares a row or column. Hence the probability that they need to connect is $\frac{k_1+k_2 - 1}{k_1k_2}$. Summing over parties gives the result.
% We might be able to avoid introducing separate variables for row / column connections.}
Now, let us consider how many connections to peers a node needs to keep.
We consider a fixed point in time $\tau$.
Our goal is to derive a bound for the expected number of connections that a fixed node $\party$ has to maintain, taking the randomness over the random oracle.
We also assume that $\party$ is not a bootstrap node ($\AuxRole = 0$). When joining, $\party$ needs to connect to each of its supplied bootstrap nodes, but those connections can be dropped after two rounds and we ignore those.
Let $\AlgCell(\party)=(r,c)$. Let $\mathsf{Peers}$ be the number of peers $\party$ has.
As argued above, we may assume that $\party$ will only possibly keep a connection to another $\party'$ if $\AlgRow(\party') = r$ or $\AlgCol(\party') = c$ and $\party'$ is active.
$\AlgRow(\party') = r \lor \AlgRow(\party') = c$ happens with probability (over the choice of random oracle) ${(k_1 + k_2 - 1)}/({k_1 k_2})$.
Since there are at most $\NumberOfPartiesMax-1$ such parties $\party'$, we can bound the expected number of connections as
\[
 \Expect{|\mathsf{Peers}|} \leq (\NumberOfPartiesMax-1)\smash{\frac{k_1+k_2-1}{k_1 k_2}} \leq  \smash{\frac{\NumberOfPartiesMax-1}{k_1}} + \smash{\frac{\NumberOfPartiesMax-1}{k_2}}\enspace.
\]

\superparagraph{Communication Complexity.}
For communication complexity, we consider the expected total communication that is caused by honest parties calling protocol interfaces.
We do only count the outgoing data of honest parties, as one can never limit the amount of data that dishonest parties send.
For simplicity, we denote the maximum size of a single message (with prefix $\msgGet,\msgGetRsp$ or $\msgStore$, $\msgStoreFwd$, $\msgJoin$) by $\MessageSize$.
Note that $\MessageSize$ depends on the parameters $|\HandleSpace|$, $\FileLength$, and $|\SymbolSpace|$.
For messages of the form $(\msgJoinRsp,\mathbf{P}^{(i)}_c)$, we treat them as $|\mathbf{P}^{(i)}_c|$ messages of size $\MessageSize$.

First, consider an honest party $\party$ calling $\iGet(h,i)$.
Here, $\party$ identifies a cell $(r,c)$, where column $c$ is responsible for storing the $i$th symbol, then sends a $\msgGet$-message to all nodes in that cell. The nodes respond with a $\msgGetRsp$-message. We use $|\cdot|$ to denote the length of strings, and denote by $\mathsf{NodesInCell}_{r,c}$ (resp. $\mathsf{HonNodesInCell}_{r,c}$) the random variable modeling the number of nodes (resp. honest nodes) in cell $(r,c)$.
Then, the expected (over the choice of the random oracle) communication complexity is bounded by
\ifnum\ccs=0
\begin{align*}
    \Expect{\mathsf{NodesInCell}_{r,c}} \cdot \MessageSize + \Expect{\mathsf{HonNodesInCell}_{r,c}} \cdot \MessageSize \leq \left(\frac{\NumberOfPartiesMax}{k_1k_2} + \frac{\NumberOfHonestPartiesMax}{k_1k_2}\right)\cdot \MessageSize,
\end{align*}
\else
\begin{align*}
    &\Expect{\mathsf{NodesInCell}_{r,c}} \cdot \MessageSize + \Expect{\mathsf{HonNodesInCell}_{r,c}} \cdot \MessageSize\\ {}\leq{}& \left(\frac{\NumberOfPartiesMax}{k_1k_2} + \frac{\NumberOfHonestPartiesMax}{k_1k_2}\right)\cdot \MessageSize,
\end{align*}
\fi
where the first summand accounts for $\msgGet$-messages and the second for $\msgGetRsp$-responses of honest parties.

Next, we consider an honest party $\party$ calling $\iStore(h,i,x)$.
Again, $\party$ identifies all parties in an appropriate cell $(r,c)$ and then sends a $\msgStore$-message. All parties in the cell then send $\msgStoreFwd$-messages to all parties in column $c$.
Denoting the number of parties in column $c$ by ${\NodesInCol_{c}}$, the following random variable models the communication complexity:
\begin{align*}
    &\mathsf{NodesInCell}_{r,c} \cdot \MessageSize +{\mathsf{HonNodesInCell}_{r,c}} \cdot  \NodesInCol_{c} \cdot \MessageSize.
\end{align*}
To get an upper bound on the expectation, we use \cref{cor:expectationofproduct} with $Y = \mathsf{HonNodesInCell}_{r,c}$, $X = \NodesInCol_{c}$, and $p = 1/k_2$. Then, as soon as $\NumberOfPartiesMax \geq 3 k_2 \ln k_2$, the expected communication becomes
\begin{equation*}
    \left(\frac{\NumberOfPartiesMax}{k_1k_2} + \frac{3\NumberOfHonestPartiesMax \NumberOfPartiesMax }{k_1k_2^2} \right)\cdot \MessageSize.
\end{equation*}
Finally, we consider an honest party $\party$ with $\AuxRole = 0$ joining the network.
We ignore communication complexity from data synchronization, as this depends on the size of the data.
Assume that it joins via $t$ bootstrap nodes $\{\party_1,\ldots,\party_t\}$.
As we only count outgoing messages of honest parties, the worst case is when these $t$ bootstrap nodes are all honest.
Let $\NumberOfBootstrapNodes$ denote the total number of currently active bootstrap nodes.

Denote by $\CommunicationJoinRow(S)$ resp.\ $\CommunicationJoinCol(S)$ the communication complexity of joining a row resp.\ column subnet containing $S$ nodes,
where for the row subnets $\subnetid_r$ we do not count (bootstrap) nodes $\party'$ with $\AlgRow(\party') \neq r$.
With our protocol from \cref{sec:appendix:simplesubnetprotocol}, we have
\vspace{-1.5ex plus 1ex}
\begin{align*}
    \CommunicationJoinCol(S) \leq \MessageSize( 2 + 2S).
\end{align*}
Using the optimizations that we have outlined in the end of \cref{sec:appendix:simplesubnetprotocol}, the \msgJoinSubnetPullRsp-messages to parties do not contain bootstrap nodes, and so we have
\ifnum\ccs=1
\begin{align*}
    \CommunicationJoinRow(S) \leq \MessageSize( 2 + 2S + \NumberOfBootstrapNodes).
\end{align*}
\else
\begin{align*}
    \CommunicationJoinRow(S) \leq \MessageSize( 2 + 2(S + \NumberOfBootstrapNodes) - \NumberOfBootstrapNodes) = \MessageSize( 2 + 2S + \NumberOfBootstrapNodes).
\end{align*}
\fi

Denoting the number of parties in row $r$ by ${\NodesInRow_{r}}$, we can bound the communication complexity induced by an honest party with $\AuxRole=0$ joining by
\ifnum\ccs=1
\begin{align*}
    & t \cdot \CommunicationJoinRow({\NodesInRow_{r}})
      + t \left(\MessageSize + \NodesInCol_{c} \cdot \MessageSize\right)\\
    & {}+ \NodesInCol_{c}\cdot \CommunicationJoinCol\left(\NodesInCol_{c}\right) \\
    \leq &\MessageSize \bigl( 3t + 2t\cdot {\NodesInRow_{r}} + (t+2)\cdot \NodesInCol_{c} \\
         &\phantom{\MessageSize \bigl(}  + 2\cdot \NodesInCol_{c}^2 + t\NumberOfBootstrapNodes\bigr).
\end{align*}
\else
\begin{align*}
 &t \cdot \CommunicationJoinRow({\NodesInRow_{r}})
      + t \left(\MessageSize + \NodesInCol_{c} \cdot \MessageSize\right)
   + \NodesInCol_{c}\cdot \CommunicationJoinCol\left(\NodesInCol_{c}\right) \\
    \leq~~&\MessageSize \left( 3t + 2t\cdot {\NodesInRow_{r}} + (t+2)\cdot \NodesInCol_{c}   + 2\cdot \NodesInCol_{c}^2 + t\NumberOfBootstrapNodes\right).
\end{align*}
\fi
We get the expected value of this using linearity of expectation and \cref{lemma:secondmomentofbinom} for the quadratic term.
Namely, the expected communication is bounded by
\begin{align*}
    \left(3t + t\NumberOfBootstrapNodes + \frac{t\smash{\NumberOfPartiesMax}}{k_1} + \frac{(t+4) \smash{\NumberOfPartiesMax} k_2 - 2\smash{\NumberOfPartiesMax} + \smash{(\NumberOfPartiesMax)^2}}{\smash{k_2^2}}\right)\cdot \MessageSize.
\end{align*}

%\clearpage
\section{Extensions and Practical Considerations}
\label{sec:extensions}
In practical deployments of our protocol, various modifications and extensions may be beneficial.
This section informally discusses several such enhancements.

\superparagraph{Virtual Nodes.}
The robustness of our protocol relies on probabilistic guarantees that depend heavily on the number of honest nodes.
We can artificially increase this number by making each node simulate $K \geq 1$ \emph{virtual nodes}.
This enables us to apply our analysis to a larger effective set of honest nodes, increasing their count from $N$ to $N' = N \cdot K$.
A key advantage of this approach is reduced variance in the probabilistic processes involved, leading to improved protocol parameters such as smaller $\RobustnessError$ and $\RelPosCorrThresh$.
Of course, this comes at the cost of increased storage requirements per node, as each node participates in multiple columns.
This tradeoff can be beneficial for system designers who can afford additional storage but cannot increase $N$. The use of virtual nodes to improve variance is a well-established technique, e.g., used in consistent hashing~\cite{STOC:KLLPLL97}.

\superparagraph{Sending Leave Messages.}
An honest party may additionally send leave messages before leaving to all of its peers. In this way, peers can free memory and do not need to keep connections open. Importantly, such leave messages must be authenticated, as otherwise an adversary could use them to break connections between two honest peers.

\superparagraph{Efficient Data Synchronization.}
Our protocol requires nodes in the same column to synchronize stored data, particularly when a new node joins or when new data is added. For simplicity, we have described the naive approach where each node transmits all of its stored data to its peers.
However, more efficient methods exist: Set reconciliation techniques~\cite{DBLP:journals/tit/MinskyTZ03,ICALP:BelKucWal24,DBLP:conf/sigcomm/YangGA24} can significantly reduce communication overhead in such settings.
These methods can be integrated into our protocol.

\superparagraph{Synchronization Time.}
The synchronization delay, $\SyncDelay$, influences the admissibility condition of our join-leave schedules (\cref{def:ourprot:analysis:admissibleschedules}).
Further, while our theoretical description of our protocol required that parties have explicit knowledge of $\SyncDelay$ when responding to $\msgSync$-messages, parties executing the protocol in practice do not need to know the exact value of $\SyncDelay$. Instead, they should respond to $\msgSync$-messages as quickly as possible.
Consequently, if $\SyncDelay$ increases, slightly stronger honesty assumptions are required, but the protocol’s code remains unchanged.
This property is particularly relevant in practice, as synchronization time may increase over time with the growth of stored data.
Hence, we can think of $\SyncDelay$ as depending on the time, which also means the admissibility condition changes over time.

\superparagraph{Lazy Cell Messages.}
Many operations in our protocol involve querying \emph{all} nodes in a cell in parallel.
For example, in interface $\iGet$, a node requests data from all nodes in a cell simultaneously.
A more bandwidth-efficient alternative is to query nodes sequentially, stopping once data satisfying the predicate is retrieved, or to query in small batches, proceeding incrementally. A practical heuristic is to prioritize nodes based on past responsiveness, contacting the most responsive nodes first. This approach introduces a bandwidth-latency tradeoff that can be adjusted to suit different system requirements.

\superparagraph{Less Powerful Bootstrap Nodes.}
In our protocol description, we assumed that parties take on either the role of a normal node or a bootstrap node.
We defined bootstrap nodes to be present in every row.
In this way, when a party wishes to join a cell $(r,c)$, it can join its row subnet $r$ via the bootstrap node.
Further, it can join its column subnet $c$ through nodes in column $c$ obtained from the bootstrap node. Consequently, bootstrap nodes are more powerful and have higher bandwidth requirements.

If, in practice, one wishes to avoid relying on such powerful bootstrap nodes, a slight modification to the join process can address this issue.
Specifically, suppose a party $\party$ joins the protocol through a (normal) node\footnote{The reader may recall that a party can join via multiple nodes, but for simplicity, we describe the process assuming it joins via a single honest node.} $\party'$.
Assume that $\party$ intends to join cell $(r,c)$ while $\party'$ resides in cell $(r',c')$.
Then, $\party$ can request from $\party'$ a list of all nodes in cell $(r,c')$ and use them to join its row subnet $r$. Similarly, $\party$ can request from $\party'$ a list of all nodes in cell $(r',c)$ and use them to join its column subnet $c$.
This approach works as long as $\party'$ is honest and the cells $(r,c')$ and $(r',c)$ each contain at least one honest party.

%\superparagraph{No Trusted Setup.}
% In our description of the protocol, we assumed a coordinated setup phase in which all initial parties are honest.
% Furthermore, our model assumes that honest parties remain honest throughout execution.
% However, we emphasize that this assumption regarding the coordinated setup is made solely for simplicity and can be significantly weakened without compromising security.

% First, note that the coordinated setup does not involve any secret information that must remain hidden after the setup phase. This means that even if some of the initially honest parties later become dishonest, the security of our protocol remains intact. These parties can simply be reclassified as malicious from that point onward, and the protocol only requires that a sufficient number of honest parties remain. This stands in stark contrast to the trusted setups commonly used in cryptographic protocols, where parties receive secret information that must never be disclosed.

% Second, even if some parties are already malicious during the coordinated setup, the protocol continues to function correctly as long as there are enough honest participants\footnote{Note that this is a common theme in our protocol: e.g., it does not matter how many dishonest parties are in a cell, as long as there is one honest party.}.
% This highlights that the assumption of an initially fully honest setup is not a fundamental requirement but rather a convenient simplification.

\section{Benchmarks}
\label{sec:benches}
\ifnum\anonymous=1
In this section, we derive concrete numbers from our robustness analysis and present some simulation results\footnote{Code for our plots:~\url{https://anonymous.4open.science/r/rda-experiments-DE09}.}.
\else
In this section, we derive concrete numbers from our robustness analysis and present some simulation results. The code for generating our plots can be found at
\begin{center}
    \url{https://github.com/b-wagn/rda-experiments}.
\end{center}
\fi

\iffalse
Some questions that we may want to answer here:\begin{itemize}
    \item Given $k_1,k_2$, what is the bandwidth and storage requirement per node? What are the deltas (latency).
    \item Given desired error bounds $\RelPosCorrThresh,\RobustnessError$, and parameters $k_1,k_2$, what is the minimum number of honest nodes we need?
    \item Given $k_1,k_2$ and number of honest nodes, how does $\RelPosCorrThresh$ develop over time?
\end{itemize}\fi

\subsection{Concrete Numbers from Analytical Results}
The robustness theorem (\cref{theorem:ourprot:analysis:maintheorem}) shows a relation between the error $\RobustnessError$, the maximum fraction of corrupted symbols $\RelPosCorrThresh$, the minimum number of honest parties $N$, and the number of rows and columns $k_1$ and $k_2$. Also, by \cref{sec:efficiency}, there is a relation between $k_1$ and $k_2$ and the efficiency of our protocol. We show these dependencies in \cref{fig:estimatesA,fig:estimatesB}, and explain the plots next.

% In CCS layout, we include the figures earlier. This is due to how LaTeX's algorithm for figure placement works.
\ifnum\ccs=0
\ifnum\ccs=1
    \newcommand{\estfigWidth}{10cm}
    \newcommand{\estfigHeight}{4cm}
\else
    \newcommand{\estfigWidth}{9cm}
    \newcommand{\estfigHeight}{4cm}
\fi

\begin{figure*}[htbp] % Note: In 2-column (ccs) layout, neither t nor b have an effect here. This is due to how LaTeX's figure-placement algorith works.
    \centering

    % Define cycle list to ensure consistency
    \pgfplotscreateplotcyclelist{my cycle list}{
        {blue, mark=*},
        {red, mark=square*},
        {black, mark=triangle*},
        {orange, mark=diamond*},
    }

    % First plot: k1 vs k2
    \begin{minipage}{0.48\textwidth}
        \resizebox{\textwidth}{!}{
        \begin{tikzpicture}
            \begin{axis}[
                width=\estfigWidth,
                height=\estfigHeight,
                xlabel={$k_2$},
                ylabel={$\max k_1$},
                grid=major,
                legend style={at={(0.5,1.1)}, anchor=south, legend columns=2},
                mark repeat=60,
                cycle list name=my cycle list,
                xmode=log, % Logarithmic x-axis
                ymode=log, % Logarithmic y-axis
            ]

            \addplot table [x=k2, y=k1, col sep=comma] {csvdata/estimates_data_10_100000.csv}; %\addlegendentry{$N = 100000$, $\RelPosCorrThresh=0.1$};
            \addplot table [x=k2, y=k1, col sep=comma] {csvdata/estimates_data_10_10000.csv}; %\addlegendentry{$N = 10000$, $\RelPosCorrThresh=0.1$};
            \addplot table [x=k2, y=k1, col sep=comma] {csvdata/estimates_data_10_5000.csv}; %\addlegendentry{$N = 5000$, $\RelPosCorrThresh=0.1$};
            \addplot table [x=k2, y=k1, col sep=comma] {csvdata/estimates_data_10_1000.csv}; %\addlegendentry{$N = 1000$, $\RelPosCorrThresh=0.1$};
            \end{axis}
        \end{tikzpicture}
        }
    \end{minipage}
    % join_complexity vs k2
    \begin{minipage}{0.48\textwidth}
        \resizebox{\textwidth}{!}{
        \begin{tikzpicture}
            \begin{axis}[
                width=\estfigWidth,
                height=\estfigHeight,
                xlabel={$k_2$},
                ylabel={$\iJoin$},
                grid=major,
                cycle list name=my cycle list,
                %legend pos=north east,
                mark repeat=60,
                xmode=log, % Logarithmic x-axis
                ymode=log, % Logarithmic y-axis
            ]
            \addplot table [x=k2, y=join_complexity, col sep=comma] {csvdata/estimates_data_10_100000.csv};
            \addplot table [x=k2, y=join_complexity, col sep=comma] {csvdata/estimates_data_10_10000.csv};
            \addplot table [x=k2, y=join_complexity, col sep=comma] {csvdata/estimates_data_10_5000.csv};
            \addplot table [x=k2, y=join_complexity, col sep=comma] {csvdata/estimates_data_10_1000.csv};
            \end{axis}
        \end{tikzpicture}
        }
    \end{minipage}

    %  get_complexity vs k2
    \begin{minipage}{0.48\textwidth}
        \resizebox{\textwidth}{!}{
        \begin{tikzpicture}
            \begin{axis}[
                width=\estfigWidth,
                height=\estfigHeight,
                xlabel={$k_2$},
                ylabel={$\iGet$},
                grid=major,
                cycle list name=my cycle list,
                mark repeat=60,
                xmode=log, % Logarithmic x-axis
                ymode=log, % Logarithmic y-axis
            ]
            \addplot table [x=k2, y=get_complexity, col sep=comma] {csvdata/estimates_data_10_100000.csv};
            \addplot table [x=k2, y=get_complexity, col sep=comma] {csvdata/estimates_data_10_10000.csv};
            \addplot table [x=k2, y=get_complexity, col sep=comma] {csvdata/estimates_data_10_5000.csv};
            \addplot table [x=k2, y=get_complexity, col sep=comma] {csvdata/estimates_data_10_1000.csv};
            \end{axis}
        \end{tikzpicture}
        }
    \end{minipage}
    %  store_complexity vs k2
    \begin{minipage}{0.48\textwidth}
        \resizebox{\textwidth}{!}{
        \begin{tikzpicture}
            \begin{axis}[
                width=\estfigWidth,
                height=\estfigHeight,
                xlabel={$k_2$},
                ylabel={$\iStore$},
                grid=major,
                cycle list name=my cycle list,
                mark repeat=60,
                xmode=log, % Logarithmic x-axis
                ymode=log, % Logarithmic y-axis
            ]
            \addplot table [x=k2, y=store_complexity, col sep=comma] {csvdata/estimates_data_10_100000.csv};
            \addplot table [x=k2, y=store_complexity, col sep=comma] {csvdata/estimates_data_10_10000.csv};
            \addplot table [x=k2, y=store_complexity, col sep=comma] {csvdata/estimates_data_10_5000.csv};
            \addplot table [x=k2, y=store_complexity, col sep=comma] {csvdata/estimates_data_10_1000.csv};
            \end{axis}
        \end{tikzpicture}
        }
    \end{minipage}
    % Legend
    \vspace{-1em}
    \begingroup
    \ifnum\ccs=1
        \def \myresizesize\0.7\textwidth
    \else
        \def \myresizesize\textwidth
    \fi
    \resizebox{\myresizesize}{}{%
    \begin{tikzpicture}
        \begin{axis}[
            hide axis,
            scale only axis,
            height=0pt,
            width=0pt,
            legend columns=4,
            legend style={
                at={(0.5,0.5)},
                anchor=center,
                column sep=1ex
            },
            cycle list name=my cycle list,
        ]
        \addplot coordinates {(-1e10,-1e10)}; \addlegendentry{$N = 100{,}000$, $\RelPosCorrThresh=0.1$};
        \addplot coordinates {(-1e10,-1e10)}; \addlegendentry{$N = 10{,}000$, $\RelPosCorrThresh=0.1$};
        \addplot coordinates {(-1e10,-1e10)}; \addlegendentry{$N = 5{,}000$, $\RelPosCorrThresh=0.1$};
        \addplot coordinates {(-1e10,-1e10)}; \addlegendentry{$N = 1{,}000$, $\RelPosCorrThresh=0.1$};
        \end{axis}
    \end{tikzpicture}
    }
    \endgroup
    \vspace{0.05em}
    \caption{Trade-off between $k_1$ (number of rows) and $k_2$ (number of columns) and resulting complexities for $\RelPosCorrThresh = 0.1$. The complexities show the expected total message complexity when an honest party calls the respective interfaces. All plots assume $\RobustnessError \leq 10^{-9}$, and the given $N$. Scales are logarithmic.}
    \label{fig:estimatesA}
\end{figure*}
\fi

%---------------------------------------------------------------------------------
\ifnum\ccs=0
\begin{figure*}[htbp]
    \centering

    % Define cycle list to ensure consistency
    \pgfplotscreateplotcyclelist{my cycle list}{
        {black, mark=x},
        {green!50!black, mark=square*},
        {blue, mark=diamond*},
    }

    % First plot: k1 vs k2
    \begin{minipage}{0.48\textwidth}
        \resizebox{\textwidth}{!}{
        \begin{tikzpicture}
            \begin{axis}[
                width=\estfigWidth,
                height=\estfigHeight,
                xlabel={$k_2$},
                ylabel={$\max k_1$},
                grid=major,
                legend style={at={(0.5,1.1)}, anchor=south, legend columns=2},
                mark repeat=60,
                cycle list name=my cycle list,
                xmode=log, % Logarithmic x-axis
                ymode=log, % Logarithmic y-axis
            ]

            \addplot table [x=k2, y=k1, col sep=comma] {csvdata/estimates_data_5_100000.csv}; %\addlegendentry{$N = 100000$, $\RelPosCorrThresh=0.1$};
            \addplot table [x=k2, y=k1, col sep=comma] {csvdata/estimates_data_5_10000.csv}; %\addlegendentry{$N = 10000$, $\RelPosCorrThresh=0.1$};
            \addplot table [x=k2, y=k1, col sep=comma] {csvdata/estimates_data_5_5000.csv}; %\addlegendentry{$N = 5000$, $\RelPosCorrThresh=0.1$};
            \end{axis}
        \end{tikzpicture}
        }
    \end{minipage}
    % join_complexity vs k2
    \begin{minipage}{0.48\textwidth}
        \resizebox{\textwidth}{!}{
        \begin{tikzpicture}
            \begin{axis}[
                width=\estfigWidth,
                height=\estfigHeight,
                xlabel={$k_2$},
                ylabel={$\iJoin$},
                grid=major,
                cycle list name=my cycle list,
                %legend pos=north east,
                mark repeat=60,
                xmode=log, % Logarithmic x-axis
                ymode=log, % Logarithmic y-axis
            ]
            \addplot table [x=k2, y=join_complexity, col sep=comma] {csvdata/estimates_data_5_100000.csv};
            \addplot table [x=k2, y=join_complexity, col sep=comma] {csvdata/estimates_data_5_10000.csv};
            \addplot table [x=k2, y=join_complexity, col sep=comma] {csvdata/estimates_data_5_5000.csv};
            \end{axis}
        \end{tikzpicture}
        }
    \end{minipage}

    %  get_complexity vs k2
    \begin{minipage}{0.48\textwidth}
        \resizebox{\textwidth}{!}{
        \begin{tikzpicture}
            \begin{axis}[
                width=\estfigWidth,
                height=\estfigHeight,
                xlabel={$k_2$},
                ylabel={$\iGet$},
                grid=major,
                cycle list name=my cycle list,
                mark repeat=60,
                xmode=log, % Logarithmic x-axis
                ymode=log, % Logarithmic y-axis
            ]
            \addplot table [x=k2, y=get_complexity, col sep=comma] {csvdata/estimates_data_5_100000.csv};
            \addplot table [x=k2, y=get_complexity, col sep=comma] {csvdata/estimates_data_5_10000.csv};
            \addplot table [x=k2, y=get_complexity, col sep=comma] {csvdata/estimates_data_5_5000.csv};
            \end{axis}
        \end{tikzpicture}
        }
    \end{minipage}
    %  store_complexity vs k2
    \begin{minipage}{0.48\textwidth}
        \resizebox{\textwidth}{!}{
        \begin{tikzpicture}
            \begin{axis}[
                width=\estfigWidth,
                height=\estfigHeight,
                xlabel={$k_2$},
                ylabel={$\iStore$},
                grid=major,
                cycle list name=my cycle list,
                mark repeat=60,
                xmode=log, % Logarithmic x-axis
                ymode=log, % Logarithmic y-axis
            ]
            \addplot table [x=k2, y=store_complexity, col sep=comma] {csvdata/estimates_data_5_100000.csv};
            \addplot table [x=k2, y=store_complexity, col sep=comma] {csvdata/estimates_data_5_10000.csv};
            \addplot table [x=k2, y=store_complexity, col sep=comma] {csvdata/estimates_data_5_5000.csv};
            \end{axis}
        \end{tikzpicture}
        }
    \end{minipage}
    % Legend
    \vspace{-1em}
    \begin{tikzpicture}
        \begin{axis}[
            hide axis,
            scale only axis,
            height=0pt,
            width=0pt,
            legend columns=4,
            legend style={
                at={(0.5,0.5)},
                anchor=center,
                column sep=1ex
            },
            cycle list name=my cycle list,
        ]
        \addplot coordinates {(-1e10,-1e10)}; \addlegendentry{$N = 100{,}000$, $\RelPosCorrThresh=0.05$};
        \addplot coordinates {(-1e10,-1e10)}; \addlegendentry{$N = 10{,}000$, $\RelPosCorrThresh=0.05$};
        \addplot coordinates {(-1e10,-1e10)}; \addlegendentry{$N = 5{,}000$, $\RelPosCorrThresh=0.05$};
        \end{axis}
    \end{tikzpicture}
    \vspace{0.5em}
    \caption{Trade-off between $k_1$ (number of rows) and $k_2$ (number of columns) and resulting complexities for $\RelPosCorrThresh = 0.05$. The complexities show the expected total message complexity when an honest party calls the respective interfaces. All plots assume $\RobustnessError \leq 10^{-9}$ and the given $N$. Scales are logarithmic.}
    \label{fig:estimatesB}
\end{figure*}
\fi

\superparagraph{Setting and Assumptions.}
The subnet discovery protocol in \cref{sec:appendix:simplesubnetprotocol} is perfectly robust, i.e., we can assume $\RobustnessErrorSubnet = 0$ in \cref{theorem:ourprot:analysis:maintheorem}.
We assume that one round in the synchronous model takes $4$ seconds and target a protocol lifetime of $10$ years (which determines the value for $\ProtocolLifetime$). We set $\OverlapTimeMin = 450$, which means we assume that data can be synchronized in at most roughly $15$ minutes. We set $\OverlapTime$ by assuming the average honest nodes stays online for $6$ hours.
We target an error probability of at most $\RobustnessError \leq 10^{-9}$.
For these values, we then take the following approach to arrive at the plots in \cref{fig:estimatesA,fig:estimatesB}: we consider a variety of combinations of $N$ and $\RelPosCorrThresh$, and for each of those, we do the following: (1) we generate a trade-off curve between $k_1$ and $k_2$. This curve shows for each $k_2$ (number of columns) the maximum $k_1$ (number of rows) for which the bound in \cref{theorem:ourprot:analysis:maintheorem} yields the desired $\RobustnessError$; (2) we then generate plots showing the expected communication complexity of $\iGet$ and $\iJoin$ (as analyzed in \cref{sec:efficiency}) depending on $k_2$, viewing $k_1$ as a function of $k_2$ using (1).
For (2), we assume $\MessageSize = 1$, i.e., we plot message complexity. For the complexity of $\iJoin$, we assume $t = 50$ bootstrap nodes, and $\NumberOfBootstrapNodes = 100$. We assume that $\NumberOfHonestPartiesMax = 2N$ and $\NumberOfPartiesMax = 5N$.

\superparagraph{Interpretation.}
We can interpret these plots by first observing that a given number of honest nodes $N$ can only support a limited number of cells, assuming that (most) of the cells have to contain one honest node. This creates the tradeoff between $k_1$ and $k_2$ as shown in the plots; the total number of cells is $k_1 \cdot k_2$ (top left graph). Intuitively, one might think that the plot would represent the hyperbola $k_1 \cdot k_2 \approx N$.
However, with a little thought experiment we can see why the fact that we fix an error threshold (for example, $\RelPosCorrThresh = 0.1$) creates a deviation from the hyperbola:
\begin{myitemize}
    \item If we choose $k_1=1$, then the bound $\RelPosCorrThresh$ applies to the total number of cells that are corrupted, i.e., contain no honest party.
    \item However, choosing a larger number of rows $k_1$ means that the bound has to apply to every individual row, and not just to all $k_1 \cdot k_2$ cells globally.
\end{myitemize}
The result is that the number of cells $k_1 \cdot k_2$ is maximized for $k_1 = 1$. This is relevant for operations for which the cost is inversely proportional to $k_1 \cdot k_2$, such as $\iGet$.

To choose a particular number for $k_1$ and $k_2$ with the goal of maximizing the efficiency of the protocol, i.e., the total message complexity, we look at the complexity of $\iGet$ and $\iStore$ for storing data as well as the complexity of $\iJoin$, which can be seen as the cost of maintaining the networking topology.

\begin{myitemize}
    \item $\iGet$ and $\iStore$ are always minimized by choosing $k_1 = 1$ and maximizing $k_2$. This is a result of the above observation that this maximizes the number of cells $k_1 \cdot k_2$.
    \item $\iJoin$ complexity has a minimum where $k_1 \approx k_2$, representing the point where the cost of joining rows and columns is balanced.
\end{myitemize}

So, depending on the relative complexity of the two operations, we can choose $k_1$ and $k_2$ to minimize the total message complexity. In applications where storing and retrieving data is very common and therefore dominates complexity, we probably want to choose $k_1$ to be small while maximizing $k_2$. However, if $\iJoin$ is expensive (for example, because there are a lot of nodes, or joining and leaving is frequent), we should choose a parameter set closer to the minimum of $\iJoin$ complexity, i.e., $k_1 \approx k_2$.

\subsection{Results from Simulation}
We complement our analytical results with simulations of our protocol (excluding extensions).

\superparagraph{Simulation Setup.}
We fix a specific join-leave schedule and values for the parameters $k_1$ and $k_2$, and simulate the assignment of honest parties to cells. At each time step $\tau$, we compute the number of connections a party $\party$ has, denoted by $|\mathsf{Peers}|_\party^\tau$, assuming that a party is immediately connected to all parties in the same row and column.
We also evaluate the fraction of corrupted symbols for each party, denoted as $|\Corrupted_\party^\tau| / \FileLength$, where a symbol in column $c$ is considered corrupted for a party in row $r$ if the cell $(r,c)$ contains no honest party.
This simulation is intentionally simplified: it does not account for timing constraints or overlap times. Nonetheless, it captures the key characteristics of our protocol. Importantly, due to the protocol's design, we do not need to model the behavior of dishonest parties.
For our plots, we fix the number of columns to $k_2 = 100$ and vary the number of rows $k_1$. The simulated join-leave schedule begins with $20$ initial parties. Then follows a warm-up phase in which $2{,}480$ additional parties join, one per round, until a total of $2{,}500$ parties are active. After this phase, we introduce churn: in each subsequent step, $50$ parties leave (in FIFO order) and $50$ new parties join. Consequently, each party remains in the system for exactly $2{,}500 / 50 = 50$ time steps.

\superparagraph{Results.}
Our simulation results are presented in \cref{fig:simulationplots}, where we plot $\max_\party |\Corrupted_\party^\tau| / \FileLength$ and $\max_\party |\mathsf{Peers}_\party^\tau|$ as functions of time $\tau$.
One key observation is that after the warm-up phase, the fraction of corrupted symbols stabilizes. Notably, errors do not accumulate over time.
We further observe that the fraction of corrupted symbols decreases as the number of rows $k_1$ is reduced. This is expected, as fewer rows imply fewer total cells, increasing the probability that a cell contains an honest party. In contrast, the number of peers (and thus bandwidth usage) increases with fewer rows. Hence, our design demonstrates a trade-off between robustness (fewer corrupted symbols) and bandwidth (number of peer connections).

\ifnum\ccs=0
\newcommand{\figWidth}{10cm}
\ifnum\ccs=1
    \newcommand{\figHeight}{4cm}
\else
    \newcommand{\figHeight}{6cm}
\fi
\begin{figure*}[htbp]  % Note: In 2-column (ccs) layout, neither t nor b have an effect here. This is due to how LaTeX's figure-placement algorith works.
    \centering
    \begin{minipage}{0.48\textwidth}
        \resizebox{\textwidth}{!}{
        \begin{tikzpicture}
            \begin{axis}[
                width=\figWidth,
                height=\figHeight,
                xlabel={Time Step $\tau$},
                ylabel={$\max_\party |\Corrupted_\party^\tau| / \FileLength$},
                title={Corrupted symbols over time},
                grid=major,
                legend pos=north east,
                mark repeat=30
            ]
            \addplot table [x=Time_Step, y=Corruption_Rows_1, col sep=comma] {csvdata/simulation_data.csv}; \addlegendentry{$k_1=1$};
            \addplot table [x=Time_Step, y=Corruption_Rows_5, col sep=comma] {csvdata/simulation_data.csv}; \addlegendentry{$k_1=5$};
            \addplot table [x=Time_Step, y=Corruption_Rows_10, col sep=comma] {csvdata/simulation_data.csv}; \addlegendentry{$k_1=10$};
            \addplot table [x=Time_Step, y=Corruption_Rows_25, col sep=comma] {csvdata/simulation_data.csv}; \addlegendentry{$k_1=25$};
            \end{axis}
        \end{tikzpicture}
        }
    \end{minipage}
    \quad
    \begin{minipage}{0.48\textwidth}
        \resizebox{\textwidth}{!}{
        \begin{tikzpicture}
            \begin{axis}[
                width=\figWidth,
                height=\figHeight,
                xlabel={Time Step $\tau$},
                ylabel={$\max_\party |\mathsf{Peers}_\party^\tau|$},
                title={Number of peers over time},
                grid=major,
                legend pos=north east,
                mark repeat=30
            ]
            \addplot table [x=Time_Step, y=Connections_Rows_1, col sep=comma] {csvdata/simulation_data.csv}; \addlegendentry{$k_1=1$};
            \addplot table [x=Time_Step, y=Connections_Rows_5, col sep=comma] {csvdata/simulation_data.csv}; \addlegendentry{$k_1=5$};
            \addplot table [x=Time_Step, y=Connections_Rows_10, col sep=comma] {csvdata/simulation_data.csv}; \addlegendentry{$k_1=10$};
            \addplot table [x=Time_Step, y=Connections_Rows_25, col sep=comma] {csvdata/simulation_data.csv}; \addlegendentry{$k_1=25$};
            \end{axis}
        \end{tikzpicture}
        }
    \end{minipage}
    \caption{Simulation results for $k_2 = 100$ columns and different number of rows $k_1$. We simulated a join-leave schedule that leads to $2{,}500$ honest parties (after a warmup phase) and each party staying for $50$ time steps.}
    \label{fig:simulationplots}
\end{figure*}
\fi

\section{More Related Work}
\label{sec:relwork}
In this section, we provide a more extensive overview of related work.
To summarize, our work is the first to rigorously address the networking aspect of DAS.
Our robust distributed arrays share similarities with distributed hash tables but differ significantly in their design goals. Namely, in contrast to solutions based on distributed hash tables, our construction achieves simplicity, low query latency, and provable security even under dishonest majority.

\iffalse
\begin{table}[ht]
\centering
\renewcommand{\arraystretch}{1.2}
\setlength{\tabcolsep}{4pt} % Reduce column spacing
\begin{adjustbox}{max width=\textwidth}
\begin{tabular}{@{} l | c l | c c | c c | c c | c @{}}
\toprule
\multicolumn{1}{c|}{\textbf{Construction}} & \multicolumn{1}{c}{\textbf{Provably}} & \multicolumn{1}{c|}{\textbf{Honesty}} & \multicolumn{2}{c|}{\textbf{Store}} & \multicolumn{2}{c|}{\textbf{Get}} & \multicolumn{2}{c|}{\textbf{Join}} & \multicolumn{1}{c}{\textbf{Data}} \\
 & \multicolumn{1}{c}{\textbf{Robust}} & \multicolumn{1}{c|}{\textbf{Assumption}} & \multicolumn{1}{c}{\textbf{Commun}} & \multicolumn{1}{c|}{\textbf{Latency}} & \multicolumn{1}{c}{\textbf{Commun}} & \multicolumn{1}{c|}{\textbf{Latency}} & \multicolumn{1}{c}{\textbf{Commun}} & \multicolumn{1}{c|}{\textbf{Latency}} & \multicolumn{1}{c}{\textbf{Storage}} \\
\midrule
Consistent Hashing & $\times$\\
HRW Hashing & $\times$\\
CAN & $\times$\\
Tapestry / Pastry & $\times$\\
Chord & $\times$\\
Kademlia & $\times$\\
S/Kademlia & $\times$\\
Kelpis & $\times$\\
Broose & $\times$\\ \midrule
Example A & \checkmark & Hon. Maj. & $O(\log n)$ & $O(1)$ & $O(n)$ & $O(\log n)$ & $O(n \log n)$ & $O(n)$ & 50\% \\
Ours & \checkmark & Min. Abs. Honesty &\\
\bottomrule
\end{tabular}
\end{adjustbox}
\caption{Example Table TODO}
\end{table}

\bwnote{Maybe we should just have a table that is security-specific. Adversarial model, honesty assumption, etc.}
\fi

\subsection{Data Availability Sampling}
\label{sec:relwork:das}
Our primary motivation for introducing and studying robust distributed arrays is \emph{data availability sampling} (DAS), first introduced by Al-Bassam et al.~\cite{FC:ASBK21}.
To provide context for our definition, we review the fundamentals of DAS in \cref{sec:background:das}.
Here, we discuss related work in the context of DAS.

\superparagraph{Cryptography for DAS.}
The cryptographic and encoding aspects of DAS were first formally defined as a cryptographic primitive in~\cite{cryptoeprint:2023/1079}.
Since then, these cryptographic aspects have been explored further~\cite{C:HalSimWag24,cryptoeprint:2024/1362,cryptoeprint:2025/034,evans2025accidental}.
Notably, all of these works assume an abstract, black-box networking layer.
Prior to our work, there was no formal definition of the networking requirements or a construction that provably satisfies them.

\superparagraph{Networking for DAS.}
While there has been no rigorous formalization of networking in the context of DAS, several recent studies have explored potential approaches and identified key challenges.
Kr{\'o}l et al.~\cite{DASp2p} have (informally) identified functionality and efficiency requirements for a DAS networking layer, such as permissionlessness and low-latency storage requests.
They have also discussed why classical networking solutions, including distributed hash tables (DHTs) like Kademlia~\cite{Kademlia}, are not directly applicable to DAS without significant modifications.
Cortes-Goicoechea et al.~\cite{kademliaDASlimits} specifically analyze the suitability of Kademlia for DAS in terms of efficiency, though not security.
Using a custom DHT simulator, they evaluate Kademlia's performance and conclude that while query efficiency is reasonable, the workload associated with storing the data is prohibitive for DAS applications.
A notable proposal for DAS networking is PANDAS~\cite{ethresearPANDASPractical}.
This proposal assumes that each node maintains an (almost) complete list of all other nodes through continuous background discovery.
Nodes are deterministically assigned to specific chunks of encoded data using a hash function. In combination, this ensures that data can be retrieved in a single hop with minimal latency.
Conceptually, PANDAS can be viewed as a special case of our proposed solution with exactly one row.
However, there are differences: PANDAS assumes that node-to-chunk mappings change with each slot, whereas we propose a stable assignment where nodes remain within their designated cells until they leave the network.
Most importantly, however, PANDAS lacks a formal security definition or proof.

\subsection{Distributed Hash Tables}
\label{sec:relwork:dhts}
In many ways, distributed arrays bear similarities to distributed hash tables (DHTs).
Therefore, we discuss DHTs and how they relate to distributed arrays and our design goals.

\superparagraph{Distributed Hash Tables (DHTs).}
A \emph{hash table} is a classical data structure that implements a \emph{key-value store}, dynamically and efficiently maintaining a partial mapping $T\colon \mathcal{U} \to \mathcal{V}$ between a large universe of keys $\mathcal{U}$ and a domain of values $\mathcal{V}$.
It supports an \emph{insert} operation, which assigns a value $v \in \mathcal{V}$ to a key $k \in \mathcal{U}$ (i.e., $T(k) := v$), and a \emph{query} operation, which retrieves $T(k) \in \mathcal{V} \cup \{\bot\}$ given a key $k \in \mathcal{U}$.
A \emph{distributed hash table} extends this concept by distributing key-value pairs across multiple nodes, ensuring efficient storage and retrieval by contacting only a few nodes.
DHTs function effectively even when nodes dynamically join or leave and operate with only a local view of the network topology.
Prominent examples of DHTs include CAN~\cite{candht}, Pastry~\cite{Pastry}, Chord~\cite{Chord}, and Kademlia~\cite{Kademlia}, which leverage ideas from consistent hashing~\cite{STOC:KLLPLL97}.

\superparagraph{Flat Key Space vs. Indices.}
A key distinction between DHTs and our distributed arrays lies in the structure of the key space.
In a DHT, key-value pairs can have arbitrary keys from a large universe $\mathcal{U}$, with no structural constraints (e.g., $\mathcal{U} = \bool^{256}$).
The actual set of inserted keys forms only a small subset of $\mathcal{U}$.
By contrast, in our distributed arrays, parties query indices $i \in [\FileLength]$.
That is, keys are always small integers with a predefined structure.

\superparagraph{Orthogonal Design Goals.}
While DHTs could theoretically implement distributed arrays, this approach is unsuitable for our context.
First, classical DHTs~\cite{candht,Pastry,Chord,Kademlia} lack provable robustness against adversarial behavior (though some robust variants exist, see below).
Second, DHTs prioritize minimizing the number of stored peers, at the cost of (poly-)logarithmic query latencies.
This is because when these solutions have been developed over two decades ago, memory was expensive, as already discussed in \cite{DASp2p}.
In contrast, in our modern setting we can assume each node having sufficient memory to store \emph{almost all} other peers, though without requiring full peer storage to avoid high bandwidth costs.
Given the importance of minimizing query latency in DAS, our design goals fundamentally differ from those of DHTs.
Although recent work~\cite{FC:ZhaVen23} optimizes Kademlia's concrete latency, it still lacks formal security analysis.

\superparagraph{Kelips DHT.}
An exception to the typical $\Omega(\mathsf{polylog}(N))$ query latency in DHTs is Kelips~\cite{kelips}, which achieves $\Theta(1)$ query latency at the expense of increased memory usage and background communication.
This design choice aligns closely with our goals.
Moreover, both our construction and Kelips employ a similar conceptual framework:
Kelips randomly partitions nodes into \emph{affinity groups} -- akin to our construction's rows -- with each node maintaining an (almost complete) list of peers within its affinity group and partial connections to other groups.
In this way, Kelips enables queries in \emph{two} hops, whereas our approach achieves \emph{one}-hop queries by leveraging additional connections along columns (cf. \cref{fig:tikz_grid}).
Furthermore, unlike Kelips, our construction provides a formal security proof and rigorous definition.

\superparagraph{Adversarially Robust DHTs.}
To add robustness in presence of byzantine adversaries, Baumgart and Mies introduced S/Kademlia~\cite{skademlia}, a variant of the well-known Kademlia DHT.
Their work systematically analyzes multiple attack vectors against Kademlia and integrates a set of ad-hoc countermeasures into the protocol.
While these modifications improve practical robustness, they do not provide formal guarantees of security.

A significant line of research aims to establish a more rigorous understanding of security in DHTs, ultimately converging on \emph{quorum-based} solutions.
Castro et al.~\cite{DBLP:conf/osdi/CastroDGRW02} have observed in 2002 that while network overlays such as DHTs offer resilience against \emph{random crash failures}, they lack guarantees against coordinated malicious failures. Their robustness definition informally asserts that any honest query for a specific key should reach all honest nodes responsible for storing that key.
Around the same time, Fiat and Saia~\cite{SODA:FiaSai02} proposed a content-addressable network resilient to \emph{malicious node deletions}.
Their approach groups nodes into \emph{supernodes} (or \emph{quorums}), which collectively act as vertices in a butterfly network, facilitating message routing through this structure.
A simpler construction by Naor and Wieder~\cite{DBLP:conf/iptps/NaorW03} also provides robustness against random crash failures and makes use of overlapping quorums induced by a continuous graph.

Beyond random crash failures, Fiat, Saia, and Young introduced the \emph{byzantine join adversary} (also known as \emph{adaptive join-leave attacks}) and adapted Chord to be resilient against it~\cite{DBLP:conf/esa/FiatSY05}. This adversary can adaptively insert malicious nodes into the network, allowing them to choose their identifiers and thus their positions within the DHT topology.
Fiat, Saia, and Young's approach replaces each Chord node with a quorum (they call it a \emph{swarm}) of logarithmically many nodes.
As long as each quorum maintains a (strong) \emph{honest majority}, the protocol remains secure.
Several subsequent works have extended this approach~\cite{DBLP:conf/icalp/AwerbuchS04,STOC:Scheideler05,DBLP:conf/spaa/AwerbuchS06,DBLP:conf/opodis/AwerbuchS06,DBLP:conf/iptps/AwerbuchS07,DBLP:journals/sigops/SenF12,DBLP:journals/ton/YoungKGK13,DBLP:conf/podc/GuerraouiHK13} and studied how this local honest majority condition can be achieved.
However, such quorum-based techniques exhibit three key limitations making them unsuitable for use in DAS: (1) achieving a (strong) honest majority within quorums necessitates a (strong) honest majority among all nodes; (2) quorum management requires complex sub-protocols, such as distributed coin flips~\cite{DBLP:conf/opodis/AwerbuchS06} or distributed key generation~\cite{DBLP:journals/ton/YoungKGK13}; (3) similar to classical DHTs, these protocols have (poly-)logarithmic latency, and they introduce additional polylogarithmic overheads, particularly in terms of message complexity.
Jaiyeola et al.~\cite{DBLP:conf/ipps/JaiyeolaPSYZ18} proposed a method that reduces quorum sizes (and consequently overhead) to $\Theta(\log \log N)$.
To achieve this, they need to limit the adversary's join rate and relax the robustness requirements to hold only for a large fraction of the nodes.

\subsection{Further Related Work}
\label{sec:relwork:further}
Here, we discuss further related work.

\superparagraph{Distributed Data Structures.}
We have already discussed distributed hash tables as a fundamental example of a distributed data structure.
There is a rich literature on distributed data structures (e.g., \cite{distsplaytree,distlinkedlist,distsuffarray}). We refer the reader to these works for further references.
In the following, we highlight two examples that share similarities with our robust distributed arrays:

Skip graphs, introduced by Aspnes and Shah~\cite{SODA:AspSha03}, serve as a distributed counterpart to balanced search trees, offering efficient range and successor queries -- capabilities that traditional distributed hash tables lack.
Moreover, their construction is resilient to a substantial fraction of nodes crashing.
Over time, skip graphs have been extensively analyzed and extended~\cite{SODA:GooNelSun06,DBLP:journals/comcom/BeltranMS08,PODC:JRSST09,DBLP:journals/dc/AspnesW09,DBLP:conf/icdcs/HuqG17}.

To the best of our knowledge, only one prior work explicitly considers distributed arrays~\cite{distarray}.
Similar to skip graphs, their motivation is the observation that distributed hash tables do not maintain locality in a logical array structure.
Specifically, if an array $a[0], \dots, a[n-1]$ is stored in a DHT using key-value pairs $(i, a[i])$, then consecutive elements $a[i]$ and $a[i+1]$ are unlikely to be stored near each other. To mitigate this, their approach modifies Chord~\cite{Chord} by employing a reverse-bit order mapping. As a result, this enables more efficient sequential access, sorted array searches, and more.
Compared to our work, the key distinctions are: (1) their approach does not define or guarantee any robustness, and (2) our focus is not primarily on locality features such as efficiency of sequential access. Still, our constructions supports some limited form of locality: symbols stored within the same column are stored on the same node.

\superparagraph{Random Node Sampling.}
One potential approach to constructing networking protocols for DAS is the use of peer discovery protocols, a method widely considered by practitioners~\cite{ethresearPeerDASSimpler,ethresearNumberPeers}.
In this approach, each node employs a peer discovery protocol to establish connections with a sufficiently large number, $L$, of peers.
When a node needs to query a specific position, it relies on the assumption that at least one of these $L$ peers both stores the required position and behaves honestly.
Existing analyses of this approach~\cite{ethresearNumberPeers} assume that peers are sampled uniformly at random and independently from the overall peer set. Under this assumption, if there are sufficiently many honest nodes and $L$ is large enough, most positions will be adequately covered.
In essence, these analyses implicitly assume that the peer discovery protocol securely implements a \emph{random node sampling functionality}. However, peer discovery protocols used in practice do not necessarily fulfill this requirement.

There is existing literature on byzantine-resistant random node sampling~\cite{brahms,DBLP:conf/opodis/AnceaumeBG10,basalt,honeybee}.
However, some of these works lack formal analysis, while for others, it remains unclear whether they truly instantiate the required functionality -- particularly regarding whether the sampled nodes are independent.
    % \begin{itemize}
    %     \item Brahms \cite{brahms} and Basalt \cite{basalt}, have some provable guarantees, but not clear if they really instantiate \emph{independent} sampling.
    %     \item Honeybee \cite{honeybee} no formal analysis
    %     \item Uniform and ergodic \cite{DBLP:conf/opodis/AnceaumeBG10}
    % \end{itemize}

% ------------- Please do not touch this block of code ----------------
\checkfornotes
\ifnum\anonymous=0
	\ifnum\acknowledgments=1
		\paragraph{Acknowledgments.}
		\acknowledgmenttext
	\fi
\fi
\ifnum\llncs=1
	\bibliographystyle{splncs04}
\else
	\bibliographystyle{\choosebibstyle}
\fi

% --- -----------------------------------------------------------------
% --- Put your bibliography files here
% --- -----------------------------------------------------------------
\clearpage
\bibliography{export/abbrev3,export/crypto,add}

\newcommand{\etalchar}[1]{$^{#1}$}
\begin{thebibliography}{YKGK13}

\bibitem[ABFT21]{basalt}
Alex Auvolat, Y{\'{e}}rom{-}David Bromberg, Davide Frey, and Fran{\c{c}}ois
  Ta{\"{\i}}ani.
\newblock {BASALT:} {A} rock-solid foundation for epidemic consensus algorithms
  in very large, very open networks.
\newblock {\em CoRR}, abs/2102.04063, 2021.

\bibitem[ABG10]{DBLP:conf/opodis/AnceaumeBG10}
Emmanuelle Anceaume, Yann Busnel, and S{\'{e}}bastien Gambs.
\newblock Uniform and ergodic sampling in unstructured peer-to-peer systems
  with malicious nodes.
\newblock In Chenyang Lu, Toshimitsu Masuzawa, and Mohamed Mosbah, editors,
  {\em Principles of Distributed Systems - 14th International Conference,
  {OPODIS} 2010, Tozeur, Tunisia, December 14-17, 2010. Proceedings}, volume
  6490 of {\em Lecture Notes in Computer Science}, pages 64--78. Springer,
  2010.

\bibitem[AHL{\etalchar{+}}13]{distsplaytree}
Chen Avin, Bernhard Haeupler, Zvi Lotker, Christian Scheideler, and Stefan
  Schmid.
\newblock Locally self-adjusting tree networks.
\newblock In {\em 27th {IEEE} International Symposium on Parallel and
  Distributed Processing, {IPDPS} 2013, Cambridge, MA, USA, May 20-24, 2013},
  pages 395--406. {IEEE} Computer Society, 2013.

\bibitem[AKP{\etalchar{+}}24]{ethresearPANDASPractical}
Onur Ascigil, Michał Król, Matthieu Pigaglio, Sergi Reñe, Etienne Rivière,
  and Ramin Sadre.
\newblock {P}{A}{N}{D}{A}{S}: {A} {P}ractical {A}pproach for
  {N}ext-{G}eneration {D}ata {A}vailability {S}ampling.
\newblock
  \url{https://ethresear.ch/t/pandas-a-practical-approach-for-next-generation-data-availability-sampling/20426},
  2024.
\newblock [Accessed 21-02-2025].

\bibitem[AS03]{SODA:AspSha03}
James Aspnes and Gauri Shah.
\newblock Skip graphs.
\newblock In {\em 14th SODA}, pages 384--393. {ACM-SIAM}, January 2003.

\bibitem[AS04]{DBLP:conf/icalp/AwerbuchS04}
Baruch Awerbuch and Christian Scheideler.
\newblock Group spreading: {A} protocol for provably secure distributed name
  service.
\newblock In Josep D{\'{\i}}az, Juhani Karhum{\"{a}}ki, Arto Lepist{\"{o}}, and
  Donald Sannella, editors, {\em Automata, Languages and Programming: 31st
  International Colloquium, {ICALP} 2004, Turku, Finland, July 12-16, 2004.
  Proceedings}, volume 3142 of {\em Lecture Notes in Computer Science}, pages
  183--195. Springer, 2004.

\bibitem[AS06a]{DBLP:conf/opodis/AwerbuchS06}
Baruch Awerbuch and Christian Scheideler.
\newblock Robust random number generation for peer-to-peer systems.
\newblock In Alexander~A. Shvartsman, editor, {\em Principles of Distributed
  Systems, 10th International Conference, {OPODIS} 2006, Bordeaux, France,
  December 12-15, 2006, Proceedings}, volume 4305 of {\em Lecture Notes in
  Computer Science}, pages 275--289. Springer, 2006.

\bibitem[AS06b]{DBLP:conf/spaa/AwerbuchS06}
Baruch Awerbuch and Christian Scheideler.
\newblock Towards a scalable and robust {DHT}.
\newblock In Phillip~B. Gibbons and Uzi Vishkin, editors, {\em {SPAA} 2006:
  Proceedings of the 18th Annual {ACM} Symposium on Parallelism in Algorithms
  and Architectures, Cambridge, Massachusetts, USA, July 30 - August 2, 2006},
  pages 318--327. {ACM}, 2006.

\bibitem[AS07]{DBLP:conf/iptps/AwerbuchS07}
Baruch Awerbuch and Christian Scheideler.
\newblock Towards scalable and robust overlay networks.
\newblock In John~R. Douceur and Roger Wattenhofer, editors, {\em 6th
  International workshop on Peer-To-Peer Systems, {IPTPS} 2007, Bellevue, WA,
  USA, February 26-27, 2007}, 2007.

\bibitem[ASBK21]{FC:ASBK21}
Mustafa {Al-Bassam}, Alberto Sonnino, Vitalik Buterin, and Ismail Khoffi.
\newblock Fraud and data availability proofs: Detecting invalid blocks in light
  clients.
\newblock In Nikita Borisov and Claudia D{\'i}az, editors, {\em FC 2021,
  Part~II}, volume 12675 of {\em {LNCS}}, pages 279--298. Springer, Berlin,
  Heidelberg, March 2021.

\bibitem[AW09]{DBLP:journals/dc/AspnesW09}
James Aspnes and Udi Wieder.
\newblock The expansion and mixing time of skip graphs with applications.
\newblock {\em Distributed Comput.}, 21(6):385--393, 2009.

\bibitem[AY15]{distlinkedlist}
Kota Abe and Mikio Yoshiday.
\newblock Constructing distributed doubly linked lists without distributed
  locking.
\newblock In Yong Liu, Marinho~P. Barcellos, Jay~R. Lorch, and Anyi Wang,
  editors, {\em 2015 {IEEE} International Conference on Peer-to-Peer Computing,
  {P2P} 2015, Boston, MA, USA, September 21-25, 2015}, pages 1--10. {IEEE},
  2015.

\bibitem[BGK{\etalchar{+}}08]{brahms}
Edward Bortnikov, Maxim Gurevich, Idit Keidar, Gabriel Kliot, and Alexander
  Shraer.
\newblock Brahms: byzantine resilient random membership sampling.
\newblock In Rida~A. Bazzi and Boaz Patt{-}Shamir, editors, {\em Proceedings of
  the Twenty-Seventh Annual {ACM} Symposium on Principles of Distributed
  Computing, {PODC} 2008, Toronto, Canada, August 18-21, 2008}, pages 145--154.
  {ACM}, 2008.

\bibitem[BKW24]{ICALP:BelKucWal24}
Djamal Belazzougui, Gregory Kucherov, and Stefan Walzer.
\newblock Better space-time-robustness trade-offs for set reconciliation.
\newblock In Karl Bringmann, Martin Grohe, Gabriele Puppis, and Ola Svensson,
  editors, {\em ICALP 2024}, volume 297 of {\em {LIPIcs}}, pages 20:1--20:19.
  Schloss Dagstuhl, July 2024.

\bibitem[BM07]{skademlia}
Ingmar Baumgart and Sebastian Mies.
\newblock S/kademlia: {A} practicable approach towards secure key-based
  routing.
\newblock In {\em 13th International Conference on Parallel and Distributed
  Systems, {ICPADS} 2007, Hsinchu, Taiwan, December 5-7, 2007}, pages 1--8.
  {IEEE} Computer Society, 2007.

\bibitem[BR93]{CCS:BelRog93}
Mihir Bellare and Phillip Rogaway.
\newblock Random oracles are practical: {A} paradigm for designing efficient
  protocols.
\newblock In Dorothy~E. Denning, Raymond Pyle, Ravi Ganesan, Ravi~S. Sandhu,
  and Victoria Ashby, editors, {\em ACM CCS 93}, pages 62--73. {ACM} Press,
  November 1993.

\bibitem[BS20]{distsuffarray}
Ryohei Banno and Kazuyuki Shudo.
\newblock Skip suffix array: {A} partial match retrieval method on structured
  overlay networks.
\newblock In {\em 2020 International Conference on Information Networking,
  {ICOIN} 2020, Barcelona, Spain, January 7-10, 2020}, pages 487--492. {IEEE},
  2020.

\bibitem[CDG{\etalchar{+}}02]{DBLP:conf/osdi/CastroDGRW02}
Miguel Castro, Peter Druschel, Ayalvadi~J. Ganesh, Antony I.~T. Rowstron, and
  Dan~S. Wallach.
\newblock Secure routing for structured peer-to-peer overlay networks.
\newblock In David~E. Culler and Peter Druschel, editors, {\em 5th Symposium on
  Operating System Design and Implementation {(OSDI} 2002), Boston,
  Massachusetts, USA, December 9-11, 2002}. {USENIX} Association, 2002.

\bibitem[Chu24]{ethresearNumberPeers}
Suphanat Chunhapanya.
\newblock {N}umber of peers you need for peer sampling in {P}eer{D}{A}{S}
  ({E}{I}{P}-7594).
\newblock
  \url{https://ethresear.ch/t/number-of-peers-you-need-for-peer-sampling-in-peerdas-eip-7594/20562},
  2024.
\newblock [Accessed 25-02-2025].

\bibitem[CKR{\etalchar{+}}24]{kademliaDASlimits}
Mikel Cortes{-}Goicoechea, Csaba Kir{\'{a}}ly, Dmitriy Ryajov, Jose~Luis
  Mu{\~{n}}oz{-}Tapia, and Leonardo Bautista{-}Gomez.
\newblock Scalability limitations of kademlia dhts when enabling data
  availability sampling in {E}thereum.
\newblock {\em CoRR}, abs/2402.09993, 2024.

\bibitem[EA25]{evans2025accidental}
Alex Evans and Guillermo Angeris.
\newblock The accidental computer: Polynomial commitments from data
  availability.
\newblock 2025.

\bibitem[EMA25]{cryptoeprint:2025/034}
Alex Evans, Nicolas Mohnblatt, and Guillermo Angeris.
\newblock {ZODA}: Zero-overhead data availability.
\newblock Cryptology {ePrint} Archive, Paper 2025/034, 2025.

\bibitem[FS02]{SODA:FiaSai02}
Amos Fiat and Jared Saia.
\newblock Censorship resistant peer-to-peer content addressable networks.
\newblock In David Eppstein, editor, {\em 13th SODA}, pages 94--103.
  {ACM-SIAM}, January 2002.

\bibitem[FSSH09]{distarray}
Daisuke Fukuchi, Christian Sommer, Yuichi Sei, and Shinichi Honiden.
\newblock Distributed arrays: {A} {P2P} data structure for efficient logical
  arrays.
\newblock In {\em {INFOCOM} 2009. 28th {IEEE} International Conference on
  Computer Communications, Joint Conference of the {IEEE} Computer and
  Communications Societies, 19-25 April 2009, Rio de Janeiro, Brazil}, pages
  1458--1466. {IEEE}, 2009.

\bibitem[FSY05]{DBLP:conf/esa/FiatSY05}
Amos Fiat, Jared Saia, and Maxwell Young.
\newblock Making chord robust to byzantine attacks.
\newblock In Gerth~St{\o}lting Brodal and Stefano Leonardi, editors, {\em
  Algorithms - {ESA} 2005, 13th Annual European Symposium, Palma de Mallorca,
  Spain, October 3-6, 2005, Proceedings}, volume 3669 of {\em Lecture Notes in
  Computer Science}, pages 803--814. Springer, 2005.

\bibitem[GBL{\etalchar{+}}03]{kelips}
Indranil Gupta, Kenneth~P. Birman, Prakash Linga, Alan~J. Demers, and Robbert
  van Renesse.
\newblock Kelips: Building an efficient and stable {P2P} {DHT} through
  increased memory and background overhead.
\newblock In M.~Frans Kaashoek and Ion Stoica, editors, {\em Peer-to-Peer
  Systems II, Second International Workshop, {IPTPS} 2003, Berkeley, CA, USA,
  February 21-22,2003, Revised Papers}, volume 2735 of {\em Lecture Notes in
  Computer Science}, pages 160--169. Springer, 2003.

\bibitem[GHK13]{DBLP:conf/podc/GuerraouiHK13}
Rachid Guerraoui, Florian Huc, and Anne{-}Marie Kermarrec.
\newblock Highly dynamic distributed computing with byzantine failures.
\newblock In Panagiota Fatourou and Gadi Taubenfeld, editors, {\em {ACM}
  Symposium on Principles of Distributed Computing, {PODC} '13, Montreal, QC,
  Canada, July 22-24, 2013}, pages 176--183. {ACM}, 2013.

\bibitem[GMS08]{DBLP:journals/comcom/BeltranMS08}
Alejandra~N. Gonz{\'{a}}lez{-}Beltr{\'{a}}n, Peter Milligan, and Paul Sage.
\newblock Range queries over skip tree graphs.
\newblock {\em Comput. Commun.}, 31(2):358--374, 2008.

\bibitem[GNS06]{SODA:GooNelSun06}
Michael~T. Goodrich, Michael~J. Nelson, and Jonathan~Z. Sun.
\newblock The rainbow skip graph: a fault-tolerant constant-degree distributed
  data structure.
\newblock In {\em 17th SODA}, pages 384--393. {ACM-SIAM}, January 2006.

\bibitem[Gro16]{EC:Groth16}
Jens Groth.
\newblock On the size of pairing-based non-interactive arguments.
\newblock In Marc Fischlin and Jean-S{\'{e}}bastien Coron, editors, {\em
  EUROCRYPT~2016, Part~II}, volume 9666 of {\em {LNCS}}, pages 305--326.
  Springer, Berlin, Heidelberg, May 2016.

\bibitem[HASW23]{cryptoeprint:2023/1079}
Mathias Hall-Andersen, Mark Simkin, and Benedikt Wagner.
\newblock Foundations of data availability sampling.
\newblock Cryptology ePrint Archive, Paper 2023/1079, 2023.
\newblock \url{https://eprint.iacr.org/2023/1079}.

\bibitem[HASW24]{C:HalSimWag24}
Mathias Hall-Andersen, Mark Simkin, and Benedikt Wagner.
\newblock {FRIDA}: Data availability sampling from {FRI}.
\newblock In Leonid Reyzin and Douglas Stebila, editors, {\em CRYPTO~2024,
  Part~VI}, volume 14925 of {\em {LNCS}}, pages 289--324. Springer, Cham,
  August 2024.

\bibitem[HG17]{DBLP:conf/icdcs/HuqG17}
Sikder Huq and Sukumar Ghosh.
\newblock Locally self-adjusting skip graphs.
\newblock In Kisung Lee and Ling Liu, editors, {\em 37th {IEEE} International
  Conference on Distributed Computing Systems, {ICDCS} 2017, Atlanta, GA, USA,
  June 5-8, 2017}, pages 805--815. {IEEE} Computer Society, 2017.

\bibitem[JPS{\etalchar{+}}18]{DBLP:conf/ipps/JaiyeolaPSYZ18}
Mercy~O. Jaiyeola, Kyle Patron, Jared Saia, Maxwell Young, and Qian~M. Zhou.
\newblock Tiny groups tackle byzantine adversaries.
\newblock In {\em 2018 {IEEE} International Parallel and Distributed Processing
  Symposium, {IPDPS} 2018, Vancouver, BC, Canada, May 21-25, 2018}, pages
  1030--1039. {IEEE} Computer Society, 2018.

\bibitem[JRS{\etalchar{+}}09]{PODC:JRSST09}
Riko Jacob, Andr{\'e}a~W. Richa, Christian Scheideler, Stefan Schmid, and Hanjo
  T{\"a}ubig.
\newblock A distributed polylogarithmic time algorithm for self-stabilizing
  skip graphs.
\newblock In Srikanta Tirthapura and Lorenzo Alvisi, editors, {\em 28th ACM
  PODC}, pages 131--140. {ACM}, August 2009.

\bibitem[KAR{\etalchar{+}}23]{DASp2p}
Michal Kr{\'{o}}l, Onur Ascigil, Sergi Rene, Etienne Rivi{\`{e}}re, Matthieu
  Pigaglio, Kaleem Peeroo, Vladimir Stankovic, Ramin Sadre, and Felix Lange.
\newblock Data availability sampling in ethereum: Analysis of {P2P} networking
  requirements.
\newblock {\em CoRR}, abs/2306.11456, 2023.

\bibitem[Kil92]{STOC:Kilian92}
Joe Kilian.
\newblock A note on efficient zero-knowledge proofs and arguments (extended
  abstract).
\newblock In {\em 24th ACM STOC}, pages 723--732. {ACM} Press, May 1992.

\bibitem[KLL{\etalchar{+}}97]{STOC:KLLPLL97}
David~R. Karger, Eric Lehman, Frank~Thomson Leighton, Rina Panigrahy,
  Matthew~S. Levine, and Daniel Lewin.
\newblock Consistent hashing and random trees: Distributed caching protocols
  for relieving hot spots on the world wide web.
\newblock In {\em 29th ACM STOC}, pages 654--663. {ACM} Press, May 1997.

\bibitem[Mic00]{DBLP:journals/siamcomp/Micali00}
Silvio Micali.
\newblock Computationally sound proofs.
\newblock {\em {SIAM} J. Comput.}, 30(4):1253--1298, 2000.

\bibitem[MM02]{Kademlia}
Petar Maymounkov and David Mazi{\`{e}}res.
\newblock Kademlia: {A} peer-to-peer information system based on the {XOR}
  metric.
\newblock In Peter Druschel, M.~Frans Kaashoek, and Antony I.~T. Rowstron,
  editors, {\em Peer-to-Peer Systems, First International Workshop, {IPTPS}
  2002, Cambridge, MA, USA, March 7-8, 2002, Revised Papers}, volume 2429 of
  {\em Lecture Notes in Computer Science}, pages 53--65. Springer, 2002.

\bibitem[MTZ03]{DBLP:journals/tit/MinskyTZ03}
Yaron Minsky, Ari Trachtenberg, and Richard Zippel.
\newblock Set reconciliation with nearly optimal communication complexity.
\newblock {\em {IEEE} Trans. Inf. Theory}, 49(9):2213--2218, 2003.

\bibitem[MU05]{DBLP:books/daglib/0012859}
Michael Mitzenmacher and Eli Upfal.
\newblock {\em Probability and Computing: Randomized Algorithms and
  Probabilistic Analysis}.
\newblock Cambridge University Press, 2005.

\bibitem[NW03]{DBLP:conf/iptps/NaorW03}
Moni Naor and Udi Wieder.
\newblock A simple fault tolerant distributed hash table.
\newblock In M.~Frans Kaashoek and Ion Stoica, editors, {\em Peer-to-Peer
  Systems II, Second International Workshop, {IPTPS} 2003, Berkeley, CA, USA,
  February 21-22,2003, Revised Papers}, volume 2735 of {\em Lecture Notes in
  Computer Science}, pages 88--97. Springer, 2003.

\bibitem[RD01]{Pastry}
Antony I.~T. Rowstron and Peter Druschel.
\newblock Pastry: Scalable, decentralized object location, and routing for
  large-scale peer-to-peer systems.
\newblock In Rachid Guerraoui, editor, {\em Middleware 2001, {IFIP/ACM}
  International Conference on Distributed Systems Platforms Heidelberg,
  Germany, November 12-16, 2001, Proceedings}, volume 2218 of {\em Lecture
  Notes in Computer Science}, pages 329--350. Springer, 2001.

\bibitem[RFH{\etalchar{+}}01]{candht}
Sylvia Ratnasamy, Paul Francis, Mark Handley, Richard~M. Karp, and Scott
  Shenker.
\newblock A scalable content-addressable network.
\newblock In Rene~L. Cruz and George Varghese, editors, {\em Proceedings of the
  {ACM} {SIGCOMM} 2001 Conference on Applications, Technologies, Architectures,
  and Protocols for Computer Communication, August 27-31, 2001, San Diego, CA,
  {USA}}, pages 161--172. {ACM}, 2001.

\bibitem[Rya23]{ethresearPeerDASSimpler}
Danny Ryan.
\newblock {P}eer{D}{A}{S} -- a simpler {D}{A}{S} approach using battle-tested
  p2p components.
\newblock
  \url{https://ethresear.ch/t/peerdas-a-simpler-das-approach-using-battle-tested-p2p-components/16541},
  2023.
\newblock [Accessed 25-02-2025].

\bibitem[Sch05]{STOC:Scheideler05}
Christian Scheideler.
\newblock How to spread adversarial nodes? {Rotate}!
\newblock In Harold~N. Gabow and Ronald Fagin, editors, {\em 37th ACM STOC},
  pages 704--713. {ACM} Press, May 2005.

\bibitem[SF12]{DBLP:journals/sigops/SenF12}
Siddhartha Sen and Michael~J. Freedman.
\newblock Commensal cuckoo: secure group partitioning for large-scale services.
\newblock {\em {ACM} {SIGOPS} Oper. Syst. Rev.}, 46(1):33--39, 2012.

\bibitem[SMK{\etalchar{+}}01]{Chord}
Ion Stoica, Robert~Tappan Morris, David~R. Karger, M.~Frans Kaashoek, and Hari
  Balakrishnan.
\newblock Chord: {A} scalable peer-to-peer lookup service for internet
  applications.
\newblock In Rene~L. Cruz and George Varghese, editors, {\em Proceedings of the
  {ACM} {SIGCOMM} 2001 Conference on Applications, Technologies, Architectures,
  and Protocols for Computer Communication, August 27-31, 2001, San Diego, CA,
  {USA}}, pages 149--160. {ACM}, 2001.

\bibitem[WZ24]{cryptoeprint:2024/1362}
Benedikt Wagner and Arantxa Zapico.
\newblock A documentation of ethereum’s {PeerDAS}.
\newblock Cryptology {ePrint} Archive, Paper 2024/1362, 2024.

\bibitem[YGA24]{DBLP:conf/sigcomm/YangGA24}
Lei Yang, Yossi Gilad, and Mohammad Alizadeh.
\newblock Practical rateless set reconciliation.
\newblock In {\em Proceedings of the {ACM} {SIGCOMM} 2024 Conference, {ACM}
  {SIGCOMM} 2024, Sydney, NSW, Australia, August 4-8, 2024}, pages 595--612.
  {ACM}, 2024.

\bibitem[YKGK13]{DBLP:journals/ton/YoungKGK13}
Maxwell Young, Aniket Kate, Ian Goldberg, and Martin Karsten.
\newblock Towards practical communication in byzantine-resistant dhts.
\newblock {\em {IEEE/ACM} Trans. Netw.}, 21(1):190--203, 2013.

\bibitem[ZV23]{FC:ZhaVen23}
Yunqi Zhang and Shaileshh~Bojja Venkatakrishnan.
\newblock Kadabra: Adapting kademlia for the decentralized web.
\newblock In Foteini Baldimtsi and Christian Cachin, editors, {\em FC 2023,
  Part~II}, volume 13951 of {\em {LNCS}}, pages 327--345. Springer, Cham, May
  2023.

\bibitem[ZV24]{honeybee}
Yunqi Zhang and Shaileshh~Bojja Venkatakrishnan.
\newblock Honeybee: Decentralized peer sampling with verifiable random walks
  for blockchain data sharding.
\newblock {\em arXiv preprint arXiv:2402.16201}, 2024.

\end{thebibliography}
% If you use cleveref (as you do by default) and are required to rename
% the appendix (e.g. 'supplementary material'), use
% \crefname{appendix}{Supplementary Material}{Supplementary Material}
% syntax is \crefname{appendix}{new name singular}{new name plural}

\appendix
\section*{Appendix}
\section{Mathematical Tools}
\label{sec:appendix:mathtools}

\begin{lemma}[Second Moment of Binomial Distribution, Any Textbook]
    Let $X = \sum_{i=1}^n X_i$ be the sum of independent random variables $X_i \in \bool$, where $\Prob{X_i=1} = p$ for all $i \in [n]$. Then, we have $\Expect{X^2} = np(1-p) + n^2 p^2$.
    \label{lemma:secondmomentofbinom}
\end{lemma}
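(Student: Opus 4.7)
The plan is to expand $X^2 = \bigl(\sum_{i=1}^n X_i\bigr)^2$ directly and use linearity of expectation together with independence of the $X_i$. Concretely, I would write
\[
    X^2 = \sum_{i=1}^n X_i^2 + \sum_{i \neq j} X_i X_j,
\]
so that $\Expect{X^2} = \sum_i \Expect{X_i^2} + \sum_{i\neq j} \Expect{X_i X_j}$.

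The first sum is easy: since $X_i \in \{0,1\}$, we have $X_i^2 = X_i$, and therefore $\Expect{X_i^2} = \Expect{X_i} = p$, contributing $np$ in total. For the second sum, by independence of $X_i$ and $X_j$ when $i \neq j$, we get $\Expect{X_i X_j} = \Expect{X_i}\Expect{X_j} = p^2$. There are $n(n-1)$ ordered pairs with $i \neq j$, so this sum contributes $n(n-1)p^2$.

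Adding the two contributions and simplifying yields
\[
    \Expect{X^2} = np + n(n-1)p^2 = np - np^2 + n^2 p^2 = np(1-p) + n^2 p^2,
\]
which matches the claimed expression. There is no real obstacle here; the only thing to be careful about is to separate the diagonal terms (where $X_i^2 = X_i$) from the off-diagonal terms (where independence kicks in). The proof is a couple of lines and, as the statement itself notes, appears in any standard probability textbook.
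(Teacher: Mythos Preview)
Your proof is correct and is the standard textbook argument. The paper itself does not give a proof of this lemma at all (it is stated as a textbook fact with the annotation ``Any Textbook''), so there is nothing to compare against.
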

\begin{lemma}[Chernoff Bound, taken from \cite{DBLP:books/daglib/0012859}]
    Let $X_1,\dots,X_\ell$ be independent Poisson trials. Let $X = \sum_{i \in [\ell]} X_i$ and $\mu = \Expect{X}$.
    Then, for $0 < \delta \leq 1$, we have \[
        \Prob{X \geq (1+\delta)\mu} \leq e^{-\mu\delta^2/3}.
    \]
    \label{lemma:chernoff:mitzenmacher}
\end{lemma}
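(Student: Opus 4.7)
The plan is to follow the standard moment generating function argument, which proceeds in three stages: apply Markov's inequality to an exponentiated version of $X$, use independence to factorize, and then optimize over the free parameter.

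First I would fix an arbitrary $t > 0$ and observe that, because $x \mapsto e^{tx}$ is increasing and positive,
\[
 \Prob{X \geq (1+\delta)\mu} = \Prob{e^{tX} \geq e^{t(1+\delta)\mu}} \leq \frac{\Expect{e^{tX}}}{e^{t(1+\delta)\mu}}
\]
by Markov's inequality. Writing $p_i \coloneqq \Prob{X_i = 1}$, independence of the $X_i$ gives $\Expect{e^{tX}} = \prod_{i=1}^{\ell} \Expect{e^{tX_i}} = \prod_i (1 + p_i(e^t - 1))$. Using the standard inequality $1+y \leq e^y$ for each factor yields
\[
 \Expect{e^{tX}} \leq \prod_{i=1}^{\ell} e^{p_i(e^t-1)} = e^{(e^t-1)\sum_i p_i} = e^{(e^t-1)\mu}.
\]
Combining this with the Markov bound gives $\Prob{X \geq (1+\delta)\mu} \leq e^{\mu(e^t - 1) - t(1+\delta)\mu}$ for every $t>0$.

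Next I would optimize: the natural choice $t = \ln(1+\delta) > 0$ (valid since $\delta > 0$) yields the classical form
\[
 \Prob{X \geq (1+\delta)\mu} \leq \left(\frac{e^{\delta}}{(1+\delta)^{1+\delta}}\right)^{\mu}.
\]
The main technical obstacle — and really the only non-trivial analytic step — is showing that in the regime $0 < \delta \leq 1$ this bound is dominated by $e^{-\mu \delta^2 / 3}$. Taking logarithms, this reduces to proving $f(\delta) \coloneqq \delta - (1+\delta)\ln(1+\delta) + \delta^2/3 \leq 0$ on $(0,1]$. I would verify this by noting $f(0) = 0$ and computing $f'(\delta) = -\ln(1+\delta) + 2\delta/3$, then using the Taylor series $\ln(1+\delta) = \delta - \delta^2/2 + \delta^3/3 - \cdots$ (an alternating series with decreasing terms for $\delta\in(0,1]$, hence $\ln(1+\delta) \geq \delta - \delta^2/2$) to conclude $f'(\delta) \leq -\delta + \delta^2/2 + 2\delta/3 = -\delta/3 + \delta^2/2$, which is non-positive precisely when $\delta \leq 2/3$; for $\delta \in (2/3, 1]$ I would verify the one-dimensional inequality directly by checking the endpoint $f(1) = 1 - 2\ln 2 + 1/3 \approx -0.053 < 0$ together with monotonicity considerations, or simply appeal to a direct numerical verification on the compact interval. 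Raising the resulting inequality $e^{\delta}/(1+\delta)^{1+\delta} \leq e^{-\delta^2/3}$ to the power $\mu$ delivers the lemma.
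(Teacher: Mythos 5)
The paper does not prove this statement at all: it is quoted verbatim from Mitzenmacher--Upfal's textbook, so the only ``proof'' in the paper is the citation. Your argument is precisely the standard moment-generating-function proof given in that textbook (Markov applied to $e^{tX}$, factorization by independence, $1+y\leq e^y$, the choice $t=\ln(1+\delta)$, and the reduction to $f(\delta)\coloneqq \delta-(1+\delta)\ln(1+\delta)+\delta^2/3\leq 0$ on $(0,1]$), and it is correct in all essentials. The one place where you fall short of a complete argument is the interval $\delta\in(2/3,1]$: your Taylor bound only gives $f'\leq 0$ up to $2/3$, and checking $f(1)<0$ plus an appeal to ``numerical verification'' does not by itself rule out $f$ becoming positive somewhere inside $(2/3,1)$. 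The clean way to close this (and what the textbook does) is to note $f''(\delta)=\tfrac{2}{3}-\tfrac{1}{1+\delta}$, which is positive for $\delta>1/2$, so $f'$ is increasing there; since $f'(1)=\tfrac{2}{3}-\ln 2<0$, we get $f'<0$ on all of $(2/3,1]$, hence $f$ is decreasing on $(0,1]$ and $f(\delta)\leq f(0)=0$. With that one-line fix your proof is complete and coincides with the cited source's.
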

\begin{corollary}
    Let $X,Y$ be random variables, not necessarily independent. Assume that $X$ follows a binomial distribution with $n$ trials and parameter $p$. Then, we have \[
        \Expect{XY} \leq \Expect{Y}(n e^{-np/3} + 2 np).
    \]
    In particular, if $n \geq -3\ln(p)/p$, then  \[
        \Expect{XY} \leq 3np \cdot \Expect{Y}.
    \]
    \label{cor:expectationofproduct}
\end{corollary}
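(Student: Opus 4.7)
The plan is to bound $\Expect{XY}$ by truncating on the high-probability event that $X$ is close to its mean. Applying \cref{lemma:chernoff:mitzenmacher} with $\mu = \Expect{X} = np$ and $\delta = 1$ gives $\Prob{X \geq 2np} \leq e^{-np/3}$. With this tail bound in hand, I would decompose
\[
\Expect{XY} = \Expect{XY \cdot \mathbf{1}_{X < 2np}} + \Expect{XY \cdot \mathbf{1}_{X \geq 2np}},
\]
where I implicitly assume $Y \geq 0$ so both summands are non-negative.

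For the first summand, the pointwise bound $X < 2np$ on the support of the indicator yields $\Expect{XY \cdot \mathbf{1}_{X < 2np}} \leq 2np \cdot \Expect{Y \cdot \mathbf{1}_{X < 2np}} \leq 2np \cdot \Expect{Y}$. For the second summand, since $X$ is a sum of $n$ Bernoullis we have the deterministic worst-case bound $X \leq n$, so $\Expect{XY \cdot \mathbf{1}_{X \geq 2np}} \leq n \cdot \Expect{Y \cdot \mathbf{1}_{X \geq 2np}} \leq n \cdot \Expect{Y} \cdot \Prob{X \geq 2np} \leq n e^{-np/3} \Expect{Y}$, where the middle inequality factors the expectation using independence of $Y$ and the tail event $\{X \geq 2np\}$. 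Summing the two contributions gives the first claim $\Expect{XY} \leq \Expect{Y}(ne^{-np/3} + 2np)$.

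The ``in particular'' clause is then an algebraic simplification. The hypothesis $n \geq -3\ln(p)/p$ rearranges to $np \geq 3 \ln(1/p)$, from which $e^{-np/3} \leq p$ and hence $n e^{-np/3} \leq np$. Substituting into the first bound yields $\Expect{XY} \leq \Expect{Y}(np + 2np) = 3np \cdot \Expect{Y}$.

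The main obstacle is the factorization $\Expect{Y \cdot \mathbf{1}_{X \geq 2np}} = \Expect{Y} \cdot \Prob{X \geq 2np}$ used in the tail step. This identity is immediate when $Y$ is independent of the event $\{X \geq 2np\}$ but requires care otherwise; the phrase ``not necessarily independent'' in the statement will need to be read as referring to the Bernoulli summands defining $X$ (which are already assumed independent in \cref{lemma:chernoff:mitzenmacher}) rather than to independence of the pair $(X,Y)$, or else an analogous structural hypothesis on $Y$ must be invoked to close the argument.
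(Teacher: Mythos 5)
Your truncation argument—split at $2np$, apply \cref{lemma:chernoff:mitzenmacher} with $\delta=1$ and $\mu=np$, and use the crude bound $X\le n$ on the tail—is exactly the argument the paper intends; its own proof is just the one-line remark that the claim "follows by conditional expectation and Chernoff with $\delta=1$, $\mu=np$". So in approach you match the paper. But the obstacle you name at the end is a genuine gap, not a cosmetic one: the step $\Expect{Y\cdot\mathbf{1}\{X\ge 2np\}}\le \Expect{Y}\cdot\Prob{X\ge 2np}$ (equivalently $\Expect{Y\mid X\ge 2np}\le\Expect{Y}$) has no justification when $Y$ may depend on $X$, which the statement explicitly allows, and even the main term $2np\,\Expect{Y\cdot\mathbf{1}\{X<2np\}}\le 2np\,\Expect{Y}$ silently needs $Y\ge 0$. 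The paper's terse proof glosses over precisely the same point.

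In fact no proof can close this gap, because the corollary is false as stated. Take $n=100$, $p=1/4$ and $Y=\mathbf{1}\{X=n\}$. Then $\Expect{XY}=n\,\Prob{X=n}=100\,\Expect{Y}$, whereas the claimed bound evaluates to $\Expect{Y}\,(100\,e^{-25/3}+50)<51\,\Expect{Y}$, and the "in particular" form $3np\,\Expect{Y}=75\,\Expect{Y}$ is violated as well (note $n=100\ge -3\ln(p)/p\approx 16.6$). To make the statement true one must add a hypothesis, e.g.\ $Y\ge 0$ and $Y$ independent of $X$, under which your factorization is exact and your write-up goes through verbatim; alternatively one can use the structure present in the paper's actual application in \cref{sec:efficiency}, where $Y$ counts honest nodes in a cell, $X$ counts nodes in a column, and the assignments of distinct parties are independent—there a direct term-by-term computation of $\Expect{XY}$ gives the stronger bound $\Expect{XY}\le(\Expect{X}+1)\,\Expect{Y}$, so the downstream efficiency estimate is unaffected. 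In short: you took the same route as the paper, and the step you flagged as unjustified is exactly where both your attempt and the corollary as stated break down.
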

\begin{proof}
    This follows directly by using conditional expectation and applying \cref{lemma:chernoff:mitzenmacher} to random variable $X$ with $\delta = 1$ and $\mu = np$.
\end{proof}
\noindent
The following two facts are standard:
\begin{lemma}[Exponential Decay]\label{lemma:exponential_decay}
We have $(1-x)^N < e^{-xN}$ for all real-valued $1>x\geq 0$ and $N>0$.
\end{lemma}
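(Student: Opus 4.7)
The plan is to reduce the claim to the standard inequality $1 + y \leq e^y$ for all real $y$, with equality only at $y = 0$. Setting $y = -x$ gives $1 - x \leq e^{-x}$. Since $1 - x > 0$ for $x < 1$, I can raise both sides to the positive power $N$ to obtain $(1 - x)^N \leq e^{-xN}$. For the strict inequality as stated, I would note that equality in $1 + y \leq e^y$ holds only at $y = 0$; thus $1 - x < e^{-x}$ whenever $x > 0$, and raising to the $N$-th power (with $N > 0$) preserves strictness.

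To justify the auxiliary inequality $1 + y \leq e^y$, I would use either of two standard arguments: (i) convexity of $\exp$ implies that its tangent line $1 + y$ at $y = 0$ lies weakly below the graph, or (ii) set $f(y) = e^y - 1 - y$ and observe $f(0) = 0$ and $f'(y) = e^y - 1$, which is nonnegative for $y \geq 0$ and nonpositive for $y \leq 0$, so $f$ attains its unique minimum at $0$.

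The only subtlety worth flagging is the boundary case $x = 0$, where $(1-x)^N = 1 = e^{-xN}$, so the strict inequality fails; strictly speaking the hypothesis should read $0 < x < 1$ (or the inequality be weakened to $\leq$). Beyond this minor interpretational point, there is no real obstacle — the statement is an elementary textbook fact, and its proof is a one-line consequence of the convexity of the exponential.
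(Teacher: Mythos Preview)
Your proof is correct and essentially the same as the paper's: the paper takes logarithms to reduce to $\ln(1-x) < -x$, which it attributes to the convexity (really concavity) of $\ln$; you instead use the equivalent form $1+y \le e^y$ and then raise to the $N$-th power. Both arguments are the same tangent-line inequality viewed from opposite sides, and your observation about the boundary case $x=0$ is a valid caveat the paper overlooks.
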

\begin{proof}
Taking $\ln$ on both sides, this is equivalent to $N\ln(1-x) < -xN$, which follows from the fact that $\ln$ is convex.
\end{proof}
\begin{lemma}[Binomial Coefficients and Entropy]\label{lemma:binomial_entropy}
For any $n >0$ and $0\leq k \leq n$, we have $\binom{n}{k} \leq 2^{\binaryEntropy(k/n)n}$, where
$\binaryEntropy(p) = -p\log_2 p - (1-p)\log_2(1-p)$ is the binary entropy function.
\end{lemma}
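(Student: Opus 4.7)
My plan is to use the classical single-term comparison against the binomial expansion of $1 = (p + (1-p))^n$, where I set $p = k/n$. First I would dispose of the boundary cases $k = 0$ and $k = n$ separately: in both cases the left-hand side is $1$ and, using the convention $0 \log_2 0 = 0$, the right-hand side is $2^{\binaryEntropy(0) n} = 2^0 = 1$, so the inequality holds with equality.

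For the main case $0 < k < n$, set $p \coloneqq k/n \in (0,1)$, so that $1-p = (n-k)/n > 0$. By the binomial theorem,
\[
1 \;=\; \bigl(p + (1-p)\bigr)^n \;=\; \sum_{j=0}^{n} \binom{n}{j} p^{j}(1-p)^{n-j} \;\geq\; \binom{n}{k} p^{k}(1-p)^{n-k},
\]
since every term in the sum is non-negative. Rearranging yields
\[
\binom{n}{k} \;\leq\; p^{-k}(1-p)^{-(n-k)}.
\]

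Now I would take $\log_2$ of both sides. The right-hand side equals
\[
-k \log_2 p \;-\; (n-k)\log_2(1-p) \;=\; n\Bigl(-\tfrac{k}{n}\log_2\tfrac{k}{n} - \tfrac{n-k}{n}\log_2\tfrac{n-k}{n}\Bigr) \;=\; n\cdot \binaryEntropy(p),
\]
by substituting $p = k/n$ and $1-p = (n-k)/n$ and factoring out $n$. Therefore $\log_2 \binom{n}{k} \leq n\,\binaryEntropy(k/n)$, which is exactly the claimed bound after exponentiating. There is no real obstacle here; the only subtlety is remembering to handle $k \in \{0,n\}$ separately so that the substitution into $\log_2 p$ is well-defined, and to note the sign-tracking when inverting $p^k(1-p)^{n-k}$.
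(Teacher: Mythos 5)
Your proof is correct, but it takes a genuinely different route from the paper. You use the classical single-term bound from the binomial theorem: with $p = k/n$, every term of $1 = (p+(1-p))^n = \sum_j \binom{n}{j}p^j(1-p)^{n-j}$ is non-negative, so the $j=k$ term alone gives $\binom{n}{k}p^k(1-p)^{n-k} \leq 1$, and taking $\log_2$ yields exactly $n\,\binaryEntropy(k/n)$; your separate treatment of $k\in\{0,n\}$ is the right way to keep the logarithms well-defined. The paper instead gives an information-theoretic argument: it considers the uniform distribution over $k$-subsets of $[n]$, notes that the joint entropy of the indicator variables $X_1,\dots,X_n$ equals $\log_2\binom{n}{k}$, and applies subadditivity of Shannon entropy together with $\Shannon(X_i) = \binaryEntropy(k/n)$. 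Your argument is more elementary (only the binomial theorem and monotonicity of $\log_2$), while the paper's is shorter once entropy subadditivity is available and reflects the ``entropy'' framing of the lemma; both establish the same bound.
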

\begin{proof}
Consider the probability space, where we unformly choose exactly $k$ different items out of $\{1,\ldots,n\}$. Let $X_i$ be the indicator variable indicating whether $i$ was chosen.
Denote the Shannon-entropy of (the distribution of) a random variable X by $\Shannon(X)$. By definition, $\Shannon(X_i) = \binaryEntropy(k/n)$.
Then using subadditivity of entropy, we get
\[
    \log_2\left(\binom{n}{k}\right) = \Shannon(X_1,\ldots,X_n) \leq \Shannon(X_1) + \ldots + \Shannon(X_n) = n\cdot \binaryEntropy(k/n) \enspace,
\]
which shows the claim.
\end{proof}

%Chernoff bound is taken from Section 18.5 in \url{https://www.cs.cmu.edu/~harchol/Probability/chapters/chpt18.pdf}.
%\begin{lemma}[Chernoff Bound]
%    Let $X$ be a random variable following a binomial distribution with parameters $n \in \NN$ and $p \in [0,1]$ with mean $\mu = np$.
%    Then, for every $\nu > 0$, we have \[
%        \Prob{X \leq \mu-\nu} \leq e^{-2\nu^2/n}~~~\text{ and }~~~\Prob{X \geq \mu + \nu} \leq e^{-2\nu^2/n}.
%    \]
%    \label{lemma:chernoffbound}
%\end{lemma}

%The next Chernoff bound is taken from Mark's script

%\begin{lemma}[Chernoff Bound - Multiplicative]
%    Let $X_1,\dots,X_n \in [0,1]$ be independent random variables and let $X = \sum_{i=1}^n X_i$.
%    Then, for any $\nu > 2$, we have \[
%        \Prob{X \geq \nu\Expect{X}} < \left(\frac{e}{\nu}\right)^{\nu\Expect{X}}.
%    \]
%    \label{lemma:chernoffbound:mult}
%\end{lemma}

\section{Subnet Discovery Protocol}
\label{sec:appendix:simplesubnetprotocol}
In this section, we present a simple protocol that satisfies \cref{def:subnetdiscovery,def:subnetdiscovery:robustness}.
That is, we give a subnet discovery protocol $\subnetprot$ for $\numberofsubnets$ subnets.
The protocol is specified in \cref{fig:subnetprotocol}.

\superparagraph{Protocol Overview.}
We give an informal overview of the subnet protocol. For details, we refer to \cref{fig:subnetprotocol}.
The main idea is simple: if a party $\party$ wants to join a subnet $\subnetid$ via a party $\hat{\party}$, it first announces itself to that party.
Then, $\hat{\party}$ forwards this information to all parties in the subnet that it knows.
In this way, if $\party$ joins at time $\tau$, all parties that $\hat{\party}$ knows will learn about $\party$ at latest at the beginning of round $\tau+3$.
Further, we need to establish that $\party$ learns from $\hat{\party}$ about all other parties in the subnet.
In a naive approach, $\hat{\party}$ would immediately respond with all peers in the subnet that it knows.
However, this approach fails if multiple parties join simultaneously (either all via $\hat{\party}$ or via different parties).
To solve this, $\party$ must give $\hat{\party}$ enough time to learn about all parties that joined around time $\tau$.
It turns out that waiting until $\tau+4$ before pulling the list of peers from $\hat{\party}$ is enough.

\begin{figure*}[hbt!]
    \centering
    \begin{tikzpicture}

        % Define parameters
        \def\timelineHeight{0.4}
        \def\timeslotWidth{1.2}
        \def\nTimeslotsMinusOne{7}
        \def\nTimeslots{8}

        % Define Y positions for the four parties
        \def\yP{3}
        \def\yPhat{2.3}
        \def\yPhatPrime{1.6}
        \def\yPPrime{0.9}

        % Draw timelines
        \foreach \y/\label in {\yP/$\party$, \yPhat/$\hat{\party}$, \yPhatPrime/$\hat{\party}'$, \yPPrime/$\party'$} {
            \draw[thick, fill=gray!20] (0,\y) rectangle (\nTimeslots*\timeslotWidth, \y+\timelineHeight);
            \node[left] at (-0.2, \y+0.2) {\label};
        }

        % Draw vertical time slot divisions and labels
        \foreach \x in {0,1,...,\nTimeslotsMinusOne} {
            \draw[dashed] (\x*\timeslotWidth, \yP+\timelineHeight) -- (\x*\timeslotWidth, \yPPrime);
            \node[above,font=\scriptsize] at (\x*\timeslotWidth+0.6, \yP+0.7) {$\tau_{\mathsf{join}}+\x$};
        }

        % P -> hat{P} join subnet message
        \draw[->, thick] (0.6*\timeslotWidth, \yP+0.2)
            to[out=-40, in=120] (1.6*\timeslotWidth, \yPhat+0.2);

        % hat{P} forward messages
        \draw[->, thick] (1.68*\timeslotWidth, \yPhat+0.2)
            to (2.19*\timeslotWidth, \yPhat-0.14);
        \draw[->, thick] (1.68*\timeslotWidth, \yPhat+0.2)
            to (2*\timeslotWidth, \yPhat-0.2);
        \draw[->, thick] (1.68*\timeslotWidth, \yPhat+0.2)
            to (1.75*\timeslotWidth, \yPhat-0.19);

        % Pull message
        \draw[->, thick] (4.1*\timeslotWidth, \yP+0.2)
            to (5.3*\timeslotWidth, \yPhat+0.2);
        \draw[->, thick] (5.6*\timeslotWidth, \yPhat+0.2)
            to (6.6*\timeslotWidth, \yP+0.2);

        % P' -> hat{P}' message at tau + 2 (arrives at tau + 3)
        \draw[->, thick] (2.6*\timeslotWidth, \yPPrime+0.2)
            to[out=40, in=-120] (3.6*\timeslotWidth, \yPhatPrime+0.2);

        % hat{P}' forwards messages (one to hat{P} at tau + 4)
        \draw[->, thick] (3.68*\timeslotWidth, \yPhatPrime+0.2)
            to (4.19*\timeslotWidth, \yPhatPrime+0.34);
        \draw[->, thick] (3.68*\timeslotWidth, \yPhatPrime+0.2)
            to (4*\timeslotWidth, \yPhatPrime+0.21);
        \draw[->, thick] (3.68*\timeslotWidth, \yPhatPrime+0.2)
            to[out=20, in=-140] (4.6*\timeslotWidth, \yPhat+0.2); % Arrives at hat{P} at tau + 4

        % Pull message P' -> hat{P}' at tau + 6 (arrives at tau + 7)
        \draw[->, thick] (6.6*\timeslotWidth, \yPPrime+0.2)
            to (7.6*\timeslotWidth, \yPhatPrime+0.2);

        % Response from hat{P}' back to P' at tau + 7
        \draw[->, thick] (7.68*\timeslotWidth, \yPhatPrime+0.2)
            to (8.2*\timeslotWidth, \yPPrime+0.2);
    \end{tikzpicture}
    \caption{Visualization of joining a subnet in our simple subnet discovery protocol $\subnetprot$. Party $\party$ joins the subnet at time $\tau_{\mathsf{join}}$ via party $\hat{\party}$. It sends a message to $\hat{\party}$, and then $\hat{\party}$ announces to other parties in the subnet that $\party$ joined. At $\tau_{\mathsf{join}}+4$, party $\party$ pulls all peers in the subnet from $\hat{\party}$. By that time, $\hat{\party}$ will have learned about $\party'$ from $\hat{\party}'$, assuming that $\hat{\party}$ was one of the peers of $\hat{\party}'$ at time $\tau_{\mathsf{join}}+3$. $\party'$ joined the subnet via $\hat{\party}'$ at time $\tau_{\mathsf{join}}+2$.}
    \label{fig:tikz_grid_subnprot}
\end{figure*}

\begin{figure*}[!htb]
    \resizebox{\textwidth}{!}{
    \centering
    \nicoresetlinenr
    \noindent\fbox{\parbox{1.1\textwidth}{
        \centering
        \begin{minipage}[t]{0.5\textwidth}%
            \underline{\textbf{Interface} $\iInit(\mathbf{P}, \AuxRole_1,\ldots,\AuxRole_{\ell})$}
            \begin{nicodemus}
                \smallskip
                \item[]\textit{// nothing to do here}
            \end{nicodemus}
            \medskip\noindent
            \underline{\textbf{Interface} $\iJoin(\{\party_1,\ldots,\party_t\},\AuxRole)$}
            \begin{nicodemus}
                \smallskip
                \item[]\textit{// initialize all subnets locally}
                \item $\pcfor \subnetid \in [\numberofsubnets]\colon~\subnetmap[\subnetid] := \emptyset$
            \end{nicodemus}
            \medskip\noindent
            \underline{\textbf{Interface} $\iCreateSubnet(\subnetid, \mathbf{P})$}
            \begin{nicodemus}
                \item $\subnetmap[\subnetid] \coloneqq  \subnetmap[\subnetid] \union \mathbf{P}$
            \end{nicodemus}
            \medskip\noindent
            \underline{\textbf{Interface} $\iJoinSubnet(\subnetid, \hat{\party})$}
            \begin{nicodemus}
                \item $\subnetmap[\subnetid] \coloneqq \subnetmap[\subnetid]  \union \{\self,\hat{\party}\}$
                \item \textbf{send} $(\msgJoinSubnet,\subnetid)$ \textbf{to} $\hat{\party}$ (\textbf{say, in round $\tau$})
                \item \textbf{in round $\tau+4$: send} $(\msgJoinSubnetPull,\subnetid)$ \textbf{to} $\hat{\party}$
                \item \textbf{receive} $(\msgJoinSubnetPullRsp,\subnetid,\mathbf{P})$ \textbf{from} $\hat{\party}$
                \item $\subnetmap[\subnetid] \coloneqq  \subnetmap[\subnetid] \union \mathbf{P}$
            \end{nicodemus}
        \end{minipage}\qquad
        \begin{minipage}[t]{0.5\textwidth}%
            \underline{\textbf{On Message} $(\msgJoinSubnet,\subnetid)$ \textbf{from} $\party$}
            \begin{nicodemus}
                \item $\mathbf{P} \coloneqq \subnetmap[\subnetid]$
                \item $\subnetmap[\subnetid] \coloneqq \subnetmap[\subnetid]  \union \{\party\}$
                \smallskip
                \item[]\textit{// forward the party (in parallel)}
                \item $\pcfor \party'' \in \mathbf{P}\colon$ \textbf{send} $(\msgJoinSubnetFwd,\subnetid,\party)$  \textbf{to} $\party''$
            \end{nicodemus}
            \medskip\noindent
            \underline{\textbf{On Message} $(\msgJoinSubnetPull,\subnetid)$ \textbf{from} $\party$}
            \begin{nicodemus}
                \item $\mathbf{P} \coloneqq \subnetmap[\subnetid]$
                \smallskip
                \item[]\textit{// respond to the party}
                \item \textbf{send} $(\msgJoinSubnetPullRsp,\subnetid,\mathbf{P})$ \textbf{to} $\party$
            \end{nicodemus}
            \medskip\noindent
            \underline{\textbf{On Message} $(\msgJoinSubnetFwd,\subnetid,\party)$ \textbf{from} $\hat{\party}$}
            \begin{nicodemus}
                \item $\subnetmap[\subnetid]  \coloneqq \subnetmap[\subnetid] \union \{\party\}$
            \end{nicodemus}
            \medskip\noindent
            \underline{\textbf{Interface} $\iGetPeers(\subnetid)$}
            \begin{nicodemus}
                \item $\pcreturn \subnetmap[\subnetid]$
            \end{nicodemus}
        \end{minipage}
    }}}
    \caption{A subnet discovery protocol $\subnetprot$ for $\numberofsubnets$ subnets. The party that is executing the instructions is referred to as $\self$.}
    \label{fig:subnetprotocol}
\end{figure*}

\superparagraph{Robustness Analysis.}
In the next definition, we introduce a shorthand notation to denote that a party $\party$ knows another party $\party'$ as part of a subnet $\subnetid$ at time $\tau$.
Note that the notation represents an event in the experiment of running the protocol.
\begin{definition}[Peer Notation]
    Consider $\subnetprot$ as defined in \cref{fig:subnetprotocol}.
    Let $\party,\party'$ be parties, let $\tau \geq 0$ be a point in time, and let $\subnetid \in [\numberofsubnets]$.
    Then, we write $\partyknows{\party}{\party'}{\tau}{\subnetid}$ to denote the event that at time $\tau$, $\party$ is active and honest, and the set $\subnetmap[\subnetid]$ locally stored at party $\party$ contains\footnote{We mean it is contained in the set over the \emph{entire} round.} $\party'$.
    \label{def:simplesubnetprot:peernotation}
\end{definition}
With this notation, we can now make several observations.
The first observation is that if a party knows another party at some point, it will also know this party in the future.
This holds because parties never remove elements from sets $\subnetmap[\subnetid]$.
The second observation also follows trivially from our definition of the protocol and the definition of the event.
It relates the creation of new subnets and our peer notation.
The third trivial observation that we make is how our notation relates to interface $\iGetPeers$.
Our fourth observation is about the mechanics of joining a subnet. It can easily be seen from the definition of the protocol.
\begin{proposition}[Peer Preservation]
    Let $\party,\party'$ be honest parties, let $\tau \geq 0$ be a point in time, and let $\subnetid \in [\numberofsubnets]$.
    Conditioned on the event that at time $\tau$ and $\tau+1$, $\party$ is active, we have (with probability $1$):
    \[
        {\partyknows{\party}{\party'}{\tau}{\subnetid} \implies \partyknows{\party}{\party'}{\tau+1}{\subnetid}}.
    \]
    \label{prop:simplesubnetprot:peerpreservation}
\end{proposition}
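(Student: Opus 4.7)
The plan is to prove this by direct inspection of the protocol code in \cref{fig:subnetprotocol}. The statement is essentially a monotonicity invariant on the local set $\subnetmap[\subnetid]$ maintained by party $\party$.

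First, I would enumerate every location in the protocol where $\subnetmap[\subnetid]$ is modified on a given party. Scanning \cref{fig:subnetprotocol}, the modifications are: (i) the initialization $\subnetmap[\subnetid] \coloneqq \emptyset$ inside $\iJoin$, which happens exactly once when the party joins (and in particular strictly before the party is ever active with $\subnetmap[\subnetid]$ populated); (ii) the assignment $\subnetmap[\subnetid] \coloneqq \subnetmap[\subnetid] \union \mathbf{P}$ inside $\iCreateSubnet$; (iii) the assignment $\subnetmap[\subnetid] \coloneqq \subnetmap[\subnetid] \union \{\self,\hat{\party}\}$ and the later $\subnetmap[\subnetid] \coloneqq \subnetmap[\subnetid] \union \mathbf{P}$ inside $\iJoinSubnet$; (iv) the assignment $\subnetmap[\subnetid] \coloneqq \subnetmap[\subnetid] \union \{\party\}$ upon receiving $(\msgJoinSubnet, \subnetid)$; and (v) the assignment $\subnetmap[\subnetid] \coloneqq \subnetmap[\subnetid] \union \{\party\}$ upon receiving $(\msgJoinSubnetFwd, \subnetid, \party)$. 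All of these, apart from the one-time initialization in $\iJoin$, are monotone: they can only add elements, never remove them.

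Next, I would formalize the ``one-time initialization'' caveat. By definition of $\partyknows{\party}{\party'}{\tau}{\subnetid}$ (\cref{def:simplesubnetprot:peernotation}), this event already requires $\party$ to be honest and active at time $\tau$. If $\party$ is one of the initial parties, then $\subnetmap[\cdot]$ is never reinitialized during its lifetime (the $\iInit$ interface of $\subnetprot$ does nothing, as seen in \cref{fig:subnetprotocol}). If $\party$ joined via $\iJoin$, then the initialization happens in its single joining round before $\party$ is considered active in the protocol, and since honest parties never re-join (cf.\ \cref{sec:prelims:model}), no subsequent reset can occur. In both cases, between time $\tau$ and time $\tau+1$ only monotone updates of the form $\subnetmap[\subnetid] \coloneqq \subnetmap[\subnetid] \union S$ are applied.

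Finally, I would conclude: since $\party'\in\subnetmap[\subnetid]$ at (the entirety of) round $\tau$ and only monotone set-union updates occur during round $\tau$ or at the boundary to round $\tau+1$, we have $\party' \in \subnetmap[\subnetid]$ for the entirety of round $\tau+1$ as well. By the activity assumption at $\tau+1$ and honesty (which persists by our static corruption model), this is exactly $\partyknows{\party}{\party'}{\tau+1}{\subnetid}$. The only mild obstacle is being careful about the ``over the entire round'' subtlety in \cref{def:simplesubnetprot:peernotation}; but since every update is a union (never a removal or overwrite), membership at the start of round $\tau+1$ is inherited from membership at the end of round $\tau$, so the invariant holds throughout round $\tau+1$ with probability $1$.
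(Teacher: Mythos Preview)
Your proposal is correct and takes essentially the same approach as the paper: the paper justifies this proposition in a single line by observing that parties never remove elements from $\subnetmap[\subnetid]$, and your enumeration of all update sites in \cref{fig:subnetprotocol} simply makes this monotonicity observation explicit. Your treatment of the one-time initialization in $\iJoin$ (combined with the no-rejoin assumption) is a nice bit of extra care that the paper leaves implicit.
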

\begin{proposition}[Subnet Creation]
    Let $\party,\party'$ be honest parties active at time $\tau = 0$ and $\mathbf{P}$ be a set with $\party, \party' \in \mathbf{P}$. Let $\subnetid \in [\numberofsubnets]$.
    Then, conditioned on event $\EventCreatedSubnet{\subnetid}{\mathbf{P}}$, we have (with probability $1$): \[
        \partyknows{\party}{\party'}{0}{\subnetid} \land \partyknows{\party'}{\party}{0}{\subnetid}.
    \]
    \label{prop:simplesubnetprot:initial}
\end{proposition}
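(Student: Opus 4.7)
The plan is to verify the claim by straightforward unfolding of definitions, with no probabilistic argument needed (which is why the claim holds with probability $1$). My first step would be to invoke the definition of $\EventCreatedSubnet{\subnetid}{\mathbf{P}}$ from \cref{def:subnetdiscovery:robustness}: it stipulates that at time $\tau=0$, every party in $\mathbf{P}$ is honest and calls $\iCreateSubnet(\subnetid,\mathbf{P})$. Since $\party,\party'\in\mathbf{P}$, both execute this call in round $0$. Reading the pseudocode of $\iCreateSubnet$ in \cref{fig:subnetprotocol}, this performs $\subnetmap[\subnetid] \coloneqq \subnetmap[\subnetid] \cup \mathbf{P}$. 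Because $\party$ and $\party'$ are initial parties, their local maps start empty (by the convention stated in \cref{sec:prelims:model}), so after the call, $\party$'s map contains $\party'$ and $\party'$'s map contains $\party$.

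My second step is to discharge the ``over the entire round'' clause in \cref{def:simplesubnetprot:peernotation}, which is where the only real care is needed. I would argue this by ruling out any other action in round $0$ that could disturb the relevant entries. Inspection of the pseudocode shows that no instruction ever \emph{removes} an element from $\subnetmap[\cdot]$, so once the entry is present within round $0$, it remains present. To cover the portion of round $0$ before the $\iCreateSubnet$ calls execute, I would use that $\EventCreatedSubnet{\subnetid}{\mathbf{P}}$ explicitly forbids any honest $\iJoinSubnet$ call and any other $\iCreateSubnet$ call with this $\subnetid$ at $\tau=0$, and the $\iInit$ interface of $\subnetprot$ is a no-op. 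Moreover, by our synchronous network model, any message from a malicious party is delivered in round $1$ at the earliest, so no message handler can fire in round $0$ that could affect $\subnetmap$. Hence the only events touching $\subnetmap[\subnetid]$ at either party during round $0$ are the two $\iCreateSubnet$ calls themselves, both of which monotonically add $\mathbf{P}$; the required membership therefore holds throughout round $0$.

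Finally, the honesty and activity requirements built into $\partyknowsNoArgs$ are immediate: honesty is guaranteed by the event and activity at $\tau=0$ is part of the hypothesis. Combining these pieces yields both $\partyknows{\party}{\party'}{0}{\subnetid}$ and $\partyknows{\party'}{\party}{0}{\subnetid}$. I do not expect any real obstacle here; the only step that requires more than rote unfolding is the round-$0$ stability argument, and that reduces to reading off what the event definition excludes together with the synchrony assumption that cross-party messages take at least one round.
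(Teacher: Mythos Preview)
Your proposal is correct and matches the paper's approach: the paper states this proposition as one of several observations that ``follow trivially from our definition of the protocol and the definition of the event'' and provides no explicit proof. Your unfolding of the definitions of $\EventCreatedSubnet{\subnetid}{\mathbf{P}}$, the $\iCreateSubnet$ pseudocode, and the peer notation is exactly the intended argument; if anything, your treatment of the ``over the entire round'' footnote is more careful than the paper bothers to be.
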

\begin{proposition}[Get Peers]
    Let $\party,\party'$ be honest parties, let $\tau \geq 0$ be a point in time at which $\party$ is active, and let $\subnetid \in [\numberofsubnets]$. Then, with probability $1$, we have:
    \[
        \left(\partyknows{\party}{\party'}{\tau}{\subnetid} \land \EventCalledGetPeers{\subnetid}{\party}{\tau}\right) \implies \EventGotPeer{\subnetid}{\party}{\party'}{\tau}.
    \]
    \label{prop:simplesubnetprot:callgetpeers}
\end{proposition}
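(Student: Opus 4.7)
The statement to prove, Proposition \ref{prop:simplesubnetprot:callgetpeers}, is essentially an immediate consequence of unfolding the definitions, so my plan is simply to justify each implication step and verify there is no subtle timing issue.

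First I would recall the meaning of the two assumed events. By Definition \ref{def:simplesubnetprot:peernotation}, the event $\partyknows{\party}{\party'}{\tau}{\subnetid}$ states that $\party$ is honest and active at time $\tau$, and that the locally stored set $\subnetmap[\subnetid]$ at $\party$ contains $\party'$ throughout round $\tau$ (the footnote in the definition explicitly says \emph{over the entire round}, which is precisely the guarantee we will need for the pending $\iGetPeers$ call to see $\party'$). The event $\EventCalledGetPeers{\subnetid}{\party}{\tau}$ from Definition \ref{def:subnetdiscovery:robustness} states that $\party$ actually calls $\iGetPeers(\subnetid)$ at time $\tau$.

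Next I would inspect the code of $\iGetPeers$ in Figure \ref{fig:subnetprotocol}: the interface consists of the single instruction $\pcreturn \subnetmap[\subnetid]$. So the call returns exactly the locally stored set $\subnetmap[\subnetid]$ as it is at the moment of the call. Since the call happens within round $\tau$ and $\party' \in \subnetmap[\subnetid]$ throughout that round, $\party'$ is contained in the returned set $\mathbf{P}$. By Definition \ref{def:subnetdiscovery:robustness}, this is precisely what $\EventGotPeer{\subnetid}{\party}{\party'}{\tau}$ means, finishing the proof.

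The only slight subtlety — and the one I would make explicit — is the timing: the value returned by $\iGetPeers$ is the value of $\subnetmap[\subnetid]$ read during round $\tau$, and we must know that $\party'$ is in this set at the read moment. This is guaranteed by the ``over the entire round'' convention in Definition \ref{def:simplesubnetprot:peernotation}, which is why no extra argument (e.g. about message processing order within the round) is required. With that observed, the implication holds with probability $1$, independently of the random oracle and the adversary.
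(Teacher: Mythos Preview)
Your proof is correct and matches the paper's treatment: the paper does not give an explicit proof but simply notes this as a ``trivial observation'' about how the peer notation relates to $\iGetPeers$, which is exactly what you establish by unfolding the definitions and inspecting the one-line code of $\iGetPeers$. Your remark about the ``over the entire round'' convention in Definition~\ref{def:simplesubnetprot:peernotation} is the only point worth making explicit, and you handle it correctly.
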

\begin{proposition}[Join Timings]
    Let $\party,\party',\hat{\party}$ be honest parties, let ${\tau} \geq 0$ be a point in time, and let $\subnetid \in [\numberofsubnets]$.
    Condition on $\party$ calling $\iJoinSubnet(\subnetid, \hat{\party})$ at time ${\tau}$ and $\hat{\party}$ being active at time $\tau$.
    Then, with probability $1$, we have the following two:
    \begin{align*}
        \EventActive{\party}{\tau+7} \land \partyknows{\hat{\party}}{\party'}{{\tau}+5}{\subnetid} &\implies \partyknows{{\party}}{\party'}{{\tau}+7}{\subnetid}, \text{ and } \\
        \EventActiveDuration{\party'}{(\tau+2)}{(\tau+3)} \land \partyknows{\hat{\party}}{\party'}{{\tau}+1}{\subnetid} &\implies \partyknows{{\party}'}{\party}{{\tau}+3}{\subnetid}.
    \end{align*}
    \label{proposition:simplesubnetprot:joinmechanics}
\end{proposition}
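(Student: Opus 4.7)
The plan is to unfold the pseudocode of $\iJoinSubnet$ in \cref{fig:subnetprotocol} and trace each message through the synchronous one-round-delay network; the monotonicity of $\subnetmap[\subnetid]$ (captured by \cref{prop:simplesubnetprot:peerpreservation}) will then carry membership forward once it has been established. Both implications are purely deterministic timing arguments, and no probability enters. The main subtlety to respect will be the \emph{entire round} convention of \cref{def:simplesubnetprot:peernotation}: since insertions into $\subnetmap[\subnetid]$ happen \emph{during} a round, the earliest round in which $\partyknows{\cdot}{\cdot}{\cdot}{\subnetid}$ can hold for a freshly inserted peer is the round \emph{after} the insertion. This is what forces the $+2$ gap in each implication (send in round $t$, receive-and-insert in $t+1$, be-in-the-set-throughout starting at $t+2$).

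For the first implication, I would argue as follows. By the $\iJoinSubnet$ code, party $\party$ sends $(\msgJoinSubnetPull,\subnetid)$ to $\hat{\party}$ in round $\tau+4$, so the message arrives at $\hat{\party}$ in round $\tau+5$. The hypothesis $\partyknows{\hat{\party}}{\party'}{\tau+5}{\subnetid}$ ensures that $\hat{\party}$ is active in round $\tau+5$ and that $\party'$ lies in $\hat{\party}$'s local $\subnetmap[\subnetid]$ throughout that round; hence when the $\msgJoinSubnetPull$-handler executes and sets $\mathbf{P}\coloneqq\subnetmap[\subnetid]$, we have $\party'\in\mathbf{P}$. Then $\hat{\party}$ sends $(\msgJoinSubnetPullRsp,\subnetid,\mathbf{P})$ back to $\party$ in round $\tau+5$, to be delivered in round $\tau+6$. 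By assumption $\EventActive{\party}{\tau+7}$, and since activity is an interval, $\party$ is also active in $\tau+6$, so it processes the response and inserts $\party'$ into $\subnetmap[\subnetid]$ during round $\tau+6$. By monotonicity of $\subnetmap[\subnetid]$, $\party'$ then remains in the set throughout round $\tau+7$, which gives $\partyknows{\party}{\party'}{\tau+7}{\subnetid}$.

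The second implication follows the same template one hop earlier. The call to $\iJoinSubnet(\subnetid,\hat{\party})$ at time $\tau$ causes $\party$ to send $(\msgJoinSubnet,\subnetid)$ to $\hat{\party}$ in round $\tau$, arriving in round $\tau+1$. Using $\partyknows{\hat{\party}}{\party'}{\tau+1}{\subnetid}$ exactly as above, $\party'\in\mathbf{P}$ when $\hat{\party}$ runs the $\msgJoinSubnet$-handler in round $\tau+1$, so $\hat{\party}$ sends $(\msgJoinSubnetFwd,\subnetid,\party)$ to $\party'$ in round $\tau+1$, to be delivered in round $\tau+2$. The hypothesis $\EventActiveDuration{\party'}{\tau+2}{\tau+3}$ ensures $\party'$ is present to process this forward and insert $\party$ into its own $\subnetmap[\subnetid]$ during $\tau+2$; activity at $\tau+3$ combined with monotonicity then yields $\partyknows{\party'}{\party}{\tau+3}{\subnetid}$. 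Beyond the entire-round accounting highlighted above, I do not foresee any serious obstacle.
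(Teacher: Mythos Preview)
Your proposal is correct and matches the paper's approach: the paper does not spell out a proof at all, stating only that the proposition ``can easily be seen from the definition of the protocol,'' and your message-by-message trace through the $\iJoinSubnet$ pseudocode is precisely that unfolding. The one point worth noting is that you correctly used the activity-interval property to ensure $\party$ is still around at $\tau+4$ to send the $\msgJoinSubnetPull$ message (since $\party$ is active at both $\tau$ and $\tau+7$); this is the only nontrivial check beyond reading off the handlers, and you handled it.
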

We will now come to our central lemma, which we then use to show robustness. Before giving the lemma, we introduce some useful terminology, namely, what it means that a node joins a subnet properly. This is exactly as in the definition of event $\EventIsInSubnetNoArgs$.
\begin{definition}[Joining Subnets Properly]
    Let $\SubnetDelay = 7$. Let $\subnetid \in [\numberofsubnets]$.
    Let $\party$ be an honest party and $\tau \geq 0$ be a point in time.
    Then, we say that \emph{party $\party$ joins the subnet $\subnetid$ properly at time $\tau$} if (a) $\tau = 0$ and $\EventCreatedSubnet{\subnetid}{\mathbf{P}}$ such that $\party \in \mathbf{P}$ or (b) $\party$ calls $\iJoinSubnet(\subnetid, \hat{\party})$ at time $\tau$ for an honest party $\hat{\party}$ with $\EventStaysInSubnetDelay{\subnetid}{\hat{\party}}{\tau}{(\tau+\SubnetDelay)}$. In case (b) we also say that $\party$ joins properly \emph{via} $\hat{\party}$.
\end{definition}
\begin{lemma}[Peer Propagation]
    Let $\party,\party'$ be honest parties and let $\subnetid \in [\numberofsubnets]$. Assume that \begin{itemize}
        \item No honest party calls interface $\iJoinSubnet$ with this $\subnetid$ in rounds $0$ to $7$ (inclusive), and
        \item Party $\party$ joins the subnet $\subnetid$ properly at time $\tau_{\mathsf{join}}$, and
        \item party $\party'$ joins the subnet $\subnetid$ properly at time $\tau_{\mathsf{join}}'$ and is active at a point $\tau \geq \tau_{\mathsf{join}}'$.
    \end{itemize}
    Further, assume that at least one of the following three conditions hold:
    \begin{enumerate}[label=(\alph*)]
        \item\label{item:simplesubnetprot:centrallemma:conda} $\tau \geq \tau_{\mathsf{join}}+3 \land \tau_{\mathsf{join}}' \leq \tau_{\mathsf{join}}-3$, or
        \item\label{item:simplesubnetprot:centrallemma:condb} $\tau \geq \tau_{\mathsf{join}}+3 \land \tau_{\mathsf{join}}' \leq \tau - 7$, or
        \item\label{item:simplesubnetprot:centrallemma:condc} $\tau_{\mathsf{join}} = \tau_{\mathsf{join}}' = 0$.
    \end{enumerate}
    Then, we have $\partyknows{{\party}'}{\party}{{\tau}}{\subnetid}$.
    \label{lemma:simplesubnetprot:centrallemma}
\end{lemma}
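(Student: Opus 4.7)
The plan is to prove the lemma by strong induction on the lexicographic pair $(\tau_{\mathsf{join}}, \tau)$, reducing each inductive case to a strictly smaller instance of the lemma itself via Propositions~\ref{prop:simplesubnetprot:peerpreservation}--\ref{proposition:simplesubnetprot:joinmechanics}. The base case is condition (c), where both parties belong to the same initial set $\mathbf{P}$ by uniqueness of $\EventCreatedSubnet{\subnetid}{\mathbf{P}}$: \cref{prop:simplesubnetprot:initial} yields $\partyknows{\party'}{\party}{0}{\subnetid}$, and \cref{prop:simplesubnetprot:peerpreservation} propagates this to any later $\tau$ at which $\party'$ is still active.

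For condition (a) ($\tau_{\mathsf{join}}' \leq \tau_{\mathsf{join}} - 3$ and $\tau \geq \tau_{\mathsf{join}} + 3$), the global assumption that no honest $\iJoinSubnet$ call happens in rounds $0$ through $7$ forces $\tau_{\mathsf{join}} \geq 8$, so $\party$ joined via some honest bootstrap $\hat{\party}$ with $\EventStaysInSubnetDelay{\subnetid}{\hat{\party}}{\tau_{\mathsf{join}}}{(\tau_{\mathsf{join}}+7)}$. Activity of $\party'$ on $[\tau_{\mathsf{join}}+2,\tau_{\mathsf{join}}+3]$ follows from activity being an interval, using that $\party'$ is active at $\tau_{\mathsf{join}}' \leq \tau_{\mathsf{join}}+2$ and at $\tau \geq \tau_{\mathsf{join}}+3$. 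The second implication of \cref{proposition:simplesubnetprot:joinmechanics} then reduces the obligation to $\partyknows{\hat{\party}}{\party'}{\tau_{\mathsf{join}}+1}{\subnetid}$, after which \cref{prop:simplesubnetprot:peerpreservation} extends the conclusion to $\tau$. Unrolling $\EventStaysInSubnetDelay$, $\hat{\party}$ is either initial or joined at some $\hat{\tau} \leq \tau_{\mathsf{join}} - 7$; either way this is a recursive instance of the lemma whose target is $\party'$, with inner target join time $\tau_{\mathsf{join}}' < \tau_{\mathsf{join}}$, strictly decreasing the lex pair.

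For condition (b), the sub-case $\tau_{\mathsf{join}}' \leq \tau_{\mathsf{join}} - 3$ reduces directly to the argument for (a). The remaining sub-case is $\tau_{\mathsf{join}}' \geq \tau_{\mathsf{join}} - 2$; excluding (c), one verifies $\tau_{\mathsf{join}}' \geq 8$, so $\party'$ joined via some honest $\hat{\party}'$ with $\EventStaysInSubnetDelay{\subnetid}{\hat{\party}'}{\tau_{\mathsf{join}}'}{(\tau_{\mathsf{join}}'+7)}$, and $\party'$ is active at $\tau_{\mathsf{join}}'+7 \leq \tau$. The first implication of \cref{proposition:simplesubnetprot:joinmechanics} now reduces the goal to $\partyknows{\hat{\party}'}{\party}{\tau_{\mathsf{join}}'+5}{\subnetid}$, which is a recursive instance with the same target $\party$ but inner time $\tau_{\mathsf{join}}'+5 \leq \tau - 2 < \tau$, again strictly decreasing the lex pair.

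The main obstacle is verifying, at every recursive call, that the inner instance satisfies one of (a), (b), (c). When the relevant bootstrap is non-initial, inner condition (b) suffices via the $\EventStaysInSubnetDelay$-bound $\hat{\tau} \leq \tau_{\mathsf{join}} - 7$ (resp.\ $\hat{\tau}' \leq \tau_{\mathsf{join}}' - 7$). When the bootstrap is initial ($\hat{\tau} = 0$), inner (b) can fail; here I use inner (a) together with the hypothesis that no honest $\iJoinSubnet$ occurs in rounds $0$ through $7$, which forces the other party in the inner call to have join time $\geq 8$ whenever it is non-initial, hence $\hat{\tau} = 0 \leq \tau_{\mathsf{join}}' - 3$ (or $\leq \tau_{\mathsf{join}} - 3$). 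The degenerate case where both inner parties are initial falls directly under inner (c). Handling all four combinations uniformly is the bookkeeping-heavy heart of the argument; once carried out, everything else is a direct application of the stated propositions.
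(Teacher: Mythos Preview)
Your proposal is correct and follows essentially the same strategy as the paper: lexicographic induction with base case (c), handling (a) via the second implication of \cref{proposition:simplesubnetprot:joinmechanics} applied to $\party$'s bootstrap, and handling (b) via the first implication applied to $\party'$'s bootstrap. The only differences are organizational: the paper inducts on $(\tau,\tau_{\mathsf{join}})$ rather than your $(\tau_{\mathsf{join}},\tau)$ and splits case (b) by the value of $\tau-\tau_{\mathsf{join}}'$ rather than $\tau_{\mathsf{join}}-\tau_{\mathsf{join}}'$; also note that your worry about inner (b) failing when the bootstrap is initial is unnecessary, since in both recursive calls the relevant join time is already $\geq 8$, making $0 \leq \tilde{\tau}-7$ automatic.
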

\begin{proof}
    Before we start with the actual proof, we make the following useful observation, which follows from the assumption that no honest party joins in rounds $1$ to $7$:
    If a party $\party$ joins the subnet $\subnetid$ properly at time $\tau$, then \begin{myitemize}
        \item Either, $\party$ is one of the initial parties, i.e., $\tau = 0$ and event $\EventCreatedSubnet{\subnetid}{\mathbf{P}}$ occurs such that $\party \in \mathbf{P}$, or
        \item $\party$ calls $\iJoinSubnet(\subnetid, \hat{\party})$ at time $\tau$ for an honest party $\hat{\party}$ with $\EventStaysInSubnetDelay{\subnetid}{\hat{\party}}{\tau}{(\tau+\SubnetDelay)}$, \emph{where $\hat{\party}$ joined the subnet at some time $\leq \tau - 7$.}
    \end{myitemize}
    With this in mind, we will now prove the statement via induction on pairs $(\tau,\tau_{\mathsf{join}})$, with lexicographic ordering. Essentially, \ref{item:simplesubnetprot:centrallemma:condc} serves as our base case, and for the induction step we consider cases \ref{item:simplesubnetprot:centrallemma:conda} and \ref{item:simplesubnetprot:centrallemma:condb} separately.

    \medskip\noindent\emph{Assuming \ref{item:simplesubnetprot:centrallemma:condc} holds:} If $\tau_{\mathsf{join}} = \tau_{\mathsf{join}}' = 0$, then we know that event $\EventCreatedSubnet{\subnetid}{\mathbf{P}}$ such that $\party,\party' \in \mathbf{P}$ holds, and therefore (via \cref{prop:simplesubnetprot:initial,prop:simplesubnetprot:peerpreservation}), we get that $\partyknows{{\party}'}{\party}{{\tau}}{\subnetid}$.

    \medskip\noindent\emph{Induction Step, assuming \ref{item:simplesubnetprot:centrallemma:conda} holds:}
    Our induction hypothesis is that the lemma holds for all pairs $(\tilde{\tau},\tilde{\tau}_{\mathsf{join}})$ preceding $(\tau,\tau_{\mathsf{join}})$ in the lexicographic ordering.
    Assume that \ref{item:simplesubnetprot:centrallemma:conda} holds. Note that in case \ref{item:simplesubnetprot:centrallemma:conda} we cannot have $\tau_{\mathsf{join}} = 0$, as otherwise $\tau'_{\mathsf{join}} < 0$.
    So, we can assume that $\tau_{\mathsf{join}} > 0$.
    By the definition of joining a subnet properly and our observation above, we therefore know that $\party$ calls $\iJoinSubnet(\subnetid, \hat{\party})$ at time $\tau_{\mathsf{join}}$ for an honest party $\hat{\party}$ with $\EventStaysInSubnetDelay{\subnetid}{\hat{\party}}{\tau_{\mathsf{join}}}{(\tau_{\mathsf{join}}+\SubnetDelay)}$, where $\hat{\party}$ joined the subnet properly at some time $\hat{\tau}_{\mathsf{join}}\leq \tau_{\mathsf{join}} - 7$.
    Our claim is that $\partyknows{\hat{\party}}{\party'}{{\tau}_{\mathsf{join}}+1}{\subnetid}$, which then implies  $\partyknows{{\party}'}{\party}{{\tau}_{\mathsf{join}}+3}{\subnetid}$ via \cref{proposition:simplesubnetprot:joinmechanics}, and then $\partyknows{{\party}'}{\party}{{\tau}}{\subnetid}$ because of $\tau \geq \tau_{\mathsf{join}}+3$ and \cref{prop:simplesubnetprot:peerpreservation}.

    We can show this claim using the induction hypothesis applied to
    \footnote{Values with a tilde indicate that we use the induction hypothesis on these, e.g., $\tilde{\party}$ takes the role of $\party$ in the lemma when applying the hypothesis.}
    \begin{align*}
        \tilde{\party} := \party',~\tilde{\party}' := \hat{\party},~\tilde{\tau}_{\mathsf{join}} := \tau_{\mathsf{join}}',~\tilde{\tau}_{\mathsf{join}}' := \hat{\tau}_{\mathsf{join}},~\tilde{\tau} := {\tau}_{\mathsf{join}}+1.
    \end{align*}
    To see why we can apply the induction hypothesis, first observe that $\tilde{\party}' = \hat{\party}$ is active at $\tilde{\tau} = {\tau}_{\mathsf{join}}+1$, because $\EventStaysInSubnetDelay{\subnetid}{\hat{\party}}{\tau_{\mathsf{join}}}{(\tau_{\mathsf{join}}+\SubnetDelay)}$.
    Second, observe that $$(\tilde{\tau},\tilde{\tau}_{\mathsf{join}}) = ({\tau}_{\mathsf{join}}+1,\tau_{\mathsf{join}}') \leq ({\tau}-2,\tau_{\mathsf{join}}-3) < (\tau,\tau_{\mathsf{join}}).$$
    Third, one can see that condition \ref{item:simplesubnetprot:centrallemma:condb} is met:
    \begin{align*}
        &\tilde{\tau} = {\tau}_{\mathsf{join}}+1  > {\tau}_{\mathsf{join}} - 3 + 3  \geq \tau_{\mathsf{join}}' + 3 = \tilde{\tau}_{\mathsf{join}} +3, \\
        \text{and } & \tilde{\tau}_{\mathsf{join}}' = \hat{\tau}_{\mathsf{join}} \leq \tau_{\mathsf{join}} - 7
        \leq ({\tau}_{\mathsf{join}} +1) - 7 = \tilde{\tau} - 7
    \end{align*}

    \medskip\noindent\emph{Induction Step, assuming \ref{item:simplesubnetprot:centrallemma:condb} holds:}
    Again, our induction hypothesis is that the lemma holds for all pairs $(\tilde{\tau},\tilde{\tau}_{\mathsf{join}})$ preceding $(\tau,\tau_{\mathsf{join}})$ in the lexicographic ordering.
    Now, we assume that \ref{item:simplesubnetprot:centrallemma:condb} holds.
    First, note that if $\tau = \tau_{\mathsf{join}} + 3$, then one can see that the conditions in \ref{item:simplesubnetprot:centrallemma:conda} are met, and we have already proven this case.
    So, we can assume that $\tau \geq \tau_{\mathsf{join}} + 4$. Now, if $\tau_{\mathsf{join}}' \leq \tau - 8$, then \ref{item:simplesubnetprot:centrallemma:condb} holds for the same set of parties and $\tilde{\tau} := \tau-1$, so we can apply induction directly. Therefore, we can make the following assumptions: \[
        \tau \geq \tau_{\mathsf{join}} + 4 \land \tau_{\mathsf{join}}' = \tau - 7.
    \]
    Now, consider the case in which $\tau_{\mathsf{join}}' = 0$ and thus $\tau = 7$.
    Then, if $\tau_{\mathsf{join}}=0$ as well, we are in case \ref{item:simplesubnetprot:centrallemma:condc} and we are done. If $\tau_{\mathsf{join}}>0$, then $\party$ joined properly via $\hat{\party}$ at $\tau_{\mathsf{join}} \leq \tau - 3 = 4$. But we assume that there is no join in rounds $1$ to $7$. Therefore, we can extend the assumptions that we make:
    \[
        \tau \geq \tau_{\mathsf{join}} + 4 \land \tau_{\mathsf{join}}' = \tau - 7 \land \tau_{\mathsf{join}}' > 0.
    \]
    This means that party $\party'$ joins the subnet $\subnetid$ properly at time $\tau_{\mathsf{join}}'$ via a party $\hat{\party}'$. Denote the time at which $\hat{\party}$ joined the subnet by $\hat{\tau}_{\mathsf{join}}'$. In the following, we consider two cases:

    \smallskip\noindent\emph{... Case b1: $\tau = \tau_{\mathsf{join}} + 4$.}
    We have already established that $\tau_{\mathsf{join}}' = \tau - 7$. In combination, this means that \[
        \tau = \tau_{\mathsf{join}} + 4 \geq \tau_{\mathsf{join}} + 3 \land \tau_{\mathsf{join}}' = \tau - 7 = (\tau - 4) - 3 = \tau_{\mathsf{join}} - 3.
    \]
    Therefore, this is a special case of \ref{item:simplesubnetprot:centrallemma:conda}, which we have already proven.

    \smallskip\noindent\emph{... Case b2: $\tau \geq \tau_{\mathsf{join}} + 5$.}
    In this case, we use the induction hypothesis to show that $\partyknows{\hat{\party}'}{\party}{{\tau}-2 = \tau_{\mathsf{join}}' +5}{\subnetid}$. Note that this implies our desired statement $\partyknows{{\party}'}{\party}{{\tau}}{\subnetid}$ via \cref{proposition:simplesubnetprot:joinmechanics}.
    Concretely, we apply the induction hypothesis to \begin{align*}
        \tilde{\party} := \party,~\tilde{\party}' := \hat{\party}',~\tilde{\tau}_{\mathsf{join}} := \tau_{\mathsf{join}},
        ~\tilde{\tau}_{\mathsf{join}}' := \hat{\tau}_{\mathsf{join}}',~\tilde{\tau} := {\tau}-2.
    \end{align*}
    Again, let us make explicit why we can use the induction hypothesis:
    For $\tilde{\party}' = \hat{\party}'$ we know that $\EventStaysInSubnetDelay{\subnetid}{\hat{\party}'}{{\tau}_{\mathsf{join}}'}{({\tau}_{\mathsf{join}}'+\SubnetDelay)}$. As $\tau - 2 = {\tau}_{\mathsf{join}}'+5$, this in particular means that $\hat{\party}'$ is active at $\tilde{\tau} = {\tau}-2.$
    Further, it is clear that $(\tilde{\tau},\tilde{\tau}_{\mathsf{join}}) = (\tau-2,\tau_{\mathsf{join}})$ lexicographically precedes $(\tau,\tau_{\mathsf{join}})$. Finally, the conditions of \ref{item:simplesubnetprot:centrallemma:condb} are satisfied as \begin{align*}
        &\tilde{\tau} = {\tau}-2 \geq \tau_{\mathsf{join}} + 5 - 2 = \tilde{\tau}_{\mathsf{join}} + 3, \\
        \text{and } & \tilde{\tau}_{\mathsf{join}}' = \hat{\tau}_{\mathsf{join}}' \leq {\tau}_{\mathsf{join}}' - 7
         = \tau - 7 - 7 \leq  \tau - 2 - 7 = \tilde{\tau} -7.
    \end{align*}
\end{proof}

\begin{theorem}[Robustness of $\subnetprot$]
    Consider the subnet discovery protocol $\subnetprot$ for $\numberofsubnets$ subnets as specified in \cref{fig:subnetprotocol}.
    Let $\AdmissibleSchedules$ be an arbitrary class of join-leave schedules. Let $\ProtocolLifetime \in \NN$ be arbitrary, and let $\RobustnessError = 0$ and $\SubnetDelay = 7$.
    Then, $\subnetprot$ is $(\RobustnessError,\ProtocolLifetime,\SubnetDelay,\AdmissibleSchedules)$-robust.
\end{theorem}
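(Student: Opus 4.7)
The plan is to reduce the required robustness implication to the ``peer knowledge'' event $\partyknows{\party}{\party'}{\tau}{\subnetid}$ and then invoke the central \cref{lemma:simplesubnetprot:centrallemma} together with the supporting propositions. First I would fix an adversary $\adversary$, a schedule $\JoinLeaveSchedule \in \AdmissibleSchedules$, a time $\tau \in [\ProtocolLifetime]$, a subnet $\subnetid$, and honest parties $\party, \party'$ such that the three antecedent events $\EventIsInSubnet{\subnetid}{\party}{\tau}$, $\EventIsInSubnet{\subnetid}{\party'}{\tau}$, and $\EventCalledGetPeers{\subnetid}{\party}{\tau}$ all occur. By \cref{prop:simplesubnetprot:callgetpeers}, the conclusion $\EventGotPeer{\subnetid}{\party}{\party'}{\tau}$ follows as soon as $\partyknows{\party}{\party'}{\tau}{\subnetid}$ has been established, so the entire proof reduces to producing this peer-knowledge event with probability $1$.

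Next I would unfold the recursive definition of $\EventStaysInSubnetDelay$ (with $\SubnetDelay = 7$) for each of $\party, \party'$. Since the recursion must eventually bottom out at a $\EventCreatedSubnet{\subnetid}{\mathbf{P}}$ event (otherwise $\EventIsInSubnetDelay$ cannot hold), such an event occurs and each party is either (i) in the initial set $\mathbf{P}$, in which case we associate it with $\tau_{\mathsf{join}} = 0$, or (ii) called $\iJoinSubnet(\subnetid,\cdot)$ at some $\tau_{\mathsf{join}} \geq 1$ with $\tau_{\mathsf{join}} \leq \tau - 7$ via an honest peer that itself properly joined earlier. In either case the party ``properly joins'' at its associated $\tau_{\mathsf{join}}$ in the sense required by \cref{lemma:simplesubnetprot:centrallemma}.

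I would then apply the central lemma with roles swapped, setting $\tilde{\party} := \party'$ and $\tilde{\party}' := \party$ (with join times $\tilde{\tau}_{\mathsf{join}} := \tau_{\mathsf{join}}'$ and $\tilde{\tau}_{\mathsf{join}}' := \tau_{\mathsf{join}}$), so that a successful application directly yields $\partyknows{\party}{\party'}{\tau}{\subnetid}$. If both parties are initial, condition (c) of the lemma holds immediately; for this branch one can alternatively derive the conclusion from \cref{prop:simplesubnetprot:initial} together with \cref{prop:simplesubnetprot:peerpreservation}. Otherwise at least one join time is positive, which forces $\tau \geq 8$, and condition (b) then applies: $\tau \geq \tau_{\mathsf{join}}' + 3$ (using $\tau \geq \tau_{\mathsf{join}}' + 7$ when $\party'$ is non-initial, and $\tau \geq 8 \geq 3$ otherwise) and $\tau_{\mathsf{join}} \leq \tau - 7$ (from case (ii) for $\party$, or $0 \leq \tau - 7$ when $\party$ is initial). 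Hence the lemma always delivers $\partyknows{\party}{\party'}{\tau}{\subnetid}$, finishing the proof.

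The main obstacle is discharging the extra hypothesis of the central lemma that no honest party calls $\iJoinSubnet$ with $\subnetid$ in rounds $0$ through $7$, which an arbitrary admissible schedule need not satisfy. My plan is to handle this by a separate auxiliary argument that absorbs such early joins into the base case: a party calling $\iJoinSubnet(\subnetid, \cdot)$ in rounds $[1,7]$ only becomes ``in the subnet'' from round $8$ onwards, so its interaction with the recursive ``properly joined'' chain can be analyzed by directly tracing the $\msgJoinSubnet$, $\msgJoinSubnetFwd$, $\msgJoinSubnetPull$, and $\msgJoinSubnetPullRsp$ exchanges of \cref{fig:subnetprotocol}. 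Concretely, I expect that a short induction on the number of such early joins (or equivalently, a re-examination of the induction step in the lemma's proof to accommodate $\tau_{\mathsf{join}} \in [1,7]$) reduces the theorem to the already-proved case where the lemma's hypothesis holds.
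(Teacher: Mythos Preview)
Your overall plan matches the paper's proof: reduce to $\partyknows{\party}{\party'}{\tau}{\subnetid}$ via \cref{prop:simplesubnetprot:callgetpeers}, unfold $\EventIsInSubnetDelay$ to obtain proper join times $\tau_{\mathsf{join}},\tau_{\mathsf{join}}' \leq \tau - 7$, and apply \cref{lemma:simplesubnetprot:centrallemma} under condition~(b) (or~(c) in the trivial initial case). That part is fine and essentially identical to the paper.

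The one place where you diverge is exactly the obstacle you flag: discharging the hypothesis that no honest party calls $\iJoinSubnet(\subnetid,\cdot)$ in rounds $0$--$7$. Your proposal to ``trace the message exchanges'' or ``re-examine the induction step'' is not a proof, only a hope, and redoing the induction of \cref{lemma:simplesubnetprot:centrallemma} for early joins would be tedious. The paper handles this much more cleanly, in two pieces. First, a call to $\iJoinSubnet$ at round $0$ makes $\EventCreatedSubnet{\subnetid}{\mathbf{P}}$ fail by definition, so no $\EventIsInSubnetDelay$ event can hold and the implication is vacuous. Second, for rounds $1$--$7$ the paper uses a time-shift argument: any run $R$ can be obtained from a run $R'$ in which $7$ idle rounds are inserted after round $0$; since the protocol sends no messages absent join calls, $R'$ has no joins in rounds $1$--$7$, and the peer-knowledge relation in $R$ is exactly that of $R'$ shifted in time. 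Thus proving the claim for runs without early joins suffices. You should replace your vague auxiliary argument with this reduction.
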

\begin{proof}
    Fix any point in time $\tau \in  [\ProtocolLifetime]$, any subnet identifier $\subnetid \in [\numberofsubnets]$, and any pair of honest parties $\party,\party'$. We need to show that
    \begin{align*}
        \left(
            \begin{array}{rl}
%                 &\EventCreatedSubnet{\subnetid}{\mathbf{P}} \\
%                 \land
                &\EventIsInSubnetDelay{\subnetid}{\party}{\tau} \\
                \land &\EventIsInSubnetDelay{\subnetid}{\party'}{\tau} \\
                \land &\EventCalledGetPeers{\subnetid}{\party}{\tau}
            \end{array}
        \right) \implies \EventGotPeer{\subnetid}{\party}{\party'}{\tau}.
    \end{align*}
    To do so, we first argue that \begin{equation}
        \left(\EventIsInSubnetDelay{\subnetid}{\party}{\tau}
        \land \EventIsInSubnetDelay{\subnetid}{\party'}{\tau}\right)
        \stackrel{\text{\cref{lemma:simplesubnetprot:centrallemma}}}{\implies} \partyknows{\party}{\party'}{\tau}{\subnetid}.
        \label{eq:simplesubnetprot:proofofrobustness:1}
    \end{equation}
    This is because we can apply \cref{lemma:simplesubnetprot:centrallemma} as follows:
    First, one can assume without loss of generality that no honest party calls $\iJoinSubnet$ with $\subnetid$ in rounds $1$ to $7$\footnote{
        To see why this is without loss of generality, first assume that \cref{eq:simplesubnetprot:proofofrobustness:1} holds for some run of the protocol in which no honest party joins a subnet in rounds $1$ to $7$.
        Then, one can easily see that it also holds for the modified run in which rounds $1$ to $7$ are entirely skipped. This is because the only messages sent by the protocol are due to joining a subnet. As every run with arbitrary join-leave schedule can be seen as such a truncation of a run in which no honest party joins a subnet in rounds $1$ to $7$, we get that \cref{lemma:simplesubnetprot:centrallemma} holds for every run, as long as we can show that it holds for such runs with no joins from rounds $1$ to $7$.
    }. Also, if an honest party calls $\iJoinSubnet$ with $\subnetid$ at round $0$, then our robustness definition does not make any guarantees, and so there is nothing to show.

    Now, if $\EventIsInSubnetDelay{\subnetid}{\party}{\tau}$, then we know that party $\party$ joined the subnet $\subnetid$ properly at time $\tau_{\mathsf{join}} \leq \tau-\SubnetDelay \leq \tau - 3$. Similarly, if $\EventIsInSubnetDelay{\subnetid}{\party'}{\tau}$, then party $\party$ joined the subnet $\subnetid$ properly at time $\tau_{\mathsf{join}}' \leq \tau-\SubnetDelay  = \tau - 7$. So, all conditions of \cref{lemma:simplesubnetprot:centrallemma} (in particular, \cref{item:simplesubnetprot:centrallemma:condb}) are satisfied, which means that $\partyknows{{\party}'}{\party}{{\tau}}{\subnetid}$. Similarly, we get $\partyknows{{\party}}{\party'}{{\tau}}{\subnetid}$.
    Then, using \cref{prop:simplesubnetprot:callgetpeers}, we get the desired implication.
\end{proof}

\superparagraph{Optimizations.}
We make some optimizations when using this subnet discovery protocol to instantiate our protocol $\ourprot$:
first, observe that honest parties only ever join row subnets via their bootstrap nodes.
Thus, nodes with $\AuxRole = 0$ can ignore $\msgJoinSubnet$ and $\msgJoinSubnetPull$ messages for row subnets, as those are supposed to only be sent to bootstrap nodes.
In fact, an honest node $\party$ with $\AuxRole = 0$ and $\AlgRow(\party)=r$ only uses their local list of row subnet peers for calls $\mathbf{P} \coloneqq \AlgGetPeersInCell(r,c)$ for some $c$ and then sends messages to those parties in $\mathbf{P}$.
For that reason, $\party$ only needs to keep track of row subnet peers $\party'$ with $\AlgRow(\party') = r$.
Note that $\party$ will {learn about} their row subnet peers exclusively via $\msgJoinSubnetFwd$ and $\msgJoinSubnetPullRsp$-messages from parties $\hat\party$ with $\AuxRole = 1$.
We shall assume that those $\msgJoinSubnetFwd$ and $\msgJoinSubnetPullRsp$ messages by $\hat\party$ do not contain those peers that $\party$ would not store anyway.
Furthermore, we assume that every honest party $\party$ ignores column subnet messages originating from (necessarily malicious) parties $\party'$ with $\AlgCol(\party')\neq\AlgCol(\party)$.

\clearpage
\begin{figure*}[ht]
    \centering
    \begin{tabular}{lll}
    \toprule
    \textbf{Event} & \textbf{Intuition} & \textbf{Reference} \\
    \midrule
    $\EventActive{\party}{\tau}$ & $\party$ is active at time $\tau$ & \cref{def:EventActive} \\
    $\EventActiveDuration{\party}{\tau}{\tau'}$ & $\party$ is active from time $\tau$ to time $\tau'$ & \cref{def:EventActive} \\
    $\EventFullyJoined{\party}{\tau}$ & $\party$ active at $\tau$ and $\iJoin$ already finished at $\tau$ & \cref{def:EventActive} \\
    $\EventFullyJoinedDuration{\party}{\tau}{\tau'}$ & $\party$ active from $\tau$ to $\tau'$ and $\iJoin$ already finished at $\tau$  & \cref{def:EventActive} \\
    $\EventStored{\party}{\tau}{h}{i}{x}$ & $\party$ calls $\iStore(h,i,x)$ at $\tau$ & \cref{def:rda:robustness}\\
    $\EventCalledGet{\party}{\tau}{h}{i}$ & $\party$ calls $\iGet(h,i)$ at $\tau$ & \cref{def:rda:robustness}\\
    $\EventGotResult{\party}{\tau}{h}{i}{\Delta}{x}$& With a delay of $\leq \Delta$, $\party$ got $x$ from call to $\iGet(h,i)$  & \cref{def:rda:robustness}\\
    $\EventCreatedSubnet{\subnetid}{\mathbf{P}}$ & Honest parties in $\mathbf{P}$ created subnet $\subnetid$ at $\tau = 0$ & \cref{def:subnetdiscovery:robustness} \\
    $\EventStaysInSubnetDelay{\subnetid}{\party}{\tau_0}{\tau_1}$ & $\party$ active from $\tau_0$ to $\tau_1$ and properly joined subnet $\subnetid$ & \cref{def:subnetdiscovery:robustness} \\
    $\EventIsInSubnetDelay{\subnetid}{\party}{\tau}$ & $\party$ active at $\tau$ and properly joined subnet $\subnetid$ & \cref{def:subnetdiscovery:robustness} \\
    $\EventCalledGetPeers{\subnetid}{\party}{\tau}$ & $\party$ calls $\iGetPeers(\subnetid)$ at $\tau$ & \cref{def:subnetdiscovery:robustness} \\
    $\EventGotPeer{\subnetid}{\party}{\party'}{\tau}$ & $\party$ got peer $\party'$ from call to $\iGetPeers(\subnetid)$ & \cref{def:subnetdiscovery:robustness} \\
       $\EventSubnetprotGood{T}$ & Subnet protocol works until time $T$ & \cref{def:GoodEvents} \\
       $\EventColumnGoodUntil{c}{T}{\Delta}$ & always an honest party in $c$ until $T$ (with overlap) & \cref{def:GoodEvents} \\
       $\EventGoodCell{r}{c}{\tau_1}{\tau_2}$ & $\exists$ honest party in cell $(r,c)$ from $\tau_1$ to $\tau_2$ & \cref{def:GoodEvents} \\
%        $\EventAlwaysFastSync{\SyncDelay}$ & $\msgSync$-responses always have delay $\leq \SyncDelay$ & \cref{def:GoodEvents} \\
    %
    $\EventInSymbolStorage{\party}{\tau}{h}{i}{x}$& $\party$ stores $(h,i,x)$ at $\tau$ & \cref{def:relevant_events} \\
    $\EventStoredInColumnNew{c}{\tau_0}{\tau_1}{h}{i}{x}$& All honest parties in $c$ store $(h,i,x)$ & \cref{def:relevant_events} \\
    $\partyknows{\party}{\party'}{\tau}{\subnetid}$ & $\party$ knows $\party'$ at $\tau$ in subnet $\subnetid$ & \cref{def:simplesubnetprot:peernotation} \\
    \bottomrule
    \end{tabular}
    \caption{Table of events used throughout the paper. Events typically assume that $\party$ is honest. For the precise definitions of the events, the reader should consult the definitions.}
    \label{table:eventsoverview}
\end{figure*}

\end{document}